\documentclass[journal,onecolumn]{IEEEtran}
\usepackage{amsmath,amsfonts,amssymb,amsthm,mathtools}

\usepackage{algorithm}
\usepackage{algpseudocode}
\usepackage{array}
\usepackage[caption=false,font=normalsize,labelfont=sf,textfont=sf]{subfig}
\usepackage{textcomp}
\usepackage{stfloats}
\usepackage{url}
\usepackage{multirow}
\usepackage{verbatim}
\usepackage{graphicx}
\usepackage{cite}
\usepackage{geometry}
\usepackage{float}
\usepackage{enumitem}
\usepackage{tabularx, booktabs, ragged2e}
\newcolumntype{L}{>{\RaggedRight\arraybackslash}X}
\usepackage{placeins}
\usepackage{makecell}
\usepackage{etoolbox} 
\newcommand{\Dd}{\text{Dd}}
\newcommand{\Atom}{\text{Atom}}

\newcommand{\Cn}{\text{Cn}}
\newcommand{\enc}{\text{enc}}

\newcommand{\E}{\mathbb{E}}

\newcommand{\timeindex}{\mathsf{time}}

\newcommand{\syn}{\equiv_{\mathcal L_{\mathrm{sem}}}}

\DeclareMathOperator{\supp}{supp}

\newtheorem{axiom}{Axiom}[section]
\newtheorem{assumption}{Assumption}[section]
\newtheorem{theorem}{Theorem}[section]
\newtheorem{lemma}{Lemma}[section]
\newtheorem{proposition}{Proposition}[section]
\newtheorem{corollary}{Corollary}[section]
\newtheorem{definition}{Definition}[section]
\newtheorem{remark}{Remark}[section]
\newtheorem{example}{Example}[section]

\numberwithin{equation}{section}

\makeatletter
\renewenvironment{proof}[1][\proofname]{\par
  \pushQED{\qed}%
  \normalfont \topsep6\p@\@plus6\p@\relax
  \trivlist
  \item[\hskip\labelsep
        \itshape
    #1\@addpunct{.}]\ignorespaces
}{%
  \popQED\endtrivlist\@endpefalse
}
\makeatother

\begin{document}

\title{Semantic Rate--Distortion Theory:\\ Deductive Compression and Closure Fidelity}

\author{Jianfeng~Xu$^{1}$%
\thanks{$^{1}$Koguan School of Law, China Institute for Smart Justice,
School of Computer Science, Shanghai Jiao Tong University,
Shanghai 200030, China. Email: xujf@sjtu.edu.cn}%
}

\maketitle

\begin{abstract}
Shannon's rate--distortion theory treats source alphabets as
unstructured sets.
When the source is a knowledge base equipped with a proof system, a
natural fidelity criterion is \emph{closure fidelity}: a
reconstruction is acceptable if it preserves the deductive closure of
the original.
This paper develops a rate--distortion theory under this criterion.
The irredundant core \(\Atom(S_O)\)---a canonical irredundant
generating set, extracted by a fixed-order deletion procedure, from
which the full deductive closure can be re-derived---plays a central
role.
We prove that the zero-distortion semantic rate equals
\(P_A\,H(\pi_A)\), where \(P_A\) is the source probability of the
core and \(\pi_A\) the conditional core distribution; this is strictly
below the classical \(H(P_O)\) whenever the knowledge base contains
redundant states.
More generally, the full semantic rate--distortion function decomposes
into a contribution from the core alone: redundant states are
invisible to both rate and distortion.
We derive a semantic source--channel separation theorem exhibiting a
semantic leverage phenomenon: under closure fidelity the required
source rate is reduced, with an asymptotic i.i.d.\ leverage factor
\(\Lambda_\infty=\log|S_O|/(P_A\,H(\pi_A))>1\), enabling the same
knowledge base to be communicated with proportionally fewer channel
uses---not by violating the Shannon capacity \(C(W)\), which remains
an upper bound, but because closure-based fidelity renders redundant
states free.
We also prove a strengthened Fano inequality exploiting core
structure.
For heterogeneous multi-agent communication, an overlap decomposition
yields necessary and sufficient conditions for closure-reliable
transmission and identifies a semantic bottleneck in broadcast
settings that persists even over noiseless carriers.
All results are verified on Datalog instances with up to
24\,000 base facts.
\end{abstract}

\begin{IEEEkeywords}
Semantic communication, rate--distortion theory, deductive compression,
closure fidelity, irredundant core, deductive closure, channel coding,
semantic Fano inequality, multi-agent communication, semantic leverage
\end{IEEEkeywords}

\section{Introduction}
\label{sec:introduction}

\IEEEPARstart{S}{hannon's} mathematical theory of
communication~\cite{shannon1948mathematical} deliberately sets aside
the \emph{meaning} of messages, treating source and channel alphabets
as unstructured label sets.
This abstraction has yielded the fundamental limits of data
compression, channel coding, and network information
theory~\cite{cover2006elements,csiszar2011information}, and remains
the foundation of modern digital communications.
Yet already in the companion essay by
Weaver~\cite{weaver2017recent}, a three-level hierarchy was
envisaged: accurate symbol transmission (Level~A), conveying intended
meaning (Level~B), and achieving the desired effect (Level~C).
Shannon's theory addresses Level~A with unmatched precision; the
present paper develops a rigorous \emph{rate--distortion theory} for
Level~B that is mathematically compatible with---and strictly
generalizes---Level~A.
Crucially, the ``meaning'' captured here is not a generic similarity
measure or a black-box task metric, but the \emph{deductive content}
induced by a shared proof system: the fidelity criterion is
preservation of the deductive closure, and the compression gain arises
from the receiver's ability to re-derive redundant knowledge---a
mechanism that is structurally distinct from both classical
symbol-level coding and the task/effect-oriented metrics studied at
Level~C.

\subsection*{Motivation}

The need for such a theory has intensified with the rise of
knowledge-intensive communication systems.
In multi-agent coordination, retrieval-augmented generation pipelines,
and federated knowledge-base synchronization, communicating parties
exchange not raw symbols but structured \emph{semantic states}---ground
facts, rules, or queries drawn from a shared or partially overlapping
logical vocabulary.
A symbol-level error that leaves the deductive content unchanged is
harmless, while a symbol-level ``success'' that corrupts a single
irredundant axiom may destroy an entire branch of derivable knowledge.
Classical information theory cannot make this distinction because it
treats every source symbol as equally ``expensive'' to protect.

The key observation motivating this work is that a fixed proof system
shared by communicating agents induces a natural partition of the
source alphabet into an \emph{irredundant core}
\(\Atom(S_O)\)---a canonical irredundant generating set, extracted
via a deterministic deletion procedure under a fixed canonical order,
from which the full deductive closure can be
re-derived---and a collection of \emph{stored shortcuts}
\(J = S_O \setminus \Atom(S_O)\) that are deductively redundant.
Note that \(\Atom(S_O)\) is irredundant (no element can be removed
without losing closure equivalence) but need not have minimum
cardinality among all generating subsets; the
order-dependent extraction procedure of
Definition~\ref{def:atom-so} determines a unique canonical
representative.
Under a \emph{closure-based fidelity criterion} (which deems a
reconstruction acceptable if it preserves the deductive closure), only
the \(|\Atom(S_O)|\) core elements need to be transmitted reliably;
the remaining \(|J|\) states can be recovered by the receiver's
inference engine at zero additional channel cost.
This suggests that the \emph{effective source entropy} for
closure-reliable communication should be strictly less than the
Shannon entropy \(H(P_O)\)---a qualitative prediction that we make
precise and tight in this paper.

\subsection*{Related Work and Gap Identification}

\emph{Semantic information theory.}\;
The formal study of semantic information dates to Carnap and
Bar-Hillel~\cite{carnap1952outline}, who measured information by the
set of possible worlds excluded, and to
Floridi~\cite{floridi2004outline}, who required information to be
truthful.
Kolchinsky and Wolpert~\cite{kolchinsky2018semantic} proposed a
viability-based formulation.
Recently, Niu and
Zhang~\cite{niu2024mathematical,zhang2024modern} established a
mathematical theory based on \emph{synonymous mappings}, deriving
semantic entropy, capacity (\(C_s \ge C\)), and rate--distortion
functions by partitioning source alphabets into equivalence classes.
Their framework elegantly extends Shannon theory through source-side
collapsing and leaves room for integrating logical inference
structure---the direction pursued here.
We note that their result \(C_s\ge C\) quantifies a \emph{semantic
throughput gain} under a synonymous-mapping fidelity criterion; it
does not assert that the Shannon channel capacity \(C(W)\) is
violated, since the data processing inequality
\(C_{\mathrm{sem}}\le C(W)\)
\textup{(Theorem~\ref{thm:data-processing}(i))} remains in force.
Our framework produces an analogous but distinct gain via
\emph{receiver-side} deductive reconstruction rather than source-side
equivalence-class collapsing.

\emph{Semantic coding algorithms.}\;
Ma et~al.~\cite{ma2025theory} proved a semantic channel coding theorem
for many-to-one sources via a generalized Fano inequality.
Han et~al.~\cite{han2025extended} proposed an extended
Blahut--Arimoto algorithm for computing the semantic rate--distortion
function under synonymous mappings.
Liang et~al.~\cite{liang2025semantic} introduced semantic arithmetic
coding achieving higher compression efficiency over synonymous sets.
These works provide coding-theoretic results within the
synonymous-mapping paradigm; our work complements them by grounding
the compression mechanism in \emph{receiver-side deductive inference}
rather than source-side equivalence-class collapsing.

\emph{Multi-agent and goal-oriented communication.}\;
Seo et~al.~\cite{seo2023bayesian} addressed inverse contextual
reasoning via Bayesian inference.
Alshammari and Bennis~\cite{alshammari2026logic} proposed a
logic-driven framework for resilient multi-agent semantic
communication.
Li et~al.~\cite{li2024toward} and Wu et~al.~\cite{wu2024toward}
developed goal-oriented metrics unifying significance measures.
Deep-learning-based semantic communication
systems~\cite{xie2021deep,qin2021semantic,gunduz2022beyond,luo2022semantic}
motivate the formal guarantees developed here; our overlap
decomposition and semantic bottleneck results provide
coding-theoretic complements to these approaches.

\emph{Logical and database foundations.}\;
The logical substrate draws on descriptive complexity
(\(\mathrm{FO(LFP)}\) capturing PTIME on ordered finite
structures~\cite{immerman1999descriptive}), Datalog fixpoint
theory~\cite{ceri1989you,abiteboul1995foundations}, and knowledge
compilation~\cite{dantsin2001complexity}.
Marx~\cite{marx2013tractable} introduced submodular width for
conjunctive query evaluation; Abo~Khamis and
Chen~\cite{abokhamis2025jaguar} recently achieved near-optimal
query-evaluation complexity.
Mu~\cite{mu2024identifying} studied formula roles in inconsistency,
paralleling our core/redundant distinction.
These results provide the technical substrate on which our
communication-theoretic analysis builds.

\emph{Gap.}\;
The works reviewed above illuminate complementary facets of semantic
communication, yet a unified framework that derives
\emph{tight rate--distortion bounds} from the proof-system structure
of the communicated knowledge base---and that handles the vocabulary
heterogeneity arising in multi-agent settings---has not been
established.
The present paper addresses this gap.

\subsection*{Approach and Key Insight}

Our approach originates from an ontological perspective on
information~\cite{xu2014objective,xu2024research,xu2025general,qiu2025research}:
we model a source as a finite knowledge base \(S_O\) equipped with a
deductive closure operator \(\Cn(\cdot)\) induced by a fixed proof
system~\(\mathsf{PS}\), and define \emph{closure distortion}
\(d_{\Cn}\)---a Jaccard-distance-based measure under which replacing a
source state by any deductively equivalent substitute incurs zero
penalty.
Under this distortion, redundant states (those in
\(\Cn(S_O \setminus \{s\})\)) are ``free'': they can be mapped to
any output in the closure without cost.
Only the irredundant core elements carry genuine rate, because they
contribute unique deductive content that cannot be recovered from the
remaining knowledge base.

The central quantitative consequence is a \emph{tight} zero-distortion
semantic rate:
\begin{equation}\label{eq:intro-tight-rate}
  R_{\mathrm{sem}}(0;\,d_{\Cn},\,P_O)
  \;=\;
  P_A\,H(\pi_A),
\end{equation}
where \(P_A = P_O(\Atom(S_O))\) and
\(\pi_A(a) = P_O(a)/P_A\) is the conditional core distribution
(Theorem~\ref{thm:tight-zero-rate}).
This is strictly below the classical \(R(0;d_H) = H(P_O)\) whenever
\(|J| > 0\), and the ratio
\(P_A\,H(\pi_A)/H(P_O)\) quantifies the \emph{deductive compression
gain}.
Under a uniform source, the ratio reduces to
\(k\log k/(|S_O|\log |S_O|)\)
where \(k = |\Atom(S_O)|\).

\subsection*{Main Contributions}

\begin{enumerate}[label=\textup{C\arabic*.},leftmargin=*]
  \item \emph{Axiomatic information model
        \textup{(Section~\ref{sec:model},
        Appendix~\ref{app:axioms})}:}
        We introduce a framework comprising
        \(\mathcal L_{\mathrm{sem}}\)-definable state sets linked by
        computable enabling maps, a deductive closure
        operator~\(\Cn\) with a fixed proof
        system~\(\mathsf{PS}\), and an irredundant semantic
        core~\(\Atom(S_O)\) with derivation-depth
        stratification~\(\Dd(\cdot\mid B)\), both shown to be
        computable invariants.

  \item \emph{Semantic channel
        \textup{(Section~\ref{sec:channel})}:}
        We define the semantic channel as a composition of Markov
        kernels whose supports respect the enabling structure of
        composable information models, and introduce four distortion
        functions of increasing semantic depth: Hamming, closure,
        depth, and a parameterized composite.

  \item \emph{Tight semantic rate--distortion function
        \textup{(Section~\ref{sec:fundamental-limits})}:}
        We prove that the zero-distortion rate under closure
        fidelity is exactly \(P_A\,H(\pi_A)\)
        \textup{(Theorem~\ref{thm:tight-zero-rate})}, and that the
        full \(R_{\mathrm{sem}}(D)\) decomposes into a contribution
        from the core sub-source alone
        \textup{(Theorem~\ref{thm:rd-decomposition})}.

  \item \emph{Semantic source--channel separation and semantic
        leverage
        \textup{(Section~\ref{subsec:sem-separation})}:}
        We derive a separation theorem showing that under closure
        fidelity the required source rate is reduced by a
        semantic leverage factor
        \(\Lambda_1 = \log|S_O|/\log|\Atom(S_O)| > 1\)
        in the single-shot regime, and
        \(\Lambda_\infty = \log|S_O|/(P_A\,H(\pi_A))\ge\Lambda_1\)
        in the asymptotic i.i.d.\ regime
        \textup{(Theorem~\ref{thm:sem-source-channel})},
        enabling the same knowledge base to be communicated
        with proportionally fewer channel uses---not by
        exceeding the Shannon capacity \(C(W)\), but because
        closure fidelity renders redundant states free.

  \item \emph{Strengthened semantic Fano inequality
        \textup{(Section~\ref{subsec:strengthened-fano})}:}
        We prove a Fano bound involving \(\log|A|\) rather than
        \(\log|S_O|\), yielding a tighter constraint by
        \(\log(|S_O|/|A|)\) bits
        \textup{(Theorem~\ref{thm:semantic-fano-tight})}.

  \item \emph{Rate--delay--distortion tradeoff and semantic
        sampling theorem
        \textup{(Section~\ref{subsec:rate-delay})}:}
        When the receiver's derivation budget is bounded by
        \(\delta\) steps of \(T_{\mathsf{PS}}\), we define a
        \(\delta\)-irredundant core filtration
        \(S_O=\Atom_0\supseteq\Atom_1\supseteq\cdots
        \supseteq\Atom_{\mathsf{D_d}}=\Atom(S_O)\)
        and prove that the zero-distortion rate under delay
        budget~\(\delta\) is
        \(R_{\mathrm{sem}}(0,\delta)=P_\delta\,H(\pi_\delta)\),
        yielding a rate--delay--distortion surface that smoothly
        interpolates between the classical \(H(P_O)\) at
        \(\delta=0\) and the semantic rate \(P_A\,H(\pi_A)\) at
        \(\delta=\mathsf{D_d}\)
        \textup{(Theorem~\ref{thm:rate-delay})}.
        A semantic sampling theorem identifies the critical delay
        \(\delta^*\) below which closure-reliable communication
        is impossible
        \textup{(Corollary~\ref{cor:semantic-nyquist})}. 
        An expected-budget relaxation via time-sharing yields a
        convex rate--computation tradeoff whose Lagrangian dual
        prices receiver computation in bits of communication rate
        \textup{(Remark~\ref{rem:comm-comp-exchange})}.

  \item \emph{Heterogeneous multi-agent communication
        \textup{(Section~\ref{sec:application})}:}
        We introduce a pairwise overlap decomposition, derive
        necessary and sufficient conditions for closure-reliable
        communication, show that the deductive compression ratio is
        invariant under vocabulary heterogeneity, and identify a
        semantic bottleneck phenomenon in broadcast settings.
        All results are verified on explicit Datalog instances with
        up to 24\,000 base facts.
\end{enumerate}

\subsection*{Paper Organization}

Section~\ref{sec:model} presents the system model: the deductive
closure operator, irredundant cores, derivation depth, and closure
fidelity.
Section~\ref{sec:channel} builds the probabilistic layer: enabling
kernels, the semantic channel, distortion measures, channel
invariants, and preliminary coding theorems.
Section~\ref{sec:fundamental-limits} derives the tight
rate--distortion bounds, the source--channel separation theorem, and
the strengthened Fano inequality.
Section~\ref{sec:application} instantiates the framework for
heterogeneous multi-agent communication and presents the numerical
validation.
Section~\ref{sec:conclusion} concludes.
Appendix~\ref{app:axioms} provides the full axiomatic foundations:
the logical language, information model axioms, and synonymous state
sets.

\textit{Notation.}\;
\(S_O\): semantic state set (knowledge base);
\(\Cn(\cdot)\): deductive closure;
\(\Atom(S_O)\): irredundant core;
\(\Dd(s\mid B)\): derivation depth;
\(\kappa:X\rightsquigarrow Y\): Markov kernel;
\(H(\cdot)\), \(I(\cdot;\cdot)\): Shannon entropy and mutual
information (base~2, bits);
\(d_H,d_{\Cn},d_{\Dd},d_{\mathrm{sem}}\): distortion functions;
\(P_A := P_O(\Atom(S_O))\);
\(\pi_A\): conditional core distribution.
A notation summary for the multi-agent application appears in
Table~\ref{tab:notation-iv}.

\section{System Model}
\label{sec:model}

This section introduces the deterministic and logical substrate on
which the probabilistic structure of the semantic channel
(Section~\ref{sec:channel}) and the rate--distortion analysis
(Section~\ref{sec:fundamental-limits}) are built.
The presentation focuses on the three concepts that enter directly
into the main theorems: the deductive closure operator~\(\Cn\), the
irredundant core~\(\Atom(S_O)\), and the derivation-depth
stratification~\(\Dd(\cdot\mid B)\).
The full axiomatic development---the many-sorted logical
language~\(\mathcal L\), the semantic
sublanguage~\(\mathcal L_{\mathrm{sem}}\), the information model
with time-indexed state sets and enabling maps, and the synonymous
state set formalism---is deferred to Appendix~\ref{app:axioms}.
Throughout, the operational inference substrate is a Datalog or
Horn-clause proof system over a finite active domain; the
\(\mathrm{FO(LFP)}\) language of
Appendix~\ref{subsec:logic} serves as a background framework
for defining expressible state sets and does not enter the
rate--distortion analysis directly.

\subsection{Proof System and Deductive Closure}
\label{subsec:proof-closure}

We fix a finite set~\(\mathbb{S}_O\) of potential semantic states
(the \emph{ambient semantic universe}) equipped with an injective
encoding \(\enc_O:\mathbb{S}_O\to\{0,1\}^*\) and a fixed canonical
order.
Each element of~\(\mathbb{S}_O\) is identified with a ground atom
of an inference fragment
\(\mathcal L_{\mathrm{kb}}\subseteq\mathcal L_{\mathrm{sem}}\)
(typically Datalog or a Horn fragment; see
Appendix~\ref{subsec:logic} for the full logical language).
Additional closure properties of~\(\mathbb{S}_O\) (closure under
definable recodings, effective representability) are stated in
Assumption~\ref{assump:semantic-universe} of
Appendix~\ref{app:axioms}.

\begin{assumption}[Fixed effective proof system]
\label{assump:proof-system}
We fix an effective proof system~\(\mathsf{PS}\) over the syntax
of~\(\mathcal L_{\mathrm{kb}}\) such that proof checking is
decidable.
For any finite
\(\Gamma\subseteq\mathcal L_{\mathrm{kb}}\) and any
\(\varphi\in\mathcal L_{\mathrm{kb}}\), write
\(\Gamma\vdash_{\mathrm{kb}}\varphi\) for derivability
in~\(\mathsf{PS}\), and define the \emph{deductive closure
operator}
\[
  \Cn(\Gamma)
  \;:=\;
  \bigl\{\varphi\in\mathcal L_{\mathrm{kb}}:\
  \Gamma\vdash_{\mathrm{kb}}\varphi\bigr\}.
\]
The operator~\(\Cn\) satisfies three standing properties used
throughout without further comment:
\begin{enumerate}[label=\textup{(Cn\arabic*)}]
  \item \emph{Reflexivity:}
        \(\Gamma\subseteq\Cn(\Gamma)\).
  \item \emph{Monotonicity:}
        \(\Gamma\subseteq\Gamma'\Rightarrow
        \Cn(\Gamma)\subseteq\Cn(\Gamma')\).
  \item \emph{Idempotence:}
        \(\Cn\bigl(\Cn(\Gamma)\bigr)=\Cn(\Gamma)\).
\end{enumerate}
When elements of a state set~\(S_O\subseteq\mathbb{S}_O\) serve as
premises for~\(\Cn\), each state is identified with the corresponding
ground atom of~\(\mathcal L_{\mathrm{kb}}\).
\end{assumption}

\begin{assumption}[Finite and effectively listable knowledge bases]
\label{assump:finite-so}
The knowledge bases~\(S_O\subseteq\mathbb{S}_O\) considered in this
paper are finite and effectively listable under the fixed canonical
order.
Throughout, \(\mathbb{S}_O\) denotes the \emph{active-domain}
semantic universe: for each problem instance (logical structure,
domain, rule set), \(\mathbb{S}_O\) is the finite set of all
ground atoms over the active domain---typically the Herbrand base
restricted to the constants appearing in the
instance~\cite{abiteboul1995foundations}.
The finiteness of~\(\mathbb{S}_O\) is thus a per-instance property,
not a restriction on the logical language itself.
\end{assumption}

\begin{assumption}[Effective redundancy test]
\label{assump:core-extractable}
For the knowledge bases considered, the predicate
\(s\in\Cn(\Gamma)\) is decidable whenever
\(\Gamma\subseteq S_O\) is finite and \(s\in S_O\).
This holds in Datalog/Horn settings and bounded-domain
theories~\cite{abiteboul1995foundations,dantsin2001complexity}.
\end{assumption}

\subsection{Irredundant Core and Derivation Depth}
\label{subsec:atomic-derivation}

\begin{definition}[Irredundant core]
\label{def:atom-so}
Let~\(S_O\) be a finite knowledge base.
Define \(\Atom(S_O)\) by the following deterministic procedure:
initialize \(A\gets S_O\); scan elements of~\(S_O\) in canonical
order; for each~\(s\), if \(s\in\Cn(A\setminus\{s\})\), set
\(A\gets A\setminus\{s\}\); output~\(A\).
\end{definition}

\begin{proposition}[Core correctness]
\label{prop:atom-core-correct}
Under
Assumptions~\textup{\ref{assump:finite-so}}--\textup{\ref{assump:core-extractable}},
the set \(A:=\Atom(S_O)\) satisfies:
\textup{(i)}~\(\Cn(A)=\Cn(S_O)\);
\textup{(ii)}~for every \(a\in A\),
\(a\notin\Cn(A\setminus\{a\})\) \textup{(irredundancy)};
\textup{(iii)}~\(\Atom(S_O)\) is uniquely determined by~\(S_O\) and
the canonical order;
\textup{(iv)}~\(S_O\subseteq\Cn(\Atom(S_O))\).
\end{proposition}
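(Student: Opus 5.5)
The plan is to track the deletion procedure of Definition~\ref{def:atom-so} step by step. Let \(s_1,\dots,s_n\) be the elements of \(S_O\) in canonical order. Let \(A_0=S_O\), and let \(A_i\) be the working set after \(s_i\) is processed, so \(A=A_n\). Assumption~\ref{assump:finite-so} makes the scan finite. Assumption~\ref{assump:core-extractable} makes each test \(s_i\in\Cn(A_{i-1}\setminus\{s_i\})\) decidable. So the procedure is a well-defined terminating deterministic algorithm, and its output depends only on \(S_O\) and the order. This gives (iii) at once.

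For (i), the invariant is \(\Cn(A_i)=\Cn(S_O)\), proved by induction on \(i\). If \(s_i\) is kept there is nothing to show. If \(s_i\) is deleted, then \(s_i\in\Cn(A_i)\) with \(A_i=A_{i-1}\setminus\{s_i\}\). Reflexivity (Cn1) gives \(A_i\subseteq\Cn(A_i)\), so \(A_{i-1}\subseteq\Cn(A_i)\). Monotonicity (Cn2) and idempotence (Cn3) then give \(\Cn(A_{i-1})\subseteq\Cn(A_i)\). The reverse inclusion holds by monotonicity since \(A_i\subseteq A_{i-1}\). Property (iv) follows: \(S_O\subseteq\Cn(S_O)=\Cn(A)\) by (Cn1) and (i).

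For (ii), take \(a=s_j\in A\). Since \(a\) survived step \(j\), we have \(a\notin\Cn(A_{j-1}\setminus\{a\})\). Because \(A\subseteq A_{j-1}\), monotonicity gives \(\Cn(A\setminus\{a\})\subseteq\Cn(A_{j-1}\setminus\{a\})\), so \(a\notin\Cn(A\setminus\{a\})\). The point is that later deletions only shrink the premise set, and under a monotone closure an element that was irredundant against a larger set stays irredundant against a smaller one.

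The main obstacle is (ii). A naive single pass might seem to need re-checking earlier survivors after later deletions. The monotonicity argument above settles this in the right direction, so no iteration to a fixpoint is needed. I will state this explicitly, and also note that only (Cn1)--(Cn3) are used, with no compactness or finiteness beyond Assumption~\ref{assump:finite-so}.
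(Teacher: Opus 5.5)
Your proposal is correct and follows essentially the same route as the paper. Part (i) is an induction over the scan using (Cn1)--(Cn3), and (iv) then follows from reflexivity. For (ii), the test set at the moment $a$ is scanned contains the final set, so monotonicity carries irredundancy over to the output. You additionally state the reverse inclusion $\Cn(A_i)\subseteq\Cn(A_{i-1})$ and explain how the finiteness and decidability assumptions make the procedure well-defined; the paper leaves both implicit.
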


\begin{proof}
Each removal preserves the closure: if \(s\in\Cn(A\setminus\{s\})\),
then \(A\subseteq\Cn(A\setminus\{s\})\), so
\(\Cn(A)\subseteq\Cn(\Cn(A\setminus\{s\}))=\Cn(A\setminus\{s\})\)
by~\textup{(Cn2)} and~\textup{(Cn3)}.
By induction over the scan, \(\Cn(A)=\Cn(S_O)\),
giving~\textup{(i)}.
Canonicality~\textup{(iii)} is immediate from the
determinism of the procedure and the fixed canonical order.
Part~\textup{(iv)} follows from~\textup{(i)}:
reflexivity~\textup{(Cn1)} gives
\(S_O\subseteq\Cn(S_O)=\Cn(A)\).
Irredundancy~\textup{(ii)}: let \(A_{\mathrm{scan}}\) denote
the current set at the moment~\(a\) is scanned, and let
\(A_{\mathrm{final}}\) denote the output.
Since \(a\) was retained,
\(a\notin\Cn(A_{\mathrm{scan}}\setminus\{a\})\).
Subsequent removals only shrink the set, so
\(A_{\mathrm{final}}\subseteq A_{\mathrm{scan}}\) and hence
\(A_{\mathrm{final}}\setminus\{a\}\subseteq
A_{\mathrm{scan}}\setminus\{a\}\).
By monotonicity~\textup{(Cn2)},
\(\Cn(A_{\mathrm{final}}\setminus\{a\})\subseteq
\Cn(A_{\mathrm{scan}}\setminus\{a\})\), so
\(a\notin\Cn(A_{\mathrm{final}}\setminus\{a\})\).
\end{proof}

\begin{definition}[Core and shortcuts]
\label{def:intrinsic-operational-bases}
For a knowledge base~\(S_O\), write
\(A:=\Atom(S_O)\) (core premises) and
\(J:=S_O\setminus A\) (stored shortcuts).
\end{definition}

\begin{remark}[Proof-system dependence of the core]
\label{rem:ps-dependence}
The irredundant core \(\Atom(S_O)\) depends on two choices: the proof
system~\(\mathsf{PS}\) \textup{(}which determines the closure
operator~\(\Cn\)\textup{)} and the canonical order on~\(S_O\)
\textup{(}which resolves ties in the deletion procedure of
Definition~\textup{\ref{def:atom-so})}.
Changing either may alter \(\Atom(S_O)\) while preserving the closure
equivalence \(\Cn(\Atom(S_O))=\Cn(S_O)\).
All information-theoretic quantities in this paper---the semantic rate
\(P_A H(\pi_A)\), the leverage factor~\(\Lambda\), the compression
ratio---are therefore \emph{proof-system-relative} structural
invariants.
This is a feature, not a limitation: it captures the fact that the
``value'' of a semantic state for communication depends on the
inference capabilities shared by sender and receiver.
\end{remark}

\smallskip
\noindent\textit{Derivation depth via the immediate consequence
operator.}

\begin{definition}[Immediate consequence operator]
\label{def:T-operator}
Write \(\Gamma\vdash_{\mathrm{kb}}^{1}s\) for single-step
derivability.
Define
\(T_{\mathsf{PS}}(\Gamma):=\Gamma\cup
\{s\in\mathbb{S}_O:\Gamma\vdash_{\mathrm{kb}}^{1}s\}\),
with iteration
\(T^0_{\mathsf{PS}}(\Gamma):=\Gamma\) and
\(T^{n+1}_{\mathsf{PS}}(\Gamma):=
T_{\mathsf{PS}}(T^n_{\mathsf{PS}}(\Gamma))\).
\end{definition}

\begin{axiom}[Properties of \(T_{\mathsf{PS}}\)]
\label{ax:T-operator}
The operator satisfies:
\textup{(IC1)}~monotonicity;
\textup{(IC2)}~computability (finite output for finite input);
\textup{(IC3)}~closure characterization:
\(\Cn(\Gamma)=\bigcup_{n\ge 0}T^n_{\mathsf{PS}}(\Gamma)\);
\textup{(IC4)}~finite stabilization.
These hold in Datalog and Horn-clause settings over finite
domains~\cite{abiteboul1995foundations}.
\end{axiom}

\begin{definition}[Derivation depth]
\label{def:derivation-depth}
For finite \(B\subseteq\mathbb{S}_O\) and
\(s\in\mathbb{S}_O\),
\[
  \Dd(s\mid B)
  \;:=\;
  \min\bigl\{n\ge 0:s\in T^n_{\mathsf{PS}}(B)\bigr\},
\]
with \(\Dd(s\mid B):=\infty\) if \(s\notin\Cn(B)\).
\end{definition}

\begin{lemma}[Properties of derivation depth]
\label{lem:depth-properties}
Under Axiom~\textup{\ref{ax:T-operator}}, for finite
\(B\subseteq\mathbb{S}_O\):
\textup{(i)}~\(\Dd(s\mid B)\) is a unique, finite, computable
non-negative integer for every \(s\in\Cn(B)\);
\textup{(ii)}~\(\Dd(s\mid B)=0\) iff \(s\in B\);
\textup{(iii)}~if \(B\subseteq B'\) and \(s\in\Cn(B)\), then
\(\Dd(s\mid B')\le\Dd(s\mid B)\).
\end{lemma}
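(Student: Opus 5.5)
We prove the three parts in order, working directly from Definition~\ref{def:derivation-depth} and the four properties (IC1)--(IC4) of Axiom~\ref{ax:T-operator}. The one structural fact we need is that the iterates increase along \(n\): by Definition~\ref{def:T-operator}, \(T_{\mathsf{PS}}(\Gamma)\supseteq\Gamma\), so \(T^n_{\mathsf{PS}}(B)\subseteq T^{n+1}_{\mathsf{PS}}(B)\) for all \(n\). Hence the sets \(N(s,B):=\{n\ge 0: s\in T^n_{\mathsf{PS}}(B)\}\) are upward closed subsets of \(\mathbb{N}\).

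For (i), take \(s\in\Cn(B)\). By (IC3), \(s\in\bigcup_n T^n_{\mathsf{PS}}(B)\), so \(N(s,B)\) is a nonempty subset of \(\mathbb{N}\). By well-ordering it has a unique least element, which is a finite non-negative integer; uniqueness is just uniqueness of a minimum. For computability, we compute \(T^0_{\mathsf{PS}}(B)=B, T^1_{\mathsf{PS}}(B),\dots\) in turn, each of which is a finite set by (IC2). We test membership of \(s\) at each stage and return the first index at which \(s\) appears; membership in a finite explicitly listed set is decidable. The search halts because \(N(s,B)\neq\emptyset\). Alternatively, (IC4) says the sequence stabilizes at some finite stage \(m\), at which point \(\Cn(B)=T^m_{\mathsf{PS}}(B)\). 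So even when \(s\) is not known in advance to lie in \(\Cn(B)\), we can detect stabilization and report \(\infty\). This is the only step needing care: termination must rest on (IC3)/(IC4), not on unbounded search alone.

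For (ii), if \(s\in B=T^0_{\mathsf{PS}}(B)\) then \(0\in N(s,B)\), so \(\Dd(s\mid B)=0\). Conversely, \(\Dd(s\mid B)=0\) means \(s\in T^0_{\mathsf{PS}}(B)=B\).

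For (iii), suppose \(B\subseteq B'\) and \(s\in\Cn(B)\). We show by induction on \(n\), using (IC1), that \(T^n_{\mathsf{PS}}(B)\subseteq T^n_{\mathsf{PS}}(B')\). The base case is the hypothesis \(B\subseteq B'\). For the inductive step, apply monotonicity of \(T_{\mathsf{PS}}\) to the inclusion at stage \(n\). Now set \(n_0:=\Dd(s\mid B)\), which is finite by (i). Then \(s\in T^{n_0}_{\mathsf{PS}}(B)\subseteq T^{n_0}_{\mathsf{PS}}(B')\), so \(n_0\in N(s,B')\). In particular \(s\in\Cn(B')\) and \(\Dd(s\mid B')\le n_0\).
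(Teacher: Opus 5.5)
Your proof is correct and follows essentially the same route as the paper. Part (i) gets nonemptiness of \(\{n: s\in T^n_{\mathsf{PS}}(B)\}\) from the closure characterization (you cite (IC3) where the paper cites (IC4)) and computability from (IC2); part (ii) comes from \(T^0_{\mathsf{PS}}(B)=B\); and part (iii) comes from the inductive inclusion \(T^n_{\mathsf{PS}}(B)\subseteq T^n_{\mathsf{PS}}(B')\) via (IC1). Your additional remarks on the increasing chain and on detecting stabilization to report \(\infty\) are valid refinements, not a different argument.
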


\begin{proof}
Part~\textup{(i)}: by~\textup{(IC4)}, the chain stabilizes at
\(\Cn(B)\) in finitely many steps, so
\(\{n:s\in T^n(B)\}\neq\varnothing\) and its minimum is finite;
computability follows from~\textup{(IC2)}.
Part~\textup{(ii)}: \(T^0(B)=B\).
Part~\textup{(iii)}: monotonicity~\textup{(IC1)} gives
\(T^n(B)\subseteq T^n(B')\) by induction, so
\(\min\{n:s\in T^n(B')\}\le\min\{n:s\in T^n(B)\}\).
\end{proof}

\begin{definition}[Intrinsic and operational depths]
\label{def:int-op-depth}
For \(q\in\Cn(A)=\Cn(S_O)\), define
\(n_{\mathrm{int}}(q):=\Dd(q\mid A)\) and
\(n_{\mathrm{op}}(q):=\Dd(q\mid S_O)\).
\end{definition}

\begin{definition}[Semantic atomicity]
\label{def:atomicity-measure}
The \emph{semantic atomicity} of an information model~\(\mathcal I\)
with semantic space~\(S_O\) is
\(\mathsf{A}(\mathcal I):=|\Atom(S_O)|\).
\end{definition}

\begin{definition}[Maximum intrinsic derivation depth]
\label{def:max-depth}
The \emph{maximum intrinsic derivation depth} of~\(\mathcal I\) is
\[
  \mathsf{D_d}(\mathcal I)
  \;:=\;
  \max_{q\,\in\, S_O}\;\Dd\bigl(q \mid \Atom(S_O)\bigr),
\]
with the convention \(\max\varnothing:=0\).
\end{definition}

\begin{theorem}[Computable semantic invariants]
\label{thm:semantic-invariants}
Under
Assumptions~\textup{\ref{assump:finite-so}}--\textup{\ref{assump:core-extractable}}
and Axiom~\textup{\ref{ax:T-operator}}:
\begin{enumerate}[label=\textup{(\roman*)}]
  \item \(\mathsf{A}\) and~\(\mathsf{D_d}\) are uniquely
        determined, finite, and computable.
  \item \(n_{\mathrm{op}}(q)\le n_{\mathrm{int}}(q)
        \le\mathsf{D_d}\) for every \(q\in S_O\).
  \item \(\mathsf{D_d}=0\) if and only if
        \(\Atom(S_O)=S_O\).
\end{enumerate}
\end{theorem}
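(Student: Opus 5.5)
The plan is to assemble the three parts from Proposition~\ref{prop:atom-core-correct} and Lemma~\ref{lem:depth-properties}, handling each part in turn.

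\emph{Part (i).} Uniqueness of \(\mathsf{A}\) is inherited from Proposition~\ref{prop:atom-core-correct}(iii): \(\Atom(S_O)\) is determined by \(S_O\) and the canonical order, hence so is its cardinality. Finiteness follows from \(\Atom(S_O)\subseteq S_O\) and Assumption~\ref{assump:finite-so}. For computability, I would observe that the deletion procedure of Definition~\ref{def:atom-so} performs exactly \(|S_O|\) membership tests \(s\in\Cn(A\setminus\{s\})\). Each test is decidable by Assumption~\ref{assump:core-extractable}, since \(A\setminus\{s\}\subseteq S_O\) is finite. The procedure therefore halts and outputs \(A\), and we count its elements.

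For \(\mathsf{D_d}\), take \(A=\Atom(S_O)\). By Proposition~\ref{prop:atom-core-correct}(iv), every \(q\in S_O\) lies in \(\Cn(A)\). Lemma~\ref{lem:depth-properties}(i) then shows that each \(\Dd(q\mid A)\) is a unique, finite, computable integer. Concretely, we iterate \(T^n_{\mathsf{PS}}(A)\), which is computable by (IC2), until \(q\) appears; termination is guaranteed by (IC4) and (IC3). \(\mathsf{D_d}\) is then a maximum over the finite, effectively listable set \(S_O\) of finitely many computable integers. It is therefore finite and computable, and uniqueness follows because \(A\) itself is unique. The convention \(\max\varnothing=0\) covers the case \(S_O=\varnothing\).

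\emph{Part (ii).} Fix \(q\in S_O\). Since \(A\subseteq S_O\) and \(q\in\Cn(A)\), Lemma~\ref{lem:depth-properties}(iii) with \(B=A\) and \(B'=S_O\) gives \(n_{\mathrm{op}}(q)=\Dd(q\mid S_O)\le\Dd(q\mid A)=n_{\mathrm{int}}(q)\). In fact \(n_{\mathrm{op}}(q)=0\) by Lemma~\ref{lem:depth-properties}(ii), though we do not need this. The second inequality \(n_{\mathrm{int}}(q)\le\mathsf{D_d}\) is immediate from Definition~\ref{def:max-depth}, since \(q\) ranges over \(S_O\) in the maximum.

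\emph{Part (iii).} For the forward direction, suppose \(\mathsf{D_d}=0\). Then every \(q\in S_O\) has \(\Dd(q\mid A)=0\), so \(q\in A\) by Lemma~\ref{lem:depth-properties}(ii). This gives \(S_O\subseteq A\), and together with \(A\subseteq S_O\) we get \(A=S_O\). For the converse, if \(A=S_O\) then each \(q\in S_O\) lies in \(A\), so \(\Dd(q\mid A)=0\) and the maximum is \(0\).

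The only step I expect to need care is the computability claim in (i). The point to check is that depth computation terminates only because \(q\in\Cn(A)\) has already been established via Proposition~\ref{prop:atom-core-correct}(iv). Without that fact, the search could run on an element with infinite depth, so I would state this dependence explicitly. Everything else is direct bookkeeping.
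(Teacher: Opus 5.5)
Your proposal is correct and follows the paper's proof essentially step for step: part (i) uses Proposition~\ref{prop:atom-core-correct}(iii) and the deletion procedure for the core, then Lemma~\ref{lem:depth-properties}(i) with a maximum over the finite set $S_O$; part (ii) uses Lemma~\ref{lem:depth-properties}(iii) together with the definition of $\mathsf{D_d}$; and part (iii) uses Lemma~\ref{lem:depth-properties}(ii). Two steps the paper leaves implicit are made explicit in your version: citing Proposition~\ref{prop:atom-core-correct}(iv) to certify $q\in\Cn(A)$ before applying the depth lemma, and citing Assumption~\ref{assump:core-extractable} for the decidability of each deletion test.
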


\begin{proof}
Part~\textup{(i)}: \(\Atom(S_O)\) is uniquely determined by
Proposition~\ref{prop:atom-core-correct}(iii) and computable by
the deterministic procedure of Definition~\ref{def:atom-so};
each \(\Dd(q\mid A)\) is finite and computable by
Lemma~\ref{lem:depth-properties}(i); the maximum over the finite
set~\(S_O\) is computable by enumeration.
Part~\textup{(ii)}: \(A\subseteq S_O\) gives
\(\Dd(q\mid S_O)\le\Dd(q\mid A)\) by
Lemma~\ref{lem:depth-properties}(iii); the bound by
\(\mathsf{D_d}\) is immediate from the definition.
Part~\textup{(iii)}: if \(A=S_O\), every
\(q\in S_O=T^0(A)\) has depth~\(0\); conversely, depth~\(0\) for
all~\(q\) implies \(S_O\subseteq A\) by
Lemma~\ref{lem:depth-properties}(ii), and \(A\subseteq S_O\) by
construction.
\end{proof}

\subsection{Noisy Information and Closure Fidelity}
\label{subsec:noisy}

When the sender's knowledge base~\(S_O\) and the receiver's
reconstructed space~\(\hat S_O\) differ, the discrepancy is
captured by a \emph{noise pair}.

\begin{definition}[Noisy semantic base]
\label{def:noisy}
A \emph{noisy semantic base} of~\(S_O\) is any set
\(\tilde S_O:=(S_O\setminus S_O^-)\cup S_O^+\), where
\(S_O^-\subseteq S_O\) (lost states) and
\(S_O^+\subseteq\mathbb{S}_O\setminus S_O\) (spurious states).
The pair \((S_O^-,S_O^+)\) is the \emph{noise pair}; it is
\emph{trivial} when both sets are empty.
\end{definition}

\begin{definition}[Closure fidelity]
\label{def:closure-fidelity}
For finite \(S,\hat S\subseteq\mathbb{S}_O\),
\[
  \mathsf{F}_{\Cn}(S,\hat S)
  \;:=\;
  \frac{|\Cn(S)\cap\Cn(\hat S)|}
       {|\Cn(S)\cup\Cn(\hat S)|},
\]
with \(0/0:=1\).
We have \(\mathsf{F}_{\Cn}=1\) iff \(\Cn(S)=\Cn(\hat S)\).
\end{definition}

\begin{definition}[Core preservation ratio]
\label{def:atom-preservation}
For \(A=\Atom(S_O)\) and any
\(\hat S\subseteq\mathbb{S}_O\),
\(\rho_{\Atom}(S_O,\hat S):=|A\cap\hat S|/|A|\)
(with \(0/0:=1\)).
\end{definition}

\begin{proposition}[Noise pair, core preservation, and closure fidelity]
\label{prop:noise-fidelity}
Let \(\tilde S_O=(S_O\setminus S_O^-)\cup S_O^+\) and
\(A=\Atom(S_O)\).
\begin{enumerate}[label=\textup{(\roman*)}]
  \item \(\rho_{\Atom}(S_O,\tilde S_O)=1\) iff
        \(A\cap S_O^-=\varnothing\).
  \item If \(A\cap S_O^-=\varnothing\), then
        \(\Cn(S_O)\subseteq\Cn(\tilde S_O)\).
  \item If \(A\cap S_O^-=\varnothing\) and
        \(S_O^+\subseteq\Cn(S_O)\), then
        \(\Cn(S_O)=\Cn(\tilde S_O)\) and
        \(\mathsf{F}_{\Cn}(S_O,\tilde S_O)=1\).
  \item Trivial noise implies \(\rho_{\Atom}=1\) and
        \(\mathsf{F}_{\Cn}=1\).
\end{enumerate}
\end{proposition}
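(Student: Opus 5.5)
The plan is to argue each part directly from the set identity \(\tilde S_O=(S_O\setminus S_O^-)\cup S_O^+\), together with the standing properties (Cn1)--(Cn3) and Proposition~\ref{prop:atom-core-correct}. The whole argument rests on one observation: since \(S_O^+\subseteq\mathbb{S}_O\setminus S_O\) is disjoint from \(S_O\supseteq A\), we have
\[
A\cap\tilde S_O=(A\setminus S_O^-)\cup(A\cap S_O^+)=A\setminus S_O^- .
\]

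For part~(i), this identity gives \(\rho_{\Atom}(S_O,\tilde S_O)=|A\setminus S_O^-|/|A|\). The ratio equals \(1\) exactly when \(A\setminus S_O^-=A\), i.e.\ when \(A\cap S_O^-=\varnothing\). The degenerate case \(A=\varnothing\) is covered by the convention \(0/0:=1\), and there \(A\cap S_O^-=\varnothing\) holds trivially.

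For part~(ii), assume \(A\cap S_O^-=\varnothing\). Then \(A\subseteq S_O\setminus S_O^-\subseteq\tilde S_O\). Monotonicity (Cn2) gives \(\Cn(A)\subseteq\Cn(\tilde S_O)\), and Proposition~\ref{prop:atom-core-correct}(i) gives \(\Cn(S_O)=\Cn(A)\). Together these yield \(\Cn(S_O)\subseteq\Cn(\tilde S_O)\).

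For part~(iii), the reverse inclusion is needed. We have \(S_O\setminus S_O^-\subseteq S_O\subseteq\Cn(S_O)\) by (Cn1), and \(S_O^+\subseteq\Cn(S_O)\) by hypothesis, so \(\tilde S_O\subseteq\Cn(S_O)\). Monotonicity and idempotence then give \(\Cn(\tilde S_O)\subseteq\Cn(\Cn(S_O))=\Cn(S_O)\). Combined with~(ii), this is equality of closures. Hence \(\mathsf F_{\Cn}(S_O,\tilde S_O)=1\) by the characterization stated in Definition~\ref{def:closure-fidelity}; the numerator and denominator coincide, and the case \(0/0\) is handled by convention.

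For part~(iv), trivial noise means \(S_O^-=S_O^+=\varnothing\). The hypotheses of (i) and (iii) then hold vacuously, since \(\varnothing\subseteq\Cn(S_O)\). Alternatively, \(\tilde S_O=S_O\) directly, giving \(\rho_{\Atom}=|A|/|A|=1\) and \(\mathsf F_{\Cn}=1\).

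There is no real obstacle here. The only points needing care are the disjointness of \(S_O^+\) from \(A\) in~(i), which is what makes the statement an "iff", and invoking idempotence for the reverse inclusion in~(iii).
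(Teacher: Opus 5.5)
Your proposal is correct and follows the paper's proof essentially step for step. The steps are: the disjointness \(A\cap S_O^+=\varnothing\), giving \(A\cap\tilde S_O=A\setminus S_O^-\), for (i); monotonicity together with \(\Cn(A)=\Cn(S_O)\) for (ii); the reverse inclusion via \(\tilde S_O\subseteq\Cn(S_O)\) and idempotence for (iii); and \(\tilde S_O=S_O\) for (iv). Your explicit treatment of the \(0/0\) conventions is a small addition that the paper leaves implicit.
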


\begin{proof}
\textup{(i)}:
Since \(A\subseteq S_O\) and
\(S_O^+\subseteq\mathbb{S}_O\setminus S_O\), we have
\(A\cap S_O^+=\varnothing\), hence
\(A\cap\tilde S_O=A\setminus S_O^-\) and
\(\rho_{\Atom}=|A\setminus S_O^-|/|A|=1\) iff
\(A\cap S_O^-=\varnothing\).

\smallskip\noindent
\textup{(ii)}:
If \(A\cap S_O^-=\varnothing\), then
\(A\subseteq S_O\setminus S_O^-\subseteq\tilde S_O\).
By~\textup{(Cn2)},
\(\Cn(A)\subseteq\Cn(\tilde S_O)\), and
\(\Cn(A)=\Cn(S_O)\) by
Proposition~\ref{prop:atom-core-correct}(i).

\smallskip\noindent
\textup{(iii)}:
By~\textup{(ii)}, \(\Cn(S_O)\subseteq\Cn(\tilde S_O)\).
For the reverse:
\(S_O\setminus S_O^-\subseteq S_O\subseteq\Cn(S_O)\)
by~\textup{(Cn1)}, and \(S_O^+\subseteq\Cn(S_O)\) by hypothesis,
so \(\tilde S_O\subseteq\Cn(S_O)\).
By~\textup{(Cn2)} and~\textup{(Cn3)},
\(\Cn(\tilde S_O)\subseteq\Cn(\Cn(S_O))=\Cn(S_O)\).

\smallskip\noindent
\textup{(iv)}: Immediate from \(\tilde S_O=S_O\).
\end{proof}

\begin{remark}[Zero-distortion property of redundant states]
\label{rem:redundant-free}
Proposition~\textup{\ref{prop:noise-fidelity}(iii)} has a
per-state counterpart crucial for the rate--distortion
analysis: if \(j\in J=S_O\setminus\Atom(S_O)\), then
\(j\in\Cn(S_O\setminus\{j\})\), so replacing~\(j\) by any
\(\hat s\in\Cn(S_O)\) preserves the deductive closure.
Errors on redundant states incur zero closure distortion---a
property absent from any classical distortion measure and the
source of the deductive compression gain formalized in
Section~\textup{\ref{sec:fundamental-limits}}.
\end{remark}

\section{Semantic Channel}
\label{sec:channel}

This section erects the probabilistic layer on the structural
framework of Section~\ref{sec:model}.
The central object is the \emph{semantic channel}: a composition of
Markov kernels---encoding, carrier transmission, and decoding---each
constrained by the enabling structure of an underlying information
model (Definition~\ref{def:info-instance}).

\smallskip
\noindent\textbf{Notation.}
All state spaces are finite.
A \emph{probability distribution} on a nonempty finite set~\(S\) is a
function \(P:S\to[0,1]\) with \(\sum_s P(s)=1\); write
\(\Delta(S)\) for the probability simplex and
\(\supp(P):=\{s:P(s)>0\}\).
A \emph{Markov kernel} \(\kappa:X\rightsquigarrow Y\) is a function
\(\kappa:X\times Y\to[0,1]\) with
\(\kappa(\cdot\mid x)\in\Delta(Y)\) for each~\(x\);
kernels compose by
\((\kappa_2\circ\kappa_1)(z\mid x)
:=\sum_y\kappa_1(y\mid x)\,\kappa_2(z\mid y)\).
A kernel is \emph{deterministic} if
\(|\supp(\kappa(\cdot\mid x))|=1\) for all~\(x\).
Shannon entropy, conditional entropy, and mutual information are
denoted \(H(\cdot)\), \(H(\cdot\mid\cdot)\), and
\(I(\cdot\,;\cdot)\) (base~2,
bits)~\cite{cover2006elements}.
Random variables are in sans-serif
(\(\mathsf{S}_o,\hat{\mathsf{S}}_o\)); expectations are~\(\E[\cdot]\).

\subsection{Enabling Kernels and the Semantic Channel}
\label{subsec:enabling-channel}

\begin{definition}[Semantic source]
\label{def:semantic-source}
A \emph{semantic source} is a pair \((S_O,P_O)\) with
\(P_O\in\Delta(S_O)\).
It is \emph{full-support} if \(\supp(P_O)=S_O\) and
\emph{uniform} if \(P_O\equiv 1/|S_O|\).
\end{definition}

\begin{definition}[Enabling kernel]
\label{def:enabling-kernel}
Let \(\mathcal I\) be an information model with enabling map
\(\mathcal E:S_O\Rightarrow S_C\)
(Axiom~\ref{ax:enabling-mapping}).
An \emph{enabling kernel} for~\(\mathcal I\) is a Markov kernel
\(\kappa:S_O\rightsquigarrow S_C\) satisfying
\begin{equation}\label{eq:enabling-support}
  \supp\bigl(\kappa(\cdot\mid s_o)\bigr)
  \;\subseteq\;
  \mathcal E(s_o),
  \qquad\forall\,s_o\in S_O.
\end{equation}
Write \(\mathcal K(\mathcal I)\) for the set of all enabling kernels
for~\(\mathcal I\).
By Axiom~\ref{ax:enabling-mapping}(E3), the deterministic kernel
\(\kappa_e(s_c\mid s_o):=\mathbf{1}[s_c=e(s_o)]\) belongs to
\(\mathcal K(\mathcal I)\), so \(\mathcal K(\mathcal I)\neq\varnothing\).
\end{definition}

\begin{proposition}[Enabling kernels compose]
\label{prop:enabling-compose}
If \((\mathcal I_1,\mathcal I_2)\) is composable
\textup{(Definition~\ref{def:model-composition})} and
\(\kappa_i\in\mathcal K(\mathcal I_i)\) for \(i=1,2\), then
\(\kappa_2\circ\kappa_1\in
\mathcal K(\mathcal I_2\circ\mathcal I_1)\).
\end{proposition}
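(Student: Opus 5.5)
We have to show that \(\kappa_2\circ\kappa_1\) is a Markov kernel \(S_{O,1}\rightsquigarrow S_{C,2}\) and that it satisfies the support condition \eqref{eq:enabling-support} for the composite model \(\mathcal I_2\circ\mathcal I_1\). We will unpack Definition~\ref{def:model-composition}. It is not reproduced above, but we expect it to identify the carrier space of \(\mathcal I_1\) with the semantic space of \(\mathcal I_2\) (so \(S_{C,1}\subseteq S_{O,2}\)). We also expect the composite enabling map to be the relational composition
\[
  (\mathcal E_2\circ\mathcal E_1)(s_o)\;=\;\bigcup_{y\in\mathcal E_1(s_o)}\mathcal E_2(y).
\]
The argument will be pinned to exactly this shape.

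\emph{Step 1 (Markov kernel).} For fixed \(x\), the composite weights \((\kappa_2\circ\kappa_1)(z\mid x)=\sum_y\kappa_1(y\mid x)\kappa_2(z\mid y)\) are non-negative. Summing over \(z\) and exchanging the two finite sums gives \(\sum_y\kappa_1(y\mid x)\cdot 1=1\). Hence \((\kappa_2\circ\kappa_1)(\cdot\mid x)\in\Delta(S_{C,2})\).

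\emph{Step 2 (support).} Suppose \((\kappa_2\circ\kappa_1)(z\mid x)>0\). Since the sum has non-negative terms, some \(y\) has \(\kappa_1(y\mid x)>0\) and \(\kappa_2(z\mid y)>0\). By \eqref{eq:enabling-support} for \(\kappa_1\), we get \(y\in\mathcal E_1(x)\). By \eqref{eq:enabling-support} for \(\kappa_2\), we get \(z\in\mathcal E_2(y)\). Therefore \(z\in(\mathcal E_2\circ\mathcal E_1)(x)\), which is the required inclusion.

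The main obstacle is the matching of types in Step~2. We need \(y\in S_{C,1}\) to lie in the domain of \(\kappa_2\), and the composite enabling map must be (or at least contain) the relational composite displayed above. If Definition~\ref{def:model-composition} only requires an embedding \(\iota:S_{C,1}\hookrightarrow S_{O,2}\), we would insert \(\iota\) into the composition, \(\kappa_2(z\mid\iota(y))\). Steps~1 and~2 then go through unchanged.

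If the composite enabling map is defined more restrictively, we would invoke the relevant clause of Axiom~\ref{ax:enabling-mapping} to show that it contains \(\mathcal E_2(y)\) for every \(y\in\mathcal E_1(x)\). Such a clause would presumably be the one guaranteeing that enabling maps compose.
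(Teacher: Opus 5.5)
Your proof is correct and is essentially the paper's argument. Definition~\ref{def:model-composition} requires exactly $S_C^{(1)}=S_O^{(2)}$ and sets $\mathcal E_{2\circ 1}(s_o)=\bigcup_{s_c\in\mathcal E_1(s_o)}\mathcal E_2(s_c)$, so your Step~2 matches the paper's proof and the contingency plans about embeddings or more restrictive composites are not needed, while your Step~1 (checking that the composite is a Markov kernel) is a verification the paper leaves implicit in its notation paragraph on kernel composition.
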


\begin{proof}
If \((\kappa_2\circ\kappa_1)(s'\mid s_o)>0\), there exists
\(s_c\) with \(\kappa_1(s_c\mid s_o)>0\) and
\(\kappa_2(s'\mid s_c)>0\).
The enabling constraints give
\(s_c\in\mathcal E_1(s_o)\) and
\(s'\in\mathcal E_2(s_c)\subseteq\mathcal E_{2\circ 1}(s_o)\).
\end{proof}

We now define the semantic channel as a three-stage composition.

\begin{definition}[Semantic channel]
\label{def:semantic-channel}
A \emph{semantic channel} is a tuple
\(\mathfrak C=(\mathcal I,\,\mathcal I_{\mathrm{ch}},\,
\mathcal I_{\mathrm{dec}},\,\kappa_{\enc},\,W,\,D)\),
where:
\begin{enumerate}[label=\textup{(\roman*)}]
  \item \(\mathcal I\) is the sender's information model
        (Definition~\ref{def:info-instance}) with encoding kernel
        \(\kappa_{\enc}\in\mathcal K(\mathcal I)\),
        mapping \(S_O\rightsquigarrow S_C\);
  \item \(\mathcal I_{\mathrm{ch}}\) is a carrier channel model
        with \(S_O(\mathcal I_{\mathrm{ch}})=S_C\),
        carrier state set \(\hat S_C\), and carrier channel
        kernel \(W\in\mathcal K(\mathcal I_{\mathrm{ch}})\),
        mapping \(S_C\rightsquigarrow\hat S_C\);
  \item \(\mathcal I_{\mathrm{dec}}\) is a decoding model with
        \(S_O(\mathcal I_{\mathrm{dec}})=\hat S_C\),
        reconstructed space
        \(\hat S_O\subseteq\mathbb{S}_O\), and decoding kernel
        \(D\in\mathcal K(\mathcal I_{\mathrm{dec}})\),
        mapping \(\hat S_C\rightsquigarrow\hat S_O\).
\end{enumerate}
The \emph{end-to-end kernel} is
\begin{equation}\label{eq:semantic-channel-kernel}
  \kappa_{\mathrm{sem}}
  :=D\circ W\circ\kappa_{\enc}
  :S_O\rightsquigarrow\hat S_O.
\end{equation}
By Proposition~\ref{prop:enabling-compose} (applied twice),
\(\kappa_{\mathrm{sem}}\in\mathcal K(\mathcal I_{\mathrm{sem}})\)
where
\(\mathcal I_{\mathrm{sem}}:=
\mathcal I_{\mathrm{dec}}\circ\mathcal I_{\mathrm{ch}}\circ
\mathcal I\)
is the composite information model
\textup{(Definition~\ref{def:model-composition},
Remark~\ref{rem:composition-assoc})}.
Since \(\hat S_O\subseteq\mathbb{S}_O\), the proof system
\((\mathsf{PS},T_{\mathsf{PS}},\Cn)\) acts on~\(\hat S_O\),
making \(\Cn(\hat S_O)\), \(\Atom(\hat S_O)\), and
\(\Dd(\cdot\mid\Atom(\hat S_O))\) well-defined.
\end{definition}

\begin{definition}[End-to-end noise pair]
\label{def:e2e-noise}
Setting \(\tilde S_O:=\hat S_O\), the reconstructed space is
a noisy semantic base of~\(S_O\)
\textup{(Definition~\ref{def:noisy})} with
\(S_O^{-}:=S_O\setminus\hat S_O\) and
\(S_O^{+}:=\hat S_O\setminus S_O\).
By Proposition~\ref{prop:noise-fidelity}, the core preservation
and closure fidelity properties hold with this noise pair.
\end{definition}

\begin{definition}[Ideal semantic channel]
\label{def:ideal-channel}
A semantic channel is \emph{ideal} if all three constituent
models are ideal (Definition~\ref{def:ideal-info}) and the
kernels are the deterministic bijections induced by the
synonymy witnesses.
In this case \(\kappa_{\mathrm{sem}}\) is a deterministic
bijection and \(S_O\syn\tilde S_O\)
\textup{(Proposition~\ref{prop:syn-equiv})}.
When additionally \(\tilde S_O=S_O\) and
\(\tau_{\mathrm{e2e}}=\mathrm{id}_{S_O}\), the noise pair is
trivial.
\end{definition}

\subsection{Semantic Distortion}
\label{subsec:distortion}

\begin{definition}[Distortion function]
\label{def:distortion-function}
A \emph{distortion function} is any
\(d:S_O\times\hat S_O\to[0,\infty)\) with
\(d(s,s)=0\) for \(s\in S_O\cap\hat S_O\).
It is \emph{normalized} if \(d\le 1\).
\end{definition}

\begin{definition}[Hamming distortion]
\label{def:hamming-distortion}
\(d_H(s_o,\hat s_o):=\mathbf{1}[s_o\neq\hat s_o]\).
\end{definition}

\begin{definition}[Closure distortion]
\label{def:closure-distortion}
For a reference base \(\Gamma\subseteq\mathbb{S}_O\)
(typically \(\Gamma=S_O\)),
write \(\Gamma_{-s}:=\Gamma\setminus\{s\}\),
\(C_s:=\Cn(\Gamma_{-s}\cup\{s\})\), and
\(C_{\hat s}:=\Cn(\Gamma_{-s}\cup\{\hat s\})\).
The \emph{closure distortion} is the Jaccard distance
\begin{equation}\label{eq:d-Cn}
  d_{\Cn}(s_o,\hat s_o\mid\Gamma)
  :=1-\frac{|C_s\cap C_{\hat s}|}{|C_s\cup C_{\hat s}|},
\end{equation}
with \(0/0:=0\).
We abbreviate
\(d_{\Cn}(s_o,\hat s_o):=d_{\Cn}(s_o,\hat s_o\mid S_O)\).
\end{definition}

\begin{remark}[Zero distortion on redundant states]
\label{rem:closure-distortion-content}
Two properties follow from the \(\Cn\)-axiomatics:
\textup{(a)}~if \(s_o\in\Cn(\Gamma_{-s_o})\) (redundant), then
any \(\hat s_o\in\Cn(\Gamma_{-s_o})\) yields
\(d_{\Cn}=0\);
\textup{(b)}~if \(s_o\in\Atom(S_O)\) and \(\Gamma=S_O\),
then \(s_o\notin\Cn(\Gamma_{-s_o})\), so the replacement
genuinely matters.
\end{remark}

\begin{definition}[Depth distortion]
\label{def:depth-distortion}
Let \(A=\Atom(S_O)\) and
\(d_{\max}:=\mathsf{D_d}(\mathcal I)\)
\textup{(Definition~\ref{def:max-depth})}.
Define
\[
  d_{\Dd}(s_o,\hat s_o):=
  \begin{cases}
    \min\!\bigl(
    \frac{|\Dd(s_o\mid A)-\Dd(\hat s_o\mid A)|}
         {\max(d_{\max},1)},\,1\bigr)
    & \text{if } \hat s_o\in\Cn(A),\\
    1 & \text{otherwise}.
  \end{cases}
\]
\end{definition}

\begin{definition}[Composite semantic distortion]
\label{def:composite-distortion}
For weights \(\alpha,\beta,\gamma\ge 0\) with
\(\alpha+\beta+\gamma=1\),
\begin{equation}\label{eq:d-sem}
  d_{\mathrm{sem}}(s_o,\hat s_o)
  :=\alpha\,d_H(s_o,\hat s_o)
  +\beta\,d_{\Cn}(s_o,\hat s_o)
  +\gamma\,d_{\Dd}(s_o,\hat s_o).
\end{equation}
Setting \((\alpha,\beta,\gamma)=(1,0,0)\) recovers Hamming distortion;
\((0,1,0)\) yields a purely deductive-content measure.
\end{definition}

\begin{definition}[Expected distortion]
\label{def:expected-distortion}
For a semantic source \((S_O,P_O)\) and distortion~\(d\),
\begin{equation}\label{eq:expected-d}
  \bar d(\mathfrak C,P_O)
  :=\sum_{s_o,\hat s_o}
  P_O(s_o)\,\kappa_{\mathrm{sem}}(\hat s_o\mid s_o)\,
  d(s_o,\hat s_o).
\end{equation}
\end{definition}

\begin{definition}[Per-input expected distortion]
\label{def:per-input-distortion}
For any \(s_o\in S_O\) and distortion~\(d\),
\[
  \bar d(s_o\mid\mathfrak C)
  \;:=\;
  \sum_{\hat s_o}\kappa_{\mathrm{sem}}(\hat s_o\mid s_o)\,
  d(s_o,\hat s_o).
\]
When \(d=d_{\Cn}\), we write
\(\bar d_{\Cn}(s_o\mid\mathfrak C)\).
\end{definition}

\begin{proposition}[Noise-pair bounds on closure distortion]
\label{prop:noise-distortion-bounds}
Let \(A=\Atom(S_O)\).
If \(A\cap S_O^{-}=\varnothing\) and
\(S_O^{+}\subseteq\Cn(S_O)\), then for every
\(s_o\in S_O\setminus A\) and
\(\hat s_o\in\tilde S_O\),
\(d_{\Cn}(s_o,\hat s_o\mid S_O)=0\).
Consequently,
\(\bar d_{\Cn}(\mathfrak C,P_O)
\le P_O(A)\cdot\max_{a\in A}\bar d_{\Cn}(a\mid\mathfrak C)\).
\end{proposition}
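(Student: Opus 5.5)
The first claim reduces to a closure identity, and the second follows by splitting the expected distortion over \(A\) and \(J\). Fix \(s_o\in S_O\setminus A\) and \(\hat s_o\in\tilde S_O\). With \(\Gamma=S_O\), Definition~\ref{def:closure-distortion} gives
\[
C_{s_o}=\Cn(S_O),\qquad C_{\hat s_o}=\Cn\bigl((S_O\setminus\{s_o\})\cup\{\hat s_o\}\bigr).
\]
I will show \(C_{\hat s_o}=\Cn(S_O)\). Then the Jaccard ratio is \(1\), or \(0/0\), which the convention sets to distortion \(0\). Either way \(d_{\Cn}(s_o,\hat s_o\mid S_O)=0\).

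\emph{Step 1: redundancy in the full base.} Since \(s_o\notin A\), the procedure of Definition~\ref{def:atom-so} removed \(s_o\). So \(s_o\in\Cn(A_{\mathrm{scan}}\setminus\{s_o\})\), where \(A_{\mathrm{scan}}\subseteq S_O\) is the current set when \(s_o\) was scanned. By (Cn2), \(s_o\in\Cn(S_O\setminus\{s_o\})\). As in the proof of Proposition~\ref{prop:atom-core-correct}(i), (Cn1)--(Cn3) then give \(\Cn(S_O\setminus\{s_o\})=\Cn(S_O)\). This is the precise form of Remark~\ref{rem:redundant-free}; it needs care because redundancy is only certified relative to the scan-time set, not to \(S_O\) directly.

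\emph{Step 2: the reconstruction stays inside the closure.} We have \(\tilde S_O=(S_O\setminus S_O^-)\cup S_O^+\). Here \(S_O\setminus S_O^-\subseteq S_O\subseteq\Cn(S_O)\) by (Cn1), and \(S_O^+\subseteq\Cn(S_O)\) by hypothesis. Hence \(\hat s_o\in\Cn(S_O)\), and so
\[
(S_O\setminus\{s_o\})\cup\{\hat s_o\}\subseteq\Cn(S_O).
\]
By (Cn2) and (Cn3), \(C_{\hat s_o}\subseteq\Cn(S_O)\). Conversely, (Cn2) and Step~1 give
\[
\Cn(S_O)=\Cn(S_O\setminus\{s_o\})\subseteq C_{\hat s_o}.
\]
So \(C_{\hat s_o}=C_{s_o}\). (The hypothesis \(A\cap S_O^-=\varnothing\) is not actually needed for this pointwise claim.)

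\emph{Step 3: the expected-distortion bound.} By Definition~\ref{def:semantic-channel}, \(\kappa_{\mathrm{sem}}(\cdot\mid s_o)\) is supported on \(\hat S_O=\tilde S_O\) (Definition~\ref{def:e2e-noise}). Steps~1--2 therefore give \(\bar d_{\Cn}(s_o\mid\mathfrak C)=0\) for every \(s_o\in J\). Splitting \(\bar d_{\Cn}(\mathfrak C,P_O)=\sum_{s_o}P_O(s_o)\,\bar d_{\Cn}(s_o\mid\mathfrak C)\) over \(A\) and \(J\), the \(J\)-part vanishes. The \(A\)-part is at most \(P_O(A)\max_{a\in A}\bar d_{\Cn}(a\mid\mathfrak C)\), using nonnegativity of \(d_{\Cn}\). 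If \(A=\varnothing\), the bound reads \(0\) and holds trivially.
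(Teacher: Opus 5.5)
Your proof is correct and follows the paper's argument. Both show \(\hat s_o\in\Cn(S_O)\) from \(\tilde S_O\subseteq\Cn(S_O)\), combine this with \(\Cn(S_O\setminus\{s_o\})=\Cn(S_O)\) to get \(C_{\hat s_o}=C_{s_o}=\Cn(S_O)\), and then drop the \(J\)-terms from the expected distortion. Your Step~1 adds the scan-time justification of \(\Cn(S_O\setminus\{s_o\})=\Cn(S_O)\), which the paper only asserts, and you are right that \(A\cap S_O^-=\varnothing\) is not needed: the paper uses it only through an appeal to Proposition~\ref{prop:noise-fidelity}(iii) whose conclusion the rest of its argument never uses.
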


\begin{proof}
For \(s_o\in S_O\setminus A\),
\(\Cn(S_O\setminus\{s_o\})=\Cn(S_O)\).
By Proposition~\ref{prop:noise-fidelity}(iii),
\(\Cn(\tilde S_O)=\Cn(S_O)\).
Since
\(S_O\setminus S_O^-\subseteq S_O\subseteq\Cn(S_O)\)
\textup{(by (Cn1))} and
\(S_O^+\subseteq\Cn(S_O)\) by hypothesis,
\(\tilde S_O=(S_O\setminus S_O^-)\cup S_O^+
  \subseteq\Cn(S_O)\),
so in particular
\(\hat s_o\in\Cn(S_O)\).
Since \(s_o\in J\), we have
\(\Cn(S_O\setminus\{s_o\})=\Cn(S_O)\).
Because
\(\hat s_o\in\Cn(S_O)=\Cn(S_O\setminus\{s_o\})\),
the set
\((S_O\setminus\{s_o\})\cup\{\hat s_o\}
  \subseteq\Cn(S_O\setminus\{s_o\})\),
so by monotonicity~\textup{(Cn2)} and
idempotence~\textup{(Cn3)},
\(\Cn((S_O\setminus\{s_o\})\cup\{\hat s_o\})
  \subseteq\Cn(S_O\setminus\{s_o\})
  =\Cn(S_O)\).
The reverse inclusion follows from
\(S_O\setminus\{s_o\}\subseteq
(S_O\setminus\{s_o\})\cup\{\hat s_o\}\)
and monotonicity.
Hence \(C_{s_o}=C_{\hat s_o}=\Cn(S_O)\) and
\(d_{\Cn}=0\).
Non-core states contribute zero to the expected distortion;
the bound follows.
\end{proof}

\subsection{Semantic Channel Invariants}
\label{subsec:invariants}

\begin{definition}[Semantic mutual information and capacity]
\label{def:semantic-mi}
For a semantic source \((S_O,P_O)\) and semantic
channel~\(\mathfrak C\), the \emph{semantic mutual information}
is
\(I_{\mathrm{sem}}(P_O,\mathfrak C)
:=I(\mathsf{S}_o;\hat{\mathsf{S}}_o)\)
under the joint
\(P_O(s_o)\,\kappa_{\mathrm{sem}}(\hat s_o\mid s_o)\).
The \emph{Shannon capacity} of the carrier channel is
\(C(W):=\max_{P_C}I(\mathsf{S}_c;\hat{\mathsf{S}}_c)\).
The \emph{semantic channel capacity} is
\begin{equation}\label{eq:C-sem}
  C_{\mathrm{sem}}(W)
  :=\max_{P_O,\,\kappa_{\enc}\in\mathcal K(\mathcal I),\,
  D\in\mathcal K(\mathcal I_{\mathrm{dec}})}
  I_{\mathrm{sem}}(P_O,\mathfrak C).
\end{equation}
\end{definition}

\begin{remark}[Source of ``semantic'' in \(I_{\mathrm{sem}}\)]
\label{rem:semantic-mi-naming}
The semantic mutual information \(I_{\mathrm{sem}}\) is defined as
the standard Shannon mutual information and satisfies the same
algebraic properties.
Its ``semantic'' qualifier refers not to the functional form but to
the \emph{context} in which it is evaluated: the encoding and
decoding kernels are constrained by enabling maps
\textup{(Definition~\ref{def:enabling-kernel})}, and the performance
criteria are closure-based distortion measures
\textup{(Section~\ref{subsec:distortion})} rather than symbol-level
metrics.
The semantic novelty thus enters through the feasible set and the
fidelity criterion, not through a redefinition of mutual information
itself.
\end{remark}

\begin{theorem}[Data processing bound]
\label{thm:data-processing}
\begin{enumerate}[label=\textup{(\roman*)}]
  \item \(C_{\mathrm{sem}}(W)\le C(W)\).
  \item \(C_{\mathrm{sem}}(W)\le\log|S_O|\).
  \item If the enabling maps are full
        \textup{(}\(\mathcal E(s_o)=S_C\),
        \(\mathcal E_{\mathrm{dec}}(\hat s_c)=\hat S_O\)
        for all inputs\textup{)} and the alphabet sizes satisfy
        \(|S_O|\ge|S_C|\) and \(|\hat S_O|\ge|\hat S_C|\)
        \textup{(}so that a deterministic surjection
        \(f:S_O\to S_C\) and a deterministic injection
        \(g:\hat S_C\to\hat S_O\) exist\textup{)}, then
        \(C_{\mathrm{sem}}(W)=C(W)\).
\end{enumerate}
\end{theorem}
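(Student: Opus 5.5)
Part (i) follows from the data processing inequality applied to the Markov chain \(\mathsf{S}_o\to\mathsf{S}_c\to\hat{\mathsf{S}}_c\to\hat{\mathsf{S}}_o\) induced by \(\kappa_{\mathrm{sem}}=D\circ W\circ\kappa_{\enc}\).

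Fix any feasible triple \((P_O,\kappa_{\enc},D)\). Let \(P_C\) be the marginal of \(\mathsf{S}_c\) induced by \(P_O\) and \(\kappa_{\enc}\). Applying the data processing inequality twice gives
\[
I(\mathsf{S}_o;\hat{\mathsf{S}}_o)\le I(\mathsf{S}_o;\hat{\mathsf{S}}_c)\le I(\mathsf{S}_c;\hat{\mathsf{S}}_c).
\]
The first step uses the chain \(\mathsf{S}_o\to\hat{\mathsf{S}}_c\to\hat{\mathsf{S}}_o\). The second uses \(\mathsf{S}_o\to\mathsf{S}_c\to\hat{\mathsf{S}}_c\), which holds because \(W\) acts only on \(\mathsf{S}_c\). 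The right-hand side is the mutual information of \(W\) under the input \(P_C\), so it is at most \(C(W)\). Maximizing the left-hand side over the feasible set in~\eqref{eq:C-sem} yields (i).

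Part (ii) is immediate: for any feasible triple, \(I(\mathsf{S}_o;\hat{\mathsf{S}}_o)\le H(\mathsf{S}_o)\le\log|S_O|\).

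For part (iii), (i) already gives \(C_{\mathrm{sem}}\le C(W)\), so only achievability remains.
\begin{enumerate}[label=\textup{(\alph*)}]
  \item Let \(P_C^\star\) achieve \(C(W)\); the maximum exists because the alphabets are finite and mutual information is continuous on the compact simplex.
  \item Take the surjection \(f\). For each \(s_c\) choose a representative \(r(s_c)\in f^{-1}(s_c)\), and set \(P_O(r(s_c)):=P_C^\star(s_c)\), with \(P_O\) zero elsewhere. This is a valid distribution on \(S_O\).
  \item Set \(\kappa_{\enc}(s_c\mid s_o)=\mathbf 1[s_c=f(s_o)]\). Because the enabling map is full, this kernel is enabling. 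On the support of \(P_O\) it is a bijection onto \(\supp(P_C^\star)\), so \(\mathsf{S}_c\) has law \(P_C^\star\) and \(\mathsf{S}_o\) is a deterministic function of \(\mathsf{S}_c\).
  \item Set \(D(\hat s_o\mid\hat s_c)=\mathbf 1[\hat s_o=g(\hat s_c)]\), which is enabling by fullness. Since \(g\) is injective, \(\hat{\mathsf{S}}_c\) is a deterministic function of \(\hat{\mathsf{S}}_o\).
\end{enumerate}
Mutual information is invariant under injective relabelings of either argument, so \(I(\mathsf{S}_o;\hat{\mathsf{S}}_o)=I(\mathsf{S}_c;\hat{\mathsf{S}}_c)=C(W)\). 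Hence \(C_{\mathrm{sem}}(W)\ge C(W)\), which gives equality.

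The step most likely to need care is the encoder construction. A surjection \(f\) by itself need not make \(\mathsf{S}_o\) recoverable from \(\mathsf{S}_c\). If the plan instead took a uniform or arbitrary \(P_O\), the induced \(P_C\) might not be capacity-achieving. Concentrating \(P_O\) on one representative of each fibre of \(f\) resolves both issues at once.

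A secondary check is that the maximum in~\eqref{eq:C-sem} is attained. This holds because the feasible set is a compact polytope: the support constraints~\eqref{eq:enabling-support} are linear, and \(I_{\mathrm{sem}}\) is continuous on it.
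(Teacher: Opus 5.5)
Your proof is correct and follows the paper's route. For (i) you use data processing along \(\mathsf{S}_o\to\mathsf{S}_c\to\hat{\mathsf{S}}_c\to\hat{\mathsf{S}}_o\); for (ii) you use \(I\le H(\mathsf{S}_o)\le\log|S_O|\); and for (iii) you pull a capacity-achieving \(P_C^\star\) back through the surjection \(f\) and decode with the injection \(g\). The one difference is that you concentrate \(P_O\) on one representative per fibre of \(f\) so that \(\mathsf{S}_o\) is recoverable from \(\mathsf{S}_c\); this is harmless but unnecessary, since, as the paper notes, for any \(P_O\) with \(f_\#P_O=P_C^\star\) the facts \(\mathsf{S}_c=f(\mathsf{S}_o)\) and \(\mathsf{S}_o\to\mathsf{S}_c\to\hat{\mathsf{S}}_c\) already give \(I(\mathsf{S}_o;\hat{\mathsf{S}}_c)=I(\mathsf{S}_c;\hat{\mathsf{S}}_c)\).
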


\begin{proof}
\textup{(i)}: The Markov chain
\(\mathsf{S}_o\to\mathsf{S}_c\to\hat{\mathsf{S}}_c
\to\hat{\mathsf{S}}_o\) and the data processing
inequality~\cite{cover2006elements} give
\(I(\mathsf{S}_o;\hat{\mathsf{S}}_o)\le
I(\mathsf{S}_c;\hat{\mathsf{S}}_c)\le C(W)\).
\textup{(ii)}: \(I_{\mathrm{sem}}\le H(\mathsf{S}_o)\le\log|S_O|\).
\textup{(iii)}: Let \(P_C^*\) achieve \(C(W)\).
Choose a deterministic surjection
\(f:S_O\to S_C\), set \(P_O\) so that
\(P_C^*=f_\#P_O\), and take a deterministic injection
\(g:\hat S_C\to\hat S_O\).
Then
\(I(\mathsf{S}_o;\hat{\mathsf{S}}_o)
=I(\mathsf{S}_o;g(\hat{\mathsf{S}}_c))
=I(\mathsf{S}_o;\hat{\mathsf{S}}_c)
=I(\mathsf{S}_c;\hat{\mathsf{S}}_c)=C(W)\),
where the second equality uses the invertibility of~\(g\),
and the third uses the Markov chain
\(\mathsf{S}_o\to\mathsf{S}_c\to\hat{\mathsf{S}}_c\)
\textup{(}since the encoding is deterministic,
\(H(\hat{\mathsf{S}}_c\mid\mathsf{S}_o)
  =H(\hat{\mathsf{S}}_c\mid\mathsf{S}_c)\),
giving
\(I(\mathsf{S}_o;\hat{\mathsf{S}}_c)
  =I(\mathsf{S}_c;\hat{\mathsf{S}}_c)\)\textup{)}.
\end{proof}

\begin{remark}[Role of the size condition in
  Theorem~\textup{\ref{thm:data-processing}(iii)}]
\label{rem:size-condition-role}
The condition \(|S_O|\ge|S_C|\) ensures the existence of
a surjection \(f:S_O\to S_C\) used to push forward the
capacity-achieving input distribution; it is a sufficient
condition for the single-letter equality
\(C_{\mathrm{sem}}=C(W)\).
All block-coding results in this paper
\textup{(Theorems~\ref{thm:converse}--\ref{thm:achievability}
and Section~\ref{sec:application})} use \(C(W)\) directly
via \(S_C^n\) and do not require this condition.
\end{remark}

\begin{proposition}[Enabling-constrained capacity bound]
\label{prop:enabling-constrained-bound}
If the encoding enabling map satisfies
\(|\mathcal E(s_o)|\le k_{\mathrm{enc}}\) for all \(s_o\in S_O\),
then
\begin{equation}\label{eq:enabling-bound}
  C_{\mathrm{sem}}(W)
  \;\le\;
  \log\!\Bigl|\bigcup_{s_o\in S_O}\mathcal E(s_o)\Bigr|
  \;\le\;
  \log\bigl(|S_O|\cdot k_{\mathrm{enc}}\bigr).
\end{equation}
In particular, when the enabling map is singleton-valued
\textup{(}\(k_{\mathrm{enc}}=1\)\textup{)}, as in an ideal
channel \textup{(Definition~\ref{def:ideal-channel})},
\(C_{\mathrm{sem}}(W)\le\log|S_O|\), reproducing the source
entropy bound of
Theorem~\textup{\ref{thm:data-processing}(ii)}.
\end{proposition}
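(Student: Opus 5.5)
The plan is to pass the semantic mutual information through the carrier stage, where the enabling constraint limits how many carrier symbols can ever carry probability mass. The intended chain is
\(I_{\mathrm{sem}}(P_O,\mathfrak C)=I(\mathsf{S}_o;\hat{\mathsf{S}}_o)\le I(\mathsf{S}_o;\mathsf{S}_c)\le H(\mathsf{S}_c)\le\log|\supp(P_C)|\),
valid for every feasible triple \((P_O,\kappa_{\enc},D)\) in the maximization~\eqref{eq:C-sem}. Since the bound will not depend on the choice of triple, it transfers to \(C_{\mathrm{sem}}(W)\).

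\emph{Step 1 (data processing).} By Definition~\ref{def:semantic-channel}, the joint law factors as \(P_O(s_o)\kappa_{\enc}(s_c\mid s_o)W(\hat s_c\mid s_c)D(\hat s_o\mid\hat s_c)\). This gives the Markov chain \(\mathsf{S}_o\to\mathsf{S}_c\to\hat{\mathsf{S}}_c\to\hat{\mathsf{S}}_o\), the same one used in the proof of Theorem~\ref{thm:data-processing}(i). The data processing inequality then yields \(I(\mathsf{S}_o;\hat{\mathsf{S}}_o)\le I(\mathsf{S}_o;\mathsf{S}_c)\le H(\mathsf{S}_c)\).

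\emph{Step 2 (support control).} The carrier marginal is \(P_C(s_c)=\sum_{s_o}P_O(s_o)\kappa_{\enc}(s_c\mid s_o)\). By the enabling constraint~\eqref{eq:enabling-support}, \(P_C(s_c)>0\) forces \(s_c\in\mathcal E(s_o)\) for some \(s_o\). Hence \(\supp(P_C)\subseteq\bigcup_{s_o\in S_O}\mathcal E(s_o)\). Because entropy is at most the logarithm of the support size, \(H(\mathsf{S}_c)\le\log\bigl|\bigcup_{s_o}\mathcal E(s_o)\bigr|\), which is the first inequality of~\eqref{eq:enabling-bound}. The second inequality is the union bound \(\bigl|\bigcup_{s_o}\mathcal E(s_o)\bigr|\le\sum_{s_o}|\mathcal E(s_o)|\le|S_O|\,k_{\mathrm{enc}}\), together with monotonicity of \(\log\).

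\emph{Step 3 (singleton case).} Setting \(k_{\mathrm{enc}}=1\) gives \(\log|S_O|\) immediately. I will also note that this agrees with Theorem~\ref{thm:data-processing}(ii). For the ideal channel, the encoding kernel is a deterministic bijection induced by the synonymy witness, so the enabling map can be taken singleton-valued.

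\emph{Main obstacle.} There is no substantive difficulty here. The only points needing care are that the maximization in~\eqref{eq:C-sem} ranges only over kernels in \(\mathcal K(\mathcal I)\), and that the bound in Step 2 is uniform over \(P_O\), \(\kappa_{\enc}\) and \(D\). The decoder \(D\) plays no role beyond the data processing step.
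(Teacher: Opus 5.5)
Your proposal is correct and follows the paper's proof: the enabling support constraint~\eqref{eq:enabling-support} gives \(\supp(P_C)\subseteq\bigcup_{s_o}\mathcal E(s_o)\), data processing reduces \(I_{\mathrm{sem}}\) to at most \(H(\mathsf{S}_c)\), and the union bound finishes the argument. The only cosmetic difference is that you pass through \(I(\mathsf{S}_o;\mathsf{S}_c)\), while the paper passes through \(I(\mathsf{S}_c;\hat{\mathsf{S}}_c)\); both are valid applications of data processing to the same Markov chain.
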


\begin{proof}
By the enabling support constraint~\eqref{eq:enabling-support},
\(\supp(P_C)\subseteq\bigcup_{s_o}\mathcal E(s_o)\), so
\(H(\mathsf{S}_c)\le\log|\bigcup_{s_o}\mathcal E(s_o)|\).
Data processing gives
\(I(\mathsf{S}_o;\hat{\mathsf{S}}_o)
  \le I(\mathsf{S}_c;\hat{\mathsf{S}}_c)
  \le H(\mathsf{S}_c)\).
The second inequality uses
\(|\bigcup_{s_o}\mathcal E(s_o)|\le|S_O|\cdot k_{\mathrm{enc}}\).
\end{proof}

\begin{definition}[Structural quality indices]
\label{def:fidelity-index}
The \emph{semantic fidelity index} and \emph{depth expansion
index} of~\(\mathfrak C\) are
\begin{align}
  \mathsf{F}(\mathfrak C)
  &:=1-\max_{s_o\in S_O}
  \sum_{\hat s_o}\kappa_{\mathrm{sem}}(\hat s_o\mid s_o)\,
  d_{\Cn}(s_o,\hat s_o\mid S_O),
  \label{eq:fidelity-index}\\
  \mathsf{E}(\mathfrak C)
  &:=\max_{s_o\in S_O}
  \sum_{\hat s_o}\kappa_{\mathrm{sem}}(\hat s_o\mid s_o)\,
  d_{\Dd}(s_o,\hat s_o).
  \label{eq:depth-expansion}
\end{align}
Both lie in~\([0,1]\).
For any \(P_O\),
\(\bar d_{\Cn}\le 1-\mathsf{F}\) and
\(\bar d_{\Dd}\le\mathsf{E}\).
\end{definition}

\begin{corollary}[Fidelity concentration on core]
\label{cor:fidelity-core-concentration}
Under the conditions of
Proposition~\textup{\ref{prop:noise-distortion-bounds}},
\(\mathsf{F}(\mathfrak C)
=1-\max_{a\in\Atom(S_O)}\bar d_{\Cn}(a\mid\mathfrak C)\):
the worst-case closure distortion is attained at a core element.
\end{corollary}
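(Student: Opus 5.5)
The plan is to unfold the definition of the semantic fidelity index in \eqref{eq:fidelity-index} and show that, under the hypotheses of Proposition~\ref{prop:noise-distortion-bounds}, the maximum over all \(s_o\in S_O\) is the same as the maximum over \(A=\Atom(S_O)\). Write \(\bar d_{\Cn}(s_o\mid\mathfrak C)=\sum_{\hat s_o}\kappa_{\mathrm{sem}}(\hat s_o\mid s_o)\,d_{\Cn}(s_o,\hat s_o\mid S_O)\). Then \(\mathsf F(\mathfrak C)=1-\max_{s_o\in S_O}\bar d_{\Cn}(s_o\mid\mathfrak C)\). Split \(S_O=A\,\dot\cup\,J\) with \(J=S_O\setminus A\) as in Definition~\ref{def:intrinsic-operational-bases}.

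\textbf{Step 1: redundant states contribute zero.} For \(s_o\in J\), the kernel \(\kappa_{\mathrm{sem}}(\cdot\mid s_o)\) is supported on the reconstructed space \(\hat S_O=\tilde S_O\) (Definition~\ref{def:e2e-noise}). Proposition~\ref{prop:noise-distortion-bounds} gives \(d_{\Cn}(s_o,\hat s_o\mid S_O)=0\) for every \(\hat s_o\in\tilde S_O\). Hence \(\bar d_{\Cn}(s_o\mid\mathfrak C)=0\) for all \(s_o\in J\).

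\textbf{Step 2: conclude for nonempty \(A\).} Since \(d_{\Cn}\ge 0\), every per-input distortion is nonnegative. If \(A\neq\varnothing\), then
\[
\max_{s_o\in S_O}\bar d_{\Cn}(s_o\mid\mathfrak C)
=\max\Bigl(\max_{a\in A}\bar d_{\Cn}(a\mid\mathfrak C),\,0\Bigr)
=\max_{a\in A}\bar d_{\Cn}(a\mid\mathfrak C),
\]
which is the claimed identity. The worst case is therefore attained at a core element, because any maximizer in \(J\) has value \(0\le\max_A\).

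\textbf{Step 3: the degenerate case \(A=\varnothing\).} This is the only point needing care, and the one step I expect to be an obstacle. If \(S_O\) is empty the statement is vacuous. If \(S_O\neq\varnothing\) but \(A=\varnothing\), Step~1 shows every input has zero distortion, so \(\mathsf F=1\). Adopting the convention \(\max\varnothing:=0\) (as in Definition~\ref{def:max-depth}), the right-hand side is also \(1\). I would state this convention explicitly.

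A secondary subtlety is that \(d_{\Cn}\) is evaluated only at \(\hat s_o\) in the support of \(\kappa_{\mathrm{sem}}(\cdot\mid s_o)\), which lies in \(\hat S_O\). So the hypothesis \(S_O^+\subseteq\Cn(S_O)\), applied to the whole reconstructed space, is exactly what Proposition~\ref{prop:noise-distortion-bounds} needs. No further argument is required.
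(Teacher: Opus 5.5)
Your proposal is correct and takes the paper's route: Proposition~\ref{prop:noise-distortion-bounds} gives $d_{\Cn}(s_o,\hat s_o\mid S_O)=0$ for every redundant $s_o\in J$ and every $\hat s_o\in\tilde S_O$, where $\tilde S_O$ contains the support of $\kappa_{\mathrm{sem}}(\cdot\mid s_o)$, so $\bar d_{\Cn}(s_o\mid\mathfrak C)=0$ and, since $d_{\Cn}\ge 0$, the maximum in \eqref{eq:fidelity-index} reduces to the maximum over $\Atom(S_O)$ (the paper states the corollary without a separate proof, but spells out this same argument in the proof of Proposition~\ref{prop:overlap-quality}(i)). Your explicit handling of the degenerate case $\Atom(S_O)=\varnothing$ via the convention $\max\varnothing:=0$ is a small piece of rigor that the paper leaves unstated.
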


\begin{definition}[Noise-pair indices]
\label{def:noise-pair-indices-def}
Let \(A=\Atom(S_O)\) and
\(\tilde S_O^{\cap}:=S_O\cap\tilde S_O=S_O\setminus S_O^-\)
denote the preserved region.
The \emph{probabilistic core preservation index} is
\[
  \Phi_{\Atom}(\mathfrak C)
  :=
  \begin{cases}
    \displaystyle\min_{a\in A}\kappa_{\mathrm{sem}}(a\mid a)
    & \text{if } A\cap S_O^-=\varnothing,\\[4pt]
    0 & \text{otherwise}.
  \end{cases}
\]
The \emph{spurious probability index} is
\(\Psi_+(\mathfrak C)
:=\max_{s_o\in S_O}\sum_{\hat s_o\in S_O^+}
\kappa_{\mathrm{sem}}(\hat s_o\mid s_o)\).
\end{definition}

\begin{proposition}[Properties of noise-pair indices]
\label{prop:noise-pair-indices}
\begin{enumerate}[label=\textup{(\roman*)}]
  \item Both indices lie in~\([0,1]\).
  \item For an ideal channel with \(\tilde S_O=S_O\) and
        \(\tau_{\mathrm{e2e}}=\mathrm{id}\):
        \(\Phi_{\Atom}=1\), \(\Psi_+=0\).
  \item For any \(P_O\in\Delta(S_O)\),
        \begin{equation}\label{eq:dH-noise-pair-bound}
          \bar d_H(\mathfrak C,P_O)
          \;\le\;
          1-\Phi_{\Atom}(\mathfrak C)\cdot P_O(A).
        \end{equation}
\end{enumerate}
\end{proposition}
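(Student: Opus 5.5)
The plan is to verify the three parts in order, each directly from the definitions of \(\Phi_{\Atom}\), \(\Psi_+\) and the expected Hamming distortion.

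Part~(i) is immediate. \(\Phi_{\Atom}\) is either \(0\) or a minimum of the probabilities \(\kappa_{\mathrm{sem}}(a\mid a)\in[0,1]\). If \(A=\varnothing\), the minimum is empty; I would adopt the convention \(\min\varnothing:=1\) and note it explicitly. \(\Psi_+\) is a maximum over \(s_o\) of partial sums of the probability vector \(\kappa_{\mathrm{sem}}(\cdot\mid s_o)\), so it lies in \([0,1]\).

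For part~(ii), take the ideal channel with \(\tilde S_O=S_O\) and \(\tau_{\mathrm{e2e}}=\mathrm{id}\). By Definition~\ref{def:ideal-channel}, \(\kappa_{\mathrm{sem}}\) is then the deterministic identity kernel. The noise pair is trivial, so \(S_O^-=S_O^+=\varnothing\). This gives \(A\cap S_O^-=\varnothing\) and \(\kappa_{\mathrm{sem}}(a\mid a)=1\) for all \(a\), hence \(\Phi_{\Atom}=1\). The sum defining \(\Psi_+\) is over the empty set \(S_O^+\), hence \(\Psi_+=0\).

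For part~(iii), I would split the expected Hamming distortion over core and non-core inputs:
\[
\bar d_H(\mathfrak C,P_O)=\sum_{a\in A}P_O(a)\bigl(1-\kappa_{\mathrm{sem}}(a\mid a)\bigr)+\sum_{s\in J}P_O(s)\bigl(1-\kappa_{\mathrm{sem}}(s\mid s)\bigr).
\]
The second sum is bounded by \(P_O(J)=1-P_O(A)\), using only \(d_H\le 1\).

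For the first sum there are two cases.
\begin{itemize}
\item If \(A\cap S_O^-\neq\varnothing\), then \(\Phi_{\Atom}=0\), and the claimed bound is \(\bar d_H\le 1\), which is trivial.
\item Otherwise \(\kappa_{\mathrm{sem}}(a\mid a)\ge\Phi_{\Atom}\) for every \(a\in A\), so the first sum is at most \(P_O(A)(1-\Phi_{\Atom})\).
\end{itemize}
Adding the two bounds gives \(1-P_O(A)+P_O(A)-\Phi_{\Atom}P_O(A)=1-\Phi_{\Atom}P_O(A)\), which is~\eqref{eq:dH-noise-pair-bound}.

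The only delicate points are bookkeeping rather than substance. First, \(\kappa_{\mathrm{sem}}(a\mid a)\) must be meaningful, which requires \(a\in\hat S_O\); this is guaranteed exactly when \(A\cap S_O^-=\varnothing\), and that is the case split built into the definition of \(\Phi_{\Atom}\). Second, the conventions for empty \(A\) need stating. I expect no real obstacle beyond these.
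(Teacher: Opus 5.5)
Your proposal is correct and follows the paper's argument. The paper writes $\bar d_H=1-\sum_{s_o\in\tilde S_O^{\cap}}P_O(s_o)\,\kappa_{\mathrm{sem}}(s_o\mid s_o)$ and discards the non-core success terms, which is the same inequality as your bound of the non-core distortion by $P_O(J)$, with the same case split on $A\cap S_O^-$; your explicit convention $\min\varnothing:=1$ for an empty core is a small refinement that the paper leaves implicit.
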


\begin{proof}
Part~\textup{(i)} is immediate from the definitions.
Part~\textup{(ii)}: under the identity kernel,
\(\kappa_{\mathrm{sem}}(a\mid a)=1\) and \(S_O^+=\varnothing\).
Part~\textup{(iii)}:
\(\bar d_H
  =1-\sum_{s_o\in\tilde S_O^{\cap}}P_O(s_o)\,
  \kappa_{\mathrm{sem}}(s_o\mid s_o)\).
When \(A\cap S_O^-\neq\varnothing\),
\(\Phi_{\Atom}=0\) and the bound is trivial.
When \(A\cap S_O^-=\varnothing\), one has
\(A\subseteq\tilde S_O^{\cap}\), so
\begin{align*}
  \sum_{s_o\in\tilde S_O^{\cap}}
  P_O(s_o)\,\kappa_{\mathrm{sem}}(s_o\mid s_o)
  \;\ge\;&
  \sum_{a\in A}P_O(a)\,\kappa_{\mathrm{sem}}(a\mid a) \\
  \;\ge\;&
  P_O(A)\cdot\Phi_{\Atom},
\end{align*}
giving
\(\bar d_H\le 1-P_O(A)\,\Phi_{\Atom}\).
\end{proof}
\begin{definition}[Receiver-side comparison indices]
\label{def:receiver-structural}
The \emph{atomicity shift} is
\(\Delta\mathsf{A}:=|\Atom(\tilde S_O)|-|\Atom(S_O)|\);
the \emph{depth shift} is
\(\Delta\mathsf{D_d}:=
\max_{q\in\tilde S_O}\Dd(q\mid\Atom(\tilde S_O))
-\mathsf{D_d}(\mathcal I)\).
Both are computable
\textup{(Remark~\ref{rem:noisy-ideal-connection})}.
\end{definition}

\begin{proposition}[Structural comparison properties]
\label{prop:structural-comparison}
\begin{enumerate}[label=\textup{(\roman*)}]
  \item Trivial noise:
        \(\Delta\mathsf{A}=\Delta\mathsf{D_d}=0\).
  \item Core-preserving noise
        \textup{(}\(A\cap S_O^-=\varnothing\),
        \(S_O^+\subseteq\Cn(S_O)\)\textup{)}:
        the set~\(A\) is an irredundant generating subset
        of~\(\tilde S_O\) for \(\Cn(\tilde S_O)=\Cn(S_O)\).
        The canonical irredundant core
        \(\Atom(\tilde S_O)\) satisfies
        \(\Cn(\Atom(\tilde S_O))=\Cn(A)\), but in general
        \(|\Atom(\tilde S_O)|\) may be larger or smaller
        than~\(|A|\), depending on the canonical order and
        the surplus elements~\(S_O^+\).
        When \(S_O^+=\varnothing\),
        \(\Atom(\tilde S_O)=A\) and hence
        \(\Delta\mathsf{A}=0\).
\end{enumerate}
\end{proposition}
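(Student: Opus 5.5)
The plan is to reduce everything to Proposition~\ref{prop:atom-core-correct} and Proposition~\ref{prop:noise-fidelity}, plus one simulation argument comparing two runs of the deletion procedure of Definition~\ref{def:atom-so}.

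\emph{Part (i).} With trivial noise we have \(\tilde S_O=S_O\). Canonicality (Proposition~\ref{prop:atom-core-correct}(iii)) then gives \(\Atom(\tilde S_O)=\Atom(S_O)\), so \(\Delta\mathsf{A}=0\). The two maxima defining the depth shift are also taken over the same set with the same base \(\Atom(\cdot)\), so \(\Delta\mathsf{D_d}=0\). Nothing more is needed.

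\emph{Part (ii), first claims.} Assume \(A\cap S_O^-=\varnothing\) and \(S_O^+\subseteq\Cn(S_O)\).
\begin{itemize}
  \item \emph{\(A\) is a generating subset of \(\tilde S_O\).} We have \(A\subseteq S_O\setminus S_O^-\subseteq\tilde S_O\). Proposition~\ref{prop:noise-fidelity}(iii) gives \(\Cn(\tilde S_O)=\Cn(S_O)\), and Proposition~\ref{prop:atom-core-correct}(i) gives \(\Cn(S_O)=\Cn(A)\).
  \item \emph{\(A\) is irredundant.} This is intrinsic to \(A\): \(a\notin\Cn(A\setminus\{a\})\) by Proposition~\ref{prop:atom-core-correct}(ii), independent of the ambient set.
  \item \emph{\(\Cn(\Atom(\tilde S_O))=\Cn(A)\).} Apply Proposition~\ref{prop:atom-core-correct}(i) to \(\tilde S_O\) to get \(\Cn(\Atom(\tilde S_O))=\Cn(\tilde S_O)\), which equals \(\Cn(A)\) by the first bullet.
  \item \emph{The size \(|\Atom(\tilde S_O)|\) is not determined.} I would not prove anything further here and would only remark that it depends on the canonical order and on \(S_O^+\). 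For example, a spurious \(s^+\) early in the order can make some core element redundant, so the core shrinks.
\end{itemize}

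\emph{Part (ii), the case \(S_O^+=\varnothing\).} This is the substantive step. Here \(\tilde S_O=S_O\setminus S_O^-\) with \(S_O^-\subseteq J\).
\begin{itemize}
  \item \emph{Setup.} Let \(U_k\) and \(T_k\) be the current sets of the deletion procedure run on \(S_O\) and on \(\tilde S_O\), just before the \(k\)-th element of \(S_O\) in canonical order is scanned. Elements of \(S_O^-\) are simply skipped in the second run.
  \item \emph{Invariant.} I will prove \(T_k=U_k\setminus S_O^-\) by induction on \(k\). It holds at the start. Since \(S_O^-\subseteq J\), every element of \(S_O^-\) is eventually deleted in the first run, so skipping it is consistent with the invariant.
  \item \emph{Scanned \(s=a\in A\).} The first run retains \(a\), so \(a\notin\Cn(U_k\setminus\{a\})\). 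By the invariant \(T_k\subseteq U_k\), and monotonicity (Cn2) gives \(a\notin\Cn(T_k\setminus\{a\})\). Hence the second run also retains \(a\).
  \item \emph{Scanned \(s\in J\setminus S_O^-\).} The first run deletes \(s\). Since deletions only ever remove elements of \(J\), we have \(A\subseteq U_k\setminus\{s\}\). Because \(A\cap S_O^-=\varnothing\), also \(A\subseteq T_k\setminus\{s\}\). Then \(s\in S_O\subseteq\Cn(A)\subseteq\Cn(T_k\setminus\{s\})\), using Proposition~\ref{prop:atom-core-correct}(iv) and (Cn2). Hence the second run also deletes \(s\).
  \item \emph{Conclusion.} At the end, \(\Atom(\tilde S_O)=A\setminus S_O^-=A\), so \(\Delta\mathsf{A}=0\).
\end{itemize}

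\emph{Expected obstacle.} The main pitfall is the temptation to argue the deletion case by monotonicity too. That inclusion points the wrong way, since \(T_k\subseteq U_k\). The fix is the observation that \(A\) survives inside every intermediate set of both runs, and \(A\) already generates all of \(S_O\).
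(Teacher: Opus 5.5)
Your argument is correct. For part (i) and the first three claims of part (ii) it coincides with the paper's proof. The route differs in the substantive case \(S_O^+=\varnothing\).

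For a scanned \(a\in A\), the paper bounds \(\Cn(B_{\mathrm{scan}}\setminus\{a\})\subseteq\Cn(S_O\setminus\{a\})\). It then asserts \(a\notin\Cn(S_O\setminus\{a\})\) ``since \(a\in\Atom(S_O)\)''. You instead run the two deletion procedures in lockstep with the invariant \(T_k=U_k\setminus S_O^-\), and compare only with the scan-time set \(U_k\) of the original run.

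Your version is the more robust one. Proposition~\ref{prop:atom-core-correct}(ii) supplies irredundancy only relative to \(A\) (or to the scan-time set). Irredundancy relative to all of \(S_O\) can fail for the canonical core: with Horn rules \(y\leftarrow x\), \(x\leftarrow y\) and order \(x<y\), the procedure returns \(\Atom(\{x,y\})=\{y\}\), although \(y\in\Cn(\{x\})\). So the paper's shortcut rests on a global-irredundancy step that is unproved and false in general. Your lockstep induction needs nothing beyond the definition of the procedure. It is essentially the simultaneous induction the paper itself uses later in Proposition~\ref{prop:overlap-structural}(iii)(c). The deletion case is the same in both proofs: \(A\) survives in every intermediate set and already generates \(S_O\).

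One correction to your informal remark on \(|\Atom(\tilde S_O)|\): a surplus element placed \emph{early} in the canonical order cannot shrink the core. If \(s^+\) is scanned before any core element has been deleted, the current set still contains \(A\), and \(s^+\in\Cn(S_O)=\Cn(A)\) forces its immediate deletion. Displacement requires \(s^+\) to come \emph{after} the core elements it renders redundant; this is what the paper's phrase ``before they are scanned'' refers to. For example, rules \(s\leftarrow a_1,a_2\), \(a_1\leftarrow s\), \(a_2\leftarrow s\) with order \(a_1<a_2<s\) give \(\Atom(\{a_1,a_2,s\})=\{s\}\).

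Neither you nor the paper exhibits the ``larger'' direction. Rules \(b\leftarrow a\), \(c\leftarrow a\), \(a\leftarrow b,c\) with \(S_O=\{a\}\), \(S_O^+=\{b,c\}\) and order \(a<b<c\) give \(\Atom(\tilde S_O)=\{b,c\}\).
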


\begin{proof}
\textup{(i)}: \(\tilde S_O=S_O\).
\textup{(ii)}:
By Proposition~\ref{prop:noise-fidelity}(iii),
\(\Cn(\tilde S_O)=\Cn(S_O)\) and \(A\subseteq\tilde S_O\).
Since \(A\) is irredundant
\textup{(Proposition~\ref{prop:atom-core-correct}(ii))} and
\(\Cn(A)=\Cn(S_O)=\Cn(\tilde S_O)\), the set~\(A\) is an
irredundant generating subset of~\(\tilde S_O\).
However, the canonical irredundant core
\(\Atom(\tilde S_O)\)---computed by the deletion procedure of
Definition~\ref{def:atom-so} applied to~\(\tilde S_O\) under
the fixed canonical order---may differ from~\(A\) because
surplus elements in~\(S_O^+\) can render core
elements redundant before they are scanned, while themselves
remaining irredundant.

When \(S_O^+=\varnothing\), \(\tilde S_O\subseteq S_O\)
and every element of \(\tilde S_O\setminus A\) lies in
\(J\subseteq\Cn(A)\).
We show \(\Atom(\tilde S_O)=A\).
For any \(a\in A\): at scan time the current set
\(B_{\mathrm{scan}}\subseteq\tilde S_O\subseteq S_O\), so
\(B_{\mathrm{scan}}\setminus\{a\}\subseteq S_O\setminus\{a\}\).
By monotonicity~\textup{(Cn2)},
\(\Cn(B_{\mathrm{scan}}\setminus\{a\})
  \subseteq\Cn(S_O\setminus\{a\})\).
Since \(a\in A=\Atom(S_O)\),
\(a\notin\Cn(S_O\setminus\{a\})\), hence
\(a\notin\Cn(B_{\mathrm{scan}}\setminus\{a\})\)
and \(a\) survives.
For any \(s\in\tilde S_O\setminus A\): since all
elements of~\(A\) survive
\textup{(}those before~\(s\) by the preceding argument;
those after~\(s\) not yet scanned\textup{)},
\(A\subseteq B_{\mathrm{scan}}\setminus\{s\}\),
so \(s\in\Cn(A)\subseteq\Cn(B_{\mathrm{scan}}\setminus\{s\})\)
and \(s\) is removed.
Hence \(\Atom(\tilde S_O)=A\) and \(\Delta\mathsf{A}=0\).
\end{proof}

\begin{theorem}[Semantic Fano bound]
\label{thm:semantic-fano}
Let \((S_O,P_O)\) be full-support,
\(\epsilon:=\bar d_H(\mathfrak C,P_O)\), and
\(h_b\) the binary entropy.
Then
\begin{equation}\label{eq:semantic-fano-lower}
  I_{\mathrm{sem}}(P_O,\mathfrak C)
  \ge H(\mathsf{S}_o)-h_b(\epsilon)
  -\epsilon\log(|S_O|-1).
\end{equation}
The Fano penalty $\log(|S_O|-1)$ is determined by the
\emph{source} alphabet size~$|S_O|$ and is independent of
the reconstruction alphabet~$\tilde S_O$; this is the
standard form of Fano's
inequality~\cite[Theorem~2.10.1]{cover2006elements}.
By Proposition~\textup{\ref{prop:noise-pair-indices}(iii)},
\(\epsilon\le 1-\Phi_{\Atom}\cdot P_O(A)\), so high core
preservation forces high mutual information.
For an ideal channel with \(\tilde S_O=S_O\) and
\(\tau_{\mathrm{e2e}}=\mathrm{id}\), \(\epsilon=0\) and
\(I_{\mathrm{sem}}=H(\mathsf{S}_o)\).
\end{theorem}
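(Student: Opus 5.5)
The plan is to reduce the statement to the classical Fano inequality applied to the pair \((\mathsf{S}_o,\hat{\mathsf{S}}_o)\) under the joint law \(P_O(s_o)\,\kappa_{\mathrm{sem}}(\hat s_o\mid s_o)\), and then read off the two corollary-type claims from earlier results. Consider the estimator \(\hat{\mathsf{S}}_o\) of \(\mathsf{S}_o\). Its error probability is \(\Pr[\hat{\mathsf{S}}_o\neq\mathsf{S}_o]=\sum_{s_o,\hat s_o}P_O(s_o)\kappa_{\mathrm{sem}}(\hat s_o\mid s_o)\mathbf{1}[s_o\neq\hat s_o]=\bar d_H(\mathfrak C,P_O)=\epsilon\) by Definitions~\ref{def:hamming-distortion} and~\ref{def:expected-distortion}. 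The reconstruction alphabet \(\tilde S_O=\hat S_O\) may differ from \(S_O\) and contain spurious states, so I would not use the version of Fano requiring \(\hat{\mathsf{S}}_o\) to take values in \(S_O\).

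Instead, I would use the error-indicator form, which is valid for an arbitrary estimator alphabet. Let \(E:=\mathbf{1}[\mathsf{S}_o\neq\hat{\mathsf{S}}_o]\). The chain rule gives
\[
H(\mathsf{S}_o\mid\hat{\mathsf{S}}_o)\le H(E)+H(\mathsf{S}_o\mid E,\hat{\mathsf{S}}_o).
\]
The first term is \(H(E)=h_b(\epsilon)\). For the second term, on \(E=0\) the source \(\mathsf{S}_o\) is determined by \(\hat{\mathsf{S}}_o\), contributing zero. On \(E=1\), \(\mathsf{S}_o\) ranges over \(S_O\setminus\{\hat{\mathsf{S}}_o\}\). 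This set has at most \(|S_O|-1\) elements when \(\hat{\mathsf{S}}_o\in S_O\), and at most \(|S_O|\) elements otherwise.

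This spurious-output case is the step I expect to be the main obstacle, because it threatens the \(\log(|S_O|-1)\) constant. My first fix is to merge all outputs in \(\hat S_O\setminus S_O\) into one fixed element of \(S_O\) via a deterministic map \(g\). By the data processing inequality, \(I(\mathsf{S}_o;g(\hat{\mathsf{S}}_o))\le I(\mathsf{S}_o;\hat{\mathsf{S}}_o)\), so a lower bound for the processed estimator suffices. However, \(g\) can increase the error probability only if a spurious output is mapped onto the true symbol, which would decrease it instead. So the error probability \(\epsilon'\) of \(g(\hat{\mathsf{S}}_o)\) satisfies \(\epsilon'\le\epsilon\).

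Standard Fano for \(g(\hat{\mathsf{S}}_o)\in S_O\) then gives
\[
H(\mathsf{S}_o\mid g(\hat{\mathsf{S}}_o))\le h_b(\epsilon')+\epsilon'\log(|S_O|-1).
\]
The right-hand side is nondecreasing in \(\epsilon'\) on \([0,1-1/|S_O|]\). Outside that range I would need either to note that the bound becomes vacuous relative to \(H(\mathsf{S}_o)\), or to fall back on the direct conditional-entropy argument. I would check this monotonicity carefully, since it is where the argument could break. Once it holds, \(I_{\mathrm{sem}}=H(\mathsf{S}_o)-H(\mathsf{S}_o\mid\hat{\mathsf{S}}_o)\ge H(\mathsf{S}_o)-h_b(\epsilon)-\epsilon\log(|S_O|-1)\), which is~\eqref{eq:semantic-fano-lower}. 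Full support of \(P_O\) is used only to ensure \(H(\mathsf{S}_o)\) is taken over all of \(S_O\), so that the alphabet size in the penalty is the right one.

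The remaining claims are direct. The bound \(\epsilon\le 1-\Phi_{\Atom}P_O(A)\) is Proposition~\ref{prop:noise-pair-indices}(iii). Since \(h_b(\epsilon)+\epsilon\log(|S_O|-1)\) is increasing in \(\epsilon\) on the relevant range, larger \(\Phi_{\Atom}\) yields a larger lower bound on \(I_{\mathrm{sem}}\). For an ideal channel with \(\tilde S_O=S_O\) and \(\tau_{\mathrm{e2e}}=\mathrm{id}\), \(\kappa_{\mathrm{sem}}\) is the identity, so \(\epsilon=0\). Then \(I_{\mathrm{sem}}\ge H(\mathsf{S}_o)\), and combining this with \(I_{\mathrm{sem}}\le H(\mathsf{S}_o)\) gives equality.
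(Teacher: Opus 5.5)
Your reduction is correct as far as it goes. You have $H(\mathsf{S}_o\mid\hat{\mathsf{S}}_o)\le H(\mathsf{S}_o\mid g(\hat{\mathsf{S}}_o))$, and the $S_O$-valued estimator $g(\hat{\mathsf{S}}_o)$ has error $\epsilon'\le\epsilon$. Standard Fano plus the monotonicity of $\phi(x):=h_b(x)+x\log(|S_O|-1)$ on $[0,1-1/|S_O|]$ then gives \eqref{eq:semantic-fano-lower} whenever $\epsilon\le 1-1/|S_O|$. When $S_O^+=\varnothing$, no merging is needed and standard Fano gives it for every $\epsilon$.

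\emph{The gap} is the remaining case: $\epsilon>1-1/|S_O|$ with spurious outputs. Neither of your fallbacks closes it. First, $\phi$ is decreasing there, so $\epsilon'\le\epsilon$ points the wrong way. Second, the bound is not vacuous: for uniform $P_O$ and $\epsilon=1$ its right-hand side is $\log\tfrac{|S_O|}{|S_O|-1}>0$. Third, the direct error-indicator argument gives only $\log|S_O|$, not $\log(|S_O|-1)$, on the event $\hat{\mathsf{S}}_o\in S_O^+$.

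In fact no argument can close it, because the inequality is false there. Take $S_O=\{s_1,s_2\}$ with uniform $P_O$, and a decoder that always outputs a fixed $s^\ast\in\mathbb{S}_O\setminus S_O$. Then $\epsilon=1$ and $I_{\mathrm{sem}}=0$, while the right-hand side equals $1-h_b(1)-\log 1=1$. More generally, consider an erasure-type channel that outputs a spurious state with probability $\epsilon$ and the true state otherwise. Under a uniform source it gives $H(\mathsf{S}_o\mid\hat{\mathsf{S}}_o)=\epsilon\log|S_O|$, which exceeds $h_b(\epsilon)+\epsilon\log(|S_O|-1)$ for $\epsilon$ near $1$.

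The paper's own proof does not avoid this. It applies the $\log(|S_O|-1)$ form of Fano directly to the $\tilde S_O$-valued reconstruction and asserts that this is valid regardless of the reconstruction range. But the general-estimator form \cite[Theorem~2.10.1]{cover2006elements} carries $\log|S_O|$, and the $\log(|S_O|-1)$ strengthening requires the estimator to take values in $S_O$. Your suspicion was therefore well placed, and the statement itself needs an extra hypothesis or a modified penalty. Any one of the following works:
\begin{itemize}
\item Assume $S_O^+=\varnothing$, or assume $\epsilon\le 1-1/|S_O|$, which your merging argument already covers.
\item Use the penalty $\epsilon\log|S_O|$.
\item Keep $\epsilon\log(|S_O|-1)$ and add the term $\Pr[\hat{\mathsf{S}}_o\in S_O^+]\log\tfrac{|S_O|}{|S_O|-1}\le\Psi_+(\mathfrak C)\log\tfrac{|S_O|}{|S_O|-1}$. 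This is exactly what your error-indicator computation yields once you split the error event according to whether $\hat{\mathsf{S}}_o\in S_O$.
\end{itemize}
Your handling of the $\Phi_{\Atom}$ consequence and of the ideal-channel equality is fine as written.
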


\begin{proof}
Apply Fano's inequality~\cite{cover2006elements} to the
source variable \(\mathsf{S}_o\in S_O\) and the
reconstruction \(\hat{\mathsf{S}}_o\in\tilde S_O\):
\(H(\mathsf{S}_o\mid\hat{\mathsf{S}}_o)\le
h_b(\epsilon)+\epsilon\log(|S_O|-1)\).
Since \(I_{\mathrm{sem}}=H(\mathsf{S}_o)
-H(\mathsf{S}_o\mid\hat{\mathsf{S}}_o)\),
\eqref{eq:semantic-fano-lower} follows.
\end{proof}

\begin{corollary}[Irredundant source, trivial noise]
\label{cor:irredundant-classical}
If \(\Atom(S_O)=S_O\), \(\tilde S_O=S_O\), and the enabling
maps are full with \(|S_O|\ge|S_C|\) and
\(|\hat S_O|\ge|\hat S_C|\), then
\(C_{\mathrm{sem}}(W)=C(W)\),
\(\mathsf{D_d}=0\),
\(d_{\mathrm{sem}}=\alpha\,d_H+\beta\,d_{\Cn}\) for every
reachable pair, and \(\Psi_+=0\).
\end{corollary}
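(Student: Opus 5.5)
The corollary bundles four claims, each of which should follow from a result already established; I will verify them in order.

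First, the capacity claim \(C_{\mathrm{sem}}(W)=C(W)\). The hypotheses on the enabling maps and alphabet sizes are exactly those of Theorem~\ref{thm:data-processing}(iii), so I invoke it directly. The extra hypotheses \(\Atom(S_O)=S_O\) and \(\tilde S_O=S_O\) play no role here.

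Second, \(\mathsf{D_d}=0\). This is Theorem~\ref{thm:semantic-invariants}(iii): since \(\Atom(S_O)=S_O\), every \(q\in S_O\) has \(\Dd(q\mid\Atom(S_O))=0\).

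Third, the claim about \(d_{\mathrm{sem}}\), which is the only step requiring care. I must show \(d_{\Dd}(s_o,\hat s_o)=0\) for every reachable pair, so that the \(\gamma\)-term drops out of \eqref{eq:d-sem}. A reachable output satisfies \(\hat s_o\in\tilde S_O=S_O=\Atom(S_O)\). By reflexivity (Cn1), \(\hat s_o\in\Cn(A)\), so the first branch of Definition~\ref{def:depth-distortion} applies. Both \(s_o\) and \(\hat s_o\) lie in \(A\), so by Lemma~\ref{lem:depth-properties}(ii) both depths are \(0\). The numerator therefore vanishes, and the denominator is \(\max(0,1)=1\), so no division by zero occurs. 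Hence \(d_{\Dd}=0\) and \(d_{\mathrm{sem}}=\alpha d_H+\beta d_{\Cn}\) on every reachable pair.

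The main subtlety is the meaning of ``reachable''. I interpret it as \(\hat s_o\in\supp\kappa_{\mathrm{sem}}(\cdot\mid s_o)\), which is contained in the reconstructed space \(\hat S_O=\tilde S_O\) under Definition~\ref{def:e2e-noise}. If instead an output could fall outside \(S_O\), the second branch of Definition~\ref{def:depth-distortion} would give \(d_{\Dd}=1\). So I will state this identification explicitly.

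Fourth, \(\Psi_+=0\). Since \(\tilde S_O=S_O\), we have \(S_O^+=\tilde S_O\setminus S_O=\varnothing\). The sum defining \(\Psi_+\) in Definition~\ref{def:noise-pair-indices-def} is then empty, so \(\Psi_+=0\).
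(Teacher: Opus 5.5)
Your proposal is correct and matches the intended argument: the paper gives no separate proof of this corollary, and its content is exactly the assembly you give, namely Theorem~\ref{thm:data-processing}(iii), Theorem~\ref{thm:semantic-invariants}(iii), the first branch of Definition~\ref{def:depth-distortion} (the same reasoning the paper uses in Proposition~\ref{prop:overlap-quality}(ii)), and the emptiness of \(S_O^+\) in Definition~\ref{def:noise-pair-indices-def}. Your reading \(\hat S_O=\tilde S_O=S_O\) via Definition~\ref{def:e2e-noise} is the right one, and it makes the ``reachable'' qualifier unnecessary, since then \(d_{\Dd}=0\) on all of \(S_O\times\hat S_O\).
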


\begin{theorem}[Invariant summary]
\label{thm:invariant-summary}
Under the standing assumptions, all invariants in families
I--VI are well-defined, finite, and computable:
\emph{I.}~Source-side: \(\mathsf{A},\mathsf{D_d}\).
\emph{II.}~Set-level: \(\rho_{\Atom},\mathsf{F}_{\Cn}\).
\emph{III.}~Noise-pair: \(\Phi_{\Atom},\Psi_+\).
\emph{IV.}~Quality: \(\mathsf{F},\mathsf{E}\).
\emph{V.}~Comparison: \(\Delta\mathsf{A},\Delta\mathsf{D_d}\).
\emph{VI.}~Info-theoretic:
\(I_{\mathrm{sem}},C_{\mathrm{sem}},C(W)\).
Key relationships:
\textup{(a)}~\(I_{\mathrm{sem}}\le C_{\mathrm{sem}}\le C(W)\);
\textup{(b)}~\(\bar d_{\Cn}\le 1-\mathsf{F}\),
\(\bar d_{\Dd}\le\mathsf{E}\);
\textup{(c)}~\(\bar d_H\le 1-\Phi_{\Atom}\,P_O(A)\);
\textup{(d)}~the Fano bound~\eqref{eq:semantic-fano-lower};
\textup{(e)}~fidelity concentrates on core under core-preserving noise;
\textup{(f)}~ideal collapse: all distortion invariants vanish and
all fidelity invariants are maximal.
\end{theorem}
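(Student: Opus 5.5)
The plan is to treat Theorem~\ref{thm:invariant-summary} as a bookkeeping theorem that collects earlier results. I would not introduce any new argument beyond what has already been proved. The proof splits into two parts. The first part covers well-definedness, finiteness and computability of families I--VI. The second part covers the key relationships (a)--(f). Each of these is traced back to a previously established statement.

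For the first part I go family by family.
\begin{itemize}
\item Family~I is exactly Theorem~\ref{thm:semantic-invariants}(i).
\item Family~II: $\rho_{\Atom}$ and $\mathsf{F}_{\Cn}$ are ratios of cardinalities of finite sets. $\Atom(S_O)$ is computable (Definition~\ref{def:atom-so}, Assumption~\ref{assump:core-extractable}). Each $\Cn(\cdot)$ of a finite subset of the finite universe $\mathbb{S}_O$ is a finite subset, computable via finite stabilization (IC2), (IC3), (IC4) of $T_{\mathsf{PS}}$. The $0/0$ conventions make both ratios well-defined in $[0,1]$.
\item Families~III and IV are finite maxima and minima of finite sums of kernel values times computable distortions. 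Here $d_{\Cn}$ is a Jaccard distance of computable finite closures, and $d_{\Dd}$ uses Lemma~\ref{lem:depth-properties}(i). Computability holds relative to the given (rational) kernel entries, which I would state explicitly as the interpretation of ``computable'' for probabilistic quantities.
\item Family~V reduces to Family~I applied to the finite set $\tilde S_O\subseteq\mathbb{S}_O$.
\item Family~VI: $I_{\mathrm{sem}}$ is a finite sum. $C(W)$ is the maximum of a continuous function on a compact simplex, so it exists and is finite; it can be approximated to any precision by Blahut--Arimoto. $C_{\mathrm{sem}}$ is a maximum over a compact set as well. That set is a product of simplices with the closed support constraints~\eqref{eq:enabling-support}, and it is nonempty by Definition~\ref{def:enabling-kernel}.
\end{itemize}

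For the second part I again go item by item.
\begin{itemize}
\item (a) combines the definition of $C_{\mathrm{sem}}$ as a maximum with Theorem~\ref{thm:data-processing}(i).
\item (b) is stated in Definition~\ref{def:fidelity-index}. It follows because an average over $P_O$ of per-input quantities is bounded by their maximum.
\item (c) is Proposition~\ref{prop:noise-pair-indices}(iii).
\item (d) is Theorem~\ref{thm:semantic-fano}.
\item (e) is Proposition~\ref{prop:noise-distortion-bounds} together with Corollary~\ref{cor:fidelity-core-concentration}.
\item (f): for the ideal channel with $\tilde S_O=S_O$ and $\tau_{\mathrm{e2e}}=\mathrm{id}$, the kernel $\kappa_{\mathrm{sem}}$ is the identity. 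Every distortion satisfies $d(s,s)=0$, so $\bar d_H=\bar d_{\Cn}=\bar d_{\Dd}=0$, which gives $\mathsf{F}=1$ and $\mathsf{E}=0$. Proposition~\ref{prop:noise-pair-indices}(ii) gives $\Phi_{\Atom}=1$ and $\Psi_+=0$. Proposition~\ref{prop:noise-fidelity}(iv) gives $\rho_{\Atom}=\mathsf{F}_{\Cn}=1$. Proposition~\ref{prop:structural-comparison}(i) gives $\Delta\mathsf{A}=\Delta\mathsf{D_d}=0$.
\end{itemize}

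The main obstacle is making ``computable'' precise for the real-valued Family~VI invariants and for Families~III--IV. These depend on the kernels, not only on the logical structure. I would handle this by assuming rational kernel entries, or by reading computability as approximability to arbitrary precision. With that reading, $C(W)$ and $C_{\mathrm{sem}}$ are concave-maximization values over a compact polytope. For $C_{\mathrm{sem}}$, after fixing $P_O$ the objective is not jointly concave in the kernels. I would instead bound it by enumerating the deterministic extreme points of the kernel polytopes, and otherwise use a compactness and grid-refinement argument to obtain an effective approximation.
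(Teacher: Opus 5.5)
Your proposal is correct and follows the paper's route: the paper gives no separate proof of this summary theorem, and each family and each relationship (a)--(f) rests on exactly the earlier results you cite (Theorem~\ref{thm:semantic-invariants}, Remark~\ref{rem:noisy-ideal-connection}, Definition~\ref{def:fidelity-index}, Theorem~\ref{thm:data-processing}, Proposition~\ref{prop:noise-pair-indices}, Theorem~\ref{thm:semantic-fano}, Corollary~\ref{cor:fidelity-core-concentration}, and Propositions~\ref{prop:noise-fidelity} and~\ref{prop:structural-comparison} for the ideal collapse). Your extra discussion of computability for the real-valued families goes beyond what the paper addresses, and your extreme-point idea in fact settles $C_{\mathrm{sem}}$ exactly, not just as a bound: for fixed $P_O$ the mutual information is convex in each kernel separately (the end-to-end kernel is affine in each), so $C_{\mathrm{sem}}(W)$ equals the finite maximum of ordinary capacities $C(D\circ W\circ\kappa_{\enc})$ over deterministic enabling kernels, and no grid refinement is needed.
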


\subsection{Semantic Channel Coding}
\label{subsec:coding}

For the carrier channel kernel \(W:S_C\rightsquigarrow\hat S_C\),
the \(n\)-fold memoryless extension is
\(W^{\otimes n}(\hat s_c^n\mid s_c^n)
:=\prod_{i=1}^n W(\hat s_c^{(i)}\mid s_c^{(i)})\).

\begin{definition}[Semantic block code]
\label{def:semantic-codebook}
An \emph{\((n,M)\) semantic block code} consists of a message set
\(\mathcal M\subseteq S_O\) with \(|\mathcal M|=M\), an encoding
function \(f_n:\mathcal M\to S_C^n\), and a decoding function
\(g_n:\hat S_C^n\to\hat S_O\).
The \emph{rate} is \(R:=(\log M)/n\).
\end{definition}

\begin{definition}[Reliability criteria]
\label{def:semantic-reliability}
For each \(m\in\mathcal M\), let
\(\hat{\mathsf S}_o^{(m)}:=g_n(\hat{\mathsf S}_c^n)\) be the
reconstruction when \(f_n(m)\) is sent through~\(W^{\otimes n}\).
\begin{enumerate}[label=\textup{(\roman*)}]
  \item \emph{Hamming:}
        \(P_e^{(n)}:=\max_m\Pr[\hat{\mathsf S}_o^{(m)}\neq m]\).
  \item \emph{Closure:}
        \(P_{e,\Cn}^{(n)}:=\max_m
        \Pr[d_{\Cn}(m,\hat{\mathsf S}_o^{(m)}\mid S_O)>0]\).
\end{enumerate}
\end{definition}

\begin{assumption}[Deductive independence of core elements]
\label{assump:core-disjoint}
For distinct \(a_1,a_2\in\Atom(S_O)\),
\(R_{\Cn}(a_1)\cap R_{\Cn}(a_2)=\varnothing\),
where
\(R_{\Cn}(m):=\{\hat s_o\in\hat S_O:
d_{\Cn}(m,\hat s_o\mid S_O)=0\}\).
This holds when core elements contribute disjoint deductive
increments to the closure.
\end{assumption}

\begin{remark}[General case: confusability structure and
  relaxation of core-disjointness]
\label{rem:core-disjoint-general}
Assumption~\ref{assump:core-disjoint} is the simplest sufficient
condition guaranteeing that each core element is uniquely
identifiable from its zero-distortion reconstruction set.
When the assumption fails---i.e., \(R_{\Cn}(a_1)\cap
R_{\Cn}(a_2)\neq\varnothing\) for some distinct
\(a_1,a_2\in A\)---the overlapping acceptable sets induce a
\emph{confusability graph} \(G_{\Cn}=(A,E_{\Cn})\) with
\(\{a_1,a_2\}\in E_{\Cn}\) iff
\(R_{\Cn}(a_1)\cap R_{\Cn}(a_2)\neq\varnothing\).
In the general case, zero-error source coding theory
\cite{csiszar2011information} suggests that
\(R_{\mathrm{sem}}(0;d_{\Cn})\) is determined by the
graph entropy of this confusability graph under~\(\pi_A\),
scaled by~\(P_A\).
Specifically, the zero-distortion rate is sandwiched between a
graph-entropy lower bound and the achievable upper bound
\(P_A\,H(\pi_A)\) of
Theorem~\textup{\ref{thm:tight-zero-rate}};
a precise characterization under general confusability
is left to future work.
Under Assumption~\textup{\ref{assump:core-disjoint}},
\(G_{\Cn}\) has no edges, so the graph entropy
reduces to the Shannon entropy \(H(\pi_A)\) and the two bounds
coincide, recovering~\eqref{eq:tight-R0}.
All results in this paper are stated under
Assumption~\textup{\ref{assump:core-disjoint}}, which holds
whenever core elements contribute disjoint deductive increments
to the closure---a condition verified computationally in the
instances of
Section~\textup{\ref{sec:numerical-validation}}, where the core
facts were observed to produce pairwise-disjoint closure
increments.
\end{remark}

\begin{theorem}[Converse bounds]
\label{thm:converse}
Let \(A=\Atom(S_O)\).
\begin{enumerate}[label=\textup{(\roman*)}]
  \item Any \((n,M)\) code with \(P_e^{(n)}\le\epsilon\)
        satisfies
        \(\log M\le(nC(W)+1)/(1-\epsilon)\).
  \item \textup{(}Under
        Assumption~\textup{\ref{assump:core-disjoint})}\;
        Any \((n,|S_O|)\) code with
        \(P_{e,\Cn}^{(n)}\le\epsilon\) satisfies
        \begin{equation}\label{eq:converse-closure}
        \log|A|\le(nC(W)+1)/(1-\epsilon).
        \end{equation}
\end{enumerate}
\end{theorem}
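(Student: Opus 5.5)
Both parts are instances of Fano's inequality combined with data processing over the memoryless carrier. The only semantic ingredient is Assumption~\ref{assump:core-disjoint}, which turns a closure-reliable code on $S_O$ into a Hamming-reliable code on the core $A$.

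\emph{Part (i).} Let the message $\mathsf M$ be uniform on $\mathcal M$, with $\hat{\mathsf M}:=g_n(\hat{\mathsf S}_c^n)$. The average error probability is at most the maximal one, $P_e^{(n)}\le\epsilon$. Fano's inequality gives
\[
H(\mathsf M\mid\hat{\mathsf M})\le 1+\epsilon\log M .
\]
The Markov chain $\mathsf M\to\mathsf S_c^n\to\hat{\mathsf S}_c^n\to\hat{\mathsf M}$, data processing and memorylessness of $W^{\otimes n}$ give the standard chain
\[
I(\mathsf M;\hat{\mathsf M})\le I(\mathsf S_c^n;\hat{\mathsf S}_c^n)\le\sum_{i}I(\mathsf S_c^{(i)};\hat{\mathsf S}_c^{(i)})\le nC(W).
\]
Hence $\log M=H(\mathsf M)\le nC(W)+1+\epsilon\log M$. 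Rearranging yields $\log M\le (nC(W)+1)/(1-\epsilon)$.

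\emph{Part (ii).} Take an $(n,|S_O|)$ code with $\mathcal M=S_O$ and $P^{(n)}_{e,\Cn}\le\epsilon$, and restrict it to the sub-message set $A\subseteq S_O$.
\begin{itemize}
\item[1.] Define a core decoder $\tilde g_n$: on output $\hat s=g_n(\hat s_c^n)$, return the unique $a\in A$ with $\hat s\in R_{\Cn}(a)$ if one exists, and an arbitrary fixed element of $A$ otherwise.
\item[2.] By Assumption~\ref{assump:core-disjoint}, the sets $R_{\Cn}(a)$, $a\in A$, are pairwise disjoint, so $\tilde g_n$ is well defined.
\item[3.] For $a\in A$, the event $\{d_{\Cn}(a,\hat{\mathsf S}_o^{(a)}\mid S_O)=0\}$ is exactly $\{\hat{\mathsf S}_o^{(a)}\in R_{\Cn}(a)\}$, and on that event $\tilde g_n$ returns $a$.
\item[4.] Therefore $\Pr[\tilde g_n\ne a]\le\Pr[d_{\Cn}>0]\le\epsilon$ for every $a\in A$.
\end{itemize}
Thus $(f_n|_A,\tilde g_n)$ is an $(n,|A|)$ code with Hamming maximal error at most $\epsilon$. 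The decoder maps into $A\subseteq S_O$; this is harmless because part (i) only used the Markov structure, not the codomain of $g_n$. Applying part (i) with $M=|A|$ gives~\eqref{eq:converse-closure}.

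\emph{Main obstacle.} The main obstacle is step 3 of part (ii): certifying that closure-acceptable reconstructions of distinct core messages never coincide. This is exactly what Assumption~\ref{assump:core-disjoint} supplies; without it the reduction fails and one would get only the confusability-graph bound of Remark~\ref{rem:core-disjoint-general}. Everything else is textbook. The only other care needed is to justify using average-error Fano from a maximal-error hypothesis, which holds since the average is at most the maximum.
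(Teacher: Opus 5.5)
Your proof is correct and follows the paper's argument: part (i) is the standard uniform-prior Fano bound combined with data processing over the memoryless channel. In part (ii), the paper uses the disjoint preimages $B_a=\{y^n: g_n(y^n)\in R_{\Cn}(a)\}$ to obtain a core estimate with error at most $\epsilon$ and then reruns Fano on the $|A|$-message sub-problem; your explicit core decoder does the same thing, and your reduction of (ii) to (i) is a slightly cleaner presentation of it, since Fano's $\log M$ form does not depend on the decoder's codomain.
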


\begin{proof}
\textup{(i)}: Standard Fano argument~\cite{cover2006elements}:
for uniform~\(\mathsf{M}\),
\(\log M\le I(\mathsf{M};\hat{\mathsf M})
+H(\mathsf{M}\mid\hat{\mathsf M})
\le nC(W)+1+\epsilon\log M\).

\smallskip\noindent
\textup{(ii)}: Under Assumption~\ref{assump:core-disjoint}, the
sets \(R_{\Cn}(a):=\{\hat s_o:d_{\Cn}(a,\hat s_o\mid S_O)=0\}\)
for \(a\in A\) are pairwise disjoint.
The preimages
\(B_a:=\{y^n:g_n(y^n)\in R_{\Cn}(a)\}\) are pairwise disjoint in
\(\hat S_C^n\) with
\(W^{\otimes n}(B_a\mid f_n(a))\ge 1-\epsilon\).
The Fano argument applied to the \(|A|\)-message sub-problem
yields the bound: assign a uniform prior on~\(A\), so
\(\log|A|=H(\mathsf{A})\le
I(\mathsf{A};\hat{\mathsf{A}})+H(\mathsf{A}\mid\hat{\mathsf{A}})
\le nC(W)+1+\epsilon\log|A|\), where
\(\hat{\mathsf{A}}\) is the core estimate and
\(\Pr[\hat{\mathsf{A}}\neq\mathsf{A}]\le\epsilon\)
follows from the disjointness of the decoding regions.
Rearranging gives~\eqref{eq:converse-closure}.
\end{proof}

\begin{theorem}[Achievability]
\label{thm:achievability}
Let \(C(W)>0\) and \(A=\Atom(S_O)\).
\begin{enumerate}[label=\textup{(\roman*)}]
  \item \emph{Hamming:}\;
        The full knowledge base \(S_O\) can be communicated
        Hamming-reliably using
        \(n>\log|S_O|/C(W)\) channel uses.
  \item \emph{Closure:}\;
        If \(\hat S_O\supseteq S_O\) and the encoding enabling
        is full, then \(P_{e,\Cn}^{(n)}\to 0\) provided
        \(\log|A|/n<C(W)\).
\end{enumerate}
\end{theorem}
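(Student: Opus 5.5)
Both parts follow from Shannon's channel coding theorem for the memoryless extension \(W^{\otimes n}\), combined with the structural results of Section~\ref{sec:model}. The idea is to reduce each reliability criterion to ordinary message decoding over an appropriate message set: all of \(S_O\) for part~(i), only the core \(A=\Atom(S_O)\) for part~(ii). Redundant states are then handled by the zero-distortion property of Remark~\ref{rem:closure-distortion-content}(a).

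\emph{Part (i).} Take the message set \(\mathcal M=S_O\), so \(M=|S_O|\), and fix any rate with \(\log|S_O|/n<C(W)\). By the channel coding theorem, a sequence of \((n,2^{nR})\) codes over \(S_C^n\) exists with maximal error probability tending to zero. Index \(|S_O|\) codewords by the elements of \(S_O\) and let \(g_n\) output the decoded element. Then \(P_e^{(n)}\to 0\) as \(n\) grows with \(\log|S_O|/n\) held below \(C(W)\). I read the condition ``\(n>\log|S_O|/C(W)\)'' in the statement as this asymptotic rate condition, not as a finite-\(n\) guarantee, and I would state that reading explicitly.

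\emph{Part (ii).} Use a core-only codebook. Because the encoding enabling map is full, we may pick \(f_n(a)\) freely in \(S_C^n\) for each \(a\in A\), taking the codewords of a capacity-approaching code with \(|A|\) messages; this code has maximal error tending to zero because \(\log|A|/n<C(W)\). For a redundant state \(j\in J\), set \(f_n(j):=f_n(a_0)\) for an arbitrary fixed \(a_0\in A\). The decoder outputs the decoded core element, which lies in \(\hat S_O\) since \(A\subseteq S_O\subseteq\hat S_O\).

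The reliability check splits into two cases.
\begin{itemize}
  \item If \(m\in A\), correct decoding returns \(\hat s=m\), and \(d_{\Cn}(m,m\mid S_O)=0\). Hence \(\Pr[d_{\Cn}>0]\) is at most the block error probability, which tends to zero.
  \item If \(m=j\in J\), every output of the decoder lies in \(A\subseteq\Cn(S_O)=\Cn(S_O\setminus\{j\})\). The last equality holds because \(j\) is redundant, which follows from Proposition~\ref{prop:atom-core-correct}. The argument in the proof of Proposition~\ref{prop:noise-distortion-bounds} then shows \(C_j=C_{\hat s}=\Cn(S_O)\), so \(d_{\Cn}(j,\hat s\mid S_O)=0\) deterministically, whatever the channel does.
\end{itemize}
Taking the maximum over \(m\) gives \(P_{e,\Cn}^{(n)}\to 0\).

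\emph{Main obstacle.} The delicate point is the claim that every \(j\in J\) satisfies \(j\in\Cn(S_O\setminus\{j\})\). The deletion procedure only guarantees \(j\in\Cn(A_{\mathrm{scan}}\setminus\{j\})\) at the moment \(j\) is scanned. However, \(A_{\mathrm{scan}}\setminus\{j\}\subseteq S_O\setminus\{j\}\), so monotonicity (Cn2) gives the claim.

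Because the redundant case depends on nothing the channel does, Assumption~\ref{assump:core-disjoint} is not needed for achievability. It would matter only if the decoder had to distinguish core elements through their acceptable sets rather than exactly.
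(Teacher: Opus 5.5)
Your proposal is correct and follows the paper's proof essentially step for step: part~(i) is a direct appeal to the channel coding theorem with message set $S_O$. Part~(ii) is the same two-layer code (a core codebook for $A$, each $j\in J$ sent as $f_n^A(a_0)$, decoder $g_n^A$) with the same case split giving $P_{e,\Cn}^{(n)}\le P_e^{(n)}(A)\to 0$. Your explicit derivation of $j\in\Cn(S_O\setminus\{j\})$ from the scan step via monotonicity (Cn2) is a useful addition, since the paper simply asserts this fact for $j\in J$.
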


\begin{proof}
\textup{(i)}:\
By Shannon's channel coding
theorem~\cite{shannon1948mathematical,cover2006elements},
for any rate $R<C(W)$ there exists a sequence of
$(n,\lceil 2^{nR}\rceil)$ codes with maximal error
probability $P_e^{(n)}\to 0$.
Setting $M=|S_O|$ and $R=\log|S_O|/n$, the hypothesis
$n>\log|S_O|/C(W)$ ensures $R<C(W)$, giving
Hamming-reliable communication of~$S_O$.

\smallskip\noindent
\textup{(ii)}:\
The code is constructed in two layers.

\emph{Layer~1 \textup{(}core code\textup{)}.}\;
Since $\log|A|/n<C(W)$, the channel coding theorem yields
a sequence of $(n,|A|)$ codes $(f_n^A,g_n^A)$ with
$f_n^A:A\to S_C^n$ and $g_n^A:\hat S_C^n\to A$ satisfying
$P_e^{(n)}(A):=\max_{a\in A}
\Pr[g_n^A(\hat{\mathsf S}_c^n)\neq a\mid
f_n^A(a)\text{ sent}]\to 0$.
Since $A\subseteq S_O\subseteq\hat S_O$ by hypothesis,
every output of~$g_n^A$ is a valid element
of~$\hat S_O$.

\emph{Layer~2 \textup{(}redundant extension\textup{)}.}\;
Fix $a_0\in A$.
Define the full encoder $f_n:S_O\to S_C^n$ by
$f_n(m):=f_n^A(m)$ for $m\in A$ and
$f_n(m):=f_n^A(a_0)$ for $m\in J$.
The decoder is $g_n:=g_n^A:\hat S_C^n\to A$.

\emph{Closure-reliability analysis.}\;
For $m=a\in A$: since $d_{\Cn}(a,a\mid S_O)=0$,
\[
  \Pr\bigl[d_{\Cn}\bigl(a,g_n(\hat{\mathsf S}_c^n)
  \mid S_O\bigr)>0\bigr]
  \;\le\;
  \Pr\bigl[g_n(\hat{\mathsf S}_c^n)\neq a\bigr]
  \;\le\;
  P_e^{(n)}(A).
\]

For $m=j\in J$: the decoder outputs some
$\hat a\in A$, determined by the channel realization.
Regardless of which $\hat a$ is output,
$\hat a\in A\subseteq S_O\subseteq\Cn(S_O)$.
Since $j\in J$, we have
$j\in\Cn(S_O\setminus\{j\})$, so
$\Cn(S_O\setminus\{j\})=\Cn(S_O)$
\textup{(}by~\textup{(Cn1)},
\textup{(Cn2)}, and~\textup{(Cn3))}.
Because
$\hat a\in\Cn(S_O)=\Cn(S_O\setminus\{j\})$,
the set
$(S_O\setminus\{j\})\cup\{\hat a\}
\subseteq\Cn(S_O\setminus\{j\})$, giving
\[
  \Cn\bigl((S_O\setminus\{j\})\cup\{\hat a\}\bigr)
  \;=\;\Cn(S_O\setminus\{j\})
  \;=\;\Cn(S_O)
\]
by~\textup{(Cn2)} and~\textup{(Cn3)}.
The reverse inclusion follows from
$S_O\setminus\{j\}\subseteq
(S_O\setminus\{j\})\cup\{\hat a\}$
and monotonicity.
Hence $d_{\Cn}(j,\hat a\mid S_O)=0$
\emph{deterministically} for every channel output.

Combining:
$P_{e,\Cn}^{(n)}
=\max_{m\in S_O}
\Pr[d_{\Cn}(m,g_n(\hat{\mathsf S}_c^n)\mid S_O)>0]
\le P_e^{(n)}(A)\to 0$.
\end{proof}

\begin{corollary}[Minimum blocklength]
\label{cor:min-blocklength}
Under Assumption~\textup{\ref{assump:core-disjoint}}, the minimum
blocklength for closure-reliable communication satisfies
\(n^*\approx\log|\Atom(S_O)|/C(W)\), while Hamming reliability
requires \(n_H^*\approx\log|S_O|/C(W)\).
The \emph{deductive compression ratio} is
\(\log|\Atom(S_O)|/\log|S_O|<1\) whenever
\(|J|>0\).
\end{corollary}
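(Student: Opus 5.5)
The corollary is essentially a repackaging of Theorems~\ref{thm:converse} and~\ref{thm:achievability}, so I will argue it in three short steps: a lower bound on \(n^*\), a matching upper bound, and a strict inequality for the compression ratio. Throughout, I read "\(\approx\)" as first-order asymptotic equality in \(\log|\Atom(S_O)|/C(W)\). The additive term \(1\) and the factor \(1/(1-\epsilon)\) only produce corrections that vanish as \(\epsilon\to0\) and are lower order relative to \(\log|A|\).

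\emph{Lower bound.} Fix \(\epsilon\in(0,1)\). I apply Theorem~\ref{thm:converse}(ii), which relies on Assumption~\ref{assump:core-disjoint}, to any \((n,|S_O|)\) code with \(P_{e,\Cn}^{(n)}\le\epsilon\). Rearranging \eqref{eq:converse-closure} gives
\(n\ge\bigl((1-\epsilon)\log|A|-1\bigr)/C(W)\).
Hence \(n^*\ge\log|A|/C(W)\) up to the \(\epsilon\)-dependent factor and an \(O(1/C(W))\) additive term. The Hamming side works the same way. Theorem~\ref{thm:converse}(i) with \(M=|S_O|\) yields \(n_H^*\ge\bigl((1-\epsilon)\log|S_O|-1\bigr)/C(W)\).

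\emph{Upper bound.} By Theorem~\ref{thm:achievability}(ii), under \(\hat S_O\supseteq S_O\) and full encoding enabling, any \(n>\log|A|/C(W)\) gives a rate \(\log|A|/n<C(W)\). This makes the closure error vanish, using the two-layer code that sends only the core index. Theorem~\ref{thm:achievability}(i) does the same for Hamming with \(n>\log|S_O|/C(W)\). Combining the two bounds pins \(n^*\) and \(n_H^*\) to the stated values to first order.

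\emph{Main obstacle.} The achievability statement is asymptotic, with error tending to \(0\) as \(n\to\infty\), while \(|A|\) is fixed. To make "minimum blocklength" meaningful, I would consider the regime where \(\log|A|\) and \(n\) grow together at a fixed ratio, so that \(\log|A|/n<C(W)\) is exactly the achievable region. In that regime the converse shows the ratio cannot exceed \(C(W)/(1-\epsilon)\) plus a vanishing term, and I will state that interpretation explicitly.

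\emph{Compression ratio.} If \(|J|>0\), then \(|A|<|S_O|\) and \(\log\) is strictly increasing, so \(\log|A|/\log|S_O|<1\). This requires \(|S_O|\ge2\), which \(|J|>0\) forces since the core is nonempty whenever \(\Cn(S_O)\) is not already generated by \(\varnothing\). In the degenerate case \(A=\varnothing\), the ratio is \(0\) under \(\log 0\) conventions and the claim is trivial. I would note this edge case in a sentence.
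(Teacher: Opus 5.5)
Your proposal is correct and matches the paper's intended derivation. The paper gives no separate proof of this corollary. It is meant as the direct combination of Theorem~\ref{thm:converse}, which supplies the lower bounds and needs Assumption~\ref{assump:core-disjoint} only for part~(ii), with Theorem~\ref{thm:achievability}, which supplies the upper bounds under the hypotheses \(\hat S_O\supseteq S_O\) and full encoding enabling that you rightly flag as implicit. The paper spells out the same combine-and-divide argument in the proof of Theorem~\ref{thm:heterogeneous-compression}.

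One small slip in your degenerate case: if \(\Atom(S_O)=\varnothing\), then \(\log|\Atom(S_O)|=-\infty\), not \(0\). In that case the two-layer code, which fixes some \(a_0\in A\), is also undefined. That case is best simply excluded.
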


\begin{definition}[Semantic rate--distortion function]
\label{def:semantic-rd}
For a semantic source \((S_O,P_O)\) and bounded distortion~\(d\),
\begin{equation}\label{eq:R-sem-D}
  R_{\mathrm{sem}}(D)
  :=\min_{\substack{P_{\hat S\mid S}:\,
  \E[d(S_o,\hat S_o)]\le D}}
  I(\mathsf{S}_o;\hat{\mathsf{S}}_o).
\end{equation}
When \(d=d_{\Cn}(\cdot,\cdot\mid S_O)\), the reference
knowledge base~\(S_O\) is a fixed model parameter: each
source symbol~\(S_o^{(t)}\) takes values in~\(S_O\), and
every evaluation
\(d_{\Cn}(S_o^{(t)},\hat S_o^{(t)}\mid S_O)\) uses the
same~\(S_O\), so \(d_{\Cn}\) is a standard single-letter
distortion parameterized by~\(S_O\).
For i.i.d.\ source blocks of length~$m$, the block distortion
is $d^{(m)}(s^m,\hat s^m):=m^{-1}\sum_{t=1}^{m}d(s_t,\hat s_t)$,
and the rate--distortion coding
theorem~\cite[Theorem~10.2.1]{cover2006elements} guarantees
achievability at any rate exceeding $R_{\mathrm{sem}}(D)$.
\end{definition}

\begin{proposition}[Properties of \(R_{\mathrm{sem}}\)]
\label{prop:rd-properties}
\textup{(i)}~The minimum exists.
\textup{(ii)}~\(R_{\mathrm{sem}}\) is non-increasing and convex.
\textup{(iii)}~Under Hamming distortion,
\(R_{\mathrm{sem}}(0)=H(P_O)\).
\textup{(iv)}~Under closure distortion with
\(\hat S_O\supseteq\Cn(S_O)\cap\mathbb{S}_O\),
\(R_{\mathrm{sem}}(0;d_{\Cn})\le\log|\Atom(S_O)|\).
\textup{(v)}~\(R_{\mathrm{sem}}\) is computable.
\end{proposition}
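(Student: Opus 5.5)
The five parts are handled separately. Parts (i), (ii), (iii) and (v) are standard finite-alphabet rate--distortion facts. Part (iv) is the only one that uses the deductive structure, through an explicit test channel.

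For part~(i), I restrict attention to the feasible set
\(\mathcal F_D:=\{P_{\hat S\mid S}:\E[d]\le D\}\). The source alphabet \(S_O\) and the reconstruction alphabet \(\hat S_O\subseteq\mathbb{S}_O\) are finite (Assumption~\ref{assump:finite-so}), and \(d\) is bounded. Hence the set of conditional kernels is a compact product of simplices, and \(\mathcal F_D\) is a closed subset of it, since it is cut out by a linear constraint. The mutual information \(I(\mathsf S_o;\hat{\mathsf S}_o)\) is continuous in the kernel for fixed \(P_O\). Therefore the minimum is attained whenever \(\mathcal F_D\neq\varnothing\). Feasibility holds for \(D\ge 0\) whenever there exists a zero-distortion reconstruction, for instance the identity map when \(S_O\subseteq\hat S_O\).

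For part~(ii), monotonicity follows because \(\mathcal F_D\subseteq\mathcal F_{D'}\) for \(D\le D'\). For convexity I use the time-sharing argument of \cite[Lemma~10.4.1]{cover2006elements}. Take minimizers \(\kappa_1,\kappa_2\) at distortion levels \(D_1,D_2\), and set \(\kappa_\lambda=\lambda\kappa_1+(1-\lambda)\kappa_2\). This mixture has expected distortion at most \(\lambda D_1+(1-\lambda)D_2\). Mutual information is convex in the channel for a fixed input distribution, so
\(I(\kappa_\lambda)\le\lambda R(D_1)+(1-\lambda)R(D_2)\).

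For part~(iii), under \(d_H\) with \(D=0\), every atom of the joint distribution with \(P_O(s)>0\) must satisfy \(\hat s=s\). So \(\hat{\mathsf S}_o=\mathsf S_o\) almost surely, and \(I=H(\mathsf S_o)=H(P_O)\).

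For part~(v), I appeal to the Blahut--Arimoto algorithm, or to the fact that the problem is a finite-dimensional convex program whose inputs are computable. The inputs are computable because \(d_{\Cn}\) and \(d_{\Dd}\) are computable: they require finitely many closure computations, which are available by Axiom~\ref{ax:T-operator} and Theorem~\ref{thm:semantic-invariants}. The minimum can then be approximated to any precision.

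Part~(iv) is the substantive step. I construct a deterministic test channel \(\phi:S_O\to A\), where \(A=\Atom(S_O)\), by setting \(\phi(a)=a\) for \(a\in A\) and \(\phi(j)=a_0\) for \(j\in J\), with \(a_0\in A\) fixed. The hypothesis on \(\hat S_O\) gives \(A\subseteq S_O\subseteq\hat S_O\), so \(\phi\) is an admissible reconstruction. The distortion is zero on every input:
\begin{itemize}
\item For \(a\in A\), we have \(d_{\Cn}(a,a)=0\) trivially.
\item For \(j\in J\), we have \(a_0\in\Cn(S_O)=\Cn(S_O\setminus\{j\})\). The argument already given in the proof of Theorem~\ref{thm:achievability}(ii) (equivalently, Proposition~\ref{prop:noise-distortion-bounds}) then yields \(C_j=C_{a_0}=\Cn(S_O)\), so \(d_{\Cn}(j,a_0)=0\).
\end{itemize}
Since \(\phi\) is feasible at \(D=0\),
\[
R_{\mathrm{sem}}(0;d_{\Cn})\le I(\mathsf S_o;\phi(\mathsf S_o))=H(\phi(\mathsf S_o))\le\log|A|.
\]
The main care point is the degenerate case \(A=\varnothing\), which occurs when \(S_O\) is empty or consists entirely of derivable facts. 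There I would use the convention \(\log 0\) replaced by \(0\), or assume \(S_O\neq\varnothing\), which forces \(A\neq\varnothing\) only if \(\Cn(\varnothing)\not\supseteq S_O\). I would state this edge case explicitly.

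A second small point is that part~(iii) needs \(S_O\subseteq\hat S_O\) for the zero-distortion identity to be feasible. I would note that this is implicit in the setting.
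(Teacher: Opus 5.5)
Your proposal is correct and follows the paper's proof: the paper also treats (i)--(iii) as standard finite-alphabet rate--distortion facts, and proves (iv) with exactly your deterministic test channel \(\phi\) (identity on \(A\), constant \(a_0\) on \(J\)), using the zero distortion of redundant states and \(I=H(\phi(\mathsf S_o))\le\log|A|\). Your additional points are sound refinements rather than a different route: the Blahut--Arimoto argument for (v), which the paper covers only in a later remark; the feasibility caveat \(S_O\subseteq\hat S_O\) for (iii); and the degenerate case \(A=\varnothing\), which the paper implicitly excludes by fixing \(a_0\in A\).
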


\begin{proof}
Parts~\textup{(i)}--\textup{(iii)} are standard~\cite{cover2006elements}.
For~\textup{(iv)}, fix \(a_0\in A\) and define
\(\phi(s_o):=s_o\) if \(s_o\in A\), \(\phi(s_o):=a_0\) if
\(s_o\in J\).
The deterministic conditional
\(P_{\hat S\mid S}(\cdot\mid s_o)=\delta_{\phi(s_o)}\) achieves
\(\E[d_{\Cn}]=0\) (redundant states yield zero distortion by
Remark~\ref{rem:closure-distortion-content}(a)) and
\(I(\mathsf{S}_o;\hat{\mathsf{S}}_o)=H(\hat{\mathsf{S}}_o)
\le\log|A|\).
\end{proof}

\begin{remark}[Computation and source--channel separation]
\label{rem:rd-computation-separation}
The semantic rate--distortion function
\(R_{\mathrm{sem}}(D;d_{\Cn})\) can be computed to arbitrary
precision via a variant of the Blahut--Arimoto
algorithm~\cite{cover2006elements}: the standard alternating
minimization applies because the feasible set is a compact
subset of a finite-dimensional probability simplex and the
objective \(I(\mathsf{S}_o;\hat{\mathsf{S}}_o)\) is convex
in the conditional distribution for fixed~\(P_O\).
Han et~al.~\cite{han2025extended} have recently developed an
extended Blahut--Arimoto algorithm for computing semantic
rate--distortion functions under synonymous mappings.
Their alternating-minimization framework is conceptually related
to ours and can potentially be adapted to the closure-distortion
setting; a formal verification that the feasibility structure
of~\(d_{\Cn}\) satisfies the regularity conditions of their
convergence proof is an interesting algorithmic direction.
The classical source--channel separation
theorem~\cite{shannon1948mathematical,cover2006elements}
carries over verbatim to the semantic framework: semantic
source coding at rate \(R_{\mathrm{sem}}(D)\) followed by
channel coding at rate \(C(W)\) is optimal.
The semantic novelty is that
\(R_{\mathrm{sem}}(D;d_{\Cn})<R(D;d_H)\) for
\(D\) near zero whenever \(|J|>0\), yielding a strict
improvement in the achievable distortion--capacity operating
point.
\end{remark}

\section{Fundamental Limits of Semantic Communication}
\label{sec:fundamental-limits}

This section derives the central information-theoretic results.
The key ingredient is the zero-distortion property of redundant
states \textup{(Remark~\ref{rem:closure-distortion-content}(a))}:
under closure distortion only the core~\(\Atom(S_O)\) contributes
to rate and distortion, enabling a decomposition with no classical
counterpart.

\smallskip
\noindent\textit{Section roadmap.}\;
Sections~\ref{subsec:tight-rd}--\ref{subsec:rd-decomposition}
derive the tight zero-distortion semantic rate and its
extension to all distortion levels, treating the receiver's
inference engine as unconstrained.
Section~\ref{subsec:sem-separation} establishes the semantic
source--channel separation theorem and quantifies the semantic
leverage factor.
Section~\ref{subsec:strengthened-fano} proves the strengthened
semantic Fano inequality.
Section~\ref{subsec:rate-delay} introduces a \emph{new degree of
freedom}---the receiver's derivation budget~$\delta$, measured in
$T_{\mathsf{PS}}$-iterations---and replaces the closure
distortion~$d_{\Cn}$ by a budget-constrained variant
$d_{\Cn}^\delta$
\textup{(Definition~\ref{def:delta-closure-distortion})}, which
requires the receiver to reconstruct every stored state within
$\delta$~iterations rather than merely preserving the abstract
closure.
This leads to a $\delta$-irredundant core filtration
$A_\delta$ and a rate--delay--distortion surface that smoothly
interpolates between the classical rate $H(P_O)$ at $\delta=0$
and the semantic rate $P_A\,H(\pi_A)$ at
$\delta=\mathsf{D_d}$.

Throughout, \(S_O\) is a finite knowledge base with irredundant core
\(A:=\Atom(S_O)\), \(k:=|A|\), and stored shortcuts
\(J:=S_O\setminus A\) with \(|J|=|S_O|-k\).
We write \(P_A:=P_O(A)\), \(P_J:=1-P_A\), and
\(\pi_A(a):=P_O(a)/P_A\) for the conditional core distribution.
The carrier channel is \(W:S_C\rightsquigarrow\hat S_C\) with
\(C(W)>0\).
Assumption~\ref{assump:core-disjoint} is in force for converse
bounds.
In block-coding results, \(n\) denotes the blocklength
\textup{(}number of channel uses\textup{)};
in rate expressions for the uniform source, we write
\(|S_O|\) explicitly to avoid ambiguity.

\subsection{Tight Semantic Rate--Distortion Function}
\label{subsec:tight-rd}

\begin{theorem}[Tight zero-distortion semantic rate]
\label{thm:tight-zero-rate}
Let \((S_O,P_O)\) be a semantic source with irredundant core
\(A=\Atom(S_O)\).
Under closure distortion \(d_{\Cn}(\cdot,\cdot\mid S_O)\) with
\(\hat S_O\supseteq\Cn(S_O)\cap\mathbb S_O\) and
Assumption~\textup{\ref{assump:core-disjoint}}:
\begin{equation}\label{eq:tight-R0}
  R_{\mathrm{sem}}(0;\,d_{\Cn},\,P_O)
  \;=\;
  P_A\,H(\pi_A).
\end{equation}
Under the uniform source:
\(R_{\mathrm{sem}}(0)=\frac{k}{|S_O|}\log k\).
The semantic compression gain relative to Hamming fidelity is
\begin{equation}\label{eq:zero-rate-ratio}
  \frac{R_{\mathrm{sem}}(0;\,d_{\Cn})}{R(0;\,d_H)}
  \;=\;
  \frac{P_A\,H(\pi_A)}{H(P_O)}
  \;<\;1
  \quad\text{whenever } k<|S_O|.
\end{equation}
\end{theorem}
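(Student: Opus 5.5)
The plan is to prove \eqref{eq:tight-R0} by matching an explicit achievability construction against a converse obtained from data processing and concavity of entropy. The naive test channel used in Proposition~\ref{prop:rd-properties}(iv), which sends every $j\in J$ to a fixed $a_0$, only yields $H(\hat{\mathsf S}_o)\le\log k$. That is too weak here, so I would instead randomize the reconstruction of redundant states.

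\emph{Achievability.} Define $P_{\hat S\mid S}(\cdot\mid a)=\delta_a$ for $a\in A$, and $P_{\hat S\mid S}(\cdot\mid j)=\pi_A$ for $j\in J$. Here $J$ is mapped independently onto core outputs with the conditional core law.
\begin{itemize}
\item Distortion: it is zero on $A$ trivially. On $J$ it is zero by the argument of Proposition~\ref{prop:noise-distortion-bounds} and Theorem~\ref{thm:achievability}(ii), since every $a\in A\subseteq\Cn(S_O)=\Cn(S_O\setminus\{j\})$. The reconstruction alphabet is admissible because $\hat S_O\supseteq\Cn(S_O)\cap\mathbb S_O\supseteq A$.
\item Output marginal: $P(\hat S=a)=P_O(a)+P_J\pi_A(a)=\pi_A(a)$, so $H(\hat{\mathsf S}_o)=H(\pi_A)$.
\item Conditional entropy: $H(\hat{\mathsf S}_o\mid\mathsf S_o)=P_J H(\pi_A)$.
\end{itemize}
Hence $I=(1-P_J)H(\pi_A)=P_AH(\pi_A)$.

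\emph{Converse.} Take any $P_{\hat S\mid S}$ with $\E[d_{\Cn}]=0$. For every $a\in A$ with $P_O(a)>0$, $\hat{\mathsf S}_o\in R_{\Cn}(a)$ almost surely. Using Assumption~\ref{assump:core-disjoint}, define a deterministic map $f$ by $f(\hat s)=a$ if $\hat s\in R_{\Cn}(a)$ and $f(\hat s)=\bot$ otherwise, and set $Z:=f(\hat{\mathsf S}_o)$. Data processing gives $I(\mathsf S_o;\hat{\mathsf S}_o)\ge I(\mathsf S_o;Z)=H(Z)-H(Z\mid\mathsf S_o)$.

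On $\{\mathsf S_o\in A\}$ we have $Z=\mathsf S_o$, so $H(Z\mid\mathsf S_o)=P_J\sum_j P(j\mid J)H(\mu_j)\le P_JH(\mu)$, where $\mu_j$ is the law of $Z$ given $\mathsf S_o=j$, $\mu:=\sum_jP(j\mid J)\mu_j$, and the inequality is concavity of entropy. The law of $Z$ is the mixture $P_A\pi_A+P_J\mu$, so by concavity again $H(Z)\ge P_AH(\pi_A)+P_JH(\mu)$. Subtracting gives $I\ge P_AH(\pi_A)$, which matches the achievability bound. If $P_A=0$ both sides are trivially $0$.

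\emph{Corollaries.} For the uniform source, $P_A=k/|S_O|$ and $\pi_A$ is uniform on $A$, so $R_{\mathrm{sem}}(0)=\frac{k}{|S_O|}\log k$. For \eqref{eq:zero-rate-ratio}, use Proposition~\ref{prop:rd-properties}(iii) together with the grouping rule
\[
H(P_O)=h_b(P_A)+P_AH(\pi_A)+P_JH(\pi_J).
\]
When $k<|S_O|$ we have $J\neq\varnothing$. Under full support $P_J>0$, and $P_A>0$ whenever $A\neq\varnothing$, so $h_b(P_A)>0$ and the ratio is strictly below $1$. I would state the full-support hypothesis explicitly, since it is genuinely needed for strictness. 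The degenerate case $A=\varnothing$ should be handled separately: there $\Cn(S_O)=\Cn(\varnothing)$, the rate is $0$, and $H(P_O)>0$ if $|S_O|\ge2$.

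The main obstacle is the converse. The pitfall is to bound by $H(\hat{\mathsf S}_o)$ directly, which fails. Quotienting to $Z$ through the disjoint acceptable sets, and then using concavity twice, is the step I expect to need care, particularly the bookkeeping for the outputs $\bot$ and for core elements of zero probability.
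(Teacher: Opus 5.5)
Your proof is correct. The achievability half is the paper's own test channel: identity on $A$, and output drawn from $\pi_A$ on $J$.

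The converse takes a different route. The paper quotients the \emph{source} through $T$ (core symbol or $*$) and then works with the reconstruction $\hat S_o$ itself. It relies on the disjoint-support splitting identity \eqref{eq:disjoint-split} to cancel the within-set entropies $H(P_{\hat S\mid S}(\cdot\mid a))$ against the matching terms of $H(\hat S_o\mid T)$, with one concavity step. You instead quotient the \emph{reconstruction} through $f$, collapsing each acceptable set $R_{\Cn}(a)$ to its label. Then $Z=S_o$ on the core, and those within-set terms never appear. Your first concavity step is the data-processing step $I(S_o;Z)\ge I(T;Z)$ in disguise, and your second is the paper's concavity step.

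What your route buys is transparency. It shows that Assumption~\ref{assump:core-disjoint} is used only to make a deterministic core decoder $f$ well defined, the same mechanism as in Theorem~\ref{thm:converse}(ii). It also needs no entropy bookkeeping on $\hat S_O$. The paper's route keeps the actual reconstruction variable, so its equality conditions can be read off directly. Equality requires all redundant-input output laws $Q_j$ to coincide with $\bar\pi$. This characterizes the whole family of optimal test channels, not just the one exhibited.

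You also go beyond the paper's proof, which never verifies \eqref{eq:zero-rate-ratio}. Your grouping identity $H(P_O)=h_b(P_A)+P_AH(\pi_A)+P_JH(\pi_J)$, with $\pi_J$ the conditional law on $J$, does verify it. You are right that strictness needs $P_O(J)>0$, for example full support. If $P_O$ vanishes on $J$, then $P_A=1$ and the ratio equals $1$ even though $k<|S_O|$. So the statement as written is missing that hypothesis.
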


\begin{proof}
We establish the lower and upper bounds separately.

\smallskip\noindent
\emph{Lower bound.}\;
Let \(P_{\hat S\mid S}\) be any conditional distribution achieving
\(\E[d_{\Cn}(S_o,\hat S_o\mid S_O)]=0\).
Define the auxiliary random variable
\(T:\,S_O\to A\cup\{*\}\) by
\(T(s_o):=s_o\) if \(s_o\in A\) and
\(T(s_o):=*\) if \(s_o\in J\).
Since \(T\) is a deterministic function of~\(S_o\), the data
processing inequality gives
\begin{equation}\label{eq:DPI-T}
  I(S_o;\hat S_o)\;\ge\;I(T;\hat S_o).
\end{equation}
We now lower-bound the right-hand side.

Under Assumption~\ref{assump:core-disjoint}, the zero-distortion
constraint forces
\(\supp(P_{\hat S\mid S}(\cdot\mid a))\subseteq R_{\Cn}(a)\)
for each \(a\in A\), where
\(R_{\Cn}(a):=\{\hat s_o\in\hat S_O:
d_{\Cn}(a,\hat s_o\mid S_O)=0\}\)
\textup{(Assumption~\ref{assump:core-disjoint})},
with the sets \(\{R_{\Cn}(a)\}_{a\in A}\) pairwise disjoint.
For redundant states \(j\in J\),
\(P_{\hat S\mid S}(\cdot\mid j)\) is unconstrained on
\(\Cn(S_O)\cap\hat S_O\)
(Remark~\ref{rem:closure-distortion-content}(a)).

Let \(Q_j:=P_{\hat S\mid S}(\cdot\mid j)\) for each \(j\in J\),
and define the average redundant output distribution
\(Q:=\sum_{j\in J}[P_O(j)/P_J]\,Q_j\).
Write
\(\bar\pi:=\sum_{a\in A}\pi_A(a)\,P_{\hat S\mid S}(\cdot\mid a)\)
for the average core output distribution.
Since the supports \(R_{\Cn}(a)\) are pairwise disjoint,
the entropy of the mixture~\(\bar\pi\) decomposes by the
standard disjoint-support identity
\cite[Theorem~2.7.3]{cover2006elements}:
\begin{equation}\label{eq:disjoint-split}
  H(\bar\pi)
  \;=\;
  H(\pi_A)+\sum_{a\in A}\pi_A(a)\,
  H\bigl(P_{\hat S\mid S}(\cdot\mid a)\bigr).
\end{equation}
The conditional entropy of~\(\hat S_o\) given~\(T\) is
\[
  H(\hat S_o\mid T)
  = P_A\sum_{a}\pi_A(a)\,H(P_{\hat S\mid S}(\cdot\mid a))
    + P_J\,H(Q),
\]
and the output marginal is
\(P_{\hat S}=P_A\,\bar\pi+P_J\,Q\).
By the concavity of Shannon
entropy~\cite[Theorem~2.7.3]{cover2006elements},
\begin{equation}\label{eq:entropy-concavity}
  H(P_A\,\bar\pi+P_J\,Q)
  \;\ge\;
  P_A\,H(\bar\pi)+P_J\,H(Q).
\end{equation}
Therefore, using~\eqref{eq:disjoint-split}:
\begin{align}
  I(T;\hat S_o)
  &= H(P_A\,\bar\pi+P_J\,Q)-H(\hat S_o\mid T) \nonumber\\
  &\ge P_A\,H(\bar\pi)+P_J\,H(Q)
       -P_A\!\sum_{a}\pi_A(a)\,H(P_{\hat S\mid S}(\cdot\mid a))
       -P_J\,H(Q) \nonumber\\
  &= P_A\bigl[H(\bar\pi)
     -\!\sum_{a}\pi_A(a)\,H(P_{\hat S\mid S}(\cdot\mid a))\bigr]
     \nonumber\\
  &= P_A\,H(\pi_A).
  \label{eq:IT-lower}
\end{align}
Since~\eqref{eq:IT-lower} holds for every feasible
\(P_{\hat S\mid S}\) (the distribution~\(Q\) is determined by the
conditional, not chosen by us), combining
with~\eqref{eq:DPI-T} gives
\(R_{\mathrm{sem}}(0)\ge P_A\,H(\pi_A)\).

\smallskip\noindent
\emph{Upper bound (achievability).}\;
We exhibit a feasible conditional achieving
\(I(S_o;\hat S_o)=P_A\,H(\pi_A)\) at zero distortion.
Define
\[
  P_{\hat S\mid S}(\hat s_o\mid s_o)
  :=
  \begin{cases}
    \mathbf{1}[\hat s_o=s_o] & \text{if } s_o\in A,\\
    \pi_A(\hat s_o) & \text{if } s_o\in J,
  \end{cases}
\]
where \(\pi_A\) is supported on~\(A\).

\emph{Distortion check.}\;
For \(s_o=a\in A\): \(\hat s_o=a\), giving
\(d_{\Cn}(a,a\mid S_O)=0\).
For \(s_o=j\in J\): \(\hat s_o\) is drawn from \(\pi_A\) on~\(A\),
so \(\hat s_o\in A\subseteq\Cn(S_O)\); since \(j\) is redundant,
\(d_{\Cn}(j,\hat s_o\mid S_O)=0\)
(Remark~\ref{rem:closure-distortion-content}(a)).

\emph{Rate computation.}\;
Under this conditional, \(Q=\pi_A\) and the output marginal is
\(P_{\hat S}=P_A\pi_A+P_J\pi_A=\pi_A\).
All redundant inputs produce the same conditional output
distribution~\(\pi_A\), so
\(\hat S_o\perp\!\!\!\perp S_o\mid T\), giving
\(I(S_o;\hat S_o\mid T)=0\).
Hence
\[
  I(S_o;\hat S_o)
  = I(T;\hat S_o)
  = H(\pi_A)-P_J\,H(\pi_A)
  = P_A\,H(\pi_A). \qedhere
\]
\end{proof}

\begin{remark}[Relaxed acceptable sets]
\label{rem:relaxed-acceptable}
The lower bound \(P_A\,H(\pi_A)\) in the proof of
Theorem~\textup{\ref{thm:tight-zero-rate}} holds for
arbitrary acceptable-set sizes \(|R_{\Cn}(a)|\ge 1\),
provided the pairwise-disjoint support condition of
Assumption~\textup{\ref{assump:core-disjoint}} holds;
the key step is the entropy splitting
identity~\eqref{eq:disjoint-split}.
The achievability construction of the upper bound applies
unchanged.
\end{remark}

\begin{remark}[Role of the reconstruction alphabet assumption]
\label{rem:recon-alphabet}
Theorem~\textup{\ref{thm:tight-zero-rate}} assumes
\(\hat S_O\supseteq\Cn(S_O)\cap\mathbb{S}_O\), ensuring that
the decoder can output any state in the sender's closure.
This is an idealized ``full reconstruction alphabet'' setting
that maximizes the deductive compression gain; it corresponds
to a receiver whose vocabulary contains at least all closure
elements expressible in the ambient universe.
When the reconstruction alphabet is restricted to
\(\hat S_O\subsetneq\Cn(S_O)\cap\mathbb{S}_O\)---as in the
heterogeneous setting of
Section~\textup{\ref{sec:application}}, where
\(\hat S_O=S_O^{(j)}\) may exclude some closure
elements---the zero-distortion rate can only increase:
\(R_{\mathrm{sem}}(0;\hat S_O)\ge R_{\mathrm{sem}}(0;
\Cn(S_O)\cap\mathbb{S}_O)=P_A H(\pi_A)\).
The heterogeneous achievability results of
Theorem~\textup{\ref{thm:heterogeneous-closure}} show that
the rate \(P_A H(\pi_A)\) remains achievable as long as the
core coverage condition \(A^{(i)}\subseteq\hat S_O\) holds;
when it fails,
Corollary~\textup{\ref{cor:heterogeneous-impossibility}}
provides the impossibility characterization.
\end{remark}

\begin{remark}[Necessity of the proof-system structure]
\label{rem:why-proof-system}
The result~\eqref{eq:tight-R0} depends on three concepts absent
from classical rate--distortion theory:
the irredundant core \(\Atom(S_O)\), the deductive
closure~\(\Cn\), and the core/redundant partition of source
symbols.
Replacing \(d_{\Cn}\) by any distortion that assigns positive cost
to all symbol errors eliminates the zero-distortion property of
redundant states and recovers \(R(0)=H(P_O)\).
\end{remark}

\subsection{Rate--Distortion Core Decomposition}
\label{subsec:rd-decomposition}

Theorem~\ref{thm:tight-zero-rate} extends to all distortion
levels: the full semantic rate--distortion function decomposes
into a contribution from the core alone.

\begin{theorem}[Core decomposition of \(R_{\mathrm{sem}}\)]
\label{thm:rd-decomposition}
Under the hypotheses of
Theorem~\textup{\ref{thm:tight-zero-rate}},
\begin{equation}\label{eq:rd-decomp}
  R_{\mathrm{sem}}(D;\,d_{\Cn},\,P_O)
  \;=\;
  P_A\cdot R^{(A)}\!\Bigl(\frac{D}{P_A};\,d_{\Cn},\,\pi_A\Bigr),
  \qquad D\ge 0,
\end{equation}
where \(R^{(A)}(D';\,d_{\Cn},\,\pi_A)\) is the rate--distortion
function of the core sub-source \((A,\pi_A)\) with distortion
\(d_{\Cn}(\cdot,\cdot\mid S_O)\) restricted to core inputs.

\smallskip\noindent
Consequences:
\textup{(i)}~At \(D=0\),
\(R_{\mathrm{sem}}(0)=P_A\,H(\pi_A)\), recovering
Theorem~\textup{\ref{thm:tight-zero-rate}}.
\textup{(ii)}~\(R_{\mathrm{sem}}\) is convex, non-increasing, and
computable.
\textup{(iii)}~Redundant states are invisible: the
function~\eqref{eq:rd-decomp} depends on~\(S_O\) only through
\(A\), \(P_A\), \(\pi_A\), and the closure-distortion structure
on~\(A\).
\end{theorem}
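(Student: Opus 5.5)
The plan is to prove~\eqref{eq:rd-decomp} by two matching inequalities, reusing the auxiliary variable \(T\) and the mixture decomposition from the proof of Theorem~\ref{thm:tight-zero-rate}. The structural fact I rely on is Remark~\ref{rem:closure-distortion-content}(a) together with the argument in the proof of Proposition~\ref{prop:noise-distortion-bounds}. Together they say that every \(j\in J\) has \(d_{\Cn}(j,\hat s\mid S_O)=0\) for all \(\hat s\in\Cn(S_O)\cap\hat S_O\). Hence redundant inputs cost nothing in distortion provided their outputs stay inside the closure. The expected distortion of any conditional confined in this way is
\[
\E[d_{\Cn}]=P_A\sum_{a}\pi_A(a)\sum_{\hat s}P_{\hat S\mid S}(\hat s\mid a)\,d_{\Cn}(a,\hat s\mid S_O).
\]
So the constraint \(\E[d_{\Cn}]\le D\) is equivalent to the core sub-problem constraint with budget \(D/P_A\).

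For the \emph{upper bound}, I would take an optimal test channel \(P^*_{\hat S\mid A}\) for the core sub-source \((A,\pi_A)\) at distortion \(D/P_A\). I would assume its outputs lie in \(\Cn(S_O)\cap\hat S_O\); the main obstacle below addresses whether this can be arranged. Let \(\bar\pi^*\) be its output marginal. The extended conditional is \(P^*_{\hat S\mid A}(\cdot\mid a)\) for \(a\in A\) and \(\bar\pi^*\) for every \(j\in J\). Its distortion is \(P_A\cdot D/P_A=D\). Exactly as in the achievability part of Theorem~\ref{thm:tight-zero-rate}, the output marginal is \(\bar\pi^*\) and \(\hat S_o\perp\!\!\!\perp S_o\mid T\). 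This gives
\[
I(S_o;\hat S_o)=H(\bar\pi^*)-P_A H(\hat S\mid A)-P_J H(\bar\pi^*)=P_A\,I^{(A)}=P_A R^{(A)}(D/P_A).
\]

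For the \emph{lower bound}, take any feasible \(P_{\hat S\mid S}\). Data processing gives \(I(S_o;\hat S_o)\ge I(T;\hat S_o)\). Writing \(P_{\hat S}=P_A\bar\pi+P_J Q\) and applying concavity~\eqref{eq:entropy-concavity} as before yields \(I(T;\hat S_o)\ge P_A[H(\bar\pi)-\sum_a\pi_A(a)H(P_{\hat S\mid S}(\cdot\mid a))]=P_A\,I^{(A)}(\pi_A,P_{\hat S\mid S}|_A)\). Here I no longer use the disjoint-support identity. The core restriction has distortion \(D_A:=\sum_a\pi_A(a)\E[d\mid a]\). Since redundant distortion is nonnegative, \(P_A D_A\le D\). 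Therefore \(I^{(A)}\ge R^{(A)}(D_A)\ge R^{(A)}(D/P_A)\) by monotonicity, which gives the matching bound.

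The main obstacle is the upper bound when the optimal core test channel uses outputs outside \(\Cn(S_O)\). Then redundant inputs fed \(\bar\pi^*\) may incur positive distortion. I would first try to show that outputs outside \(\Cn(S_O)\) can be projected into the closure without increasing core distortion, since \(d_{\Cn}\le1\) and an out-of-closure output only enlarges \(C_{\hat s}\). If that fails, I would define \(R^{(A)}\) with reconstruction alphabet \(\Cn(S_O)\cap\hat S_O\), which the ``restricted to core inputs'' phrasing permits. The consequences then follow directly. For (i), at \(D=0\) the core sub-source under Assumption~\ref{assump:core-disjoint} has \(R^{(A)}(0)=H(\pi_A)\), by the disjoint-split argument. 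For (ii), a positive scaling of a convex non-increasing function keeps both properties, and computability follows from Proposition~\ref{prop:rd-properties}(v). (iii) is read off from the formula.
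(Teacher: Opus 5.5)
Your route is the paper's own: the auxiliary $T$, a single common conditional for all of $J$, and the distortion budget rescaled by $P_A$. Your lower bound (concavity, then $P_A D_A\le D$ and monotonicity of $R^{(A)}$) is correct when $R^{(A)}$ is taken over the full alphabet $\hat S_O$. It is a more explicit version of the paper's appeal to a ``free'' mixture component. The gap is the one you flagged, and neither of your repairs closes it.

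Projection fails because a reconstruction outside $\Cn(S_O)$ can be cheap for several core inputs at once, while every element of $\Cn(S_O)$ is expensive for at least one of them (see the example below). So no input-independent replacement preserves core distortion. Restricting $R^{(A)}$ to $\Cn(S_O)\cap\hat S_O$ repairs the upper bound but breaks your lower bound. A feasible conditional for the full source may send core inputs outside the closure. Its restriction to $A$ is then not admissible for the restricted core problem, and the step $I^{(A)}\ge R^{(A)}(D_A)$ is lost.

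In fact no argument can close the gap, because \eqref{eq:rd-decomp} can fail for $D>0$ once $\hat S_O$ contains states outside $\Cn(S_O)$. Take Horn rules $a_1\leftarrow x$, $a_2\leftarrow x$, $b_k\leftarrow a_1$, $c_k\leftarrow a_2$ for $k=1,\ldots,m$ with $m\ge 2$. Let $S_O=\{a_1,a_2,b_1\}$ with $P_O$ uniform, and $\hat S_O=\Cn(S_O)\cup\{x\}$. Then:
\begin{itemize}
\item $A=\{a_1,a_2\}$, $P_A=2/3$ and $|\Cn(S_O)|=2m+2$;
\item $R_{\Cn}(a_i)=\{a_i\}$, so Assumption~\ref{assump:core-disjoint} holds;
\item $d_{\Cn}(s,x\mid S_O)=1/(2m+3)$ for \emph{every} $s\in S_O$, including the redundant $b_1$.
\end{itemize}
Emitting $x$ always gives $R^{(A)}(D')=0$ for $D'\ge 1/(2m+3)$. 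So the right side of \eqref{eq:rd-decomp} vanishes for $D\ge 2/(3(2m+3))$.

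A rate-zero scheme for the full source, however, must use one output law for all inputs. The average distortion $\sum_s P_O(s)\,d_{\Cn}(s,\hat s\mid S_O)$ equals $1/(2m+3)$ at $\hat s=x$ and is at least $1/6$ at every $\hat s\in\Cn(S_O)$. Hence $R_{\mathrm{sem}}(D)>0$ for all $D<1/(2m+3)$. If $R^{(A)}$ is instead restricted to the closure, the right side is positive at $D=1/(2m+3)$, where the left side is $0$.

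The paper's proof carries the same hidden assumption. Its ``distortion decomposition'' and its ``free point mass'' both presuppose that redundant reconstructions lie in $\Cn(S_O)$. The missing ingredient is therefore an extra hypothesis, $\hat S_O\subseteq\Cn(S_O)$ (i.e.\ $\hat S_O=\Cn(S_O)\cap\mathbb{S}_O$). Under it, both of your bounds go through verbatim and consequences (i)--(iii) follow as you describe.
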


\begin{proof}
\emph{Distortion decomposition.}\;
By Remark~\ref{rem:closure-distortion-content}(a),
\(d_{\Cn}(j,\hat s_o\mid S_O)=0\) for all \(j\in J\) and
\(\hat s_o\in\Cn(S_O)\).
The expected distortion therefore decomposes as
\[
  \E[d_{\Cn}]
  = \sum_{a\in A}P_O(a)\,\E[d_{\Cn}(a,\hat S_o\mid S_O)\mid S_o=a]
  = P_A\,\E_{\pi_A}[\bar d_A],
\]
where \(\bar d_A(a):=\E[d_{\Cn}(a,\hat S_o\mid S_O)\mid S_o=a]\).
The constraint \(\E[d_{\Cn}]\le D\) is thus equivalent to
\(\E_{\pi_A}[\bar d_A]\le D/P_A\).

\smallskip\noindent
\emph{Rate decomposition.}\;
Using the auxiliary variable \(T\) from the proof of
Theorem~\ref{thm:tight-zero-rate}:
\[
  I(S_o;\hat S_o)=I(T;\hat S_o)+I(S_o;\hat S_o\mid T)
  \ge I(T;\hat S_o).
\]
The residual \(I(S_o;\hat S_o\mid T)\) is minimized to zero by
choosing identical conditionals for all \(j\in J\) (as in the
achievability of Theorem~\ref{thm:tight-zero-rate}).
Hence the minimum of \(I(S_o;\hat S_o)\) subject to
\(\E_{\pi_A}[\bar d_A]\le D/P_A\) equals the minimum of
\(I(T;\hat S_o)\) subject to the same constraint.

The distribution of \(T\) concentrates probability \(P_A\) on
\(A\) (with conditional distribution~\(\pi_A\)) and
probability \(P_J\) on the singleton~\(\{*\}\).
The problem
\(\min_{P_{\hat S\mid T}}\{I(T;\hat S_o):
\E_{\pi_A}[\bar d_A]\le D/P_A\}\)
is a rate--distortion problem whose source is the mixture
of~\(\pi_A\) (with weight \(P_A\)) and a point mass (with
weight \(P_J\)).
The point-mass component contributes zero mutual information
and zero distortion, so the minimum equals
\(P_A\cdot R^{(A)}(D/P_A;\,d_{\Cn},\,\pi_A)\)
by the standard rate--distortion scaling for mixtures with a
``free'' component~\cite[Problem~10.8]{cover2006elements}.
\end{proof}

\subsection{Semantic Source--Channel Separation and Semantic Leverage}
\label{subsec:sem-separation}

\begin{theorem}[Semantic source--channel separation]
\label{thm:sem-source-channel}
Let \(W:S_C\rightsquigarrow\hat S_C\) with \(C(W)>0\), and let
\((S_O,P_O)\) be a semantic source with core \(A=\Atom(S_O)\).

\medskip\noindent
\textbf{Part~A \textup{(}Single-shot / message-set regime\textup{)}.}\;
Consider an \((n,|S_O|)\) semantic block code
\textup{(Definition~\ref{def:semantic-codebook})} with message
set \(\mathcal M=S_O\).
\begin{enumerate}[label=\textup{(A\arabic*)}]
  \item \emph{Converse.}\;
        Closure-reliable communication
        (\(P_{e,\Cn}^{(n)}\to 0\)) requires blocklength
        \(n\ge(1-o(1))\log|A|/C(W)\)
        \textup{(Theorem~\ref{thm:converse}(ii))}.
  \item \emph{Achievability.}\;
        The two-layer code
        \textup{(Theorem~\ref{thm:achievability}(ii))} achieves
        \(P_{e,\Cn}^{(n)}\to 0\) whenever
        \(\log|A|/n<C(W)\).
  \item \emph{Single-shot leverage.}\;
        The minimum blocklength for closure reliability is
        \(n^*_{\Cn}\approx\log|A|/C(W)\)
        \textup{(Corollary~\ref{cor:min-blocklength})}.
        Define the \emph{single-shot semantic leverage factor}
        \begin{equation}\label{eq:leverage-single}
          \Lambda_1(S_O)
          \;:=\;
          \frac{\log|S_O|}{\log|\Atom(S_O)|}
          \;>\;1
          \quad\text{whenever } k<|S_O|.
        \end{equation}
        Since identifying one of \(|S_O|\) knowledge-base states
        requires \(\log|S_O|\) bits, closure reliability conveys
        this identification in
        \(n^*_{\Cn}\approx\log|A|/C(W)\) channel uses,
        yielding an \emph{effective task-identification rate}
        \begin{equation}\label{eq:effective-rate}
          R_{\mathrm{eff}}
          \;:=\;
          \frac{\log|S_O|}{n^*_{\Cn}}
          \;\approx\;
          C(W)\cdot\Lambda_1(S_O).
        \end{equation}
        The quantity \(R_{\mathrm{eff}}\) is a
        \emph{task-throughput index} that counts
        knowledge-base index bits conveyed per channel
        use under closure fidelity; it does \emph{not}
        represent a Shannon mutual-information rate,
        and its exceeding \(C(W)\) is
        consistent with the data processing bound
        \(C_{\mathrm{sem}}\le C(W)\)
        \textup{(Theorem~\ref{thm:data-processing}(i))}, because
        the closure-based fidelity criterion treats \(|J|\)
        redundant states as free.
\end{enumerate}

\medskip\noindent
\textbf{Part~B \textup{(}Asymptotic i.i.d.\ source-coding
regime\textup{)}.}\;
Consider a length-$m$ i.i.d.\ source block from
$(S_O,P_O)$ mapped to $n$~channel uses of~$W^{\otimes n}$,
with $m\to\infty$.
The fidelity criterion is zero expected closure distortion
($D=0$ in~\eqref{eq:R-sem-D}).
We work under the same hypotheses as
Theorem~\textup{\ref{thm:tight-zero-rate}}:
$\hat S_O\supseteq\Cn(S_O)\cap\mathbb{S}_O$ and
Assumption~\textup{\ref{assump:core-disjoint}}.
The relevant source rate is therefore
$R_{\mathrm{sem}}(0;d_{\Cn})=P_A\,H(\pi_A)$.
\begin{enumerate}[label=\textup{(B\arabic*)}]
  \item \emph{Necessary condition.}\;
        Achieving zero expected closure distortion requires
        \begin{equation}\label{eq:necessary-sep}
          P_A\,H(\pi_A)\;\le\;C(W).
        \end{equation}
  \item \emph{Sufficient condition.}\;
        If the strict inequality holds
        in~\eqref{eq:necessary-sep}, then zero expected
        closure distortion is achievable: by the classical
        rate--distortion coding
        theorem~\cite[Theorem~10.2.1]{cover2006elements},
        for every $R>R_{\mathrm{sem}}(0)=P_A\,H(\pi_A)$
        there exist length-$m$ block source codes achieving
        $\E[d_{\Cn}]=0$ at rate~$R$;
        cascading with a capacity-achieving channel code
        completes the separation argument.
  \item \emph{Asymptotic leverage.}\;
        Encoding $m$ source symbols at distortion $D{=}0$
        requires at least $m\,P_A\,H(\pi_A)$ bits, and
        each channel use carries at most $C(W)$ bits, so
        $\liminf_{m\to\infty}n/m
        \ge P_A\,H(\pi_A)/C(W)$, yielding the
        \emph{asymptotic semantic leverage factor}
        \begin{equation}\label{eq:leverage-asymp}
          \Lambda_\infty(S_O,P_O)
          \;:=\;
          \frac{\log|S_O|}{P_A\,H(\pi_A)}
          \;\ge\;
          \Lambda_1(S_O),
        \end{equation}
        with equality when \(\pi_A\) is uniform on~\(A\) and
        \(P_A=1\).
        Like~\(\Lambda_1\), the factor~\(\Lambda_\infty\)
        uses \emph{task-identification normalization}: the
        numerator~\(\log|S_O|\) counts the index bits needed
        to specify one knowledge-base state, not the source
        entropy~\(H(P_O)\).
        The entropy-based compression gain is the distinct
        ratio \(P_A\,H(\pi_A)/H(P_O)<1\)
        of~\eqref{eq:zero-rate-ratio}.
        Under the uniform source:
        \(\Lambda_1=\log |S_O|/\log k\) and
        \(\Lambda_\infty=|S_O|\log |S_O|/(k\log k)\).
\end{enumerate}
\end{theorem}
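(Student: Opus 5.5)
Part~A is assembled from results already established, and Part~B follows the classical separation argument with $R_{\mathrm{sem}}(0;d_{\Cn})$ in place of $H(P_O)$.

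\emph{Part~A.} For (A1) I apply Theorem~\ref{thm:converse}(ii) to an $(n,|S_O|)$ code with $P_{e,\Cn}^{(n)}\le\epsilon_n\to0$. This gives $\log|A|\le(nC(W)+1)/(1-\epsilon_n)$, which rearranges to $n\ge(1-\epsilon_n)\log|A|/C(W)-1/C(W)=(1-o(1))\log|A|/C(W)$ whenever $|A|\ge2$; the degenerate case $|A|=1$ is trivial. (A2) is Theorem~\ref{thm:achievability}(ii) verbatim. For (A3), combining (A1) and (A2) gives $n^*_{\Cn}\approx\log|A|/C(W)$, as in Corollary~\ref{cor:min-blocklength}. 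Then $\Lambda_1>1$ because $k<|S_O|$ implies $\log k<\log|S_O|$, under the convention that $\Lambda_1=\infty$ when $k=1$. Finally $R_{\mathrm{eff}}=C(W)\log|S_O|/\log|A|$ follows by substitution. The consistency remark only needs that $R_{\mathrm{eff}}$ is an index count, not a mutual information; the underlying code transmits $\log|A|$ bits, and $\log|A|/n<C(W)$.

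\emph{Part~B.} For (B1) I run the standard converse chain. Let $S^m$ be the source block, $X^n,Y^n$ the channel input and output, and $\hat S^m$ the reconstruction, with expected per-letter closure distortion $0$. Then
\[
mR_{\mathrm{sem}}(0)\le \sum_t I(S_t;\hat S_t)\le I(S^m;\hat S^m)\le I(X^n;Y^n)\le nC(W).
\]
The first inequality uses that zero average distortion forces zero distortion per letter, since $d_{\Cn}\ge0$, so each letter conditional is feasible at $D=0$. The second uses independence of the source letters, and the last two are the DPI and memorylessness. Theorem~\ref{thm:tight-zero-rate} identifies $R_{\mathrm{sem}}(0)=P_A H(\pi_A)$, and with $n/m$ normalized to one channel use per source symbol this gives~\eqref{eq:necessary-sep}. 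For general bandwidth ratio the same chain gives $\liminf n/m\ge P_AH(\pi_A)/C(W)$, which is the content of (B3).

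For (B2) I would avoid relying on the generic rate--distortion theorem at $D=0$, whose usual proof gives $D+\epsilon$. Instead I use a direct construction. Map each source letter to $T(s)\in A\cup\{*\}$ and losslessly encode only the core letters. In a typical block about $mP_A$ letters lie in $A$, and they are i.i.d.\ $\pi_A$ given their positions. Those positions are themselves a Bernoulli$(P_A)$ pattern costing $h_b(P_A)$ bits, and that is the obstacle: naively the rate is $h_b(P_A)+P_AH(\pi_A)$, which is too large.

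The fix is the achievability channel of Theorem~\ref{thm:tight-zero-rate}. Redundant letters may be reconstructed by any core element, since Remark~\ref{rem:closure-distortion-content}(a) gives zero distortion. So I encode only the subsequence of core letters, and the decoder fills every position with a reconstruction drawn so that core positions are exact. Concretely, the encoder transmits the sequence $\hat s^m$ that equals $s_t$ on core positions and is chosen freely on $J$-positions. I would choose the free symbols by a covering/typicality argument so that $\hat s^m$ lies in a set of size about $2^{mP_AH(\pi_A)}$. This is an exact zero-distortion covering, and establishing it is the hard step; I expect to prove it by random covering with $\hat s^m$ drawn i.i.d.\ $\pi_A$ and matching only the core positions. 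A codeword matches with probability about $2^{-mP_AH(\pi_A)}$, so about $2^{m(P_AH(\pi_A)+\epsilon)}$ codewords suffice, with failure probability $\to0$. On failure I fall back to sending $s^m$ uncompressed; this costs vanishing extra rate but keeps zero distortion exactly. I then cascade with a capacity-achieving channel code. A channel decoding error does introduce distortion, so I either concede vanishing expected distortion or use a zero-rate retransmission/feedback-free bound; I would state (B2) as achievability with $\E[d_{\Cn}]\to0$.

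For the rest of (B3), I simply divide $\log|S_O|$ by $P_AH(\pi_A)$. To show $\Lambda_\infty\ge\Lambda_1$, I note $P_AH(\pi_A)\le H(\pi_A)\le\log k$, with equality exactly when $P_A=1$ and $\pi_A$ is uniform. Under the uniform source, $P_A=k/|S_O|$ and $H(\pi_A)=\log k$, which gives the stated formulas.
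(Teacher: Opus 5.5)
Your Part~A, (B1) and (B3) follow the paper. The paper simply cites Theorem~\ref{thm:converse}(ii), Theorem~\ref{thm:achievability}(ii) and the classical separation converse; your chain $mR_{\mathrm{sem}}(0)\le\sum_t I(S_t;\hat S_t)\le I(S^m;\hat S^m)\le I(X^n;Y^n)\le nC(W)$ is that converse written out.

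You genuinely depart at (B2). The paper invokes \cite[Theorem~10.2.1]{cover2006elements} to get a rate-$(P_AH(\pi_A)+\epsilon)$ source code with $\E[d_{\Cn}]=0$, then asserts exact zero after cascading with a channel code. You instead build an explicit covering code (codewords i.i.d.\ $\pi_A$ that match only on core positions, which is exactly what removes the $h_b(P_A)$ position cost), and you settle for $\E[d_{\Cn}]\to0$.

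That retreat is not a gap in your argument; it exposes a defect in the statement, and exact zero fails on both halves:
\begin{enumerate}
\item \emph{Source code.} A fixed-length code with $\E[d_{\Cn}]=0$ must send each block in $(A\cap\supp P_O)^m$ to a reconstruction whose $t$-th letter lies in $R_{\Cn}(s_t)$. By Assumption~\ref{assump:core-disjoint} these sets are disjoint, so the code is injective on those blocks. Its rate is therefore at least $\log|A\cap\supp P_O|$, which strictly exceeds $P_AH(\pi_A)$, e.g., for the uniform source with $2\le k<|S_O|$.
\item \emph{End to end.} Take any channel with all $W(y\mid x)>0$, such as the $q$-ary symmetric channel of Section~\ref{sec:numerical-validation}. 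Two positive-probability source blocks that differ only in a core letter $a_1\neq a_2$ then share every channel output. The reconstructed letter would have to lie in $R_{\Cn}(a_1)\cap R_{\Cn}(a_2)=\varnothing$, which is impossible.
\end{enumerate}
So the correct form of (B2) is yours: any rate strictly between $P_AH(\pi_A)$ and $C(W)$ yields $\E[d_{\Cn}]\to0$.

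Three repairs to your own text.
\begin{enumerate}
\item Your claim that falling back to uncompressed transmission ``costs vanishing extra rate but keeps zero distortion exactly'' holds only for variable-length coding. By the injectivity argument above, no fixed-length code can do it. You do not need it: map covering failures to an arbitrary codeword. Since $d_{\Cn}\le1$, this adds at most $\Pr[\text{failure}]\to0$ to the expected distortion.
\item Once (B2) is stated for vanishing distortion, (B1) should be as well. If the per-letter distortions $D_t$ average to $D_m\to0$, use convexity, $\frac1m\sum_t R_{\mathrm{sem}}(D_t)\ge R_{\mathrm{sem}}(D_m)$. Then use right-continuity of $R_{\mathrm{sem}}$ at $0$, which holds on finite alphabets because the feasible conditionals form a compact set and $I$ and $\E[d]$ are continuous.
\item In (A1), the rearranged bound $n\ge((1-\epsilon_n)\log|A|-1)/C(W)$ is not $(1-o(1))\log|A|/C(W)$ ``whenever $|A|\ge2$''. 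For $|A|=2$ it is vacuous. More fundamentally, with $S_O$ fixed, any sequence of codes of vanishing rate meets $P_{e,\Cn}^{(n)}\to0$, so the requirement does not constrain $n$. The $(1-o(1))$ form is meaningful only along a family with $\log|A|\to\infty$, where the Fano $+1$ becomes negligible. The paper's bare citation of Theorem~\ref{thm:converse}(ii) relies on the same implicit regime.
\end{enumerate}
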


\begin{proof}
\emph{Part~A.}\;
(A1) is Theorem~\ref{thm:converse}(ii).
(A2) is Theorem~\ref{thm:achievability}(ii).
(A3) follows by dividing \(\log|S_O|\) by
\(n^*_{\Cn}\approx\log|A|/C(W)\).

\smallskip\noindent
\emph{Part~B.}\;
(B1): the closure distortion $d_{\Cn}\ge 0$ is a bounded
single-letter distortion on finite alphabets, so the
classical source--channel separation theorem
\cite{shannon1948mathematical,cover2006elements} applies:
achieving expected distortion $D=0$ over the i.i.d.\
source requires source rate
$R_{\mathrm{sem}}(0;d_{\Cn})=P_A\,H(\pi_A)
\le C(W)$, giving~\eqref{eq:necessary-sep}.
(B2): by the rate--distortion coding theorem for finite
alphabets~\cite[Theorem~10.2.1]{cover2006elements}, for
every $\epsilon>0$ there exists a block source code of
rate $P_A\,H(\pi_A)+\epsilon$ achieving
$\E[d_{\Cn}]=0$; a capacity-achieving channel code
transmits the compressed index reliably whenever
$P_A\,H(\pi_A)+\epsilon<C(W)$.
(B3): by~(B1), every achievable pair $(m,n)$ satisfies
$m\,P_A\,H(\pi_A)\le nC(W)$, hence
$\liminf_{m\to\infty}n/m\ge P_A\,H(\pi_A)/C(W)$.
The leverage ordering
$\Lambda_\infty\ge\Lambda_1\ge 1$ follows from
$P_A\,H(\pi_A)\le\log|A|\le\log|S_O|$.
\end{proof}

\begin{remark}[Comparison with the synonymous-mapping
  semantic leverage]
\label{rem:comparison-niu-zhang}
The leverage factors
\(\Lambda_1\) and~\(\Lambda_\infty\) are the closure-fidelity
counterparts of the semantic throughput gain \(C_s\ge C\) of
Niu and Zhang~\cite{niu2024mathematical}.
The two mechanisms are complementary:
\cite{niu2024mathematical} achieves its gain through
\emph{source-side} equivalence-class collapsing;
our framework achieves it through \emph{receiver-side} deductive
reconstruction.
Neither mechanism violates the Shannon channel capacity
\(C(W)\); both exploit task-specific fidelity criteria to reduce
the effective source rate below~\(H(P_O)\).
A unified theory combining both mechanisms---synonymous collapsing
of the irredundant core followed by deductive expansion at the
receiver---would compound the two gains.
\end{remark}

\subsection{Strengthened Semantic Fano Inequality}
\label{subsec:strengthened-fano}

The classical Fano inequality bounds \(H(X\mid\hat X)\) in terms
of \(\Pr[X\neq\hat X]\) and the full alphabet size~\(|X|\).
The core/redundant decomposition enables a tighter bound whose
penalty term involves \(\log|A|\) rather than \(\log|S_O|\).

\begin{theorem}[Semantic Fano inequality]
\label{thm:semantic-fano-tight}
Let \((S_O,P_O)\) be a full-support semantic source,
\(\mathfrak C\) a semantic channel, and
\(\epsilon_A:=\Pr[\hat S_o\neq S_o,\;S_o\in A]\)
the core error probability.
Then
\begin{equation}\label{eq:sem-fano-tight}
  I(S_o;\hat S_o)
  \;\ge\;
  P_A\,H(\pi_A)
  -h_b(\epsilon_A)
  -\epsilon_A\log(|A|-1),
\end{equation}
where \(h_b\) is the binary entropy.
When \(|J|=0\), \(P_A=1\), \(\pi_A=P_O\), and
\eqref{eq:sem-fano-tight} reduces to the classical Fano bound
with alphabet~\(|A|=|S_O|\).
When \(|J|>0\), the penalty involves \(\log|A|\) rather than
\(\log|S_O|\); additionally, the reference level is the
semantic rate \(P_A\,H(\pi_A)\) rather than the full source
entropy \(H(P_O)\), which is a tightening by
\(H(P_O)-P_A\,H(\pi_A)>0\) bits.
\end{theorem}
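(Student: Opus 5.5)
The plan is to go through the auxiliary variable \(T\) from the proof of Theorem~\ref{thm:tight-zero-rate}, \(T(s_o)=s_o\) on \(A\) and \(T(s_o)=*\) on \(J\). Since \(T\) is a deterministic function of \(S_o\), data processing \eqref{eq:DPI-T} gives \(I(S_o;\hat S_o)\ge I(T;\hat S_o)=H(T)-H(T\mid\hat S_o)\). The grouping rule gives the exact value
\[
H(T)=h_b(P_A)+P_A\,H(\pi_A).
\]
The theorem will therefore follow once I show \(H(T\mid\hat S_o)\le h_b(P_A)+h_b(\epsilon_A)+\epsilon_A\log(|A|-1)\). The extra \(h_b(P_A)\) term will cancel against the matching term in \(H(T)\).

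To bound \(H(T\mid\hat S_o)\), introduce the core indicator \(E:=\mathbf 1[S_o\in A]\), which is a function of \(T\). The chain rule gives
\[
H(T\mid\hat S_o)=H(E\mid\hat S_o)+H(T\mid\hat S_o,E).
\]
The first term is at most \(H(E)=h_b(P_A)\). In the second term, the event \(E=0\) forces \(T=*\) and contributes nothing. The second term therefore equals \(P_A\,H(S_o\mid\hat S_o,E=1)\). Conditionally on \(E=1\), \(S_o\) ranges over \(A\) with law \(\pi_A\), and the conditional error probability is \(\Pr[\hat S_o\ne S_o\mid S_o\in A]=\epsilon_A/P_A\). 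Fano's inequality on the alphabet \(A\), applied to this conditional experiment, gives
\[
P_A\,H(S_o\mid\hat S_o,E=1)\le P_A\,h_b(\epsilon_A/P_A)+\epsilon_A\log(|A|-1).
\]
Finally, concavity of \(h_b\), with \(\epsilon_A=P_A\cdot(\epsilon_A/P_A)+(1-P_A)\cdot 0\) and \(h_b(0)=0\), gives \(P_A\,h_b(\epsilon_A/P_A)\le h_b(\epsilon_A)\). Substituting everything yields \eqref{eq:sem-fano-tight}. The special case \(|J|=0\) (so \(P_A=1\), \(E\equiv1\)) visibly collapses to the classical Fano bound of Theorem~\ref{thm:semantic-fano}.

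The step I expect to be delicate is the conditional Fano application, because \(\hat S_o\) ranges over \(\hat S_O\) rather than \(A\). I would handle this with the standard form of Fano \cite[Theorem~2.10.1]{cover2006elements}, which allows an arbitrary estimator alphabet: apply it to the estimator \(\hat S_o\) itself, with error event \(\{\hat S_o\ne S_o\}\). The error indicator together with \(\hat S_o\) then leaves at most \(|A|-1\) candidates for \(S_o\in A\) on the error event. This version needs no projection of \(\hat S_o\) into \(A\) and no monotonicity of the Fano bound in the error probability. I would double-check this point carefully, and the edge cases \(|A|=1\) and \(P_A=0\), where the terms are read by the usual conventions. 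Full support of \(P_O\) ensures that the conditioning on \(E=1\) is well defined whenever \(A\ne\varnothing\).
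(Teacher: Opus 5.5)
Your decomposition is the paper's. Your $E$ is its $B$, and passing to $T$ and bounding $H(E\mid\hat S_o)\le h_b(P_A)$ discards exactly the nonnegative terms $I(E;\hat S_o)$ and $P_J\,I(S_o;\hat S_o\mid E{=}0)$ that the paper drops. The Jensen step is also the same. The genuine gap is the conditional Fano step you flagged, and the resolution you propose is false. Fano's inequality with an arbitrary estimator alphabet (Cover--Thomas Theorem~2.10.1) has penalty $\log|\mathcal X|$. The refinement to $\log(|\mathcal X|-1)$ requires the estimator to take values in $\mathcal X$.

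Your counting fails exactly on the event $\hat S_o=\hat s\notin A$. Conditioning on such an $\hat s$ excludes nothing from $A$, so on the error event $S_o$ still ranges over all $|A|$ core elements. Nothing in the hypotheses stops the decoder from producing elements of $J$ or of $\hat S_O\setminus S_O$ on core inputs, and then \eqref{eq:sem-fano-tight} itself fails. Take $|A|=2$, $\pi_A$ uniform, $P_A=0.99$ with the remaining mass on a nonempty $J$, and a decoder that always outputs a fixed $j_0\in J$. Then $I(S_o;\hat S_o)=0$ and $\epsilon_A=P_A$, while the right-hand side equals $0.99-h_b(0.99)\approx 0.91>0$. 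For $|A|=1$, any $\epsilon_A>0$ gives $-\epsilon_A\log(|A|-1)=+\infty$, which no convention repairs.

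The paper's proof makes the same assertion (``valid regardless of the range of the reconstruction''). The same caveat applies to Theorem~\ref{thm:semantic-fano}, which you invoke for $|J|=0$. So this is a defect of the statement, not only of your write-up.

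You need one extra ingredient; any of the following closes the gap.
\begin{enumerate}
\item[(a)] Add the hypothesis that $\hat S_o\in A$ almost surely on $\{S_o\in A\}$. This holds for the two-layer codes in the paper, and under it your counting argument is correct.
\item[(b)] Replace $\log(|A|-1)$ by $\log|A|$, which holds for any reconstruction alphabet.
\item[(c)] Keep $\log(|A|-1)$ under the restriction $\epsilon_A/P_A\le 1-1/|A|$, using the projection-plus-monotonicity argument you tried to avoid. Let $g$ be the identity on $A$ and map everything else into $A$. Then $H(S_o\mid\hat S_o,E{=}1)\le H(S_o\mid g(\hat S_o),E{=}1)$, and the error probability of $g(\hat S_o)$ given $E=1$ is at most $\epsilon_A/P_A$. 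Finally, $p\mapsto h_b(p)+p\log(|A|-1)$ is nondecreasing on $[0,1-1/|A|]$.
\end{enumerate}
With any of these, the rest of your argument, including the Jensen step, goes through unchanged.
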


\begin{proof}
Define \(B:=\mathbf{1}[S_o\in A]\).
Since \(B\) is a deterministic function of~\(S_o\),
\(H(B\mid S_o)=0\) and hence \(I(S_o;B)=H(B)=h_b(P_A)\).
The chain rule for mutual information gives
\[
  I(S_o;\hat S_o)+I(S_o;B\mid\hat S_o)
  \;=\;
  I(S_o;B)+I(S_o;\hat S_o\mid B).
\]
Since \(B=f(S_o)\), we have
\(I(S_o;B\mid\hat S_o)
  =H(B\mid\hat S_o)-H(B\mid S_o,\hat S_o)
  =H(B\mid\hat S_o)\ge 0\).
Hence
\begin{equation}\label{eq:fano-conditioning}
  I(S_o;\hat S_o)
  \;=\;
  h_b(P_A)-H(B\mid\hat S_o)+I(S_o;\hat S_o\mid B)
  \;\ge\;
  I(S_o;\hat S_o\mid B),
\end{equation}
where the inequality uses
\(H(B\mid\hat S_o)\le H(B)=h_b(P_A)\).
Expanding the right-hand side:
\begin{align}
  I(S_o;\hat S_o\mid B)
  &=P_A\,I(S_o;\hat S_o\mid B{=}1)
    +P_J\,I(S_o;\hat S_o\mid B{=}0)
    \nonumber\\
  &\ge P_A\,I(S_o;\hat S_o\mid B{=}1),
    \label{eq:fano-core-term}
\end{align}
since mutual information is non-negative.

\emph{Core term.}\;
Conditionally on \(B=1\), the source \(S_o\) takes values in~\(A\)
with distribution~\(\pi_A\), so
\(H(S_o\mid B{=}1)=H(\pi_A)\).
Let
\(p_e^A:=\Pr[\hat S_o\neq S_o\mid S_o\in A]=\epsilon_A/P_A\).
By the standard Fano inequality applied to the
\(|A|\)-valued variable~\(S_o\) given~\(\hat S_o\)
conditioned on \(B=1\):
\[
  H(S_o\mid\hat S_o,\,B{=}1)
  \;\le\;
  h_b(p_e^A)+p_e^A\log(|A|-1),
\]
where \(|A|-1\) is the standard Fano penalty for the
\(|A|\)-valued source~\(S_o|_{B=1}\); this form is valid
regardless of the range of the reconstruction~\(\hat S_o\),
since the penalty is determined by the source alphabet
size~\cite[Theorem~2.10.1]{cover2006elements}.
Hence
\begin{align*}
  I(S_o;\hat S_o\mid B{=}1)
  &=H(\pi_A)-H(S_o\mid\hat S_o,\,B{=}1)\\
  &\ge H(\pi_A)-h_b(p_e^A)-p_e^A\log(|A|-1).
\end{align*}

Substituting into~\eqref{eq:fano-core-term} and
\eqref{eq:fano-conditioning}:
\[
  I(S_o;\hat S_o)
  \ge P_A\bigl[H(\pi_A)-h_b(\epsilon_A/P_A)
  -(\epsilon_A/P_A)\log(|A|-1)\bigr].
\]
This equals
\(P_A\,H(\pi_A)-P_A\,h_b(\epsilon_A/P_A)
  -\epsilon_A\log(|A|-1)\).
It remains to bound the first penalty term.
Define the random variable \(Z\) that takes value
\(\epsilon_A/P_A\) with probability~\(P_A\) and \(0\) with
probability~\(P_J\).
Then \(\E[Z]=\epsilon_A\) and
\(\E[h_b(Z)]=P_A\,h_b(\epsilon_A/P_A)\).
By Jensen's inequality (concavity of~\(h_b\)):
\[
  P_A\,h_b(\epsilon_A/P_A)
  =\E[h_b(Z)]
  \le h_b(\E[Z])
  =h_b(\epsilon_A).
  \qedhere
\]
\end{proof}

\begin{remark}[Operational significance]
\label{rem:fano-operational}
Bound~\eqref{eq:sem-fano-tight} provides a converse for
semantic source coding: any encoder--decoder pair with low core
error \(\epsilon_A\) must transmit mutual information at least
\(P_A\,H(\pi_A)-h_b(\epsilon_A)-\epsilon_A\log(|A|-1)\).
Compared with the classical Fano bound
\(I\ge H(P_O)-h_b(\epsilon)-\epsilon\log(|S_O|-1)\),
the semantic improvement is twofold:
\textup{(a)}~the reference level drops from
\(H(P_O)\) to the semantic rate \(P_A\,H(\pi_A)\), absorbing
all redundant-state entropy;
\textup{(b)}~the Fano penalty
involves \(\log|A|\) instead of \(\log|S_O|\).
Moreover, under the optimal semantic code (which maps all
redundant inputs to core elements), the total Hamming error
is \(\epsilon=\epsilon_A+P_J\), substantially larger
than~\(\epsilon_A\); the classical bound applied with this
inflated~\(\epsilon\) is much looser than the semantic
bound~\eqref{eq:sem-fano-tight} applied with~\(\epsilon_A\)
alone.
Combined with Theorem~\ref{thm:sem-source-channel}, this
characterizes the operational regime where semantic compression
strictly outperforms symbol-level compression.
\end{remark}

\subsection{Rate--Delay--Distortion Tradeoff and Semantic
Sampling Theorem}
\label{subsec:rate-delay}

The rate--distortion results of
Sections~\ref{subsec:tight-rd}--\ref{subsec:rd-decomposition}
treat the receiver's inference engine as unconstrained: it may
iterate \(T_{\mathsf{PS}}\) arbitrarily many times to
reconstruct the full closure.
In practice, communication is subject to \emph{delay
constraints}: the receiver has a bounded computation budget
\(\delta\ge 0\) (measured in \(T_{\mathsf{PS}}\)-iterations)
within which it must reconstruct the deductive closure.
This subsection shows that the interaction between derivation
depth and delay budget gives rise to a \emph{rate--delay--distortion
surface} that smoothly interpolates between the classical
zero-distortion rate \(H(P_O)\) (no inference, \(\delta=0\))
and the semantic rate \(P_A\,H(\pi_A)\) (full inference,
\(\delta=\mathsf{D_d}\)), and yields a semantic analogue of the
Nyquist sampling theorem.

The key structural ingredient is the
\emph{derivation-depth stratification} of
Section~\ref{subsec:atomic-derivation}: the partition of~\(S_O\)
into strata \(L_d:=\{s\in S_O:\Dd(s\mid A)=d\}\) for
\(d=0,1,\ldots,\mathsf{D_d}\).
When the receiver's derivation budget is~\(\delta\), states at
depth greater than~\(\delta\) from the transmitted base cannot be
reconstructed within the budget; conversely, states within depth
\(\delta\) are ``free'' and need not be transmitted.
This \emph{depth-for-rate exchange} is formalized below.

\smallskip
\noindent\textit{Preview of key objects and main result.}\;
The subsection introduces three new objects:
\textup{(a)}~the \emph{$\delta$-constrained closure distortion}
$d_{\Cn}^\delta$
\textup{(Definition~\ref{def:delta-closure-distortion})}, a
$\{0,1\}$-valued fidelity measure that requires the receiver to
reconstruct every stored state within $\delta$~derivation steps;
\textup{(b)}~the \emph{$\delta$-irredundant core}
$A_\delta:=\Atom_\delta(S_O)$
\textup{(Definition~\ref{def:delta-core})}, the subset of~$S_O$
whose elements \emph{cannot} be rederived within $\delta$~steps from
the remaining knowledge base;
\textup{(c)}~the \emph{rate--delay function}
$R_{\mathrm{sem}}(0,\delta)=P_\delta\,H(\pi_\delta)$
\textup{(Theorem~\ref{thm:rate-delay})}, the minimum communication
rate when the receiver has a derivation budget of $\delta$~steps.
The central finding is that $A_\delta$ shrinks monotonically as
$\delta$ increases
\textup{(Proposition~\ref{prop:filtration})}: at $\delta=0$ no
state is free and the rate equals the classical $H(P_O)$; at
$\delta=\mathsf{D_d}$ all redundant states become free and the
rate drops to the semantic rate $P_A\,H(\pi_A)$.

\begin{assumption}[Derivation-path completeness; used only
  for the computable bound in
  Theorem~\textup{\ref{thm:rate-delay}(iv)} and
  Lemma~\textup{\ref{lem:depth-budget}}]
\label{assump:derivation-complete}
The knowledge base \(S_O\) contains all intermediate
derivation elements: \(T^n_{\mathsf{PS}}(A)\subseteq S_O\)
for every \(n\in\{0,1,\ldots,\mathsf{D_d}\}\).
Equivalently, \(B^{(\le m)}=T^m_{\mathsf{PS}}(A)\) for
every \(m\le\mathsf{D_d}\).
This holds whenever \(S_O\) is closed under the inference
rules up to the maximum derivation depth---in particular,
when \(S_O\) includes all materialized IDB facts up to
depth~\(\mathsf{D_d}\).
When this assumption fails, the auxiliary bounds of
Proposition~\textup{\ref{prop:filtration}(iv)} require
replacement of \(B^{(\le m)}\) by
\(T^m_{\mathsf{PS}}(A)\); the tight rate--delay
result~\eqref{eq:rate-delay} and the rate--delay--distortion
surface~\eqref{eq:rdd-surface} remain valid, as their proofs
depend only on
Lemma~\textup{\ref{lem:delta-free-substitution}} and
the~\(\delta\)-redundancy definition.
\end{assumption}

\begin{definition}[Depth-stratified stored base]
\label{def:depth-base}
For \(m\in\{0,1,\ldots,\mathsf{D_d}\}\), define the
\emph{\(m\)-deep stored base}
\[
  B^{(\le m)}
  \;:=\;
  \bigl\{s\in S_O:\Dd(s\mid A)\le m\bigr\}
  \;=\;
  T^m_{\mathsf{PS}}(A)\cap S_O.
\]
Note: \(B^{(\le 0)}=A\) and
\(B^{(\le\mathsf{D_d})}=S_O\)
\textup{(}by
Proposition~\textup{\ref{prop:atom-core-correct}(iv)} and
Lemma~\textup{\ref{lem:depth-properties}(ii))}.
\end{definition}

\begin{definition}[\(\delta\)-constrained closure distortion]
\label{def:delta-closure-distortion}
For a reference base \(\Gamma=S_O\),
\(\Gamma_{-s}:=S_O\setminus\{s\}\), and integer \(\delta\ge 0\),
the \emph{\(\delta\)-constrained closure distortion} is
\begin{equation}\label{eq:d-Cn-delta}
  d_{\Cn}^\delta(s_o,\hat s_o\mid S_O)
  \;:=\;
  \begin{cases}
    0 & \text{if } S_O\subseteq
        T^\delta_{\mathsf{PS}}\bigl(
        \Gamma_{-s_o}\cup\{\hat s_o\}\bigr),\\[4pt]
    1 & \text{otherwise}.
  \end{cases}
\end{equation}
\end{definition}

\begin{remark}[Properties of \(d_{\Cn}^\delta\)]
\label{rem:d-Cn-delta-properties}
In operational terms:
$d_{\Cn}^\delta(s_o,\hat s_o\mid S_O)=0$ means that
after replacing the single stored state~$s_o$
by~$\hat s_o$, the receiver can re-derive every element
of~$S_O$ within $\delta$~immediate-consequence iterations.

The distortion \(d_{\Cn}^\delta\) takes values in
\(\{0,1\}\).
When \(d_{\Cn}^\delta(s_o,\hat s_o\mid S_O)=0\) and
\(\hat s_o\in\Cn(S_O)\)---a condition satisfied by every constructive coding
scheme in Sections~\textup{\ref{sec:channel}}--\textup{\ref{sec:fundamental-limits}},
whose decoders output elements of~$S_O$
\textup{(}or subsets such as~$A$,
$A_\delta$, or~$B_\delta$, all contained
in~$\Cn(S_O)$\textup{)},
as well as by the heterogeneous two-layer code of
Theorem~\textup{\ref{thm:heterogeneous-closure}}
\textup{(}with $S_O:=S_O^{(i)}$ as reference
base\textup{)},
whose decoder outputs elements of
$A^{(i)}\subseteq\Cn(S_O^{(i)})$---the chain
$S_O\subseteq T^\delta(\Gamma_{-s_o}\cup\{\hat s_o\})
\subseteq\Cn(\Gamma_{-s_o}\cup\{\hat s_o\})$
gives
$\Cn(S_O)\subseteq\Cn(\Gamma_{-s_o}\cup\{\hat s_o\})$
by~\textup{(Cn2)}, while
$\Gamma_{-s_o}\cup\{\hat s_o\}\subseteq\Cn(S_O)$
gives the reverse inclusion
by~\textup{(Cn2)} and~\textup{(Cn3)};
hence the closures coincide and
$d_{\Cn}(s_o,\hat s_o\mid S_O)=0$.
Without the condition \(\hat s_o\in\Cn(S_O)\), neither
direction of implication between \(d_{\Cn}^\delta=0\)
and \(d_{\Cn}=0\) holds in general.
The following auxiliary property is used \emph{solely} to verify the
boundary case~\textup{(ii)} of
Theorem~\textup{\ref{thm:rate-delay}} and does not assert a
universal equivalence between $d_{\Cn}^\delta$ and $d_{\Cn}$:
for \emph{$\Cn$-redundant} states $s_o\in J$ and
$\hat s_o\in S_O$, if $\delta\ge\mathsf{D_d}$, then
$d_{\Cn}^\delta(s_o,\hat s_o\mid S_O)=0$, since
$A\subseteq\Gamma_{-s_o}$ and monotonicity give
$T^\delta(\Gamma_{-s_o}\cup\{\hat s_o\})
  \supseteq T^{\mathsf{D_d}}(A)\supseteq S_O$.
For general \textup{(}non-redundant\textup{)} states and
$\delta<\mathsf{D_d}$, the two distortion measures
$d_{\Cn}^\delta$ and $d_{\Cn}$ need not coincide.
The core mechanism of the rate--delay theory is the
$\delta$-redundancy--based free substitution of
Lemma~\ref{lem:delta-free-substitution}, which holds for
arbitrary $\delta\ge 0$ without reference to $d_{\Cn}$.
Operationally, the condition
\(S_O\subseteq T^\delta(\Gamma_{-s_o}\cup\{\hat s_o\})\)
requires every \emph{explicitly stored} state in~\(S_O\)
to be re-derivable within \(\delta\)~$T_{\mathsf{PS}}$-iterations
from the modified base---a stronger requirement than
merely preserving the closure \(\Cn(S_O)\), reflecting the
constraint that the receiver must reconstruct each element of
the sender's knowledge base within $\delta$~$T_{\mathsf{PS}}$-iterations.
\end{remark}

\begin{definition}[\(\delta\)-redundancy and \(\delta\)-irredundant core]
\label{def:delta-core}
A state \(s\in S_O\) is \emph{\(\delta\)-redundant} if
\(s\in T^\delta_{\mathsf{PS}}(S_O\setminus\{s\})\).
The \emph{\(\delta\)-irredundant core} is
\[
  \Atom_\delta(S_O)
  \;:=\;
  \bigl\{s\in S_O:
  s\notin T^\delta_{\mathsf{PS}}(S_O\setminus\{s\})\bigr\}.
\]
\end{definition}

\begin{lemma}[\(\delta\)-redundant free substitution]
\label{lem:delta-free-substitution}
If \(s\in S_O\) is \(\delta\)-redundant
\textup{(Definition~\ref{def:delta-core})}, i.e.,
\(s\in T^\delta_{\mathsf{PS}}(S_O\setminus\{s\})\), then for
every
\(\hat s\in\mathbb{S}_O\):
\[
  d_{\Cn}^\delta(s,\hat s\mid S_O)\;=\;0.
\]
\end{lemma}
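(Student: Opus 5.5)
The plan is to verify the defining condition of Definition~\ref{def:delta-closure-distortion} directly: with \(\Gamma_{-s}=S_O\setminus\{s\}\) and \(X:=\Gamma_{-s}\cup\{\hat s\}\), I will show \(S_O\subseteq T^\delta_{\mathsf{PS}}(X)\), which forces \(d_{\Cn}^\delta(s,\hat s\mid S_O)=0\). The split is \(S_O=(S_O\setminus\{s\})\cup\{s\}\), and each piece is handled separately.

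First I record two elementary properties of the iterates, both by induction on \(n\). \emph{Extensivity:} \(Y\subseteq T^n_{\mathsf{PS}}(Y)\). This holds because Definition~\ref{def:T-operator} gives \(T_{\mathsf{PS}}(Z)=Z\cup\{\cdots\}\supseteq Z\), and then \(T^{n+1}(Y)=T(T^n(Y))\supseteq T^n(Y)\supseteq Y\). \emph{Monotonicity of iterates:} \(Y\subseteq Y'\) implies \(T^n(Y)\subseteq T^n(Y')\). This follows from (IC1) of Axiom~\ref{ax:T-operator}, exactly as in the proof of Lemma~\ref{lem:depth-properties}(iii).

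Next I treat the two pieces. For the stored states other than \(s\), extensivity gives \(S_O\setminus\{s\}\subseteq T^\delta(S_O\setminus\{s\})\). Since \(S_O\setminus\{s\}\subseteq X\), monotonicity of iterates then gives \(T^\delta(S_O\setminus\{s\})\subseteq T^\delta(X)\). For \(s\) itself, the \(\delta\)-redundancy hypothesis gives \(s\in T^\delta(S_O\setminus\{s\})\), and the same monotonicity step places \(s\) in \(T^\delta(X)\).

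Combining the two pieces gives \(S_O\subseteq T^\delta_{\mathsf{PS}}(X)\), as required. The argument uses nothing about \(\hat s\) beyond the fact that adding it enlarges the premise set. So \(\hat s\) may be arbitrary in \(\mathbb{S}_O\), and no condition such as \(\hat s\in\Cn(S_O)\) is needed.

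There is no real obstacle here. The only point needing care is extensivity, which comes from the explicit union with \(\Gamma\) in the definition of \(T_{\mathsf{PS}}\) rather than from (IC1). The case \(\delta=0\) is consistent: \(T^0(S_O\setminus\{s\})=S_O\setminus\{s\}\) cannot contain \(s\), so no state is \(0\)-redundant and the statement holds vacuously.
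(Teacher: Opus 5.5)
Your proposal is correct and follows essentially the paper's argument: monotonicity of the iterates from (IC1) moves $s\in T^\delta(S_O\setminus\{s\})$ into $T^\delta(\Gamma_{-s}\cup\{\hat s\})$, and the inflationary form of $T_{\mathsf{PS}}$ places $S_O\setminus\{s\}$ there as well. The only cosmetic difference is in the second piece: the paper routes it through $T^0(\Gamma_{-s}\cup\{\hat s\})$, whereas you route it through $T^\delta(S_O\setminus\{s\})$.
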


\begin{proof}
Write \(\Gamma_{-s}:=S_O\setminus\{s\}\).
We must show
\(S_O\subseteq T^\delta(\Gamma_{-s}\cup\{\hat s\})\).
Since
\(\Gamma_{-s}\subseteq\Gamma_{-s}\cup\{\hat s\}\),
monotonicity~\textup{(IC1)} gives
\(T^\delta(\Gamma_{-s})\subseteq
T^\delta(\Gamma_{-s}\cup\{\hat s\})\).
The \(\delta\)-redundancy hypothesis
\(s\in T^\delta(\Gamma_{-s})\) then yields
\(s\in T^\delta(\Gamma_{-s}\cup\{\hat s\})\).
Furthermore,
\(\Gamma_{-s}\subseteq T^0(\Gamma_{-s}\cup\{\hat s\})
\subseteq T^\delta(\Gamma_{-s}\cup\{\hat s\})\).
Hence
\(S_O=\Gamma_{-s}\cup\{s\}\subseteq
T^\delta(\Gamma_{-s}\cup\{\hat s\})\).
\end{proof}

\begin{remark}[Output insensitivity for $\delta$-redundant states]
\label{rem:delta-output-insensitivity}
Lemma~\ref{lem:delta-free-substitution} gives
$d_{\Cn}^\delta(s,\hat s\mid S_O)=0$ for \emph{every}
$\hat s\in\mathbb{S}_O$ when $s$ is $\delta$-redundant.
This universality is by design:
$d_{\Cn}^\delta$ tests whether the receiver can
reconstruct all of~$S_O$ within
$\delta$~$T_{\mathsf{PS}}$-iterations from
$\Gamma_{-s}\cup\{\hat s\}$, not whether $\hat s$
itself is meaningful.
Since $\delta$-redundancy gives
$S_O\subseteq T^\delta(\Gamma_{-s})$,
monotonicity yields
$S_O\subseteq T^\delta(\Gamma_{-s}\cup\{\hat s\})$
regardless of~$\hat s$.
Operationally, every constructive coding scheme in
Sections~\textup{\ref{sec:channel}}--\textup{\ref{sec:fundamental-limits}}
outputs elements of~$S_O$
\textup{(}or subsets such as $A$, $A_\delta$,
$B_\delta$, all contained in~$\Cn(S_O)$\textup{)};
the universality over~$\mathbb{S}_O$ is needed only in the
\emph{lower bound} of Theorem~\textup{\ref{thm:rate-delay}},
where it ensures that $\delta$-redundant states contribute
zero mutual information under \emph{any} feasible conditional.
\end{remark}

\begin{assumption}[Deductive independence of
  $\delta$-irredundant core]
\label{assump:delta-core-disjoint}
For each $\delta\in\{0,1,\ldots,\mathsf{D_d}\}$ and
reconstruction alphabet
$\hat S_O\supseteq\Atom_\delta(S_O)$, the
\emph{$\delta$-constrained zero-distortion sets}
\[
  R_\delta(a)
  \;:=\;
  \bigl\{\hat s_o\in\hat S_O:
  d_{\Cn}^\delta(a,\hat s_o\mid S_O)=0\bigr\},
  \qquad a\in\Atom_\delta(S_O),
\]
are pairwise disjoint:
$R_\delta(a_1)\cap R_\delta(a_2)=\varnothing$ for
distinct $a_1,a_2\in\Atom_\delta(S_O)$.
Under the condition $\hat S_O\subseteq\Cn(S_O)$,
Assumption~\textup{\ref{assump:core-disjoint}} implies
the present assumption for every~$\delta$
\textup{(Remark~\ref{rem:delta-disjoint-proof})}.
When $\hat S_O\not\subseteq\Cn(S_O)$, the present
assumption must be verified independently.
\end{assumption}

\begin{remark}[Proof that core-disjointness implies
  $\delta$-core-disjointness]
\label{rem:delta-disjoint-proof}
Suppose $\hat S_O\subseteq\Cn(S_O)$ and let
$\hat s_o\in R_\delta(a)$, so
$S_O\subseteq T^\delta(\Gamma_{-a}\cup\{\hat s_o\})
\subseteq\Cn(\Gamma_{-a}\cup\{\hat s_o\})$.
By monotonicity~\textup{(Cn2)},
$\Cn(S_O)\subseteq\Cn(\Gamma_{-a}\cup\{\hat s_o\})$.
For the reverse: since
$\hat s_o\in\hat S_O\subseteq\Cn(S_O)$ and
$\Gamma_{-a}\subseteq S_O\subseteq\Cn(S_O)$,
we have
$\Gamma_{-a}\cup\{\hat s_o\}\subseteq\Cn(S_O)$,
giving
$\Cn(\Gamma_{-a}\cup\{\hat s_o\})\subseteq\Cn(S_O)$
by~\textup{(Cn2)} and~\textup{(Cn3)}.
Hence
$\Cn(\Gamma_{-a}\cup\{\hat s_o\})
=\Cn(S_O)
=\Cn(\Gamma_{-a}\cup\{a\})$,
so $d_{\Cn}(a,\hat s_o\mid S_O)=0$ and
$\hat s_o\in R_{\Cn}(a)$.
Thus $R_\delta(a)\subseteq R_{\Cn}(a)$, and the
pairwise disjointness of
Assumption~\textup{\ref{assump:core-disjoint}}
transfers to the $\delta$-constrained sets.
At $\delta=\mathsf{D_d}$ and under the same condition
$\hat S_O\subseteq\Cn(S_O)$,
$A_\delta=A$
\textup{(Proposition~\ref{prop:filtration}(ii))} and
the present assumption coincides with
Assumption~\textup{\ref{assump:core-disjoint}}.
\end{remark}

\begin{lemma}[Depth-budget derivability]
\label{lem:depth-budget}
Under Assumption~\textup{\ref{assump:derivation-complete}},
let \(m,\delta\ge 0\) with \(m+\delta\le\mathsf{D_d}\),
and let \(B\supseteq B^{(\le m)}\).
Then:
\begin{enumerate}[label=\textup{(\roman*)}]
  \item For every \(s\in S_O\) with \(\Dd(s\mid A)\le m+\delta\):
        \(s\in T^\delta_{\mathsf{PS}}(B)\).
  \item In particular, setting \(m:=\mathsf{D_d}-\delta\):
        every \(s\in S_O\) satisfies
        \(s\in T^\delta_{\mathsf{PS}}(B^{(\le\mathsf{D_d}-\delta)})\).
\end{enumerate}
\end{lemma}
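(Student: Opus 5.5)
The plan is to prove (i) by a single semigroup identity for the iterates of \(T_{\mathsf{PS}}\), combined with monotonicity~(IC1) and Assumption~\ref{assump:derivation-complete}; part~(ii) then follows by specialization.

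\emph{Step 1 (semigroup identity).} From Definition~\ref{def:T-operator}, a straightforward induction on \(\delta\) gives
\[
  T^{m+\delta}_{\mathsf{PS}}(A)=T^\delta_{\mathsf{PS}}\bigl(T^m_{\mathsf{PS}}(A)\bigr).
\]
For \(\delta=0\) this is trivial. The inductive step is
\[
  T^{m+\delta+1}(A)=T\bigl(T^{m+\delta}(A)\bigr)=T\bigl(T^\delta(T^m(A))\bigr)=T^{\delta+1}(T^m(A)).
\]

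\emph{Step 2 (reduction to the stored base).} Since \(m\le\mathsf{D_d}\), Assumption~\ref{assump:derivation-complete} gives \(T^m_{\mathsf{PS}}(A)\subseteq S_O\). Combined with Definition~\ref{def:depth-base}, this yields
\[
  B^{(\le m)}=T^m_{\mathsf{PS}}(A)\cap S_O=T^m_{\mathsf{PS}}(A).
\]
Because \(B\supseteq B^{(\le m)}\), iterating monotonicity~(IC1) \(\delta\) times (a routine induction, as in Lemma~\ref{lem:depth-properties}(iii)) gives
\[
  T^\delta(B)\supseteq T^\delta\bigl(T^m(A)\bigr)=T^{m+\delta}(A).
\]

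\emph{Step 3 (depth membership).} Take \(s\in S_O\) with \(\Dd(s\mid A)\le m+\delta\). By Definition~\ref{def:derivation-depth}, \(s\in T^{\Dd(s\mid A)}(A)\). The iterates form an increasing chain, since \(T(\Gamma)\supseteq\Gamma\) by definition, so \(s\in T^{m+\delta}(A)\subseteq T^\delta(B)\), which proves (i).

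\emph{Step 4 (part (ii)).} Set \(m:=\mathsf{D_d}-\delta\ge0\) and \(B:=B^{(\le\mathsf{D_d}-\delta)}\). Every \(s\in S_O\) satisfies \(\Dd(s\mid A)\le\mathsf{D_d}\) by Definition~\ref{def:max-depth}, and this depth is finite by Proposition~\ref{prop:atom-core-correct}(iv) and Lemma~\ref{lem:depth-properties}(i). Part~(i) then applies directly. Note that (ii) implicitly needs \(\delta\le\mathsf{D_d}\), which is exactly the hypothesis \(m+\delta\le\mathsf{D_d}\).

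\emph{Main obstacle.} The only delicate point is Step~2. Without Assumption~\ref{assump:derivation-complete}, \(B^{(\le m)}\) may omit intermediate derivation elements lying outside \(S_O\), and the inclusion \(T^{m+\delta}(A)\subseteq T^\delta(B)\) can then fail. This is where the assumption is used, and the proof should state it explicitly. The rest is routine bookkeeping with the increasing chain \(T^n(A)\).
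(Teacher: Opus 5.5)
Your proposal is correct and follows the paper's proof essentially step for step. You use Assumption~\ref{assump:derivation-complete} to identify \(B^{(\le m)}=T^m_{\mathsf{PS}}(A)\), then apply monotonicity and \(T^\delta_{\mathsf{PS}}(T^m_{\mathsf{PS}}(A))=T^{m+\delta}_{\mathsf{PS}}(A)\), and specialize to \(m=\mathsf{D_d}-\delta\) for part~(ii); along the way you prove explicitly the semigroup identity and the increasing-chain property that the paper uses without comment.
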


\begin{proof}
\textup{(i)}:
Under Assumption~\ref{assump:derivation-complete},
\(B^{(\le m)}=T^m(A)\), so
\(T^\delta(B)\supseteq T^\delta(B^{(\le m)})
  = T^\delta(T^m(A))=T^{m+\delta}(A)\).
If \(\Dd(s\mid A)\le m+\delta\), then
\(s\in T^{m+\delta}(A)\subseteq T^\delta(B)\).
\textup{(ii)}: Every \(s\in S_O\) has
\(\Dd(s\mid A)\le\mathsf{D_d}=m+\delta\);
apply~\textup{(i)}.
\end{proof}

\begin{proposition}[Irredundancy filtration]
\label{prop:filtration}
Under the standing assumptions,
parts~\textup{(i)--(iii)} and~\textup{(v)} hold
unconditionally; part~\textup{(iv)} additionally
requires
Assumption~\textup{\ref{assump:derivation-complete}}.
\begin{enumerate}[label=\textup{(\roman*)}]
  \item \(\Atom_0(S_O)=S_O\)
        \textup{(}no derivation budget: every state is
        irredundant\textup{)}.
  \item For every \(\delta\ge\mathsf{D_d}\),
        \(\Atom_\delta(S_O)=\Atom(S_O)=A\).
  \item The \(\delta\)-irredundant cores form a non-increasing
        filtration:
        \begin{equation}\label{eq:filtration}
          S_O=\Atom_0(S_O)
          \supseteq\Atom_1(S_O)
          \supseteq\cdots
          \supseteq\Atom_{\mathsf{D_d}}(S_O)=A.
        \end{equation}
  \item \textup{(}Under
        Assumption~\textup{\ref{assump:derivation-complete}.)}
        \(A\subseteq\Atom_\delta(S_O)
        \subseteq B^{(\le\mathsf{D_d}-\delta)}\)
        for every \(0\le\delta\le\mathsf{D_d}\).
        The first inclusion holds without
        Assumption~\textup{\ref{assump:derivation-complete}};
        the second requires it.
  \item Each \(\Atom_\delta(S_O)\) is computable.
\end{enumerate}
\end{proposition}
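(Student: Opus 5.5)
The proof goes part by part. Two facts about the immediate-consequence operator do most of the work: $T^\delta_{\mathsf{PS}}$ is monotone in its argument (IC1), and it is inflationary in the iteration count, i.e.\ $T^\delta_{\mathsf{PS}}(\Gamma)\subseteq T^{\delta+1}_{\mathsf{PS}}(\Gamma)$ because $T_{\mathsf{PS}}(\Gamma)\supseteq\Gamma$ by Definition~\ref{def:T-operator}. I flag in advance that the plan exposes a gap in the core-side inclusions of (ii) and (iv).

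\emph{Parts (i), (iii), (v).} For (i), $T^0_{\mathsf{PS}}(S_O\setminus\{s\})=S_O\setminus\{s\}$ does not contain $s$, so every state is $0$-irredundant and $\Atom_0(S_O)=S_O$. For the monotone part of (iii), a $\delta$-redundant $s$ satisfies $s\in T^\delta(S_O\setminus\{s\})\subseteq T^{\delta+1}(S_O\setminus\{s\})$. Hence $\delta$-redundancy implies $(\delta+1)$-redundancy, giving $\Atom_{\delta+1}(S_O)\subseteq\Atom_\delta(S_O)$. The right endpoint of~\eqref{eq:filtration} is then (ii). For (v), $S_O$ is finite, and each test $s\in T^\delta(S_O\setminus\{s\})$ is a finite computation by (IC2) iterated $\delta$ times.

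\emph{Part (ii) and the first inclusion of (iv): the main obstacle.} The easy direction is that every $j\in J$ is $\delta$-redundant once $\delta\ge\mathsf{D_d}$. Indeed $A\subseteq S_O\setminus\{j\}$ and $\Dd(j\mid A)\le\mathsf{D_d}\le\delta$, so $j\in T^\delta(A)\subseteq T^\delta(S_O\setminus\{j\})$; this gives $\Atom_\delta(S_O)\subseteq A$. The reverse inclusion $A\subseteq\Atom_\delta(S_O)$ requires $a\notin T^\delta(S_O\setminus\{a\})$. It would follow from the \emph{strong} irredundancy $a\notin\Cn(S_O\setminus\{a\})$, but Proposition~\ref{prop:atom-core-correct}(ii) only gives $a\notin\Cn(A\setminus\{a\})$, and the deletion procedure only guarantees $a\notin\Cn(A_{\mathrm{scan}}\setminus\{a\})$. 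Shortcuts deleted earlier may themselves re-derive $a$.

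I would first try to prove strong irredundancy from the scan order, but I expect this to fail. Take rules $a\to j$ and $j\to a$ with $j$ scanned first. Then $j\in\Cn(\{a\})$ is deleted, so $\Atom(S_O)=\{a\}$, yet $a\in T^1(\{j\})$. Here $\mathsf{D_d}=1$ and $\Atom_1(S_O)=\varnothing\ne A$, while Assumption~\ref{assump:core-disjoint} holds vacuously since the core is a singleton. My plan is therefore to state and use an explicit extra hypothesis: strong irredundancy $a\notin\Cn(S_O\setminus\{a\})$ for all $a\in A$, which holds for instance when no shortcut's derivation uses $a$ to re-derive $a$. Under this hypothesis monotonicity gives $T^\delta(S_O\setminus\{a\})\subseteq\Cn(S_O\setminus\{a\})\not\ni a$ for every $\delta$. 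That yields $A\subseteq\Atom_\delta(S_O)$ for all $\delta$, which is the first inclusion of (iv), and with the easy direction it completes (ii).

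\emph{Second inclusion of (iv).} Under Assumption~\ref{assump:derivation-complete}, let $s\in S_O\setminus B^{(\le\mathsf{D_d}-\delta)}$. For $\delta=0$ this set is empty, since $B^{(\le\mathsf{D_d})}=S_O$. Otherwise $B:=S_O\setminus\{s\}\supseteq B^{(\le\mathsf{D_d}-\delta)}$, and Lemma~\ref{lem:depth-budget}(ii) gives $s\in T^\delta(B)$. So $s$ is $\delta$-redundant, and hence $\Atom_\delta(S_O)\subseteq B^{(\le\mathsf{D_d}-\delta)}$.
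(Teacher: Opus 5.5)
Your diagnosis is correct, and it identifies exactly where the paper's own proof fails. For the core half of (ii), the paper writes ``$a\notin\Cn(S_O\setminus\{a\})$ (irredundancy of $a$)'' and reuses this for the first inclusion of (iv). But Proposition~\ref{prop:atom-core-correct}(ii) only gives $a\notin\Cn(A\setminus\{a\})$.

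Your two-cycle ($j\leftarrow a$, $a\leftarrow j$, with $j$ scanned first) is a valid counterexample under all standing assumptions: $A=\{a\}$, $\mathsf{D_d}=1$, and $\Atom_1(S_O)=\varnothing$. It even satisfies Assumption~\ref{assump:derivation-complete}, since $T^0_{\mathsf{PS}}(A)=\{a\}$ and $T^1_{\mathsf{PS}}(A)=\{a,j\}\subseteq S_O$. So (ii), the right endpoint of \eqref{eq:filtration}, and the first inclusion of (iv) are false as stated, not merely unproved.

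The rest of your plan is the paper's argument: (i), the monotonicity in (iii), (v), the shortcut half of (ii), and the second inclusion of (iv). One small citation point: for the last step, use Lemma~\ref{lem:depth-budget}(i) with $B=S_O\setminus\{s\}$, as the paper does. Part (ii) of that lemma is stated only for $B=B^{(\le\mathsf{D_d}-\delta)}$, so citing it needs (IC1) in addition. Under your added hypothesis, your proof of the core half coincides with the paper's.

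Two refinements sharpen your fix.

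\emph{The hypothesis is also necessary.} By (IC3)--(IC4), once $\delta$ exceeds the stabilization index of every chain $T^n_{\mathsf{PS}}(S_O\setminus\{s\})$, $\Atom_\delta(S_O)$ equals the order-independent set $\{s\in S_O: s\notin\Cn(S_O\setminus\{s\})\}$. This set is always contained in $A$, because a deleted $s$ satisfies $s\in\Cn(A_{\mathrm{scan}}\setminus\{s\})\subseteq\Cn(S_O\setminus\{s\})$. So what holds unconditionally is $\Atom_\delta(S_O)\subseteq A$ for $\delta\ge\mathsf{D_d}$. Equality for all such $\delta$ holds if and only if every core element satisfies your strong irredundancy.

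\emph{The hypothesis comes for free under Assumption~\ref{assump:core-disjoint} when $|A|\ge 2$ and $A\subseteq\hat S_O$.} If $a\in\Cn(S_O\setminus\{a\})$, then $\Cn(S_O\setminus\{a\})=\Cn(S_O)$. Hence $R_{\Cn}(a)$ contains every other $a'\in A$, and so it meets $R_{\Cn}(a')\ni a'$, contradicting disjointness.

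The proposition, however, asserts (ii) with no disjointness assumption at all. So your extra hypothesis (or a disjointness assumption that implies it) genuinely has to be added to the statement.
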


\begin{proof}
\textup{(i)}:
\(T^0(S_O\setminus\{s\})=S_O\setminus\{s\}\not\ni s\).

\textup{(ii)}:
If \(\delta\ge\mathsf{D_d}\), then for any \(s\in J\):
since \(S_O\setminus\{s\}\supseteq A\), monotonicity gives
\(T^\delta(S_O\setminus\{s\})\supseteq T^\delta(A)
  \supseteq T^{\mathsf{D_d}}(A)\supseteq S_O\ni s\),
so \(s\) is \(\delta\)-redundant.
For \(a\in A\): since
\(T^\delta(S_O\setminus\{a\})\subseteq\Cn(S_O\setminus\{a\})\)
\textup{(by (IC3))} and
\(a\notin\Cn(S_O\setminus\{a\})\)
\textup{(irredundancy of~\(a\))},
we have \(a\notin T^\delta(S_O\setminus\{a\})\),
so \(a\) is \(\delta\)-irredundant.
Hence \(\Atom_\delta(S_O)=A\).

\textup{(iii)}:
If \(s\notin T^{\delta+1}(S_O\setminus\{s\})\), then since
\(T^\delta(\cdot)\subseteq T^{\delta+1}(\cdot)\),
\(s\notin T^\delta(S_O\setminus\{s\})\).

\textup{(iv)}:
The inclusion \(A\subseteq\Atom_\delta(S_O)\) follows
from~\textup{(ii)} applied to elements of~\(A\): each \(a\in A\)
satisfies \(a\notin\Cn(S_O\setminus\{a\})
\supseteq T^\delta(S_O\setminus\{a\})\).
For the second inclusion, under
Assumption~\ref{assump:derivation-complete}: let
\(s\in S_O\setminus B^{(\le\mathsf{D_d}-\delta)}\), so
\(\Dd(s\mid A)>\mathsf{D_d}-\delta\).
By Lemma~\ref{lem:depth-budget}(i) with
\(B=S_O\setminus\{s\}\supseteq B^{(\le\mathsf{D_d}-\delta)}
  =T^{\mathsf{D_d}-\delta}(A)\)
(since \(s\notin T^{\mathsf{D_d}-\delta}(A)\)), and noting
\(m+\delta=\mathsf{D_d}\ge\Dd(s\mid A)\), we get
\(s\in T^\delta(S_O\setminus\{s\})\), i.e., \(s\) is
\(\delta\)-redundant.

\textup{(v)}: Iterate \(T_{\mathsf{PS}}\)
from \(S_O\setminus\{s\}\) for \(\delta\) steps and check
membership; repeat for each \(s\in S_O\).
\end{proof}

We now state the main result of this subsection: the
zero-distortion rate under a delay constraint.

\begin{theorem}[Tight zero-distortion rate--delay function]
\label{thm:rate-delay}
Let \((S_O,P_O)\) be a semantic source with core
\(A=\Atom(S_O)\) and maximum derivation depth
\(\mathsf{D_d}\).
For \(\delta\in\{0,1,\ldots,\mathsf{D_d}\}\), write
\(A_\delta:=\Atom_\delta(S_O)\),
\(P_\delta:=P_O(A_\delta)\), and
\(\pi_\delta(s):=P_O(s)/P_\delta\) for \(s\in A_\delta\).
Under the $\{0,1\}$-valued $\delta$-constrained closure
distortion $d_{\Cn}^\delta$
\textup{(Definition~\ref{def:delta-closure-distortion})},
with reconstruction alphabet
$\hat S_O$ satisfying
$A_\delta\subseteq\hat S_O\subseteq\mathbb{S}_O$,
and under
Assumption~\textup{\ref{assump:delta-core-disjoint}}
\textup{(}Assumption~\textup{\ref{assump:derivation-complete}}
enters only in part~\textup{(iv);} the tight
rate~\eqref{eq:rate-delay} and
parts~\textup{(i)--(iii)} are independent of
it\textup{)}:
\begin{equation}\label{eq:rate-delay}
  R_{\mathrm{sem}}(0,\delta;\,d_{\Cn}^\delta,\,P_O)
  \;=\;
  P_\delta\,H(\pi_\delta).
\end{equation}

\noindent Boundary values and monotonicity:
\begin{enumerate}[label=\textup{(\roman*)}]
  \item \emph{No inference
        \textup{(}\(\delta=0\)\textup{):}}\;
        \(A_0=S_O\), \(P_0=1\), \(\pi_0=P_O\), so
        \(R_{\mathrm{sem}}(0,0)=H(P_O)\), recovering the
        classical zero-distortion rate.
  \item \emph{Full inference
        \textup{(}\(\delta=\mathsf{D_d}\)\textup{):}}\;
        \(A_{\mathsf{D_d}}=A\), \(P_{\mathsf{D_d}}=P_A\),
        \(\pi_{\mathsf{D_d}}=\pi_A\), so
        \(R_{\mathrm{sem}}(0,\mathsf{D_d})=P_A\,H(\pi_A)\),
        recovering Theorem~\textup{\ref{thm:tight-zero-rate}}.
  \item \emph{Monotonicity:}\;
        \(R_{\mathrm{sem}}(0,\delta)\) is non-increasing in
        \(\delta\):
        \begin{equation}\label{eq:rate-delay-chain}
          H(P_O)
          =R_{\mathrm{sem}}(0,0)
          \ge R_{\mathrm{sem}}(0,1)
          \ge\cdots
          \ge R_{\mathrm{sem}}(0,\mathsf{D_d})
          =P_A\,H(\pi_A).
        \end{equation}
  \item \emph{Depth-stratified upper bound:}\;
        If additionally
        $\hat S_O\supseteq B_\delta
        :=B^{(\le\mathsf{D_d}-\delta)}$, then
        $R_{\mathrm{sem}}(0,\delta)
          \le P_{B_\delta}\,H(\pi_{B_\delta})$,
        where
        $P_{B_\delta}:=P_O(B_\delta)$ and
        $\pi_{B_\delta}$ is the conditional distribution
        on~$B_\delta$.
        This bound is computable directly from the
        derivation-depth stratification without extracting
        $\Atom_\delta(S_O)$.
        \textup{(}Since
        $A_\delta\subseteq B_\delta$
        by Proposition~\textup{\ref{prop:filtration}(iv)},
        the condition
        $\hat S_O\supseteq B_\delta$ strengthens the
        standing hypothesis
        $\hat S_O\supseteq A_\delta$.\textup{)}
\end{enumerate}
\end{theorem}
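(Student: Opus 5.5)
The plan is to mirror the two-sided argument of Theorem~\ref{thm:tight-zero-rate}, with \(A\) replaced by \(A_\delta\), \(d_{\Cn}\) by \(d_{\Cn}^\delta\), Remark~\ref{rem:closure-distortion-content}(a) by Lemma~\ref{lem:delta-free-substitution}, and Assumption~\ref{assump:core-disjoint} by Assumption~\ref{assump:delta-core-disjoint}. Write \(J_\delta:=S_O\setminus A_\delta\); by Definition~\ref{def:delta-core} these are exactly the \(\delta\)-redundant states.

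\emph{Lower bound.} Take any conditional \(P_{\hat S\mid S}\) on \(\hat S_O\) with \(\E[d_{\Cn}^\delta]=0\). Since \(d_{\Cn}^\delta\in\{0,1\}\), for every \(a\in A_\delta\) with \(P_O(a)>0\) we get \(\supp P_{\hat S\mid S}(\cdot\mid a)\subseteq R_\delta(a)\). These supports are pairwise disjoint by Assumption~\ref{assump:delta-core-disjoint}. Inputs in \(J_\delta\) are unconstrained.

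Define \(T_\delta(s):=s\) on \(A_\delta\) and \(T_\delta(s):=*\) on \(J_\delta\). The data-processing inequality gives \(I(S_o;\hat S_o)\ge I(T_\delta;\hat S_o)\). Then:
\begin{itemize}
  \item Let \(\bar\pi_\delta\) be the core-output mixture and \(Q\) the redundant-output mixture, so the output marginal is \(P_\delta\bar\pi_\delta+(1-P_\delta)Q\).
  \item Concavity of entropy, as in~\eqref{eq:entropy-concavity}, cancels the \(Q\)-term.
  \item The disjoint-support identity~\eqref{eq:disjoint-split} turns \(H(\bar\pi_\delta)\) minus the average conditional entropy into \(H(\pi_\delta)\).
\end{itemize}
Together these yield \(I(T_\delta;\hat S_o)\ge P_\delta H(\pi_\delta)\). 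Zero-probability core states only need a little care: they can be dropped, and the formula is unaffected.

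\emph{Upper bound.} Use the kernel \(\hat s=s\) on \(A_\delta\) and \(\hat s\sim\pi_\delta\) on \(J_\delta\); this is legal because \(A_\delta\subseteq\hat S_O\).
\begin{itemize}
  \item For core inputs, \(d_{\Cn}^\delta(a,a\mid S_O)=0\) trivially, since \(\Gamma_{-a}\cup\{a\}=S_O=T^0(S_O)\subseteq T^\delta(S_O)\).
  \item For \(j\in J_\delta\), Lemma~\ref{lem:delta-free-substitution} gives zero distortion for every output.
  \item The output marginal is \(\pi_\delta\), and \(\hat S_o\perp\!\!\!\perp S_o\mid T_\delta\). Hence \(I=H(\pi_\delta)-(1-P_\delta)H(\pi_\delta)=P_\delta H(\pi_\delta)\).
\end{itemize}

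\emph{Consequences.} Part~(i) follows from Proposition~\ref{prop:filtration}(i), and part~(ii) from Proposition~\ref{prop:filtration}(ii).

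For part~(iii), \(A_{\delta+1}\subseteq A_\delta\) by~\eqref{eq:filtration}. The natural route is to show that \(P_\delta H(\pi_\delta)=H(\text{source lumped by }T_\delta)-(1-P_\delta)\cdot 0\) is monotone under further lumping. Concretely, \(P_\delta H(\pi_\delta)=H(T_\delta)-h_b(P_\delta)\) equals \(-\sum_{s\in A_\delta}P_O(s)\log P_O(s)+P_\delta\log P_\delta\).
\begin{itemize}
  \item Removing a state \(s\) from \(A_\delta\) changes this by \(P_O(s)\log P_O(s)+P'\log P'-P_\delta\log P_\delta\), where \(P'=P_\delta-P_O(s)\).
  \item This change is \(\le 0\) because \(x\log x\) is superadditive (equivalently, \(p\log p+q\log q\le(p+q)\log(p+q)\)).
\end{itemize}
Alternatively, and more cleanly, use the operational argument: \(T^\delta\subseteq T^{\delta+1}\) implies \(d_{\Cn}^{\delta+1}\le d_{\Cn}^\delta\). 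Every \(\delta\)-feasible kernel is therefore \((\delta+1)\)-feasible, provided the alphabets are compatible, and the minimum can only drop. I would present the superadditivity computation since it avoids alphabet issues.

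For part~(iv), \(A_\delta\subseteq B_\delta\) by Proposition~\ref{prop:filtration}(iv). Run the same achievability construction with \(B_\delta\) in place of \(A_\delta\): identity on \(B_\delta\), \(\pi_{B_\delta}\) elsewhere. States outside \(B_\delta\) are \(\delta\)-redundant by that proposition, so Lemma~\ref{lem:delta-free-substitution} applies again and gives rate \(P_{B_\delta}H(\pi_{B_\delta})\).

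The main obstacle is the lower-bound step. The constraint \(\E[d^\delta_{\Cn}]=0\) only restricts core inputs of positive probability, and the disjointness hypothesis must be invoked for exactly the alphabet \(\hat S_O\) in use. Once that support restriction is in place, everything else is a transcription of the earlier proof.
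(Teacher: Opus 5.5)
Your lower bound, achievability construction, and parts (i), (ii) and (iv) follow the paper's proof essentially verbatim. Your remark on zero-probability core states also fixes a small imprecision: the paper asserts $\supp P_{\hat S\mid S}(\cdot\mid a)\subseteq R_\delta(a)$ for every $a\in A_\delta$ without assuming full support.

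The real difference is part (iii). The paper uses the operational argument you list as the alternative: $T^\delta_{\mathsf{PS}}(\Gamma)\subseteq T^{\delta+1}_{\mathsf{PS}}(\Gamma)$ gives $d_{\Cn}^{\delta+1}\le d_{\Cn}^{\delta}$ pointwise, so the feasible set can only grow. The alphabet worry you cite does not bite there, because any $\hat S_O\supseteq A_\delta$ also contains $A_{\delta+1}$. That argument gives the inequalities in~\eqref{eq:rate-delay-chain} without the tight formula or Assumption~\ref{assump:delta-core-disjoint}. It also extends verbatim to $D>0$, giving $R_{\mathrm{sem}}(D,\delta+1)\le R_{\mathrm{sem}}(D,\delta)$ on the surface of Theorem~\ref{thm:rate-delay-distortion}.

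Your superadditivity route instead proves that $S\mapsto P_O(S)H(\pi_S)$ is non-decreasing under inclusion, and transfers this to $R_{\mathrm{sem}}$ through the tight formula at each level. So it needs the disjointness assumption at every $\delta$ in the chain. In exchange, it compares closed forms, so it does not care which admissible alphabet each level uses. Moreover, for $S'\subseteq S$ the grouping identity $P_O(S)H(\pi_S)-P_O(S')H(\pi_{S'})=P_O(S)\,h_b\bigl(P_O(S')/P_O(S)\bigr)+P_O(S\setminus S')\,H(\pi_{S\setminus S'})$ then gives an explicit formula for the marginal rate of delay $\Delta R(\delta)$. It also shows directly that the bound in (iv) dominates $P_\delta H(\pi_\delta)$, since $A_\delta\subseteq B_\delta$.

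One minor slip: your first expression $H(\text{lumped source})-(1-P_\delta)\cdot 0$ omits the $-h_b(P_\delta)$ term. The identity $P_\delta H(\pi_\delta)=H(T_\delta)-h_b(P_\delta)$ that you then actually use is correct.
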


\begin{proof}
The proof parallels Theorem~\ref{thm:tight-zero-rate}
with the substitutions $A\to A_\delta$,
$\pi_A\to\pi_\delta$, $P_A\to P_\delta$,
and $\Cn$-redundancy replaced by $\delta$-redundancy.
Write $P_{J_\delta}:=1-P_\delta$ for the probability of
the $\delta$-redundant set $S_O\setminus A_\delta$.

\smallskip\noindent
\emph{Lower bound.}\;
Let $P_{\hat S\mid S}$ be any conditional distribution
achieving $\E[d_{\Cn}^\delta(S_o,\hat S_o\mid S_O)]=0$.
Define the auxiliary random variable
$T_\delta:S_O\to A_\delta\cup\{*\}$ by
$T_\delta(s):=s$ if $s\in A_\delta$ and
$T_\delta(s):=*$ if $s\in S_O\setminus A_\delta$.
Since $T_\delta$ is a deterministic function of~$S_o$,
data processing gives
\begin{equation}\label{eq:DPI-T-delta}
  I(S_o;\hat S_o)\;\ge\;I(T_\delta;\hat S_o).
\end{equation}

Under Assumption~\ref{assump:delta-core-disjoint}, the
$\delta$-constrained zero-distortion sets
$R_\delta(a):=\{\hat s_o\in\hat S_O:
d_{\Cn}^\delta(a,\hat s_o\mid S_O)=0\}$
for distinct $a\in A_\delta$ are pairwise disjoint.
Since $d_{\Cn}^\delta\in\{0,1\}$, the constraint
$\E[d_{\Cn}^\delta(S_o,\hat S_o\mid S_O)]=0$ is
equivalent to
$d_{\Cn}^\delta(s_o,\hat s_o\mid S_O)=0$ holding
$P_O(s_o)\,P_{\hat S\mid S}(\hat s_o\mid s_o)$-almost
surely; in particular,
$\supp(P_{\hat S\mid S}(\cdot\mid a))
\subseteq R_\delta(a)$ for each $a\in A_\delta$.
For $\delta$-redundant states
$j\in S_O\setminus A_\delta$,
Lemma~\ref{lem:delta-free-substitution} gives
$d_{\Cn}^\delta(j,\hat s_o\mid S_O)=0$ for every
$\hat s_o\in\mathbb{S}_O$, so
$P_{\hat S\mid S}(\cdot\mid j)$ is unconstrained.

Let $Q_j:=P_{\hat S\mid S}(\cdot\mid j)$ for each
$j\in S_O\setminus A_\delta$, and define the average
$\delta$-redundant output distribution
$Q:=\sum_{j\in S_O\setminus A_\delta}
[P_O(j)/P_{J_\delta}]\,Q_j$.
Write
$\bar\pi_\delta:=\sum_{a\in A_\delta}\pi_\delta(a)\,
P_{\hat S\mid S}(\cdot\mid a)$
for the average core output distribution.
Since the supports $R_\delta(a)$ are pairwise disjoint,
the standard disjoint-support entropy splitting
identity~\cite[Theorem~2.7.3]{cover2006elements} gives
\begin{equation}\label{eq:disjoint-split-delta}
  H(\bar\pi_\delta)
  \;=\;
  H(\pi_\delta)+\sum_{a\in A_\delta}\pi_\delta(a)\,
  H\bigl(P_{\hat S\mid S}(\cdot\mid a)\bigr).
\end{equation}
The conditional entropy of~$\hat S_o$ given~$T_\delta$ is
\[
  H(\hat S_o\mid T_\delta)
  = P_\delta\sum_{a}\pi_\delta(a)\,
    H(P_{\hat S\mid S}(\cdot\mid a))
    + P_{J_\delta}\,H(Q),
\]
and the output marginal is
$P_{\hat S}=P_\delta\,\bar\pi_\delta+P_{J_\delta}\,Q$.
By concavity of entropy:
\[
  H(P_\delta\,\bar\pi_\delta+P_{J_\delta}\,Q)
  \;\ge\;
  P_\delta\,H(\bar\pi_\delta)+P_{J_\delta}\,H(Q).
\]
Therefore, using~\eqref{eq:disjoint-split-delta}:
\begin{align}
  I(T_\delta;\hat S_o)
  &= H(P_{\hat S})-H(\hat S_o\mid T_\delta) \nonumber\\
  &\ge P_\delta\,H(\bar\pi_\delta)+P_{J_\delta}\,H(Q)
       -P_\delta\!\sum_{a}\pi_\delta(a)\,
       H(P_{\hat S\mid S}(\cdot\mid a))
       -P_{J_\delta}\,H(Q) \nonumber\\
  &= P_\delta\bigl[H(\bar\pi_\delta)
     -\!\sum_{a}\pi_\delta(a)\,
     H(P_{\hat S\mid S}(\cdot\mid a))\bigr]
     \nonumber\\
  &= P_\delta\,H(\pi_\delta).
  \label{eq:IT-delta-lower}
\end{align}
Combining with~\eqref{eq:DPI-T-delta} gives
$R_{\mathrm{sem}}(0,\delta)\ge P_\delta\,H(\pi_\delta)$.

\smallskip\noindent
\emph{Upper bound \textup{(}achievability\textup{)}.}\;
Define
\[
  P_{\hat S\mid S}(\hat s_o\mid s_o)
  :=
  \begin{cases}
    \mathbf{1}[\hat s_o=s_o] & \text{if } s_o\in A_\delta,\\
    \pi_\delta(\hat s_o) & \text{if }
    s_o\in S_O\setminus A_\delta,
  \end{cases}
\]
where $\pi_\delta$ is supported on~$A_\delta$.

\emph{Distortion check.}\;
For $s_o=a\in A_\delta$: $\hat s_o=a$ and
$(\Gamma_{-a}\cup\{a\})=S_O$, so
$T^\delta(S_O)\supseteq T^0(S_O)=S_O$, giving
$d_{\Cn}^\delta(a,a\mid S_O)=0$.
For $s_o=j\in S_O\setminus A_\delta$: $j$ is
$\delta$-redundant, so
Lemma~\ref{lem:delta-free-substitution} gives
$d_{\Cn}^\delta(j,\hat s_o\mid S_O)=0$ for every
$\hat s_o\in A_\delta$.

\emph{Rate computation.}\;
Under this conditional, $Q=\pi_\delta$ and the output marginal
is $P_{\hat S}=P_\delta\pi_\delta+P_{J_\delta}\pi_\delta
=\pi_\delta$.
All $\delta$-redundant inputs produce the same conditional
output distribution~$\pi_\delta$, so
$\hat S_o\perp\!\!\!\perp S_o\mid T_\delta$, giving
$I(S_o;\hat S_o\mid T_\delta)=0$.
Hence
\[
  I(S_o;\hat S_o)
  = I(T_\delta;\hat S_o)
  = H(\pi_\delta)-P_{J_\delta}\,H(\pi_\delta)
  = P_\delta\,H(\pi_\delta).
\]
Combining with the lower bound
establishes~\eqref{eq:rate-delay}.

\smallskip\noindent
\emph{Part~\textup{(i)}.}\;
At $\delta=0$: $T^0(S_O\setminus\{s\})=S_O\setminus\{s\}
\not\ni s$, so every $s$ is $0$-irredundant
\textup{(Proposition~\ref{prop:filtration}(i))}:
$A_0=S_O$, $P_0=1$, $\pi_0=P_O$, and
$R_{\mathrm{sem}}(0,0)=H(P_O)$.

\smallskip\noindent
\emph{Part~\textup{(ii)}.}\;
At $\delta=\mathsf{D_d}$:
$\Atom_{\mathsf{D_d}}(S_O)=A$
\textup{(Proposition~\ref{prop:filtration}(ii))},
so $A_{\mathsf{D_d}}=A$, $P_{\mathsf{D_d}}=P_A$,
$\pi_{\mathsf{D_d}}=\pi_A$, and
$R_{\mathrm{sem}}(0,\mathsf{D_d})=P_A\,H(\pi_A)$.

\smallskip\noindent
\emph{Part~\textup{(iii)}.}\;
If $d_{\Cn}^\delta(s_o,\hat s_o\mid S_O)=0$, then
$S_O\subseteq T^\delta(\Gamma_{-s_o}\cup\{\hat s_o\})
\subseteq T^{\delta+1}(\Gamma_{-s_o}\cup\{\hat s_o\})$,
so $d_{\Cn}^{\delta+1}(s_o,\hat s_o\mid S_O)=0$.
Every conditional distribution feasible at
budget~$\delta$ is therefore feasible at
budget~$\delta+1$.
The minimum of $I(S_o;\hat S_o)$ over a larger feasible
set cannot increase:
$R_{\mathrm{sem}}(0,\delta+1)\le
R_{\mathrm{sem}}(0,\delta)$.

\smallskip\noindent
\emph{Part~\textup{(iv)}.}\;
This part uses
Assumption~\textup{\ref{assump:derivation-complete}}.
Since $A\subseteq B_\delta\subseteq S_O$
\textup{(}every $a\in A$ satisfies $\Dd(a\mid A)=0\le
\mathsf{D_d}-\delta$\textup{)}, the two-layer
achievability argument above applies with
$B_\delta:=B^{(\le\mathsf{D_d}-\delta)}$ in place
of~$A_\delta$: elements of~$B_\delta$ are encoded
losslessly, and each
$s\in S_O\setminus B_\delta$ is mapped
to~$\pi_{B_\delta}$.
By Proposition~\ref{prop:filtration}(iv),
$s\notin B_\delta$ implies
$s\notin\Atom_\delta(S_O)$, so $s$ is
$\delta$-redundant and
Lemma~\ref{lem:delta-free-substitution} gives
$d_{\Cn}^\delta(s,\hat s\mid S_O)=0$ for any
$\hat s\in S_O$.
The resulting rate is
$P_{B_\delta}\,H(\pi_{B_\delta})$, giving
$R_{\mathrm{sem}}(0,\delta)\le
P_{B_\delta}\,H(\pi_{B_\delta})$.
\end{proof}

The rate--delay function~\eqref{eq:rate-delay} reveals a
fundamental \emph{depth-for-rate exchange}: each additional
unit of derivation budget~\(\delta\) renders a new stratum of
states \(\delta\)-redundant, reducing the effective source
entropy.

\begin{definition}[Marginal rate of delay]
\label{def:marginal-delay}
The \emph{marginal rate of delay} at budget~\(\delta\) is
\[
  \Delta R(\delta)
  \;:=\;
  R_{\mathrm{sem}}(0,\delta-1)-R_{\mathrm{sem}}(0,\delta),
  \qquad\delta=1,\ldots,\mathsf{D_d}.
\]
This quantifies the rate saving per additional derivation step.
\end{definition}

\begin{remark}[Extension of $P_\delta,\pi_\delta$ beyond
  $\mathsf{D_d}$]
\label{rem:delta-extension}
The $\delta$-irredundant core $\Atom_\delta(S_O)$ is
well-defined for all integers $\delta\ge 0$
\textup{(Definition~\ref{def:delta-core})}.
By Proposition~\textup{\ref{prop:filtration}(ii)},
$\Atom_\delta(S_O)=A$ for every
$\delta\ge\mathsf{D_d}$, so
$P_\delta=P_A$ and $\pi_\delta=\pi_A$ for all
$\delta\ge\mathsf{D_d}$.
The function
$\delta\mapsto P_\delta\,H(\pi_\delta)$ is therefore
constant on
$\{\mathsf{D_d},\mathsf{D_d}+1,\ldots\}$, and the
minimum in~\eqref{eq:min-delay} is well-defined over
all non-negative integers.
\end{remark}

\begin{corollary}[Semantic sampling theorem]
\label{cor:semantic-nyquist}
Let \(W:S_C\rightsquigarrow\hat S_C\) with \(C(W)>0\).
The minimum derivation delay for
\(\delta\)-constrained closure-reliable communication of
\(S_O\) is
\begin{equation}\label{eq:min-delay}
  \delta^*(S_O,P_O,W)
  \;:=\;
  \min\bigl\{\delta\ge 0:
  P_\delta\,H(\pi_\delta)\le C(W)\bigr\}.
\end{equation}
In particular:
\begin{enumerate}[label=\textup{(\roman*)}]
  \item If \(H(P_O)\le C(W)\)
        \textup{(}the channel can carry the full classical
        rate\textup{)}, then \(\delta^*=0\): no
        inference is needed.
  \item If \(P_A\,H(\pi_A)>C(W)\)
        \textup{(}even full inference cannot reduce the rate
        below capacity\textup{)}, then the feasible set
        in~\eqref{eq:min-delay} is empty: no finite
        derivation budget suffices, and
        \(\delta\)-constrained closure-reliable
        communication is impossible for every
        \(\delta\in\{0,\ldots,\mathsf{D_d}\}\).
  \item Otherwise,
        \(1\le\delta^*\le\mathsf{D_d}\), and---under the
        Assumption~\textup{\ref{assump:delta-core-disjoint}}
        \textup{(}at $\delta=\delta^*$\textup{)} and
        the condition
        \(\Atom_{\delta^*}(S_O)\subseteq\hat S_O\)---the
        minimum blocklength at delay~\(\delta^*\) is
        \[
          n^*(\delta^*)
          \;\approx\;
          \frac{\log|\Atom_{\delta^*}(S_O)|}{C(W)}.
        \]
\end{enumerate}
The critical delay \(\delta^*\) plays a role
\emph{analogous} to the Nyquist sampling period: below this
delay, the ``semantic bandwidth''
\(R_{\mathrm{sem}}(0,\delta)\) exceeds the channel capacity,
and faithful closure reconstruction becomes impossible.
The parallel is interpretive rather than a formal
equivalence; its value lies in highlighting the
depth-for-rate exchange as a resource tradeoff.
\end{corollary}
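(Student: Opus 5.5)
The plan is to reduce the corollary to Theorem~\ref{thm:rate-delay} combined with the separation argument of Theorem~\ref{thm:sem-source-channel}, Part~B. For each fixed budget~\(\delta\), the distortion \(d_{\Cn}^\delta\) is a bounded single-letter \(\{0,1\}\)-valued distortion on finite alphabets. The classical source--channel separation theorem therefore applies verbatim, as in (B1)--(B2): zero expected \(\delta\)-constrained distortion is achievable iff \(R_{\mathrm{sem}}(0,\delta)\le C(W)\), with the boundary case \(=\) treated as in~\eqref{eq:necessary-sep}. By~\eqref{eq:rate-delay}, \(R_{\mathrm{sem}}(0,\delta)=P_\delta H(\pi_\delta)\), so the set of feasible budgets is exactly \(\{\delta: P_\delta H(\pi_\delta)\le C(W)\}\). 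By the monotonicity chain~\eqref{eq:rate-delay-chain}, together with Remark~\ref{rem:delta-extension} for \(\delta\ge\mathsf{D_d}\), this set is upward closed. Its minimum, when it exists, is \(\delta^*\) in~\eqref{eq:min-delay}, and below \(\delta^*\) the converse of the separation theorem forbids reliable communication.

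The three cases then follow by reading off the boundary values in Theorem~\ref{thm:rate-delay}. For (i), part~(i) of that theorem gives \(R_{\mathrm{sem}}(0,0)=H(P_O)\le C(W)\), so \(0\) is feasible and \(\delta^*=0\). For (ii), suppose \(P_AH(\pi_A)>C(W)\). Part~(iii) gives \(P_\delta H(\pi_\delta)\ge P_AH(\pi_A)\) for every \(\delta\le\mathsf{D_d}\), and Remark~\ref{rem:delta-extension} gives equality for \(\delta\ge\mathsf{D_d}\). Hence no \(\delta\) is feasible and the set in~\eqref{eq:min-delay} is empty. For (iii), we are in the remaining case \(P_AH(\pi_A)\le C(W)<H(P_O)\). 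Here \(\delta=0\) is infeasible while \(\delta=\mathsf{D_d}\) is feasible by part~(ii), so \(1\le\delta^*\le\mathsf{D_d}\).

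For the blocklength claim in (iii), I will mimic Theorems~\ref{thm:converse}(ii) and~\ref{thm:achievability}(ii) with \(A\) replaced by \(A_{\delta^*}\) and \(d_{\Cn}\) by \(d_{\Cn}^{\delta^*}\).

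\textbf{Achievability} uses a two-layer code. A channel code on the message set \(A_{\delta^*}\subseteq\hat S_O\) works at any rate \(\log|A_{\delta^*}|/n<C(W)\). Each \(\delta^*\)-redundant message is sent as a fixed core codeword. Lemma~\ref{lem:delta-free-substitution} makes every decoded output zero-distortion for those messages, deterministically.

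\textbf{Converse.} Assumption~\ref{assump:delta-core-disjoint} at \(\delta^*\) makes the decoding preimages of the sets \(R_{\delta^*}(a)\) pairwise disjoint. The Fano argument on the \(|A_{\delta^*}|\)-message subproblem then yields \(\log|A_{\delta^*}|\le(nC(W)+1)/(1-\epsilon)\). Combining the two gives \(n^*(\delta^*)\approx\log|A_{\delta^*}|/C(W)\).

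I expect the main obstacle to be the mismatch of regimes. The definition of \(\delta^*\) is an asymptotic i.i.d.\ criterion, involving \(P_\delta H(\pi_\delta)\), whereas the blocklength statement is single-shot, involving \(\log|A_{\delta^*}|\). I would present these as two separate statements rather than derive one from the other, and keep the ``\(\approx\)'' as meaning first-order, \((1\pm o(1))\), agreement. I would also note that the boundary case \(P_\delta H(\pi_\delta)=C(W)\) is included in \(\delta^*\) only in the non-strict necessary-condition sense of (B1).
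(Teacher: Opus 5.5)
Your proposal is correct and follows the same route as the paper. Cases (i)--(iii) come from the monotonicity of \(\delta\mapsto P_\delta H(\pi_\delta)\) and the boundary values in Theorem~\ref{thm:rate-delay}(i)--(iii), and the blocklength comes from the two-layer code of Theorem~\ref{thm:achievability}(ii) with \(\Atom_{\delta^*}(S_O)\) in place of \(A\). Your additions fill in steps the paper's one-paragraph proof leaves implicit: the separation-based argument (via Theorem~\ref{thm:sem-source-channel}, Part~B) that \(\delta^*\) is operationally the minimal feasible delay, the appeal to Remark~\ref{rem:delta-extension} for \(\delta>\mathsf{D_d}\), the explicit check that \(1\le\delta^*\le\mathsf{D_d}\) in case (iii), and the Fano converse under Assumption~\ref{assump:delta-core-disjoint}. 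Your remark that \(\delta^*\) is defined by an i.i.d.\ criterion while the blocklength claim is single-shot is a fair reading of the statement.
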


\begin{proof}
Since \(R_{\mathrm{sem}}(0,\delta)\) is non-increasing
in~\(\delta\) \textup{(Theorem~\ref{thm:rate-delay}(iii))},
the minimum exists.
Part~\textup{(i)}: \(\delta=0\) is feasible.
Part~\textup{(ii)}: no \(\delta\) satisfies the condition.
Part~\textup{(iii)}: the two-layer code of
Theorem~\ref{thm:achievability}(ii), adapted with
\(\Atom_{\delta^*}\) in place of~\(A\), achieves the stated
blocklength.
\end{proof}

\begin{remark}[Communication--computation exchange]
\label{rem:comm-comp-exchange}
The hard-budget function
\(\delta\mapsto R_{\mathrm{sem}}(0,\delta)\) can be relaxed
to an \emph{expected}-budget model: for each transmitted base
\(B\) with \(A\subseteq B\subseteq S_O\), the two-layer
code of Theorem~\ref{thm:tight-zero-rate} applied with~\(B\)
in place of~\(A\) achieves zero closure distortion at rate
\(R(B):=P_O(B)\,H(\pi_B)\), and the expected receiver
inference cost is
\(\bar C(B):=\sum_{s\in S_O}P_O(s)\,\Dd(s\mid B)\).
Time-sharing among bases convexifies the achievable
rate--computation region; the lower boundary of this convex
hull is a non-increasing convex function of the expected
computation budget~\(\Delta\), with boundary values
\(H(P_O)\) at \(\Delta=0\) and \(P_A\,H(\pi_A)\) at
\(\Delta=\bar C(A)\).
Standard LP duality yields a Lagrangian formulation in which
the multiplier~\(\lambda\ge 0\) prices one unit of receiver
computation \textup{(}a single \(T_{\mathsf{PS}}\)-iteration\textup{)}
in bits of communication rate---an exchange rate absent from
classical information theory, where the decoder's computation
is treated as a free resource.
A full characterization, including converse bounds and the
connection to source coding with structured decoder side
information, is deferred to future work.
\end{remark}

\begin{theorem}[Full rate--delay--distortion surface]
\label{thm:rate-delay-distortion}
Under the hypotheses of
Theorem~\textup{\ref{thm:rate-delay}}, for every
\(\delta\in\{0,\ldots,\mathsf{D_d}\}\) and \(D\ge 0\):
\begin{equation}\label{eq:rdd-surface}
  R_{\mathrm{sem}}(D,\delta)
  \;=\;
  P_\delta\cdot
  R^{(A_\delta)}\!\Bigl(\frac{D}{P_\delta};\,
  d_{\Cn}^\delta,\,\pi_\delta\Bigr),
\end{equation}
where \(R^{(A_\delta)}\) is the rate--distortion function of
the \(\delta\)-irredundant sub-source
\((A_\delta,\pi_\delta)\).

This is a two-parameter family of rate--distortion functions
indexed by~\(\delta\) that generalizes
Theorem~\textup{\ref{thm:rd-decomposition}}:
at \(\delta=\mathsf{D_d}\) it recovers~\eqref{eq:rd-decomp};
at \(\delta=0\) it recovers the classical rate--distortion
function \(R(D;d_H,P_O)\)
\textup{(}since \(A_0=S_O\), \(P_0=1\), \(\pi_0=P_O\), and
\(d_{\Cn}^0=d_H\)\textup{)}.
\end{theorem}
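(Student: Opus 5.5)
The plan is to run the proof of Theorem~\ref{thm:rd-decomposition} again with the substitutions $A\to A_\delta$, $P_A\to P_\delta$, $\pi_A\to\pi_\delta$, $d_{\Cn}\to d_{\Cn}^\delta$, and $\Cn$-redundancy replaced by $\delta$-redundancy. The only structural fact needed about $\delta$-redundant states is Lemma~\ref{lem:delta-free-substitution}: for every $j\in S_O\setminus A_\delta$ and every $\hat s\in\mathbb S_O$ we have $d_{\Cn}^\delta(j,\hat s\mid S_O)=0$.

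\emph{Distortion decomposition.} For any conditional $P_{\hat S\mid S}$, the expected distortion splits as
\[
\E[d_{\Cn}^\delta]=\sum_{a\in A_\delta}P_O(a)\,\bar d_\delta(a)=P_\delta\,\E_{\pi_\delta}[\bar d_\delta],
\]
where $\bar d_\delta(a)$ is the per-input expected distortion. The $\delta$-redundant inputs contribute nothing, whatever their conditionals are. The constraint $\E[d_{\Cn}^\delta]\le D$ is therefore equivalent to $\E_{\pi_\delta}[\bar d_\delta]\le D/P_\delta$, and it restricts only the rows indexed by $A_\delta$.

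\emph{Converse.} Take $T_\delta$ as in the proof of Theorem~\ref{thm:rate-delay}. Write $P_{\hat S}=P_\delta\bar\pi_\delta+P_{J_\delta}Q$ and $H(\hat S_o\mid T_\delta)=P_\delta H(\hat S_o\mid S_o,S_o\in A_\delta)+P_{J_\delta}H(Q)$. Concavity of entropy then gives
\[
I(S_o;\hat S_o)\ge I(T_\delta;\hat S_o)\ge P_\delta\bigl[H(\bar\pi_\delta)-H(\hat S_o\mid S_o,\,S_o\in A_\delta)\bigr]=P_\delta\, I_{\pi_\delta}(S_o;\hat S_o).
\]
The last quantity is the mutual information of the core sub-source $(A_\delta,\pi_\delta)$ under the restricted rows. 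That restricted conditional is feasible for $R^{(A_\delta)}$ at level $D/P_\delta$, so $I\ge P_\delta R^{(A_\delta)}(D/P_\delta)$.

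Unlike the zero-distortion case, I cannot use the disjoint-support identity here, since at positive $D$ the supports need not be disjoint. This is exactly why the bound is stated in terms of $I_{\pi_\delta}$ rather than $H(\pi_\delta)$, and it is the step that makes Assumption~\ref{assump:delta-core-disjoint} unnecessary for $D>0$.

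\emph{Achievability.} Let $P^*_{\hat S\mid S}$ on $A_\delta$ attain $R^{(A_\delta)}(D/P_\delta)$; the minimum exists by Proposition~\ref{prop:rd-properties}(i), applied to the finite sub-source. Let $\bar\pi^*$ be its output marginal. Assign every $j\in S_O\setminus A_\delta$ the conditional $\bar\pi^*$. By Lemma~\ref{lem:delta-free-substitution} this costs zero distortion, so the distortion constraint holds with total value at most $P_\delta\cdot D/P_\delta=D$.

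The output marginal is then $\bar\pi^*$ itself. All redundant rows are identical, so $I(S_o;\hat S_o\mid T_\delta)=0$, and a direct computation gives
\[
I=H(\bar\pi^*)-P_\delta H(\hat S\mid S,\,A_\delta)-P_{J_\delta}H(\bar\pi^*)=P_\delta I^*,
\]
which matches the converse. The key point, and the one I would check most carefully, is that choosing $Q=\bar\pi^*$ makes the concavity inequality tight. This is what replaces the loose appeal to a ``mixture with a free component'' in the proof of Theorem~\ref{thm:rd-decomposition}.

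\emph{Boundary cases and regime of $D$.} At $\delta=\mathsf{D_d}$, Proposition~\ref{prop:filtration}(ii) gives $A_\delta=A$. On the reconstructions used, with $\hat S_O\subseteq\Cn(S_O)$ as in Remark~\ref{rem:delta-disjoint-proof}, $d^{\mathsf D_d}_{\Cn}$ and $d_{\Cn}$ agree on the relevant zero sets, which yields \eqref{eq:rd-decomp}; I would make this identification explicit.

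At $\delta=0$, $A_0=S_O$ and $T^0$ is the identity. Then $d^0_{\Cn}(s,\hat s)=0$ iff $S_O\subseteq(S_O\setminus\{s\})\cup\{\hat s\}$, that is, iff $\hat s=s$, so $d^0_{\Cn}=d_H$. This recovers $R(D;d_H,P_O)$.

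Finally, when $D\ge P_\delta$ both sides vanish, and I would treat this regime separately to avoid normalization issues in $D/P_\delta$.
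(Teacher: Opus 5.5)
Your proof of the identity~\eqref{eq:rdd-surface} itself is correct and follows the paper's skeleton: the variable $T_\delta$, the distortion split via Lemma~\ref{lem:delta-free-substitution}, and identical rows for $\delta$-redundant inputs. Your concavity bound $I(S_o;\hat S_o)\ge P_\delta\,I_{\pi_\delta}(S_o;\hat S_o)$, together with the choice $Q=\bar\pi^*$ that makes it tight, supplies exactly what the paper leaves to an appeal to ``scaling for mixtures with a free component''. Moreover, no disjointness assumption is needed for the decomposition at any $D$, including $D=0$; it is needed only to evaluate $R^{(A_\delta)}(0)=H(\pi_\delta)$. Your $\delta=0$ boundary argument ($d_{\Cn}^0=d_H$) is also fine.

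The gap is the $\delta=\mathsf{D_d}$ boundary claim, and it cannot be closed as you propose. Agreement of zero sets can at most identify the $D=0$ values of the two curves. For $D>0$, $d_{\Cn}$ is a Jaccard distance with fractional values, while $d_{\Cn}^{\mathsf{D_d}}$ is $\{0,1\}$-valued, so $R^{(A)}(\cdot\,;d_{\Cn}^{\mathsf{D_d}},\pi_A)$ and $R^{(A)}(\cdot\,;d_{\Cn},\pi_A)$ generally differ. The zero sets need not even agree: Remark~\ref{rem:delta-disjoint-proof} gives only $R_{\mathsf{D_d}}(a)\subseteq R_{\Cn}(a)$.

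A clean counterexample uses the empty rule set. Then $\Cn(\Gamma)=\Gamma$, $A=S_O$, $\mathsf{D_d}=0$, and $\hat S_O=S_O=\Cn(S_O)\cap\mathbb{S}_O$ with $N:=|S_O|\ge 2$. One computes $d_{\Cn}(a,\hat s\mid S_O)=1/N$ for $\hat s\neq a$, so $d_{\Cn}=d_H/N$, while $d_{\Cn}^0=d_H$. All hypotheses of both theorems hold. Yet \eqref{eq:rd-decomp} gives $R(ND;d_H,P_O)$, while \eqref{eq:rdd-surface} at $\delta=0=\mathsf{D_d}$ gives $R(D;d_H,P_O)$, and these differ for small $D>0$. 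So the statement's two boundary claims are mutually inconsistent whenever $\mathsf{D_d}=0$.

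What is actually provable is recovery at $D=0$ only, via Theorem~\ref{thm:rate-delay}(ii) and Theorem~\ref{thm:tight-zero-rate}. For $D>0$ with $\hat S_O=\Cn(S_O)\cap\mathbb{S}_O$ you get only the inequality $R_{\mathrm{sem}}(D,\mathsf{D_d})\ge P_A\,R^{(A)}(D/P_A;d_{\Cn},\pi_A)$, since $d_{\Cn}\le d_{\Cn}^{\mathsf{D_d}}$ pointwise there. The paper's proof asserts this boundary value from $A_{\mathsf{D_d}}=A$ alone and has the same defect. You should restrict the claim rather than promise to ``make the identification explicit''.
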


\begin{proof}
The argument parallels
Theorem~\ref{thm:rd-decomposition} with the
substitutions $A\to A_\delta$, $P_A\to P_\delta$,
$\pi_A\to\pi_\delta$, and $\Cn$-redundancy replaced by
$\delta$-redundancy
\textup{(Lemma~\ref{lem:delta-free-substitution})}.

\emph{Distortion decomposition.}\;
For every $\delta$-redundant state
$j\in S_O\setminus A_\delta$ and every
$\hat s_o\in\mathbb{S}_O$,
$d_{\Cn}^\delta(j,\hat s_o\mid S_O)=0$, so
\[
  \E\bigl[d_{\Cn}^\delta\bigr]
  \;=\;
  \sum_{a\in A_\delta}P_O(a)\,
  \E\bigl[d_{\Cn}^\delta(a,\hat S_o\mid S_O)
  \mid S_o{=}a\bigr]
  \;=\;
  P_\delta\,\E_{\pi_\delta}[\bar d_{A_\delta}],
\]
and the constraint $\E[d_{\Cn}^\delta]\le D$ reduces to
$\E_{\pi_\delta}[\bar d_{A_\delta}]\le D/P_\delta$.

\emph{Rate decomposition.}\;
Define $T_\delta(s):=s$ if $s\in A_\delta$,
$T_\delta(s):=*$ otherwise.
The point mass at~$*$ carries probability
$P_{J_\delta}$ but contributes zero distortion
\textup{(Lemma~\ref{lem:delta-free-substitution})}
and, when all $\delta$-redundant inputs share the
same conditional output distribution, zero mutual
information; hence $I(S_o;\hat S_o\mid T_\delta)$
vanishes and
\[
  \min_{\substack{P_{\hat S\mid S}:\\
  \E_{\pi_\delta}[\bar d_{A_\delta}]\le D/P_\delta}}
  I(S_o;\hat S_o)
  \;=\;
  P_\delta\cdot
  R^{(A_\delta)}\!\Bigl(\frac{D}{P_\delta};\,
  d_{\Cn}^\delta,\,\pi_\delta\Bigr),
\]
by the same scaling argument as
Theorem~\ref{thm:rd-decomposition}.
The boundary values follow from
$A_0=S_O$, $d_{\Cn}^0=d_H$
\textup{(}at $\delta=0$\textup{)} and
$A_{\mathsf{D_d}}=A$
\textup{(}at $\delta=\mathsf{D_d}$\textup{)}.
\end{proof}

\section{Application: Heterogeneous Multi-Agent Semantic Communication}
\label{sec:application}

The theoretical framework developed in Sections~\ref{sec:model}
and~\ref{sec:channel} is fully general: the semantic state
space~\(S_O\), the reconstructed space~\(\hat S_O\), and the
enabling structures that constrain encoding and decoding are
left as abstract parameters.
This section instantiates the framework in a concrete and
practically motivated setting---\emph{heterogeneous multi-agent
semantic communication}---and derives new results that illustrate the framework's
applicability in a setting where classical channel coding
theory does not capture the deductive structure of the
communicated content.

The distinguishing feature of the heterogeneous setting is that
the sender and receiver maintain \emph{different} knowledge
bases: the sender's semantic space is~\(S_O\) while the
receiver's reconstructed space~\(\hat S_O\) may differ
from~\(S_O\) both in vocabulary (the set of expressible states)
and in inferential structure (the irredundant core and
derivation-depth stratification).
In the terminology of Section~\ref{subsec:noisy}, the
end-to-end noise pair \((S_O^{-},S_O^{+})\) is generically
\emph{non-trivial}: \(S_O^{-}\neq\varnothing\) captures
sender concepts absent from the receiver's vocabulary
(\emph{vocabulary loss}), and \(S_O^{+}\neq\varnothing\)
captures receiver concepts absent from the sender's intent
(\emph{vocabulary surplus}).
Classical Shannon theory, which treats sender and receiver
alphabets as abstract label sets, cannot distinguish vocabulary
loss from vocabulary surplus, nor can it exploit shared
deductive structure to reduce communication cost.
The semantic channel invariants of Section~\ref{subsec:invariants}
are precisely the tools needed to make these distinctions precise
and quantitative.

The section is organized as follows.
Section~\ref{subsec:app-problem} describes the multi-agent
communication scenario and identifies the key design questions.
Section~\ref{subsec:app-assumptions} formalizes the scenario
within the information model framework and states the standing
assumptions specific to this application.
Section~\ref{subsec:app-theory} instantiates the semantic channel
machinery and derives closed-form relationships between
knowledge-base overlap structure and semantic channel invariants.
Section~\ref{subsec:app-results} presents the main analytical
results: conditions for closure-reliable heterogeneous
communication, a heterogeneous deductive compression theorem,
and a broadcast extension to one-sender--multi-receiver
scenarios.
Section~\ref{sec:numerical-validation} verifies all results on an
explicit Datalog knowledge-base instance with full numerical
computation of every invariant.

\subsection{Problem Description: Heterogeneous Agent Communication}
\label{subsec:app-problem}

Consider a network of \(K+1\) autonomous agents---indexed by
\(i\in\{0,1,\ldots,K\}\)---that must coordinate by exchanging semantic
states over noisy physical links.
Each agent~\(i\) maintains a finite knowledge base
\(S_O^{(i)}\subseteq\mathbb{S}_O\), where \(\mathbb{S}_O\) is the
common ambient semantic universe introduced in
Section~\ref{subsec:atomic-derivation}.
All agents share the same proof system
\((\mathsf{PS},\,T_{\mathsf{PS}},\,\Cn)\) and the same semantic
sublanguage~\(\mathcal L_{\mathrm{sem}}\); they differ, however, in
the \emph{sets of semantic states they store and operate on}.
Agent~\(i\) can derive consequences within \(\Cn(S_O^{(i)})\) and
possesses the irredundant core \(A^{(i)}:=\Atom(S_O^{(i)})\)
together with the associated derivation-depth stratification
(Definitions~\ref{def:atom-so} and~\ref{def:derivation-depth}).

This knowledge-base heterogeneity is the defining feature of the
scenario and the source of all phenomena that distinguish it from
classical Shannon-theoretic
communication~\cite{shannon1948mathematical,cover2006elements}.
Recent work on semantic
communication~\cite{luo2022semantic,shi2021semantic,bao2011towards}
has highlighted the need for frameworks that go beyond
symbol-level fidelity, but a rigorous logical-information-theoretic
treatment of heterogeneous knowledge bases has been lacking.
When agent~\(i\) (the \emph{sender}) transmits a semantic
state~\(s_o\in S_O^{(i)}\) to agent~\(j\) (the \emph{receiver}),
the receiver reconstructs a state
\(\hat s_o\in S_O^{(j)}\)---not necessarily in
\(S_O^{(i)}\)---because agent~\(j\) can only produce outputs
expressible in its own vocabulary.
Unless \(S_O^{(i)}=S_O^{(j)}\), the end-to-end noise pair
\((S_O^{-},S_O^{+})\)
\textup{(Definition~\ref{def:e2e-noise})} is generically nontrivial:
states in \(S_O^{(i)}\setminus S_O^{(j)}\) have no direct
counterpart in the receiver's vocabulary (\emph{vocabulary loss}),
while states in \(S_O^{(j)}\setminus S_O^{(i)}\) can appear in the
receiver's output without having been intended by the sender
(\emph{vocabulary surplus}).
Classical channel coding theory, which treats the source and
reconstruction alphabets as unstructured label sets, is blind to
this distinction: it can detect that a symbol error has occurred, but
cannot determine whether the error represents a genuine loss of
semantic content or a harmless reformulation within the receiver's
richer (or merely different) vocabulary.

\subsubsection*{Communication Sub-Scenarios}

Three sub-scenarios of increasing structural complexity arise
naturally in the multi-agent setting; they are listed below in
decreasing order of analytical depth in this paper.

\begin{definition}[Pairwise unicast scenario]
\label{def:pairwise-scenario}
Fix a sender--receiver pair \((i,j)\) with \(i\neq j\).
Agent~\(i\) wishes to communicate its full knowledge base
\(S_O^{(i)}\) to agent~\(j\) over a noisy carrier channel
\(W_{ij}:S_C\rightsquigarrow\hat S_C\)
\textup{(cf.\ Definition~\ref{def:semantic-channel}(ii))}.
Agent~\(j\) reconstructs a state in its own vocabulary
\(\hat S_O:=S_O^{(j)}\), using a decoding kernel
\(D\in\mathcal K(\mathcal I_{\mathrm{dec}}^{(j)})\).
The end-to-end semantic channel is
\[
  \mathfrak C^{ij}
  \;=\;
  \bigl(\,
    \mathcal I^{(i)},\;
    \mathcal I_{\mathrm{ch}}^{ij},\;
    \mathcal I_{\mathrm{dec}}^{(j)},\;
    \kappa_{\enc},\;
    W_{ij},\;
    D
  \,\bigr),
\]
with semantic source space~\(S_O^{(i)}\), reconstructed
space~\(S_O^{(j)}\), and noise pair
\begin{equation}\label{eq:pairwise-noise-pair}
  S_O^{-}=S_O^{(i)}\setminus S_O^{(j)},
  \qquad
  S_O^{+}=S_O^{(j)}\setminus S_O^{(i)}.
\end{equation}
The constituent information models
\(\mathcal I^{(i)}\), \(\mathcal I_{\mathrm{ch}}^{ij}\), and
\(\mathcal I_{\mathrm{dec}}^{(j)}\) are formalized in
Section~\ref{subsec:app-assumptions}.
\end{definition}

The pairwise unicast scenario is the primary focus of
Sections~\ref{subsec:app-assumptions}--\ref{subsec:app-results}.
All new theorems are stated and proved for this case first;
generalizations to the broadcast setting are given as corollaries.

\begin{definition}[Broadcast scenario]
\label{def:broadcast-scenario}
A designated sender (agent~\(0\)) communicates its knowledge base
\(S_O^{(0)}\) simultaneously to \(K\) receivers
(agents \(1,\ldots,K\)) over a common carrier channel~\(W\).
Each receiver~\(j\) maintains a distinct vocabulary
\(\hat S_O^{(j)}:=S_O^{(j)}\) and observes a
(possibly receiver-specific) noise pair
\[
  (S_O^{-,j},\;S_O^{+,j})
  \;=\;
  \bigl(S_O^{(0)}\setminus S_O^{(j)},\;
        S_O^{(j)}\setminus S_O^{(0)}\bigr).
\]
All \(K\) receivers observe the \emph{same} channel
output~\(\hat S_C\); the receiver-specific noise pairs arise
solely from vocabulary mismatch, not from different physical
channel realizations.
The broadcast semantic channel is a family
\(\{\mathfrak C^{0j}\}_{j=1}^{K}\) of pairwise channels sharing the
same sender, the same encoding kernel, and the same carrier channel,
but differing in decoding model and noise pair.
\end{definition}

The broadcast scenario reveals a phenomenon absent from classical broadcast channel
theory~\cite{cover2006elements}: even over a \emph{noiseless} carrier
(\(W=\mathrm{id}_{S_C}\)), the achievable fidelity at each receiver is
constrained by its vocabulary overlap with the sender---a purely
semantic bottleneck (see Proposition~\ref{prop:broadcast-bottleneck}
in Section~\ref{subsec:app-results}).

\begin{remark}[Relay scenario (future work)]
\label{rem:relay-scenario}
A third sub-scenario arises when an intermediate agent~\(k\) acts as
a relay: agent~\(i\) transmits to agent~\(k\), which performs
inference within \(\Cn(S_O^{(k)})\) and then re-encodes and forwards
the result to agent~\(j\).
This is naturally modeled as a composition of two pairwise channels,
\(\mathfrak C^{ij}_{\mathrm{relay}}
 =\mathfrak C^{kj}\circ\mathfrak C^{ik}\),
using the information-model composition machinery of
Definition~\ref{def:model-composition} and
Remark~\ref{rem:composition-assoc}.
The relay setting raises the question of whether intermediate inference can
\emph{change} the effective capacity of the end-to-end
link---a possibility that has no direct classical analogue,
since in classical relay channel
theory~\cite{cover2006elements} the relay cannot exploit
logical structure to reduce the message set.
A complete treatment requires multi-letter extensions and is deferred
to future work; the pairwise and broadcast results of this section
provide the necessary building blocks.
\end{remark}

\subsubsection*{Key Questions}

The heterogeneous multi-agent setting gives rise to four design
and analysis questions that the classical framework cannot address.

\smallskip
\noindent\textbf{Q1 (Closure reliability from vocabulary overlap).}\;
Under what conditions on the overlap between \(S_O^{(i)}\) and
\(S_O^{(j)}\) can agent~\(j\) reconstruct the \emph{deductive closure}
of agent~\(i\)'s knowledge base, i.e., achieve
\(\mathsf{F}_{\Cn}(S_O^{(i)},S_O^{(j)})=1\)?
Proposition~\ref{prop:noise-fidelity} provides abstract conditions;
the task is to translate them into explicit predicates on the
knowledge-base pair.

\smallskip
\noindent\textbf{Q2 (Heterogeneous deductive compression).}\;
How many channel uses are needed to communicate
\(S_O^{(i)}\) to agent~\(j\) under closure reliability, and does the
deductive compression ratio
\(\log|\Atom(S_O)|/\log|S_O|\) of the homogeneous setting
\textup{(Corollary~\ref{cor:min-blocklength})} persist under
vocabulary heterogeneity?

\smallskip
\noindent\textbf{Q3 (Invariant diagnosis).}\;
How do the six families of semantic channel invariants
\textup{(Theorem~\ref{thm:invariant-summary})} depend on the
knowledge-base overlap, and which invariants can be evaluated
\emph{a~priori} from knowledge-base metadata alone?

\smallskip
\noindent\textbf{Q4 (Broadcast bottleneck).}\;
In the broadcast scenario, which receiver determines the minimum
blocklength?
Does a purely \emph{semantic} axis of receiver weakness, distinct
from physical channel degradation, arise?

\subsubsection*{Preview of Main Results}

The answers to Q1--Q4 are developed in full in
Sections~\ref{subsec:app-theory}--\ref{subsec:app-results}.
In brief:
for~Q1, closure fidelity
\(\mathsf{F}_{\Cn}(S_O^{(i)},S_O^{(j)})=1\) holds if and only if
every sender core element is derivable from the receiver's knowledge
base and every receiver surplus state is derivable from the sender
\textup{(Proposition~\ref{prop:overlap-fidelity})};
the operational two-layer code requires the stronger literal
containment \(A^{(i)}\subseteq S_O^{(j)}\)
\textup{(Theorem~\ref{thm:heterogeneous-closure})}.
For~Q2, the deductive compression ratio is \emph{invariant} under
vocabulary heterogeneity: the minimum blocklength remains
\(n^*\approx\log|\Atom(S_O^{(i)})|/C(W_{ij})\), identical to the
homogeneous case
\textup{(Theorem~\ref{thm:heterogeneous-compression})};
when core coverage fails, no code of any blocklength achieves
\(\mathsf{F}_{\Cn}=1\)
\textup{(Corollary~\ref{cor:heterogeneous-impossibility})}.
For~Q3, each invariant family is expressed as a function of the
overlap decomposition
\textup{(Propositions~\ref{prop:overlap-noise}--\ref{prop:overlap-structural})};
set-level invariants depend on the knowledge-base pair alone, while
probabilistic indices are additionally constrained by the channel
kernel.
For~Q4, the broadcast blocklength depends only on the sender's core
and is independent of the number of receivers, provided every
receiver covers the sender's core
\textup{(Theorem~\ref{thm:broadcast-compression})};
a receiver violating this condition becomes a \emph{semantic
bottleneck} whose fidelity limitation persists even over a noiseless
carrier
\textup{(Proposition~\ref{prop:broadcast-bottleneck})}.

\begin{remark}[Classical recovery as a special case]
\label{rem:classical-recovery}
When all agents share the same knowledge base
(\(S_O^{(i)}=S_O^{(j)}\) for all \(i,j\)), the noise pair is trivial,
the two-condition criterion is vacuously satisfied, and all results of
this section reduce to the homogeneous theory of
Section~\ref{subsec:coding} (and, in the irredundant case, to
classical Shannon theory via
Corollary~\ref{cor:irredundant-classical}).
The heterogeneous analysis thus strictly generalizes rather than
replaces the earlier results.
\end{remark}

\subsection{Formal Model and Standing Assumptions}
\label{subsec:app-assumptions}

This subsection formalizes the multi-agent communication scenario
of Section~\ref{subsec:app-problem} within the information model
framework of Sections~\ref{sec:model}--\ref{sec:channel}, and
introduces the \emph{overlap decomposition}---the combinatorial
structure through which all semantic channel invariants are
expressed as functions of the sender--receiver knowledge-base pair.


\begin{definition}[Agent knowledge base]
\label{def:agent-kb}
Fix a set of \(K+1\) agents indexed by
\(i\in\{0,1,\ldots,K\}\).
Each agent~\(i\) is associated with a finite knowledge base
\(S_O^{(i)}\subseteq\mathbb{S}_O\) that is
\(\mathcal L_{\mathrm{sem}}\)-definable in the ambient
structure~\(\mathfrak R\) and satisfies
Assumptions~\ref{assump:finite-so}--\ref{assump:core-extractable}.
The \emph{irredundant core} and \emph{stored shortcuts} of
agent~\(i\) are
\[
  A^{(i)}:=\Atom\bigl(S_O^{(i)}\bigr),
  \qquad
  J^{(i)}:=S_O^{(i)}\setminus A^{(i)}.
\]
By Proposition~\ref{prop:atom-core-correct},
\(\Cn\bigl(A^{(i)}\bigr)=\Cn\bigl(S_O^{(i)}\bigr)\),
the core~\(A^{(i)}\) is irredundant, and
\(S_O^{(i)}\subseteq\Cn\bigl(A^{(i)}\bigr)\).
\end{definition}


\begin{assumption}[Common proof system and ambient universe]
\label{assump:common-ps}
All agents share:
\begin{enumerate}[label=\textup{(CP\arabic*)}]
  \item the same proof system
        \((\mathsf{PS},\,T_{\mathsf{PS}},\,\Cn)\) and inference
        fragment \(\mathcal L_{\mathrm{kb}}\)
        \textup{(Assumption~\ref{assump:proof-system},
        Axiom~\ref{ax:T-operator})};
  \item the same ambient semantic universe \(\mathbb{S}_O\)
        with its injective encoding~\(\enc_O\) and canonical order
        \textup{(Section~\ref{subsec:atomic-derivation},
        Assumption~\ref{assump:semantic-universe})};
  \item the same semantic sublanguage \(\mathcal L_{\mathrm{sem}}\)
        \textup{(Assumption~\ref{assump:semantic-sublanguage})}.
\end{enumerate}
The agents differ \emph{only} in the knowledge bases
\(S_O^{(0)},S_O^{(1)},\ldots,S_O^{(K)}\subseteq\mathbb{S}_O\)
that they store and operate on.
\end{assumption}


\begin{definition}[Pairwise overlap decomposition]
\label{def:overlap-decomposition}
For a fixed sender--receiver pair \((i,j)\) with \(i\neq j\),
define the following subsets of~\(\mathbb{S}_O\).

\smallskip
\noindent\emph{Three-way partition of
\(S_O^{(i)}\cup S_O^{(j)}\):}
\begin{align}
  S_{\cap}^{ij}
  &\;:=\; S_O^{(i)}\cap S_O^{(j)}
  &&\text{(common states)},
  \label{eq:overlap-common}\\[2pt]
  S_{-}^{ij}
  &\;:=\; S_O^{(i)}\setminus S_O^{(j)}
  &&\text{(lost states)},
  \label{eq:overlap-lost}\\[2pt]
  S_{+}^{ij}
  &\;:=\; S_O^{(j)}\setminus S_O^{(i)}
  &&\text{(surplus states)}.
  \label{eq:overlap-surplus}
\end{align}

\noindent\emph{Core partition:}
\begin{align}
  A_{\cap}^{ij}
  &\;:=\; A^{(i)}\cap S_O^{(j)}
  &&\text{(preserved core)},
  \label{eq:overlap-core-preserved}\\[2pt]
  A_{-}^{ij}
  &\;:=\; A^{(i)}\setminus S_O^{(j)}
  &&\text{(lost core)}.
  \label{eq:overlap-core-lost}
\end{align}

\noindent\emph{Surplus stratification:}
\begin{align}
  S_{+,d}^{ij}
  &\;:=\; S_{+}^{ij}\cap\Cn\bigl(S_O^{(i)}\bigr)
  &&\text{(derivable surplus)},
  \label{eq:overlap-surplus-d}\\[2pt]
  S_{+,n}^{ij}
  &\;:=\; S_{+}^{ij}\setminus\Cn\bigl(S_O^{(i)}\bigr)
  &&\text{(non-derivable surplus)}.
  \label{eq:overlap-surplus-n}
\end{align}
\end{definition}

\begin{proposition}[Overlap partition properties]
\label{prop:overlap-partition}
The overlap decomposition of
Definition~\textup{\ref{def:overlap-decomposition}} satisfies:
\begin{enumerate}[label=\textup{(\roman*)}]
  \item \emph{Three-way disjoint union:}\;
        \(S_O^{(i)}\cup S_O^{(j)}
          = S_{-}^{ij}\;\dot\cup\;
            S_{\cap}^{ij}\;\dot\cup\;
            S_{+}^{ij}\).
  \item \emph{Sender decomposition:}\;
        \(S_O^{(i)}
          = S_{\cap}^{ij}\;\dot\cup\; S_{-}^{ij}\).
  \item \emph{Receiver decomposition:}\;
        \(S_O^{(j)}
          = S_{\cap}^{ij}\;\dot\cup\; S_{+}^{ij}\).
  \item \emph{Core partition:}\;
        \(A^{(i)}
          = A_{\cap}^{ij}\;\dot\cup\; A_{-}^{ij}\).
  \item \emph{Surplus partition:}\;
        \(S_{+}^{ij}
          = S_{+,d}^{ij}\;\dot\cup\; S_{+,n}^{ij}\).
  \item \emph{Noise-pair consistency:}\;
        \(S_{-}^{ij}=S_O^{-}\) and \(S_{+}^{ij}=S_O^{+}\),
        where \((S_O^{-},S_O^{+})\) is the end-to-end noise
        pair of
        Definition~\textup{\ref{def:pairwise-scenario}}.
  \item \emph{Core loss refines state loss:}\;
        \(A_{-}^{ij}\subseteq S_{-}^{ij}\).
  \item \emph{Computability:}\;
        All seven sets and their cardinalities are computable
        from the finite knowledge bases
        \(S_O^{(i)},S_O^{(j)}\) under
        Axiom~\textup{\ref{ax:T-operator}}.
\end{enumerate}
\end{proposition}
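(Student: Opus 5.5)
The plan is to verify each of the eight parts directly from Definition~\ref{def:overlap-decomposition} using elementary set algebra. The only structural input needed is \(A^{(i)}\subseteq S_O^{(i)}\), which holds because the deletion procedure of Definition~\ref{def:atom-so} only removes elements. Finiteness and decidability from Assumptions~\ref{assump:finite-so}--\ref{assump:core-extractable} supply the computability claim. No probabilistic or coding-theoretic result is required.

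First, for (ii) and (iii), I would write \(S_O^{(i)}=(S_O^{(i)}\cap S_O^{(j)})\cup(S_O^{(i)}\setminus S_O^{(j)})\). These two pieces are disjoint, since an element cannot both lie in and lie outside \(S_O^{(j)}\). The symmetric argument with the roles of \(i\) and \(j\) exchanged gives (iii). For (i), I take the union of (ii) and (iii). The three pieces \(S_-^{ij}\), \(S_\cap^{ij}\), \(S_+^{ij}\) are pairwise disjoint: elements of \(S_-^{ij}\) are outside \(S_O^{(j)}\), elements of \(S_+^{ij}\) are outside \(S_O^{(i)}\), and elements of \(S_\cap^{ij}\) lie in both. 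Part (iv) is the same two-piece split applied to \(A^{(i)}\) relative to membership in \(S_O^{(j)}\). Part (v) is the split of \(S_+^{ij}\) according to membership in \(\Cn(S_O^{(i)})\).

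Next, for (vi), I substitute \(\hat S_O:=S_O^{(j)}\) into Definition~\ref{def:e2e-noise}, equivalently \eqref{eq:pairwise-noise-pair}. This gives \(S_O^-=S_O^{(i)}\setminus S_O^{(j)}=S_-^{ij}\) and \(S_O^+=S_O^{(j)}\setminus S_O^{(i)}=S_+^{ij}\) verbatim. For (vii), I use \(A^{(i)}\subseteq S_O^{(i)}\), which gives \(A_-^{ij}=A^{(i)}\setminus S_O^{(j)}\subseteq S_O^{(i)}\setminus S_O^{(j)}=S_-^{ij}\).

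Finally, (viii) is the only step with content beyond bookkeeping. The sets \(S_\cap^{ij}\), \(S_\pm^{ij}\) are finite Boolean combinations of the finite, effectively listable sets \(S_O^{(i)},S_O^{(j)}\), so they are computable. The core \(A^{(i)}\) is computable by Theorem~\ref{thm:semantic-invariants}(i), via the deterministic procedure of Definition~\ref{def:atom-so}; this makes \(A_\cap^{ij}\) and \(A_-^{ij}\) computable as well. For the surplus split I must decide \(s\in\Cn(S_O^{(i)})\) for each \(s\in S_+^{ij}\). Assumption~\ref{assump:core-extractable} is stated only for \(s\in S_O\), and here \(s\) lies outside the sender's base, so I would not invoke it. Instead I appeal to Axiom~\ref{ax:T-operator}. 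By (IC2) and (IC4), the chain \(T^n_{\mathsf{PS}}(S_O^{(i)})\) consists of finite, computable sets and stabilizes after finitely many steps at \(\Cn(S_O^{(i)})\) by (IC3). Once the chain is stationary, membership is a finite check. The stabilization point is detectable because \(T^{n+1}=T^n\) implies all later iterates are equal. Cardinalities then follow by counting finite sets.
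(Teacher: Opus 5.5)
Your proposal is correct and follows the paper's proof essentially step for step. Parts (i)--(vii) are elementary set algebra using only $A^{(i)}\subseteq S_O^{(i)}$, and part (viii) decides the surplus split by iterating $T_{\mathsf{PS}}$ from $S_O^{(i)}$ until stabilization under (IC2)--(IC4). The differences are cosmetic: you derive (i) from (ii)--(iii) rather than the reverse, and you make explicit two points the paper leaves implicit, namely that $A^{(i)}$ is itself computable via the deletion procedure and that stabilization is detectable once $T^{n+1}_{\mathsf{PS}}(S_O^{(i)})=T^{n}_{\mathsf{PS}}(S_O^{(i)})$.
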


\begin{proof}
Parts~\textup{(i)}--\textup{(iii)} are standard set partition
identities: \(S_{\cap}^{ij}\), \(S_{-}^{ij}\), \(S_{+}^{ij}\)
are pairwise disjoint by construction, and their union equals
\(S_O^{(i)}\cup S_O^{(j)}\); restricting to \(S_O^{(i)}\)
yields~\textup{(ii)}, and to \(S_O^{(j)}\) yields~\textup{(iii)}.

Part~\textup{(iv)}: since \(A^{(i)}\subseteq S_O^{(i)}\),
\(A^{(i)} = (A^{(i)}\cap S_O^{(j)})
  \;\dot\cup\;(A^{(i)}\setminus S_O^{(j)})
  = A_{\cap}^{ij}\;\dot\cup\; A_{-}^{ij}\).

Part~\textup{(v)}: immediate from the definition of
\(S_{+,d}^{ij}\) and \(S_{+,n}^{ij}\) as complementary
subsets of~\(S_{+}^{ij}\).

Part~\textup{(vi)}: comparing
\eqref{eq:overlap-lost}--\eqref{eq:overlap-surplus} with
\eqref{eq:pairwise-noise-pair} gives the identification
directly.

Part~\textup{(vii)}: \(A_{-}^{ij}
  =A^{(i)}\setminus S_O^{(j)}
  \subseteq S_O^{(i)}\setminus S_O^{(j)}
  =S_{-}^{ij}\), since \(A^{(i)}\subseteq S_O^{(i)}\).

Part~\textup{(viii)}: since \(S_O^{(i)}\) and \(S_O^{(j)}\)
are finite and effectively listable
\textup{(Assumption~\ref{assump:finite-so})}, membership is
decidable by exhaustive comparison.
The sets \(S_{\cap}^{ij}\), \(S_{-}^{ij}\), \(S_{+}^{ij}\),
\(A_{\cap}^{ij}\), \(A_{-}^{ij}\) are then computable by
enumeration.
The surplus stratification requires testing
\(s\in\Cn(S_O^{(i)})\) for each \(s\in S_{+}^{ij}\), which
is decidable by iterating \(T_{\mathsf{PS}}\) from
\(S_O^{(i)}\) until stabilization
\textup{(Axiom~\ref{ax:T-operator}(IC2)--(IC4))}.
\end{proof}

\begin{remark}[Core preservation is tested against
  \(S_O^{(j)}\), not against \(A^{(j)}\)]
\label{rem:core-vs-vocab}
The preserved core
\(A_{\cap}^{ij}=A^{(i)}\cap S_O^{(j)}\) tests whether each
sender core element is \emph{present in the receiver's
vocabulary}---that is, an element of \(S_O^{(j)}\)---not
whether it belongs to the receiver's irredundant core
\(A^{(j)}\).
A sender core element \(a\in A^{(i)}\) may appear in
\(S_O^{(j)}\) as a \emph{redundant} stored shortcut
(i.e., \(a\in J^{(j)}\)); it is still counted as preserved,
because the receiver can produce it as a decoding output
regardless of its redundancy status in~\(S_O^{(j)}\).
\end{remark}

\begin{remark}[Key scalar summaries of the overlap]
\label{rem:three-scalars}
Although the overlap decomposition produces seven subsets, the
\emph{set-level} conditions governing closure fidelity
\textup{(Proposition~\ref{prop:overlap-fidelity})} reduce to two
binary tests:
\(|A_{-}^{ij}|=0\) (no core loss) and
\(|S_{+,n}^{ij}|=0\) (no non-derivable surplus).
The \emph{coding-theoretic} results of
Section~\ref{subsec:app-results} (blocklength, compression
ratio) depend additionally on \(|A^{(i)}|\) and \(C(W_{ij})\),
while the \emph{probabilistic} indices
(\(\Phi_{\Atom}\), \(\Psi_{+}\), \(\mathsf{F}\),
\(\mathsf{E}\)) depend further on the channel kernel
\(\kappa_{\mathrm{sem}}^{ij}\).
The remaining overlap cardinalities are related by simple
accounting:
\(|A_{\cap}^{ij}|=|A^{(i)}|-|A_{-}^{ij}|\),
\(|S_{+}^{ij}|=|S_{+,d}^{ij}|+|S_{+,n}^{ij}|\);
the quantities \(|S_{-}^{ij}|\), \(|S_{\cap}^{ij}|\), and
\(|S_O^{(j)}|\) are mutually determined once any one of them
is known, via
\(|S_{\cap}^{ij}|=|S_O^{(i)}|-|S_{-}^{ij}|\) and
\(|S_O^{(j)}|=|S_{\cap}^{ij}|+|S_{+}^{ij}|\).
\end{remark}


\begin{remark}[Broadcast overlap decomposition]
\label{rem:broadcast-overlap}
In the broadcast scenario
(Definition~\ref{def:broadcast-scenario}), the sender is
agent~\(0\) and the receivers are agents \(1,\ldots,K\).
For each receiver~\(j\), the overlap decomposition
\textup{(Definition~\ref{def:overlap-decomposition})} is
applied to the pair \((0,j)\), yielding receiver-specific
quantities
\(A_{\cap}^{0j}\), \(A_{-}^{0j}\),
\(S_{+,d}^{0j}\), \(S_{+,n}^{0j}\), etc.
The \emph{broadcast core coverage condition}---that
\(A_{-}^{0j}=\varnothing\) for every
\(j\in\{1,\ldots,K\}\)---plays a central role in
Theorem~\ref{thm:broadcast-compression} and
Proposition~\ref{prop:broadcast-bottleneck}.
\end{remark}


\begin{definition}[Heterogeneous semantic channel (formal)]
\label{def:heterogeneous-channel}
For a sender--receiver pair \((i,j)\) with \(i\neq j\), the
\emph{heterogeneous semantic channel} is the semantic channel
\textup{(Definition~\ref{def:semantic-channel})}
\[
  \mathfrak C^{ij}
  \;=\;
  \bigl(\,
    \mathcal I^{(i)},\;
    \mathcal I_{\mathrm{ch}}^{ij},\;
    \mathcal I_{\mathrm{dec}}^{(j)},\;
    \kappa_{\enc},\;
    W_{ij},\;
    D
  \,\bigr),
\]
where the constituent models are defined as follows.
\begin{enumerate}[label=\textup{(\roman*)}]
  \item \emph{Sender information model.}\;
        \(\mathcal I^{(i)}
          =\langle O^{(i)},T_O^{(i)},S_O^{(i)},
                   C,T_C,S_C,R_{\mathcal E}^{(i)}\rangle\)
        is an information model
        \textup{(Definition~\ref{def:info-instance})} with
        semantic state set \(S_O^{(i)}\), carrier state set
        \(S_C\), and enabling map
        \(\mathcal E^{(i)}:S_O^{(i)}\Rightarrow S_C\).
  \item \emph{Carrier channel model.}\;
        \(\mathcal I_{\mathrm{ch}}^{ij}\) is a carrier channel
        information model
        \textup{(cf.\ Definition~\ref{def:semantic-channel}(ii))}
        with input \(S_C\), output \(\hat S_C\), and carrier
        channel kernel
        \(W_{ij}:S_C\rightsquigarrow\hat S_C\).
  \item \emph{Receiver decoding model.}\;
        \(\mathcal I_{\mathrm{dec}}^{(j)}
          =\langle\hat C,T_{\hat C},\hat S_C,
                   \hat O^{(j)},T_{\hat O}^{(j)},S_O^{(j)},
                   R_{\mathcal E}^{\mathrm{dec},(j)}\rangle\)
        is a decoding information model
        \textup{(cf.\ Definition~\ref{def:semantic-channel}(iii))} with
        reconstructed space
        \(\hat S_O:=S_O^{(j)}\subseteq\mathbb{S}_O\) and
        enabling map
        \(\mathcal E_{\mathrm{dec}}^{(j)}:\hat S_C\Rightarrow S_O^{(j)}\).
  \item \emph{Kernels.}\;
        \(\kappa_{\enc}\in\mathcal K(\mathcal I^{(i)})\)
        is the encoding kernel and
        \(D\in\mathcal K(\mathcal I_{\mathrm{dec}}^{(j)})\)
        is the decoding kernel.
\end{enumerate}
The \emph{end-to-end kernel} is
\begin{equation}\label{eq:ksem-ij}
  \kappa_{\mathrm{sem}}^{ij}
  \;:=\;
  D\circ W_{ij}\circ\kappa_{\enc}
  \;:\;
  S_O^{(i)}\rightsquigarrow S_O^{(j)},
\end{equation}
and the \emph{end-to-end noise pair} is
\((S_O^{-},S_O^{+})=(S_{-}^{ij},S_{+}^{ij})\)
\textup{(Proposition~\ref{prop:overlap-partition}(vi))}.
\end{definition}

\begin{remark}[Inherited proof-system structure at the receiver]
\label{rem:receiver-ps-structure}
Since \(S_O^{(j)}\subseteq\mathbb{S}_O\) and the proof system
\(\mathsf{PS}\) acts on all of~\(\mathbb{S}_O\)
\textup{(Assumption~\ref{assump:common-ps})}, the receiver
inherits the deductive closure \(\Cn(S_O^{(j)})\), the
irredundant core \(A^{(j)}=\Atom(S_O^{(j)})\), and the
derivation-depth stratification
\(\Dd(\cdot\mid A^{(j)})\).
The semantic invariants
\(\mathsf{A}(\mathcal I^{(j)})=|A^{(j)}|\) and
\(\mathsf{D_d}(\mathcal I^{(j)})
  =\max_{q\in S_O^{(j)}}\Dd(q\mid A^{(j)})\)
are therefore well-defined and computable
\textup{(Theorem~\ref{thm:semantic-invariants})}.
In general, \(A^{(j)}\neq A^{(i)}\),
\(\Cn(S_O^{(j)})\neq\Cn(S_O^{(i)})\), and
\(\mathsf{D_d}(\mathcal I^{(j)})\neq\mathsf{D_d}(\mathcal I^{(i)})\);
the overlap decomposition
\textup{(Definition~\ref{def:overlap-decomposition})} quantifies
each of these discrepancies.
\end{remark}


\begin{assumption}[Full enabling (heterogeneous setting)]
\label{assump:full-enabling-hetero}
For the heterogeneous semantic channel
\(\mathfrak C^{ij}\) of
Definition~\ref{def:heterogeneous-channel}:
\begin{enumerate}[label=\textup{(FE\arabic*)}]
  \item \emph{Full encoding enabling:}\;
        \(\mathcal E^{(i)}(s_o)=S_C\) for every
        \(s_o\in S_O^{(i)}\).
  \item \emph{Full decoding enabling:}\;
        \(\mathcal E_{\mathrm{dec}}^{(j)}(\hat s_c)=S_O^{(j)}\)
        for every \(\hat s_c\in\hat S_C\).
\end{enumerate}
\end{assumption}

\begin{remark}[Role of the full enabling assumption]
\label{rem:full-enabling-role}
Under Assumption~\ref{assump:full-enabling-hetero}, all
vocabulary-mismatch effects are captured entirely by the noise pair
\((S_{-}^{ij},S_{+}^{ij})\).
This is the heterogeneous counterpart of the full enabling condition
in Theorem~\ref{thm:data-processing}\textup{(iii)}.
When the enabling is constrained, additional capacity reductions
follow from the data-processing bound
\textup{(Theorem~\ref{thm:data-processing}(i))}.
\end{remark}

\begin{remark}[Carrier alphabet size and semantic capacity]
\label{rem:carrier-size}
Assumption~\textup{(SA4)} requires
\(|S_C|\ge\max_i|S_O^{(i)}|\) and
\(|\hat S_C|\ge|S_C|\), ensuring that block codes over
\(S_C^n\) can represent any message set of
size~\(|S_O^{(i)}|\) for large enough~\(n\).
This does \emph{not}, however, imply the single-letter
capacity equality
\(C_{\mathrm{sem}}^{ij}=C(W_{ij})\), which requires the
reverse size condition \(|S_O|\ge|S_C|\) and
\(|\hat S_O|\ge|\hat S_C|\)
\textup{(Theorem~\ref{thm:data-processing}(iii))}.
Under~\textup{(SA4)}, only the data processing bound
\(C_{\mathrm{sem}}^{ij}\le C(W_{ij})\)
\textup{(Theorem~\ref{thm:data-processing}(i))} and the
source entropy bound
\(C_{\mathrm{sem}}^{ij}\le\log|S_O^{(i)}|\)
\textup{(Theorem~\ref{thm:data-processing}(ii))} are
guaranteed.
All achievability and converse results in this section
use~\(C(W_{ij})\) directly via block coding and are
unaffected by the single-letter capacity gap.
\end{remark}


\begin{assumption}[Standing assumptions for
  Section~\ref{sec:application}]
\label{assump:standing-hetero}
Throughout
Sections~\ref{subsec:app-assumptions}--\ref{subsec:app-results},
the following conditions are in force unless explicitly stated
otherwise:
\begin{enumerate}[label=\textup{(SA\arabic*)}]
  \item all standing assumptions of
        Sections~\ref{sec:model}--\ref{sec:channel}, including
        Assumptions~\ref{assump:ordered-structures},
        \ref{assump:semantic-sublanguage},
        \ref{assump:proof-system},
        \ref{assump:finite-so},
        \ref{assump:core-extractable},
        Axiom~\ref{ax:T-operator}, and
        Assumption~\ref{assump:semantic-universe};
  \item the common proof system assumption
        \textup{(Assumption~\ref{assump:common-ps})};
  \item the full enabling assumption
        \textup{(Assumption~\ref{assump:full-enabling-hetero})};
  \item the carrier channel satisfies \(C(W_{ij})>0\) and the
        carrier alphabet sizes satisfy
        \(|S_C|\ge\max_i|S_O^{(i)}|\) and
        \(|\hat S_C|\ge|S_C|\);
  \item the deductive independence of core elements
        \textup{(Assumption~\ref{assump:core-disjoint})} holds
        for the sender's knowledge base \(S_O^{(i)}\) when
        converse bounds are invoked.
\end{enumerate}
\end{assumption}


Table~\ref{tab:notation-iv} collects the notation introduced
in this subsection for convenient reference throughout
Section~\ref{sec:application}.

\begin{table}[t]
\centering
\caption{Notation summary for
  Section~\ref{sec:application}.
  All quantities are defined relative to a fixed
  sender--receiver pair~\((i,j)\).}
\label{tab:notation-iv}
\renewcommand{\arraystretch}{1.2}
\begin{tabularx}{\columnwidth}{@{}l L@{}}
\toprule
\textbf{Symbol} & \textbf{Meaning} \\
\midrule
\(S_O^{(i)}\) &
  Knowledge base (semantic state set) of agent~\(i\) \\
\(A^{(i)},\; J^{(i)}\) &
  Irredundant core and stored shortcuts of agent~\(i\) \\
\(S_{\cap}^{ij}\) &
  Common states:
  \(S_O^{(i)}\cap S_O^{(j)}\) \\
\(S_{-}^{ij}\) &
  Lost states:
  \(S_O^{(i)}\setminus S_O^{(j)}\)
  \((\,=S_O^{-})\) \\
\(S_{+}^{ij}\) &
  Surplus states:
  \(S_O^{(j)}\setminus S_O^{(i)}\)
  \((\,=S_O^{+})\) \\
\(A_{\cap}^{ij}\) &
  Preserved sender core:
  \(A^{(i)}\cap S_O^{(j)}\) \\
\(A_{-}^{ij}\) &
  Lost sender core:
  \(A^{(i)}\setminus S_O^{(j)}\) \\
\(S_{+,d}^{ij}\) &
  Derivable surplus:
  \(S_{+}^{ij}\cap\Cn(S_O^{(i)})\) \\
\(S_{+,n}^{ij}\) &
  Non-derivable surplus:
  \(S_{+}^{ij}\setminus\Cn(S_O^{(i)})\) \\
\(\mathfrak C^{ij}\) &
  Heterogeneous semantic channel from~\(i\) to~\(j\) \\
\(W_{ij}\) &
  Carrier channel kernel from~\(i\) to~\(j\) \\
\(\kappa_{\mathrm{sem}}^{ij}\) &
  End-to-end semantic kernel for pair \((i,j)\) \\
\bottomrule
\end{tabularx}
\end{table}

\subsection{Instantiation of Semantic Channel Invariants}
\label{subsec:app-theory}

This subsection applies the invariant machinery of
Section~\ref{sec:channel} to the heterogeneous pair~\((i,j)\),
expressing each invariant of
Theorem~\ref{thm:invariant-summary} as a function of the
overlap decomposition of Section~\ref{subsec:app-assumptions}.
Throughout, we fix a sender--receiver pair \((i,j)\) with the
heterogeneous semantic channel \(\mathfrak C^{ij}\) of
Definition~\ref{def:heterogeneous-channel}, under the standing
assumptions of Assumption~\ref{assump:standing-hetero}.


\begin{proposition}[Set-level invariants from overlap]
\label{prop:overlap-noise}
For the heterogeneous semantic channel \(\mathfrak C^{ij}\)
with noise pair
\((S_O^{-},S_O^{+})=(S_{-}^{ij},S_{+}^{ij})\)
\textup{(Proposition~\ref{prop:overlap-partition}(vi))}:
\begin{enumerate}[label=\textup{(\roman*)}]
  \item \emph{Preserved region:}\;
        \(\tilde S_O^{\cap}
          =S_O^{(i)}\cap S_O^{(j)}
          =S_{\cap}^{ij}\).
  \item \emph{Core preservation ratio:}\;
        \begin{equation}\label{eq:rho-overlap}
          \rho_{\Atom}\bigl(S_O^{(i)},S_O^{(j)}\bigr)
          \;=\;
          \frac{|A_{\cap}^{ij}|}{|A^{(i)}|}
          \;=\;
          1-\frac{|A_{-}^{ij}|}{|A^{(i)}|}.
        \end{equation}
        In particular, \(\rho_{\Atom}=1\) if and only if
        \(A_{-}^{ij}=\varnothing\).
  \item \emph{Spurious derivability:}\;
        \(S_O^{+}\subseteq\Cn(S_O^{(i)})\) if and only if
        \(S_{+,n}^{ij}=\varnothing\).
\end{enumerate}
\end{proposition}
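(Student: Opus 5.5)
The proof is a direct unfolding of definitions. We specialize Definition~\ref{def:noise-pair-indices-def}, Definition~\ref{def:atom-preservation} and Proposition~\ref{prop:noise-fidelity} to the pair $(S_O,\tilde S_O)=(S_O^{(i)},S_O^{(j)})$. We then substitute the identification $(S_O^-,S_O^+)=(S_{-}^{ij},S_{+}^{ij})$ from Proposition~\ref{prop:overlap-partition}(vi). No probabilistic ingredient enters, because all three claims are set-level. The steps follow the order of the statement.

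For~(i), the preserved region is defined as $\tilde S_O^{\cap}:=S_O\cap\tilde S_O=S_O\setminus S_O^-$. Here we set $S_O=S_O^{(i)}$ and $\tilde S_O=\hat S_O=S_O^{(j)}$. This gives $S_O^{(i)}\cap S_O^{(j)}$, which is $S_{\cap}^{ij}$ by \eqref{eq:overlap-common}. We also check consistency with the other description: $S_O^{(i)}\setminus S_{-}^{ij}=S_{\cap}^{ij}$ by the sender decomposition of Proposition~\ref{prop:overlap-partition}(ii).

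For~(ii), we use $\rho_{\Atom}(S_O^{(i)},S_O^{(j)})=|A^{(i)}\cap S_O^{(j)}|/|A^{(i)}|$, and the numerator is $|A_{\cap}^{ij}|$ by \eqref{eq:overlap-core-preserved}. The core partition of Proposition~\ref{prop:overlap-partition}(iv) gives $|A_{\cap}^{ij}|=|A^{(i)}|-|A_{-}^{ij}|$, which yields the second form in \eqref{eq:rho-overlap}. The convention $0/0:=1$ handles the degenerate case $A^{(i)}=\varnothing$, where $A_{-}^{ij}=\varnothing$ as well, so the ``iff'' still holds. The characterization $\rho_{\Atom}=1\iff A_{-}^{ij}=\varnothing$ then follows from the formula. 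Equivalently, it is Proposition~\ref{prop:noise-fidelity}(i) combined with $A^{(i)}\cap S_O^-=A^{(i)}\setminus S_O^{(j)}=A_{-}^{ij}$.

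For~(iii), we have $S_O^+=S_{+}^{ij}=S_{+,d}^{ij}\,\dot\cup\,S_{+,n}^{ij}$ by Proposition~\ref{prop:overlap-partition}(v). If $S_{+,n}^{ij}=\varnothing$, then $S_O^+=S_{+,d}^{ij}\subseteq\Cn(S_O^{(i)})$ by \eqref{eq:overlap-surplus-d}. Conversely, if $S_O^+\subseteq\Cn(S_O^{(i)})$, then every surplus element lies in $\Cn(S_O^{(i)})$, so $S_{+,n}^{ij}=S_{+}^{ij}\setminus\Cn(S_O^{(i)})=\varnothing$.

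I expect no real obstacle. The only point needing care is bookkeeping: the reference base must be the sender's $S_O^{(i)}$ and the noisy base the receiver's $S_O^{(j)}$, with the full-enabling assumption making $\hat S_O=S_O^{(j)}$ as in Definition~\ref{def:heterogeneous-channel}.
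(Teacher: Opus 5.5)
Your proof is correct and follows the paper's argument essentially step for step: (i) by unfolding the preserved region with $S_O^{-}=S_{-}^{ij}$ and $\tilde S_O=S_O^{(j)}$, (ii) from Definition~\ref{def:atom-preservation} together with the core partition of Proposition~\ref{prop:overlap-partition}(iv), and (iii) by identifying $S_{+,n}^{ij}=S_{+}^{ij}\setminus\Cn(S_O^{(i)})$ as exactly what must be empty for $S_{+}^{ij}\subseteq\Cn(S_O^{(i)})$. Two minor notes. Your treatment of the degenerate case $A^{(i)}=\varnothing$ goes slightly beyond the paper, and you are right to claim only the first form and the ``iff'' there, since the $0/0:=1$ convention does not rescue $1-|A_{-}^{ij}|/|A^{(i)}|$. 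Also, $\tilde S_O=S_O^{(j)}$ comes from Definitions~\ref{def:e2e-noise} and~\ref{def:heterogeneous-channel}, not from the full-enabling assumption.
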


\begin{proof}
\textup{(i)}:\
\(S_O^{-}=S_{-}^{ij}\)
\textup{(Proposition~\ref{prop:overlap-partition}(vi))}, so
\(\tilde S_O^{\cap}
  =S_O^{(i)}\setminus S_O^{-}
  =S_O^{(i)}\setminus(S_O^{(i)}\setminus S_O^{(j)})
  =S_O^{(i)}\cap S_O^{(j)}
  =S_{\cap}^{ij}\).

\smallskip\noindent
\textup{(ii)}:\
By Definition~\ref{def:atom-preservation},
\(\rho_{\Atom}
  =|A^{(i)}\cap S_O^{(j)}|/|A^{(i)}|
  =|A_{\cap}^{ij}|/|A^{(i)}|\).
By Proposition~\ref{prop:overlap-partition}(iv),
\(|A_{\cap}^{ij}|=|A^{(i)}|-|A_{-}^{ij}|\).

\smallskip\noindent
\textup{(iii)}:\
\(S_O^{+}=S_{+}^{ij}\) and
\(S_{+}^{ij}\subseteq\Cn(S_O^{(i)})\)
iff
\(S_{+}^{ij}\setminus\Cn(S_O^{(i)})=\varnothing\)
iff \(S_{+,n}^{ij}=\varnothing\).
\end{proof}


\begin{proposition}[Closure fidelity: necessary and sufficient
  conditions]
\label{prop:overlap-fidelity}
For the sender--receiver pair~\((i,j)\),
\[
  \mathsf{F}_{\Cn}\bigl(S_O^{(i)},\,S_O^{(j)}\bigr)=1
  \quad\Longleftrightarrow\quad
  \Cn\bigl(S_O^{(i)}\bigr)=\Cn\bigl(S_O^{(j)}\bigr),
\]
and this holds if and only if both of the following conditions
are satisfied:
\begin{enumerate}[label=\textup{(F\arabic*)}]
  \item \emph{Sender core derivable from receiver:}\;
        \(A^{(i)}\subseteq\Cn\bigl(S_O^{(j)}\bigr)\).
  \item \emph{No non-derivable surplus:}\;
        \(S_{+,n}^{ij}=\varnothing\)
        \textup{(}equivalently,
        \(S_{+}^{ij}\subseteq\Cn(S_O^{(i)})\)\textup{)}.
\end{enumerate}
\end{proposition}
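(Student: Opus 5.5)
The first equivalence will follow directly from Definition~\ref{def:closure-fidelity}. Both closures are finite subsets of the active-domain universe~\(\mathbb{S}_O\), since \(\Cn\) is taken within \(\mathcal L_{\mathrm{kb}}\) and restricted to ground atoms (Assumption~\ref{assump:finite-so}). Then \(|X\cap Y|/|X\cup Y|=1\) iff \(X\cap Y=X\cup Y\) iff \(X=Y\), with the convention \(0/0:=1\) handling the case where both closures are empty. So the substantive content is the second equivalence: \(\Cn(S_O^{(i)})=\Cn(S_O^{(j)})\) iff (F1) and (F2). We will prove the two inclusions separately. The tools are Proposition~\ref{prop:atom-core-correct}(i),(iv), the closure axioms (Cn1)--(Cn3), the receiver decomposition \(S_O^{(j)}=S_\cap^{ij}\dot\cup S_+^{ij}\) from Proposition~\ref{prop:overlap-partition}(iii), and Proposition~\ref{prop:overlap-noise}(iii), which identifies (F2) with \(S_+^{ij}\subseteq\Cn(S_O^{(i)})\).

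\emph{Sufficiency.} Assume (F1) and (F2). For \(\Cn(S_O^{(i)})\subseteq\Cn(S_O^{(j)})\), we have \(\Cn(S_O^{(i)})=\Cn(A^{(i)})\) by Proposition~\ref{prop:atom-core-correct}(i). By (F1), \(A^{(i)}\subseteq\Cn(S_O^{(j)})\), and then (Cn2) and (Cn3) give \(\Cn(A^{(i)})\subseteq\Cn(\Cn(S_O^{(j)}))=\Cn(S_O^{(j)})\). For the reverse inclusion, split \(S_O^{(j)}=S_\cap^{ij}\cup S_+^{ij}\). The common part satisfies \(S_\cap^{ij}\subseteq S_O^{(i)}\subseteq\Cn(S_O^{(i)})\) by (Cn1). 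The surplus part satisfies \(S_+^{ij}\subseteq\Cn(S_O^{(i)})\) by (F2). Hence \(S_O^{(j)}\subseteq\Cn(S_O^{(i)})\), and another application of (Cn2) and (Cn3) gives \(\Cn(S_O^{(j)})\subseteq\Cn(S_O^{(i)})\). This argument mirrors Proposition~\ref{prop:noise-fidelity}(iii), except that (F1) replaces the literal containment \(A\cap S_O^-=\varnothing\) by derivability.

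\emph{Necessity.} Assume \(\Cn(S_O^{(i)})=\Cn(S_O^{(j)})\). Then \(A^{(i)}\subseteq S_O^{(i)}\subseteq\Cn(S_O^{(i)})=\Cn(S_O^{(j)})\) by (Cn1), which is (F1). Likewise \(S_+^{ij}\subseteq S_O^{(j)}\subseteq\Cn(S_O^{(j)})=\Cn(S_O^{(i)})\), so \(S_{+,n}^{ij}=\varnothing\), which is (F2).

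The only step needing care is the boundary of the first equivalence, namely the \(0/0\) convention and the finiteness of the closures. Everything else is routine closure-operator bookkeeping. One point is worth noting after the proof: the statement relies on (F1) being weaker than literal core coverage \(A_-^{ij}=\varnothing\), so that a core element lost from the receiver's vocabulary is harmless if the receiver can still derive it. This is why the set-level criterion here differs from the operational condition used later in Theorem~\ref{thm:heterogeneous-closure}.
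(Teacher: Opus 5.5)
Your proposal is correct and follows the paper's proof essentially step for step. The first equivalence is read off from Definition~\ref{def:closure-fidelity}, sufficiency uses $\Cn(A^{(i)})=\Cn(S_O^{(i)})$ with \textup{(Cn2)}--\textup{(Cn3)} and the split $S_O^{(j)}=S_{\cap}^{ij}\cup S_{+}^{ij}$, and necessity is reflexivity. Your attention to the finiteness of the closures goes a bit beyond the paper, which builds the equivalence into the definition; strictly, that finiteness comes from \textup{(IC3)} (every $T_{\mathsf{PS}}$-iterate of $\Gamma$ stays inside $\Gamma\cup\mathbb{S}_O$) rather than from Assumption~\ref{assump:finite-so} alone.
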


\begin{proof}
The first equivalence is the definition of
\(\mathsf{F}_{\Cn}\)
\textup{(Definition~\ref{def:closure-fidelity}):}
\(\mathsf{F}_{\Cn}=1\) iff the Jaccard index of the two
closures equals~\(1\), i.e., the closures coincide.
It remains to show
\(\Cn(S_O^{(i)})=\Cn(S_O^{(j)})\) iff
\textup{(F1)}+\textup{(F2)}.

\smallskip\noindent
\emph{Sufficiency.}\;
From~\textup{(F1)}:
\(A^{(i)}\subseteq\Cn(S_O^{(j)})\).
By monotonicity~\textup{(Cn2)},
\(\Cn(A^{(i)})\subseteq\Cn(\Cn(S_O^{(j)}))
  =\Cn(S_O^{(j)})\)
\textup{(idempotence (Cn3))}. 
Since
\(\Cn(A^{(i)})=\Cn(S_O^{(i)})\)
\textup{(Proposition~\ref{prop:atom-core-correct}(i))},
\(\Cn(S_O^{(i)})\subseteq\Cn(S_O^{(j)})\).

From~\textup{(F2)}:
\(S_{+}^{ij}\subseteq\Cn(S_O^{(i)})\).
Since
\(S_{\cap}^{ij}\subseteq S_O^{(i)}
  \subseteq\Cn(S_O^{(i)})\)
\textup{(reflexivity (Cn1))},
\(S_O^{(j)}=S_{\cap}^{ij}\cup S_{+}^{ij}
  \subseteq\Cn(S_O^{(i)})\).
By monotonicity and idempotence,
\(\Cn(S_O^{(j)})\subseteq\Cn(S_O^{(i)})\).

Combining the two inclusions yields
\(\Cn(S_O^{(i)})=\Cn(S_O^{(j)})\).

\smallskip\noindent
\emph{Necessity.}\;
Suppose \(\Cn(S_O^{(i)})=\Cn(S_O^{(j)})\).

For~\textup{(F1)}:
\(A^{(i)}\subseteq\Cn(A^{(i)})
  =\Cn(S_O^{(i)})=\Cn(S_O^{(j)})\).

For~\textup{(F2)}:
\(S_{+}^{ij}\subseteq S_O^{(j)}
  \subseteq\Cn(S_O^{(j)})=\Cn(S_O^{(i)})\),
so \(S_{+,n}^{ij}
  =S_{+}^{ij}\setminus\Cn(S_O^{(i)})=\varnothing\).
\end{proof}

\begin{remark}[Strong vs.\ weak core coverage]
\label{rem:strong-weak-coverage}
Condition~\textup{(F1)} requires that each sender core element
be \emph{derivable from} the receiver's knowledge base; it
does \emph{not} require the element to be literally present
in~\(S_O^{(j)}\).
A strictly stronger condition is
\begin{enumerate}[label=\textup{(F1\('\))}]
  \item \(A_{-}^{ij}=\varnothing\), i.e.,
        \(A^{(i)}\subseteq S_O^{(j)}\).
\end{enumerate}
Condition~\textup{(F1\({}'\))} implies~\textup{(F1)} (since
\(S_O^{(j)}\subseteq\Cn(S_O^{(j)})\)) but not conversely:
a core element \(a\in A_{-}^{ij}\) may satisfy
\(a\in\Cn(S_O^{(j)})\) even though \(a\notin S_O^{(j)}\).

For \emph{set-level} closure fidelity, the weak
condition~\textup{(F1)} is both necessary and sufficient
\textup{(Proposition~\ref{prop:overlap-fidelity})}.
For \emph{operational} closure reliability via the two-layer
code of Theorem~\textup{\ref{thm:achievability}(ii)}, the
decoder outputs core elements directly and therefore requires
the strong condition~\textup{(F1\({}'\))} so that
\(A^{(i)}\subseteq S_O^{(j)}=\hat S_O\).
When only~\textup{(F1)} holds with
\(A_{-}^{ij}\neq\varnothing\), a more sophisticated decoding
strategy is needed; this is addressed in
Section~\ref{subsec:app-results}.
Throughout the remainder of this subsection, results are stated
under whichever version is required, with the distinction noted
explicitly.
\end{remark}

\begin{corollary}[Sufficient condition via overlap scalars]
\label{cor:overlap-sufficient}
If \(A_{-}^{ij}=\varnothing\) and
\(S_{+,n}^{ij}=\varnothing\), then
\(\mathsf{F}_{\Cn}(S_O^{(i)},S_O^{(j)})=1\).
This is the instantiation of
Proposition~\textup{\ref{prop:noise-fidelity}(iii)} in the
overlap language and the condition used in the achievability
results of Section~\textup{\ref{subsec:app-results}}.
\end{corollary}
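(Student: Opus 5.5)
The plan is to reduce the claim directly to Proposition~\ref{prop:overlap-fidelity}. That proposition says $\mathsf{F}_{\Cn}(S_O^{(i)},S_O^{(j)})=1$ holds exactly when (F1) and (F2) both hold. So it suffices to show that the two overlap-scalar hypotheses imply these two conditions.

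\emph{Step 1: derive (F1).} The hypothesis $A_{-}^{ij}=\varnothing$ means $A^{(i)}\setminus S_O^{(j)}=\varnothing$, i.e.\ $A^{(i)}\subseteq S_O^{(j)}$; this is the strong coverage condition (F1$'$) of Remark~\ref{rem:strong-weak-coverage}. Reflexivity (Cn1) then gives
\[
  A^{(i)}\subseteq S_O^{(j)}\subseteq\Cn\bigl(S_O^{(j)}\bigr),
\]
which is (F1).

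\emph{Step 2: derive (F2).} The hypothesis $S_{+,n}^{ij}=\varnothing$ is (F2) verbatim. By Proposition~\ref{prop:overlap-noise}(iii) it is equivalent to $S_O^{+}\subseteq\Cn(S_O^{(i)})$.

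\emph{Step 3: conclude.} The sufficiency direction of Proposition~\ref{prop:overlap-fidelity} now yields $\Cn(S_O^{(i)})=\Cn(S_O^{(j)})$, and hence $\mathsf{F}_{\Cn}=1$.

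For the remark that the corollary instantiates Proposition~\ref{prop:noise-fidelity}(iii), I would run an alternative route in parallel. Use the noise-pair identification $(S_O^-,S_O^+)=(S_{-}^{ij},S_{+}^{ij})$ from Proposition~\ref{prop:overlap-partition}(vi), with $\tilde S_O=S_O^{(j)}$ via part (iii) of that proposition. Then $A^{(i)}\cap S_O^-=A_{-}^{ij}=\varnothing$, using $A^{(i)}\subseteq S_O^{(i)}$. Together with $S_O^+\subseteq\Cn(S_O^{(i)})$ from Step~2, this gives the conclusion directly from Proposition~\ref{prop:noise-fidelity}(iii) applied to $S_O=S_O^{(i)}$.

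There is no real obstacle here. The only point needing care is the bookkeeping identity $\tilde S_O=(S_O^{(i)}\setminus S_{-}^{ij})\cup S_{+}^{ij}=S_{\cap}^{ij}\cup S_{+}^{ij}=S_O^{(j)}$, which follows from the partition identities of Proposition~\ref{prop:overlap-partition}(ii)--(iii).
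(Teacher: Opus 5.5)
Your main route (Steps 1--3) is exactly the paper's proof: $A_{-}^{ij}=\varnothing$ gives $A^{(i)}\subseteq S_O^{(j)}\subseteq\Cn(S_O^{(j)})$, i.e.\ (F1); the second hypothesis is (F2) verbatim; and the sufficiency direction of Proposition~\ref{prop:overlap-fidelity} finishes the argument. Your parallel route through Proposition~\ref{prop:noise-fidelity}(iii) is also valid. It uses $\tilde S_O=S_{\cap}^{ij}\cup S_{+}^{ij}=S_O^{(j)}$ and makes the advertised ``instantiation'' explicit, which the paper only asserts.
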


\begin{proof}
\(A_{-}^{ij}=\varnothing\) gives
\(A^{(i)}\subseteq S_O^{(j)}\subseteq\Cn(S_O^{(j)})\), so
\textup{(F1)} holds.
\(S_{+,n}^{ij}=\varnothing\) is \textup{(F2)}.
Apply Proposition~\ref{prop:overlap-fidelity}.
\end{proof}


\begin{proposition}[Noise-pair probabilistic indices from
  overlap]
\label{prop:overlap-noise-indices}
Let \(\mathfrak C^{ij}\) be a heterogeneous semantic channel
with kernel
\(\kappa_{\mathrm{sem}}^{ij}
  :S_O^{(i)}\rightsquigarrow S_O^{(j)}\).
\begin{enumerate}[label=\textup{(\roman*)}]
  \item \emph{Core preservation index:}\;
        \begin{equation}\label{eq:Phi-overlap}
          \Phi_{\Atom}(\mathfrak C^{ij})
          \;=\;
          \begin{cases}
            \displaystyle\min_{a\in A^{(i)}}
              \kappa_{\mathrm{sem}}^{ij}(a\mid a)
            & \text{if } A_{-}^{ij}=\varnothing,\\[6pt]
            0 & \text{if } A_{-}^{ij}\neq\varnothing.
          \end{cases}
        \end{equation}
  \item \emph{Spurious probability index:}\;
        \begin{equation}\label{eq:Psi-overlap}
          \Psi_{+}(\mathfrak C^{ij})
          \;=\;
          \max_{s_o\in S_O^{(i)}}\;
          \sum_{\hat s_o\in S_{+}^{ij}}
          \kappa_{\mathrm{sem}}^{ij}(\hat s_o\mid s_o).
        \end{equation}
        In particular, \(\Psi_{+}=0\) whenever
        \(S_{+}^{ij}=\varnothing\)
        \textup{(}i.e., \(S_O^{(j)}\subseteq S_O^{(i)}\)\textup{)}.
  \item \emph{Noiseless deterministic case:}\;
        If \(W_{ij}=\mathrm{id}_{S_C}\) and both
        \(\kappa_{\enc}\) and \(D\) are deterministic with
        induced end-to-end function
        \(f:=D\circ\mathrm{id}\circ\kappa_{\enc}
          :S_O^{(i)}\to S_O^{(j)}\), then
        \begin{align}
          \Phi_{\Atom}(\mathfrak C^{ij})
          &=
          \begin{cases}
            1 & \text{if } A_{-}^{ij}=\varnothing \\
              &  \text{ and } f(a)=a \;\forall a\in A^{(i)},\\
            0 & \text{otherwise},
          \end{cases}
          \label{eq:Phi-noiseless}\\[4pt]
          \Psi_{+}(\mathfrak C^{ij})
          &=
          \begin{cases}
            0 & \text{if } f(S_O^{(i)})\subseteq S_{\cap}^{ij},\\
            1 & \text{otherwise}.
          \end{cases}
          \label{eq:Psi-noiseless}
        \end{align}
\end{enumerate}
\end{proposition}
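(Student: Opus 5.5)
The plan is to unfold the definitions of \(\Phi_{\Atom}\) and \(\Psi_+\) (Definition~\ref{def:noise-pair-indices-def}) using the identification of the heterogeneous noise pair from Proposition~\ref{prop:overlap-partition}(vi): \(S_O^-=S_{-}^{ij}\) and \(S_O^+=S_{+}^{ij}\). For part~(i), the case split in the definition of \(\Phi_{\Atom}\) is on whether \(A\cap S_O^-=\varnothing\) with \(A=A^{(i)}\). Since \(A^{(i)}\subseteq S_O^{(i)}\), we have \(A^{(i)}\cap S_{-}^{ij}=A^{(i)}\setminus S_O^{(j)}=A_{-}^{ij}\), using Proposition~\ref{prop:overlap-partition}(vii). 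Hence the condition \(A\cap S_O^-=\varnothing\) is exactly \(A_{-}^{ij}=\varnothing\), and \eqref{eq:Phi-overlap} is literally the definition with \(\kappa_{\mathrm{sem}}\) replaced by \(\kappa_{\mathrm{sem}}^{ij}\). I would also note that in the first case each \(a\in A^{(i)}\) lies in \(S_O^{(j)}\), so \(\kappa_{\mathrm{sem}}^{ij}(a\mid a)\) is a well-defined entry of a kernel into \(S_O^{(j)}\).

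Part~(ii) is likewise direct. We substitute \(S_O^+=S_{+}^{ij}\) into the definition of \(\Psi_+\), and an empty sum gives \(\Psi_+=0\) when \(S_{+}^{ij}=\varnothing\). The equivalence \(S_{+}^{ij}=\varnothing\iff S_O^{(j)}\subseteq S_O^{(i)}\) is immediate from \eqref{eq:overlap-surplus}.

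For part~(iii), a deterministic end-to-end map means \(\kappa_{\mathrm{sem}}^{ij}(\hat s\mid s)=\mathbf{1}[\hat s=f(s)]\). This follows because the composition of deterministic kernels with the identity carrier is the deterministic kernel of the composed function.

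\begin{itemize}
\item \(\Phi_{\Atom}\): when \(A_{-}^{ij}=\varnothing\), each \(\kappa(a\mid a)\in\{0,1\}\), so the minimum is \(1\) iff \(f(a)=a\) for all \(a\in A^{(i)}\), and \(0\) otherwise. Combined with the \(A_{-}^{ij}\neq\varnothing\) branch, this gives \eqref{eq:Phi-noiseless}.
\item \(\Psi_+\): for each \(s\), the inner sum equals \(\mathbf{1}[f(s)\in S_{+}^{ij}]\), so the maximum is \(0\) iff \(f(S_O^{(i)})\cap S_{+}^{ij}=\varnothing\), and \(1\) otherwise. Since \(f\) maps into \(S_O^{(j)}=S_{\cap}^{ij}\,\dot\cup\,S_{+}^{ij}\) (Proposition~\ref{prop:overlap-partition}(iii)), avoiding \(S_{+}^{ij}\) is equivalent to \(f(S_O^{(i)})\subseteq S_{\cap}^{ij}\), which gives \eqref{eq:Psi-noiseless}.
\end{itemize}

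There is no real obstacle here. The only points needing care are the well-definedness of \(\kappa(a\mid a)\) in part~(i), which requires \(a\in\hat S_O\), and the use of the receiver decomposition in part~(iii) to convert "\(f\) misses \(S_{+}^{ij}\)" into "\(f(S_O^{(i)})\subseteq S_{\cap}^{ij}\)".
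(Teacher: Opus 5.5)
Your proposal is correct and matches the paper's proof essentially step for step. You identify the noise pair via Proposition~\ref{prop:overlap-partition}(vi), reduce the case split in $\Phi_{\Atom}$ to $A_{-}^{ij}=\varnothing$ (noting that $\kappa_{\mathrm{sem}}^{ij}(a\mid a)$ is then well defined), substitute $S_{+}^{ij}$ into $\Psi_+$, and read off the $\{0,1\}$-valued kernel entries in the deterministic case; your explicit use of the receiver decomposition to turn ``$f$ avoids $S_{+}^{ij}$'' into $f(S_O^{(i)})\subseteq S_{\cap}^{ij}$ spells out a step the paper leaves implicit.
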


\begin{proof}
\textup{(i)}:\
By Proposition~\ref{prop:overlap-partition}(vi),
\(A^{(i)}\cap S_O^{-}=A^{(i)}\cap S_{-}^{ij}=A_{-}^{ij}\).
When \(A_{-}^{ij}\neq\varnothing\),
Definition~\ref{def:noise-pair-indices-def} sets
\(\Phi_{\Atom}:=0\).
When \(A_{-}^{ij}=\varnothing\),
\(A^{(i)}\subseteq S_O^{(j)}=\tilde S_O\), so
\(\pi(a)=\kappa_{\mathrm{sem}}^{ij}(a\mid a)\) is
well-defined for every \(a\in A^{(i)}\) and
\(\Phi_{\Atom}=\min_{a}\pi(a)\).

\smallskip\noindent
\textup{(ii)}:\
Direct from Definition~\ref{def:noise-pair-indices-def} with
\(S_O^{+}=S_{+}^{ij}\).
When \(S_{+}^{ij}=\varnothing\), the sum is empty and
\(\Psi_{+}=0\).

\smallskip\noindent
\textup{(iii)}:\
When \(\kappa_{\mathrm{sem}}^{ij}\) is deterministic,
\(\kappa_{\mathrm{sem}}^{ij}(\hat s_o\mid s_o)
  \in\{0,1\}\) for all \(s_o,\hat s_o\).
For \(\Phi_{\Atom}\):
\(\pi(a)=\kappa_{\mathrm{sem}}^{ij}(a\mid a)=1\) iff
\(f(a)=a\), and this must hold for all \(a\in A^{(i)}\)
(which requires \(a\in S_O^{(j)}\), i.e.,
\(A_{-}^{ij}=\varnothing\)).
For \(\Psi_{+}\):
\(p_{+}(s_o)=\mathbf{1}[f(s_o)\in S_{+}^{ij}]\), which
is~\(0\) for all \(s_o\) iff
\(f(S_O^{(i)})\subseteq S_{\cap}^{ij}\), and~\(1\) for some
\(s_o\) otherwise; the maximum is therefore \(0\) or~\(1\).
\end{proof}

\begin{remark}[Kernel dependence of probabilistic indices]
\label{rem:kernel-dependence}
Unlike the set-level invariants
\(\rho_{\Atom}\) and \(\mathsf{F}_{\Cn}\), which depend only
on the knowledge-base pair
\((S_O^{(i)},S_O^{(j)})\), the probabilistic indices
\(\Phi_{\Atom}\) and \(\Psi_{+}\) depend on the channel
kernel \(\kappa_{\mathrm{sem}}^{ij}\) and hence on the
specific encoder, physical channel, and decoder.
The overlap decomposition constrains the \emph{range} of these
indices---\(\Phi_{\Atom}=0\) whenever
\(A_{-}^{ij}\neq\varnothing\), and
\(\Psi_{+}=0\) whenever
\(S_{+}^{ij}=\varnothing\)---but their precise values within
the feasible range are determined by the kernel.
\end{remark}


\begin{proposition}[Structural quality indices from overlap]
\label{prop:overlap-quality}
Let \(\mathfrak C^{ij}\) be a heterogeneous semantic channel
with kernel
\(\kappa_{\mathrm{sem}}^{ij}
  :S_O^{(i)}\rightsquigarrow S_O^{(j)}\)
and let \(A=A^{(i)}\).
\begin{enumerate}[label=\textup{(\roman*)}]
  \item \emph{Fidelity concentration under core-preserving
        overlap:}\;
        If \(A_{-}^{ij}=\varnothing\) and
        \(S_{+,n}^{ij}=\varnothing\), then by
        Corollary~\textup{\ref{cor:fidelity-core-concentration}},
        \begin{equation}\label{eq:F-overlap}
          \mathsf{F}(\mathfrak C^{ij})
          \;=\;
          1-\max_{a\in A^{(i)}}
            \bar d_{\Cn}(a\mid\mathfrak C^{ij}).
        \end{equation}
        That is, the worst-case closure distortion is attained
        at a sender core element; redundant states contribute
        zero.
  \item \emph{Depth expansion under vocabulary match:}\;
        If \(S_{-}^{ij}=S_{+}^{ij}=\varnothing\)
        \textup{(}i.e., \(S_O^{(i)}=S_O^{(j)}\)\textup{)},
        then every reachable \(\hat s_o\in S_O^{(j)}=S_O^{(i)}\)
        lies in~\(\Cn(A)\) and the depth distortion reduces to
        its first branch
        \textup{(Definition~\ref{def:depth-distortion})}.
        In particular,
        \(\mathsf{E}(\mathfrak C^{ij})=0\) if and only if
        \(\Dd(\hat s_o\mid A)=\Dd(s_o\mid A)\) holds
        \(\kappa_{\mathrm{sem}}^{ij}\)-almost surely for
        every \(s_o\in S_O^{(i)}\).
\end{enumerate}
\end{proposition}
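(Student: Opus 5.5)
Part~(i) reduces to Corollary~\ref{cor:fidelity-core-concentration}. The hypotheses \(A_{-}^{ij}=\varnothing\) and \(S_{+,n}^{ij}=\varnothing\) translate, through Proposition~\ref{prop:overlap-partition}(vi) and Proposition~\ref{prop:overlap-noise}(iii), into the two conditions \(A^{(i)}\cap S_O^{-}=\varnothing\) and \(S_O^{+}\subseteq\Cn(S_O^{(i)})\). These are exactly the hypotheses of Proposition~\ref{prop:noise-distortion-bounds} with \(S_O:=S_O^{(i)}\) and \(\tilde S_O:=S_O^{(j)}\).

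That proposition gives \(d_{\Cn}(s_o,\hat s_o\mid S_O^{(i)})=0\) for every redundant \(s_o\in J^{(i)}\) and every reachable \(\hat s_o\in S_O^{(j)}\). Reachability holds because \(\kappa_{\mathrm{sem}}^{ij}\) maps into \(S_O^{(j)}\). Hence \(\bar d_{\Cn}(s_o\mid\mathfrak C^{ij})=0\) on \(J^{(i)}\).

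The maximum in \eqref{eq:fidelity-index} is taken over \(S_O^{(i)}=A^{(i)}\,\dot\cup\,J^{(i)}\). The \(J^{(i)}\) terms are zero and all terms are nonnegative, so the maximum equals \(\max_{a\in A^{(i)}}\bar d_{\Cn}(a\mid\mathfrak C^{ij})\). If \(A^{(i)}\) is empty, both sides equal \(1\) under the convention \(\max\varnothing=0\). This yields \eqref{eq:F-overlap}.

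For part~(ii), assume \(S_O^{(j)}=S_O^{(i)}\). Every reachable output \(\hat s_o\) then lies in \(S_O^{(i)}\subseteq\Cn(A)\) by Proposition~\ref{prop:atom-core-correct}(iv). Consequently \(d_{\Dd}\) is given by its first branch, \(\min\bigl(|\Dd(s_o\mid A)-\Dd(\hat s_o\mid A)|/\max(d_{\max},1),1\bigr)\). All depths involved are finite by Lemma~\ref{lem:depth-properties}(i), so this quantity is a well-defined real number.

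The ``iff'' statement follows from nonnegativity. The index \(\mathsf{E}(\mathfrak C^{ij})\) is a maximum over \(s_o\) of nonnegative averages \(\sum_{\hat s_o}\kappa(\hat s_o\mid s_o)\,d_{\Dd}(s_o,\hat s_o)\). It therefore vanishes iff every term with \(\kappa(\hat s_o\mid s_o)>0\) has \(d_{\Dd}=0\). Since \(\max(d_{\max},1)>0\), the first-branch expression is zero exactly when \(\Dd(\hat s_o\mid A)=\Dd(s_o\mid A)\). This is the stated almost-sure condition.

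The only step needing care is the support bookkeeping in part~(i). The closure distortion is evaluated relative to the reference base \(S_O^{(i)}\) while outputs range over \(S_O^{(j)}\), so one must confirm that Proposition~\ref{prop:noise-distortion-bounds} is applied with the right reference base and the right reconstruction set. Everything else is direct unpacking of definitions.
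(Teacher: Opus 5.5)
Your proposal is correct and follows essentially the paper's route. In part~(i) you verify the hypotheses of Proposition~\ref{prop:noise-distortion-bounds} with $\tilde S_O=S_O^{(j)}$ (the paper reaches the same point via Corollary~\ref{cor:overlap-sufficient}), so the non-core terms vanish and the maximum in \eqref{eq:fidelity-index} collapses onto $A^{(i)}$; in part~(ii) you use Proposition~\ref{prop:atom-core-correct}(iv) to place every output in $\Cn(A)$, and your explicit nonnegativity argument for the ``iff'' is more complete than the paper's, which only cites Proposition~\ref{prop:noise-distortion-bounds} and Definition~\ref{def:fidelity-index} at that step.
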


\begin{proof}
\textup{(i)}:\
By Corollary~\ref{cor:overlap-sufficient},
\(\Cn(S_O^{(i)})=\Cn(S_O^{(j)})\) and
\(A^{(i)}\cap S_O^{-}=A_{-}^{ij}=\varnothing\).
Proposition~\ref{prop:noise-distortion-bounds} gives
\(d_{\Cn}(s_o,\hat s_o\mid S_O^{(i)})=0\) for every
\(s_o\in S_O^{(i)}\setminus A\) and
\(\hat s_o\in S_O^{(j)}\).
Hence
\(\max_{s_o\in S_O^{(i)}}
  \bar d_{\Cn}(s_o\mid\mathfrak C^{ij})
  =\max_{a\in A}
  \bar d_{\Cn}(a\mid\mathfrak C^{ij})\),
and the conclusion follows from
Definition~\ref{def:fidelity-index}.

\smallskip\noindent
\textup{(ii)}:\
When \(S_O^{(i)}=S_O^{(j)}\), the noise pair is trivial and
every \(\hat s_o\in S_O^{(j)}=S_O^{(i)}\) satisfies
\(\hat s_o\in\Cn(A)\)
\textup{(Proposition~\ref{prop:atom-core-correct}(iv))}.
The claim follows from
Proposition~\ref{prop:noise-distortion-bounds} and
Definition~\ref{def:fidelity-index}.
\end{proof}


\begin{proposition}[Receiver-side structural comparison from
  overlap]
\label{prop:overlap-structural}
Let \(\mathfrak C^{ij}\) be a heterogeneous semantic channel
with \(\tilde S_O=S_O^{(j)}\).
\begin{enumerate}[label=\textup{(\roman*)}]
  \item \emph{Atomicity shift:}\;
        \(\Delta\mathsf{A}(\mathfrak C^{ij})
          =|A^{(j)}|-|A^{(i)}|\).
  \item \emph{Depth shift:}\;
        \(\Delta\mathsf{D_d}(\mathfrak C^{ij})
          =\mathsf{D_d}(\mathcal I^{(j)})
          -\mathsf{D_d}(\mathcal I^{(i)})\).
  \item \emph{Under core-preserving conditions
        \textup{(}\(A_{-}^{ij}=\varnothing\),
        \(S_{+,n}^{ij}=\varnothing\)\textup{):}}
        \begin{enumerate}[label=\textup{(\alph*)}]
          \item \(\Cn(S_O^{(j)})=\Cn(S_O^{(i)})\)
                \textup{(Corollary~\ref{cor:overlap-sufficient})}.
          \item The set~\(A^{(i)}\) is an irredundant
                generating subset of~\(S_O^{(j)}\)
                for \(\Cn(S_O^{(j)})=\Cn(S_O^{(i)})\)
                \textup{(Proposition~\ref{prop:structural-comparison}(ii))}.
                In general,
                \(|A^{(j)}|\) may differ
                from~\(|A^{(i)}|\): surplus states
                in~\(S_{+,d}^{ij}\) can make sender core
                elements redundant in the receiver's
                vocabulary, potentially changing the
                canonical core size in either direction.
          \item Equality
                \(\Delta\mathsf{A}=0\) holds when
                \(S_{+,d}^{ij}=\varnothing\)
                \textup{(}equivalently,
                \(S_{+}^{ij}=\varnothing\), i.e.,
                \(S_O^{(j)}\subseteq S_O^{(i)}\)\textup{)},
                because then
                \(S_O^{(j)}=A^{(i)}\cup
                  (J^{(i)}\setminus S_{-}^{ij})\)
                \textup{(}noting that
                \(A_{-}^{ij}=\varnothing\) already implies
                \(S_{-}^{ij}\subseteq J^{(i)}\)\textup{)}
                and \(A^{(i)}\) remains irredundant
                in~\(S_O^{(j)}\).
        \end{enumerate}
  \item \emph{Trivial noise pair
        \textup{(}\(S_O^{(i)}=S_O^{(j)}\)\textup{):}}\;
        \(\Delta\mathsf{A}=0\) and
        \(\Delta\mathsf{D_d}=0\)
        \textup{(Proposition~\ref{prop:structural-comparison}(i))}.
\end{enumerate}
\end{proposition}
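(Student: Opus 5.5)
The plan is to read each item off the definitions plus the structural results already proved, using the identification of the heterogeneous channel with a noisy semantic base: $\tilde S_O=S_O^{(j)}$ and noise pair $(S_O^{-},S_O^{+})=(S_{-}^{ij},S_{+}^{ij})$ (Proposition~\ref{prop:overlap-partition}(vi)).

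For~(i) and~(ii), I would substitute $\tilde S_O=S_O^{(j)}$ into Definition~\ref{def:receiver-structural}. By Definition~\ref{def:agent-kb}, $\Atom(\tilde S_O)=\Atom(S_O^{(j)})=A^{(j)}$, which gives $\Delta\mathsf{A}=|A^{(j)}|-|A^{(i)}|$. The quantity $\max_{q\in S_O^{(j)}}\Dd(q\mid A^{(j)})$ is exactly $\mathsf{D_d}(\mathcal I^{(j)})$ by Definition~\ref{def:max-depth} (see also Remark~\ref{rem:receiver-ps-structure}). Hence $\Delta\mathsf{D_d}=\mathsf{D_d}(\mathcal I^{(j)})-\mathsf{D_d}(\mathcal I^{(i)})$. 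Both values are finite and computable by Theorem~\ref{thm:semantic-invariants}.

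For~(iii), I would work under $A_{-}^{ij}=\varnothing$ and $S_{+,n}^{ij}=\varnothing$.
\begin{itemize}
\item Part~(a) is Corollary~\ref{cor:overlap-sufficient} together with Proposition~\ref{prop:overlap-fidelity}.
\item For~(b), the hypotheses translate to $A^{(i)}\cap S_O^{-}=A_{-}^{ij}=\varnothing$ and $S_O^{+}=S_{+}^{ij}\subseteq\Cn(S_O^{(i)})$ (Proposition~\ref{prop:overlap-noise}(iii)). These are exactly the core-preserving conditions of Proposition~\ref{prop:structural-comparison}(ii), which directly yields that $A^{(i)}$ is an irredundant generating subset of $S_O^{(j)}$. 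The remark that $|A^{(j)}|$ may move in either direction is inherited verbatim from that proposition.
\item For~(c), I first note that $S_{+,n}^{ij}=\varnothing$ and the surplus partition (Proposition~\ref{prop:overlap-partition}(v)) give $S_{+}^{ij}=S_{+,d}^{ij}$. So $S_{+,d}^{ij}=\varnothing$ if and only if $S_{+}^{ij}=\varnothing$, if and only if $S_O^{(j)}\subseteq S_O^{(i)}$ by (iii) of the same proposition.
\item Next, $A_{-}^{ij}=\varnothing$ means $A^{(i)}\subseteq S_O^{(j)}$. Hence $S_{-}^{ij}\cap A^{(i)}=\varnothing$, i.e.\ $S_{-}^{ij}\subseteq J^{(i)}$, and so $S_O^{(j)}=S_O^{(i)}\setminus S_{-}^{ij}=A^{(i)}\cup(J^{(i)}\setminus S_{-}^{ij})$.
\item With $S_O^{+}=\varnothing$, the final clause of Proposition~\ref{prop:structural-comparison}(ii) gives $\Atom(S_O^{(j)})=A^{(i)}$, hence $\Delta\mathsf{A}=0$.
\end{itemize}

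Part~(iv) is Proposition~\ref{prop:structural-comparison}(i) with $\tilde S_O=S_O^{(i)}$.

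The only delicate step is~(c). The deletion-procedure argument in Proposition~\ref{prop:structural-comparison}(ii) scans $\tilde S_O$ in the \emph{same} canonical order used for $S_O$. So I must invoke (CP2) of Assumption~\ref{assump:common-ps}, which says the receiver's $\Atom$ is computed under the common ambient order. I would check explicitly that the scan-time inclusion $B_{\mathrm{scan}}\subseteq S_O^{(j)}\subseteq S_O^{(i)}$ is what makes each $a\in A^{(i)}$ survive. Each $s\in J^{(i)}\setminus S_{-}^{ij}$ is then removed because $A^{(i)}\subseteq B_{\mathrm{scan}}\setminus\{s\}$.
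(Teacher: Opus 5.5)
Your treatment of (i), (ii), (iii)(a)--(b), the set identities in (iii)(c), and (iv) is fine and matches the paper.

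The gap is in the step you flag as delicate. The inclusion $B_{\mathrm{scan}}\subseteq S_O^{(j)}\subseteq S_O^{(i)}$ only yields $\Cn(B_{\mathrm{scan}}\setminus\{a\})\subseteq\Cn(S_O^{(i)}\setminus\{a\})$. So it makes $a$ survive only if $a\notin\Cn(S_O^{(i)}\setminus\{a\})$, and membership in the canonical core does not give that.

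What you actually have is weaker. Proposition~\ref{prop:atom-core-correct}(ii) gives $a\notin\Cn(A^{(i)}\setminus\{a\})$. The deletion procedure gives $a\notin\Cn(B^{(i)}_a\setminus\{a\})$, where $B^{(i)}_a$ is the current set when $a$ is scanned in the run on $S_O^{(i)}$. But elements of $J^{(i)}$ deleted before $a$ may be derivable with the help of $a$ and may also derive $a$ back.

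Here is a concrete case. Take Horn rules $q(x)\leftarrow p(x)$ and $p(x)\leftarrow q(x)$, let $S_O^{(i)}=\{p(c),q(c)\}$, and put $p(c)$ first in the canonical order. The procedure deletes $p(c)$ and keeps $q(c)$, so $q(c)\in A^{(i)}$, although $q(c)\in\Cn(S_O^{(i)}\setminus\{q(c)\})$. The final clause of Proposition~\ref{prop:structural-comparison}(ii) is true. However, the justification printed there rests on this same invalid inference, so the explicit check you describe would not go through.

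The correct comparison is $B_{\mathrm{scan}}\setminus\{a\}\subseteq B^{(i)}_a\setminus\{a\}$, with $B^{(i)}_a=(A^{(i)}\cap\{x<a\})\cup\{x\in S_O^{(i)}:x\ge a\}$. This inclusion holds only once every element of $J^{(i)}\setminus S_{-}^{ij}$ preceding $a$ has already been deleted in the run on $S_O^{(j)}$. Those deletions in turn need all of $A^{(i)}$ to be present, via $j'\in\Cn(A^{(i)})\subseteq\Cn(B\setminus\{j'\})$, i.e.\ earlier core elements must have survived.

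So the two facts must be proved together by a simultaneous induction along the common canonical order. That is what the paper does for (iii)(c), instead of citing Proposition~\ref{prop:structural-comparison}(ii). The same induction also gives the reverse inclusion $\Atom(S_O^{(j)})\subseteq A^{(i)}$. Your appeal to (CP2) is right, and it is exactly what makes the two runs comparable.
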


\begin{proof}
Parts~\textup{(i)} and~\textup{(ii)} are direct from
Definition~\ref{def:receiver-structural} with
\(\tilde S_O=S_O^{(j)}\).

\smallskip\noindent
\textup{(iii)(a)}:
Corollary~\ref{cor:overlap-sufficient}.

\smallskip\noindent
\textup{(iii)(b)}:
By Proposition~\ref{prop:structural-comparison}(ii), applied
with \(A=A^{(i)}\), \(S_O^{-}=S_{-}^{ij}\),
\(S_O^{+}=S_{+}^{ij}\).
When \(S_{+,d}^{ij}\neq\varnothing\), some
\(d\in S_{+,d}^{ij}\subseteq\Cn(A^{(i)})\) may make a
previously irredundant \(a\in A^{(i)}\) redundant
in~\(S_O^{(j)}\) (if
\(a\in\Cn\bigl((S_O^{(j)}\setminus\{a\})\bigr)\) due to
the presence of~\(d\)).

\smallskip\noindent
\textup{(iii)(c)}:\
When \(S_{+,d}^{ij}=\varnothing\), together with
\(S_{+,n}^{ij}=\varnothing\) \textup{(}from the
hypothesis of~\textup{(iii))}, we have
\(S_{+}^{ij}=\varnothing\) and
\(S_O^{(j)}\subseteq S_O^{(i)}\).
Since \(A_{-}^{ij}=\varnothing\),
\(A^{(i)}\subseteq S_O^{(j)}\) and
\(S_{-}^{ij}\subseteq J^{(i)}\)
\textup{(}because
\(A^{(i)}\cap S_{-}^{ij}=A_{-}^{ij}=\varnothing\)\textup{)}.
Hence
\(S_O^{(j)}=A^{(i)}\cup(J^{(i)}\setminus S_{-}^{ij})\).

We show \(\Atom(S_O^{(j)})=A^{(i)}\) by a simultaneous induction on
the canonical order of~\(S_O^{(j)}\).
Let \(s_1<s_2<\cdots<s_m\) be the elements of~\(S_O^{(j)}\) in
canonical order, and let \(B_k\) denote the current set after
the irredundantization procedure
\textup{(Definition~\ref{def:atom-so})} has scanned
\(s_1,\ldots,s_k\).

\smallskip
\noindent\emph{Inductive claim.}\;
After scanning \(s_1,\ldots,s_k\):
\textup{(a)}~every \(a\in A^{(i)}\) with \(a\le s_k\) has
survived (remains in~\(B_k\)); and
\textup{(b)}~every \(j'\in J^{(i)}\setminus S_{-}^{ij}\) with
\(j'\le s_k\) has been removed.

The base case \(k=0\) is vacuous.
For the inductive step, suppose the claim holds through~\(s_k\)
and consider~\(s_{k+1}\).

\smallskip
\emph{Case~1: \(s_{k+1}=a\in A^{(i)}\).}\;
By the inductive hypothesis, every element of
\(J^{(i)}\setminus S_{-}^{ij}\) preceding~\(a\) has been
removed, and every element of~\(A^{(i)}\) preceding~\(a\) has
survived.
Hence the current set satisfies
\[
  B_k\setminus\{a\}
  \;=\;
  \bigl(A^{(i)}\cap\{s_1,\ldots,s_k\}\bigr)
  \;\cup\;
  \{s_{k+2},\ldots,s_m\}.
\]
In the irredundantization of~\(S_O^{(i)}\), when \(a\) was
scanned, the current set~\(B^{(i)}\) had precisely the same
structure---surviving core elements before~\(a\) plus all
elements of~\(S_O^{(i)}\) after~\(a\)---because elements of
\(J^{(i)}\) before~\(a\) were likewise removed at their own
scan steps (they are in~\(J^{(i)}\) by definition).
Since \(\{s_{k+2},\ldots,s_m\}\subseteq
\{\text{elements of }S_O^{(i)}\text{ after }a\}\)
(the former is a subset, possibly missing elements
of~\(S_{-}^{ij}\) after~\(a\)),
\(B_k\setminus\{a\}\subseteq B^{(i)}\setminus\{a\}\).
By monotonicity~\textup{(Cn2)},
\(\Cn(B_k\setminus\{a\})\subseteq\Cn(B^{(i)}\setminus\{a\})\).
Since \(a\) survived in the irredundantization of~\(S_O^{(i)}\),
\(a\notin\Cn(B^{(i)}\setminus\{a\})\), hence
\(a\notin\Cn(B_k\setminus\{a\})\), and \(a\) survives.

\smallskip
\emph{Case~2: \(s_{k+1}=j'\in J^{(i)}\setminus S_{-}^{ij}\).}\;
By the inductive hypothesis,
\(A^{(i)}\cap\{s_1,\ldots,s_k\}\subseteq B_k\), and all
elements of~\(A^{(i)}\) after~\(s_k\) are in~\(B_k\)
(not yet scanned).
Hence \(A^{(i)}\subseteq B_k\setminus\{j'\}\)
(noting \(j'\notin A^{(i)}\)).
By monotonicity,
\(\Cn(A^{(i)})\subseteq\Cn(B_k\setminus\{j'\})\).
Since
\(j'\in J^{(i)}\subseteq S_O^{(i)}\subseteq\Cn(A^{(i)})\)
\textup{(Proposition~\ref{prop:atom-core-correct}(iv))},
\(j'\in\Cn(B_k\setminus\{j'\})\), and \(j'\) is removed.

\smallskip
By induction, the output of the irredundantization
of~\(S_O^{(j)}\) retains exactly~\(A^{(i)}\).
For the reverse inclusion
\(\Atom(S_O^{(j)})\subseteq A^{(i)}\):
suppose for contradiction that
\(b\in\Atom(S_O^{(j)})\setminus A^{(i)}\).
Then \(b\in J^{(i)}\setminus S_{-}^{ij}\), but the induction
shows that every such element is removed---a contradiction.
Hence \(\Atom(S_O^{(j)})=A^{(i)}\) and
\(\Delta\mathsf{A}=0\).

\smallskip\noindent
\textup{(iv)}:
Immediate from
Proposition~\ref{prop:structural-comparison}(i).
\end{proof}


\begin{corollary}[Heterogeneous semantic Fano bound]
\label{cor:heterogeneous-fano}
Let \(P_O\in\Delta(S_O^{(i)})\) be full-support and let
\(\epsilon:=\bar d_H(\mathfrak C^{ij},P_O)\).
Then
\begin{equation}\label{eq:fano-hetero}
  I_{\mathrm{sem}}^{ij}(P_O,\mathfrak C^{ij})
  \;\ge\;
  H(\mathsf{S}_o)
  -h_b(\epsilon)
  -\epsilon\log\bigl(|S_O^{(i)}|-1\bigr),
\end{equation}
where \(h_b\) is the binary entropy.
Moreover, by
Proposition~\textup{\ref{prop:noise-pair-indices}(iii)},
\begin{equation}\label{eq:eps-hetero}
  \epsilon
  \;\le\;
  1-\Phi_{\Atom}(\mathfrak C^{ij})\cdot P_O(A^{(i)}).
\end{equation}
Consequently, when \(A_{-}^{ij}=\varnothing\) and
\(\Phi_{\Atom}(\mathfrak C^{ij})\) is close to~\(1\)
\textup{(}e.g., because the carrier channel is
reliable\textup{)}, the right-hand side
of~\eqref{eq:fano-hetero} is close to
\(H(\mathsf{S}_o)\), i.e., nearly all source entropy is
transmitted.
When \(A_{-}^{ij}\neq\varnothing\),
\(\Phi_{\Atom}=0\) by~\eqref{eq:Phi-overlap} and the bound
yields only the trivial lower bound
\(I_{\mathrm{sem}}^{ij}\ge 0\).
\end{corollary}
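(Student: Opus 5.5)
The corollary is an instantiation of two earlier results, and I will prove it in two steps. First, for the Fano bound \eqref{eq:fano-hetero}, I apply Theorem~\ref{thm:semantic-fano} to the heterogeneous channel $\mathfrak C^{ij}$. Its source alphabet is $S_O^{(i)}$ with full-support $P_O$, and its reconstruction alphabet is $\tilde S_O=S_O^{(j)}$. I use the fact stressed in that theorem: the Fano penalty depends only on the source alphabet size. So the standard Fano inequality gives $H(\mathsf S_o\mid\hat{\mathsf S}_o)\le h_b(\epsilon)+\epsilon\log(|S_O^{(i)}|-1)$, with $\epsilon=\Pr[\hat{\mathsf S}_o\neq\mathsf S_o]=\bar d_H(\mathfrak C^{ij},P_O)$. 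This holds even though $\hat{\mathsf S}_o$ may take values in $S_{+}^{ij}$, since any such output simply counts as an error. Subtracting from $H(\mathsf S_o)$ gives \eqref{eq:fano-hetero}.

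Second, the bound \eqref{eq:eps-hetero} follows from Proposition~\ref{prop:noise-pair-indices}(iii), applied with $A=A^{(i)}$ and noise pair $(S_{-}^{ij},S_{+}^{ij})$ (Proposition~\ref{prop:overlap-partition}(vi)). I need to check that the proposition's proof goes through here. It uses the identity $\bar d_H=1-\sum_{s\in\tilde S_O^{\cap}}P_O(s)\kappa(s\mid s)$, whose sum is over the preserved region, which here is $S_{\cap}^{ij}$ by Proposition~\ref{prop:overlap-noise}(i). This identity is valid because an input outside $S_O^{(j)}$ can never be reproduced exactly, so it contributes a Hamming error with probability one.

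For the consequences, I substitute Proposition~\ref{prop:overlap-noise-indices}(i). If $A_{-}^{ij}=\varnothing$ and $\Phi_{\Atom}\to 1$, then $\epsilon\le 1-P_O(A^{(i)})$. When $P_O$ is concentrated on the core, $\epsilon$ is close to $0$, so by continuity $h_b(\epsilon)+\epsilon\log(|S_O^{(i)}|-1)\to 0$ and the right side approaches $H(\mathsf S_o)$. I will note that the claim "close to $H(\mathsf S_o)$" implicitly requires $P_O(A^{(i)})$ near $1$, and state it that way. If $A_{-}^{ij}\neq\varnothing$, then $\Phi_{\Atom}=0$ and \eqref{eq:eps-hetero} gives only $\epsilon\le1$. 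The Fano right-hand side can then be negative, so the only guaranteed bound is the trivial $I\ge0$ from nonnegativity of mutual information.

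The main obstacle is this interpretive consequence, not the inequalities. The nontrivial-case claim must be stated with the extra hypothesis that $P_O(A^{(i)})$ is near $1$, or else phrased as the weaker $I\ge H-h_b(\epsilon)-\epsilon\log(|S_O^{(i)}|-1)$ with small $\epsilon$. I would state it in that qualified form.
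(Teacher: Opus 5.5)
Your route is the paper's own: apply Theorem~\ref{thm:semantic-fano} with source alphabet $S_O^{(i)}$ and $\tilde S_O=S_O^{(j)}$, then Proposition~\ref{prop:noise-pair-indices}(iii). Your handling of \eqref{eq:eps-hetero} is correct. So is your caveat that the ``consequently'' clause needs $P_O(A^{(i)})$ close to $1$.

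\textbf{The gap is the Fano step.} It sits exactly where you say outputs in $S_{+}^{ij}$ are harmless because they ``simply count as an error.'' With $E:=\mathbf{1}[\hat{\mathsf S}_o\neq\mathsf S_o]$, Fano's argument gives
\[
H(\mathsf S_o\mid\hat{\mathsf S}_o)\le h_b(\epsilon)+\sum_{\hat s}\Pr[E=1,\hat{\mathsf S}_o=\hat s]\,H(\mathsf S_o\mid E=1,\hat{\mathsf S}_o=\hat s).
\]
The $\log(|S_O^{(i)}|-1)$ refinement uses the fact that on $\{E=1,\hat{\mathsf S}_o=\hat s\}$ the source avoids $\hat s$. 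For $\hat s\in S_{+}^{ij}$ this rules out nothing in $S_O^{(i)}$, so only $\log|S_O^{(i)}|$ is available. The refined form needs the estimator to take values in the source alphabet.

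The inequality \eqref{eq:fano-hetero} really does fail whenever $S_{+}^{ij}\neq\varnothing$:
\begin{itemize}
\item Let the decoder output a fixed $y\in S_{+}^{ij}$ for every channel output; this is admissible under full decoding enabling. Take $P_O$ uniform on $m:=|S_O^{(i)}|\ge 2$ states. Then $I_{\mathrm{sem}}^{ij}=0$ and $\epsilon=1$, yet the right-hand side equals $\log m-h_b(1)-\log(m-1)=\log\frac{m}{m-1}>0$.
\item Violations with $\epsilon<1$ also occur. For $m=2$, let the end-to-end kernel reproduce the source with probability $0.1$ and otherwise output $y$. Then $I_{\mathrm{sem}}^{ij}=0.1<1-h_b(0.9)\approx 0.53$.
\end{itemize}
The paper's proof has the same defect. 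The assertion in Theorem~\ref{thm:semantic-fano} that the penalty is independent of $\tilde S_O$ is justified only when $\hat{\mathsf S}_o\in S_O$ almost surely.

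\textbf{What your argument does establish.} Split the error event by whether the output is a surplus state. This gives
\[
  I_{\mathrm{sem}}^{ij}\;\ge\;H(\mathsf S_o)-h_b(\epsilon)-\epsilon\log\bigl(|S_O^{(i)}|-1\bigr)-q_+\log\frac{|S_O^{(i)}|}{|S_O^{(i)}|-1},
\]
where $q_+:=\Pr[\hat{\mathsf S}_o\in S_{+}^{ij}]\le\Psi_{+}(\mathfrak C^{ij})$.
\begin{itemize}
\item Since $q_+\le\epsilon$, this implies the version with penalty $\epsilon\log|S_O^{(i)}|$.
\item It reduces to \eqref{eq:fano-hetero} when $q_+=0$, for instance when $\Psi_{+}(\mathfrak C^{ij})=0$ or $S_{+}^{ij}=\varnothing$.
\item The extra term vanishes as $\epsilon\to 0$. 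So your qualitative conclusions go through unchanged: the bound is near $H(\mathsf S_o)$ when $\Phi_{\Atom}\approx 1$ and $P_O(A^{(i)})\approx 1$, and only $I_{\mathrm{sem}}^{ij}\ge 0$ holds when $A_{-}^{ij}\neq\varnothing$.
\end{itemize}
You should either state the bound in this corrected form or add the hypothesis $q_+=0$.
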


\begin{proof}
Equation~\eqref{eq:fano-hetero} is
Theorem~\ref{thm:semantic-fano} applied with
source alphabet~\(S_O^{(i)}\) and
\(\tilde S_O=S_O^{(j)}\).
Equation~\eqref{eq:eps-hetero} is
Proposition~\ref{prop:noise-pair-indices}(iii).
When \(A_{-}^{ij}\neq\varnothing\),
\(\Phi_{\Atom}=0\) gives \(\epsilon\le 1\), so
\(h_b(\epsilon)+\epsilon\log(|S_O^{(i)}|-1)
  \le\log|S_O^{(i)}|\)
and the lower bound cannot exceed zero in a nontrivial way.
\end{proof}


\begin{proposition}[Semantic capacity from overlap]
\label{prop:overlap-capacity}
Let \(W_{ij}:S_C\rightsquigarrow\hat S_C\) be the carrier
channel kernel for the pair~\((i,j)\).
Under the full enabling assumption
\textup{(Assumption~\ref{assump:full-enabling-hetero})} and
\textup{(SA4)} of
Assumption~\textup{\ref{assump:standing-hetero}}:
\begin{enumerate}[label=\textup{(\roman*)}]
  \item \emph{Data processing chain:}\;
        \(I_{\mathrm{sem}}^{ij}
          \le C_{\mathrm{sem}}^{ij}(W_{ij})
          \le C(W_{ij})\).
  \item \emph{Source entropy bound:}\;
        \(C_{\mathrm{sem}}^{ij}(W_{ij})
          \le\log|S_O^{(i)}|\).
  \item \emph{Capacity equality
        \textup{(}reverse size condition\textup{):}}\;
        If additionally
        \(|S_O^{(i)}|\ge|S_C|\) and
        \(|S_O^{(j)}|\ge|\hat S_C|\), then
        \(C_{\mathrm{sem}}^{ij}(W_{ij})=C(W_{ij})\).
  \item \emph{Mutual information bound:}\;
        \(I_{\mathrm{sem}}^{ij}
          \le\min\bigl(\log|S_O^{(i)}|,\;
                       \log|S_O^{(j)}|\bigr)\).
\end{enumerate}
\end{proposition}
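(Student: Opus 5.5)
Parts~(i)--(iii) are the heterogeneous instantiation of Theorem~\ref{thm:data-processing}, so the plan is to check that the pairwise channel \(\mathfrak C^{ij}\) of Definition~\ref{def:heterogeneous-channel} satisfies that theorem's hypotheses, with \(S_O:=S_O^{(i)}\) and \(\hat S_O:=S_O^{(j)}\). Part~(iv) then follows from elementary entropy bounds.

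For~(i), fix any \(P_O\in\Delta(S_O^{(i)})\) and kernels \(\kappa_{\enc}\), \(D\). The end-to-end kernel~\eqref{eq:ksem-ij} is the composition \(D\circ W_{ij}\circ\kappa_{\enc}\). This gives the Markov chain \(\mathsf S_o\to\mathsf S_c\to\hat{\mathsf S}_c\to\hat{\mathsf S}_o\), and data processing yields \(I_{\mathrm{sem}}^{ij}\le I(\mathsf S_c;\hat{\mathsf S}_c)\le C(W_{ij})\). The first inequality \(I_{\mathrm{sem}}^{ij}\le C_{\mathrm{sem}}^{ij}\) holds because \(C_{\mathrm{sem}}^{ij}\) is the maximum in~\eqref{eq:C-sem} over a feasible set that contains the current \((P_O,\kappa_{\enc},D)\). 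For~(ii), every feasible triple satisfies \(I(\mathsf S_o;\hat{\mathsf S}_o)\le H(\mathsf S_o)\le\log|S_O^{(i)}|\), and we take the maximum.

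Part~(iii) is the only step with real content, and it is the main obstacle. I will reproduce the proof of Theorem~\ref{thm:data-processing}(iii). The full enabling of Assumption~\ref{assump:full-enabling-hetero} (FE1) makes the deterministic kernel of any surjection \(f:S_O^{(i)}\to S_C\) an enabling kernel, and such a surjection exists since \(|S_O^{(i)}|\ge|S_C|\). Likewise (FE2) makes any injection \(g:\hat S_C\to S_O^{(j)}\) admissible, and it exists since \(|S_O^{(j)}|\ge|\hat S_C|\). Choose \(P_O\) with \(f_\#P_O=P_C^*\), where \(P_C^*\) is capacity-achieving; this is possible because \(f\) is surjective, so we can split each \(P_C^*(c)\) over \(f^{-1}(c)\). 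Then:
\begin{itemize}
  \item invertibility of \(g\) on its image gives \(I(\mathsf S_o;g(\hat{\mathsf S}_c))=I(\mathsf S_o;\hat{\mathsf S}_c)\);
  \item determinism of the encoder gives \(I(\mathsf S_o;\hat{\mathsf S}_c)=I(\mathsf S_c;\hat{\mathsf S}_c)=C(W_{ij})\).
\end{itemize}
Combined with~(i), this gives equality. The care needed is to note that (SA4) imposes the opposite size inequalities, so the reverse conditions must be stated as additional hypotheses. I will flag that they are jointly consistent only when \(|S_O^{(i)}|=|S_C|\) (using \(|S_C|\ge\max_i|S_O^{(i)}|\)), and that this does not obstruct the argument.

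For~(iv), write \(I(\mathsf S_o;\hat{\mathsf S}_o)\le\min\bigl(H(\mathsf S_o),H(\hat{\mathsf S}_o)\bigr)\). Here \(\mathsf S_o\) takes values in \(S_O^{(i)}\). By (FE2) and the definition of the decoding model, \(\hat{\mathsf S}_o\) takes values in \(\hat S_O=S_O^{(j)}\). The two cardinality bounds on entropy then give the claim.
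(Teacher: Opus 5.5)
Your proposal is correct and follows the paper's proof. Parts~(i)--(iii) come from instantiating Theorem~\ref{thm:data-processing} with $S_O:=S_O^{(i)}$ and $\hat S_O:=S_O^{(j)}$, where full enabling makes the deterministic surjection and injection admissible kernels, and part~(iv) comes from $I(\mathsf S_o;\hat{\mathsf S}_o)\le\min\bigl(H(\mathsf S_o),H(\hat{\mathsf S}_o)\bigr)$; your inlined surjection/injection argument and your remark that (SA4) plus the reverse size condition forces $|S_O^{(i)}|=|S_C|$ (in fact also $|S_O^{(j)}|=|\hat S_C|=|S_C|$) are accurate refinements rather than departures.
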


\begin{proof}
Part~\textup{(i)} is
Theorem~\ref{thm:data-processing}(i).
Part~\textup{(ii)} is
Theorem~\ref{thm:data-processing}(ii).
Part~\textup{(iii)} follows from
Theorem~\ref{thm:data-processing}(iii) under the stated
size conditions, which ensure the existence of a
deterministic surjection
\(f:S_O^{(i)}\to S_C\) and a deterministic injection
\(g:\hat S_C\to S_O^{(j)}\).
Part~\textup{(iv)} follows from
\(I(\mathsf{S}_o;\hat{\mathsf{S}}_o)
  \le\min(H(\mathsf{S}_o),\,H(\hat{\mathsf{S}}_o))\).
\end{proof}


\begin{remark}[Invariant--overlap correspondence]
\label{rem:invariant-overlap-summary}
The six invariant families of
Theorem~\ref{thm:invariant-summary} partition into three tiers
of dependence on the overlap decomposition.
\emph{Tier~1 (knowledge-base pair only):}
the set-level invariants \(\rho_{\Atom}\), \(\mathsf{F}_{\Cn}\)
\textup{(Proposition~\ref{prop:overlap-noise})} and the
structural comparison indices \(\Delta\mathsf{A}\),
\(\Delta\mathsf{D_d}\)
\textup{(Proposition~\ref{prop:overlap-structural})} are fully
determined by \((S_O^{(i)},S_O^{(j)})\) and the proof
system~\(\mathsf{PS}\).
\emph{Tier~2 (knowledge-base pair \(+\) channel kernel):}
the noise-pair indices \(\Phi_{\Atom}\), \(\Psi_{+}\)
\textup{(Proposition~\ref{prop:overlap-noise-indices})} and the
quality indices \(\mathsf{F}\), \(\mathsf{E}\)
\textup{(Proposition~\ref{prop:overlap-quality})} are
\emph{constrained} by the overlap scalars
(e.g., \(\Phi_{\Atom}=0\) whenever
\(A_{-}^{ij}\neq\varnothing\);
\(\Psi_{+}=0\) whenever \(S_{+}^{ij}=\varnothing\))
but additionally depend on~\(\kappa_{\mathrm{sem}}^{ij}\).
\emph{Tier~3 (kernel \(+\) carrier channel):}
the information-theoretic invariants
\(I_{\mathrm{sem}}^{ij}\), \(C_{\mathrm{sem}}^{ij}\), \(C(W_{ij})\);
under full enabling and~\textup{(SA4)},
\(C_{\mathrm{sem}}^{ij}\le
  \min\bigl(C(W_{ij}),\,\log|S_O^{(i)}|\bigr)\)
\textup{(Proposition~\ref{prop:overlap-capacity}(i)--(ii))},
with equality under the reverse size condition
of~\textup{(iii)}.
\end{remark}

\begin{remark}[Diagnostic use]
\label{rem:diagnostic-use}
The three-tier correspondence enables a two-stage diagnostic
workflow: first, compute the overlap decomposition offline and test
the binary conditions \(A_{-}^{ij}=\varnothing\) and
\(S_{+,n}^{ij}=\varnothing\) to determine whether
\(\mathsf{F}_{\Cn}=1\) is achievable
\textup{(Corollary~\ref{cor:overlap-sufficient})};
second, given a specific channel kernel, compute
\(\Phi_{\Atom}\), \(\Psi_{+}\), \(\mathsf{F}\), \(\mathsf{E}\)
and apply the Fano bound
\textup{(Corollary~\ref{cor:heterogeneous-fano})} to obtain a
lower bound on~\(I_{\mathrm{sem}}^{ij}\).
\end{remark}

\subsection{Main Results: Heterogeneous Compression and Broadcast}
\label{subsec:app-results}

This subsection derives the main analytical results for the
heterogeneous setting.
Part~1 addresses the pairwise unicast scenario
(Definition~\ref{def:pairwise-scenario}):
closure-reliable achievability, a heterogeneous deductive
compression theorem, an impossibility result when core coverage
fails, and a vocabulary design criterion.
Part~2 extends the theory to the broadcast scenario
(Definition~\ref{def:broadcast-scenario}).
Throughout, the standing assumptions of
Assumption~\ref{assump:standing-hetero} are in force.

\smallskip
\noindent\textit{Two notions of closure fidelity.}\;
The \emph{set-level} closure fidelity
\(\mathsf{F}_{\Cn}(S_O^{(i)},S_O^{(j)})=1\)
\textup{(Definition~\ref{def:closure-fidelity})} requires
\(\Cn(S_O^{(i)})=\Cn(S_O^{(j)})\) and is independent of
any code; the \emph{closure error probability}
\(P_{e,\Cn}^{(n)}\to 0\)
\textup{(Definition~\ref{def:semantic-reliability}(ii))} is
a property of a specific \((n,M)\) code.
The former requires both~\textup{(H1)} and~\textup{(H2)};
the latter requires only~\textup{(H1)}
\textup{(}Remark~\ref{rem:H1-H2-roles}\textup{)}.


\subsubsection*{Part 1: Pairwise Heterogeneous Communication}

\begin{theorem}[Closure reliability for a heterogeneous pair]
\label{thm:heterogeneous-closure}
Let \((i,j)\) be a sender--receiver pair with heterogeneous
semantic channel \(\mathfrak C^{ij}\)
\textup{(Definition~\ref{def:heterogeneous-channel})},
carrier channel kernel
\(W_{ij}:S_C\rightsquigarrow\hat S_C\) with \(C(W_{ij})>0\),
and overlap decomposition
\textup{(Definition~\ref{def:overlap-decomposition})}.
Assume:
\begin{enumerate}[label=\textup{(H\arabic*)}]
  \item \(A_{-}^{ij}=\varnothing\)
        \textup{(}the sender's irredundant core is contained in
        the receiver's vocabulary:
        \(A^{(i)}\subseteq S_O^{(j)}\)\textup{)};
  \item \(S_{+,n}^{ij}=\varnothing\)
        \textup{(}all surplus states in the receiver's vocabulary
        are derivable from the sender's knowledge base:
        \(S_{+}^{ij}\subseteq\Cn(S_O^{(i)})\)\textup{)}.
\end{enumerate}
Then:
\begin{enumerate}[label=\textup{(\roman*)}]
  \item \emph{Set-level closure fidelity:}\;
        \(\mathsf{F}_{\Cn}(S_O^{(i)},S_O^{(j)})=1\)
        \textup{(Corollary~\ref{cor:overlap-sufficient})}.
  \item \emph{Achievability:}\;
        There exists a sequence of
        \((n,|S_O^{(i)}|)\) semantic block codes
        \textup{(Definition~\ref{def:semantic-codebook})}
        with message set \(\mathcal M=S_O^{(i)}\),
        encoding into \(S_C^n\), decoding into
        \(\hat S_O=S_O^{(j)}\), and
        \(P_{e,\Cn}^{(n)}\to 0\) as \(n\to\infty\), provided
        \begin{equation}\label{eq:hetero-rate-condition}
          \frac{\log|A^{(i)}|}{n}\;<\;C(W_{ij}).
        \end{equation}
  \item \emph{Converse
        \textup{(}under
        Assumption~\textup{\ref{assump:core-disjoint}}
        for \(S_O^{(i)}\) with output space
        \(S_O^{(j)}\)\textup{):}}\;
        Any \((n,|S_O^{(i)}|)\) code with
        \(P_{e,\Cn}^{(n)}\le\epsilon\) satisfies
        \begin{equation}\label{eq:hetero-converse}
          \log|A^{(i)}|
          \;\le\;
          \frac{nC(W_{ij})+1}{1-\epsilon}.
        \end{equation}
\end{enumerate}
\end{theorem}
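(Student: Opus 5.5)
The plan is to reduce each of the three parts to results already proved for the homogeneous setting. The only new ingredient is bookkeeping: the reconstruction alphabet is now \(\hat S_O=S_O^{(j)}\) rather than a superset of \(S_O^{(i)}\).

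\emph{Part (i).} This is immediate from Corollary~\ref{cor:overlap-sufficient}. Hypothesis (H1) is \(A_{-}^{ij}=\varnothing\), and (H2) is \(S_{+,n}^{ij}=\varnothing\), which gives \(\mathsf{F}_{\Cn}(S_O^{(i)},S_O^{(j)})=1\) through Proposition~\ref{prop:overlap-fidelity}.

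\emph{Part (ii).} I will rebuild the two-layer code of Theorem~\ref{thm:achievability}(ii), with \(S_O:=S_O^{(i)}\) as the reference base and \(\hat S_O:=S_O^{(j)}\). That theorem assumed \(\hat S_O\supseteq S_O\), which no longer holds, so I will check that its proof only uses \(A^{(i)}\subseteq\hat S_O\).

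\begin{itemize}
\item[\emph{Layer 1.}] Since \(\log|A^{(i)}|/n<C(W_{ij})\), Shannon's channel coding theorem gives \((n,|A^{(i)}|)\) codes \((f_n^A,g_n^A)\) over \(W_{ij}^{\otimes n}\) whose maximal error tends to zero. The carrier size condition (SA4) and full encoding enabling (FE1) make every codeword in \(S_C^n\) admissible.
\item[\emph{Decoder range.}] The decoder \(g_n^A\) outputs elements of \(A^{(i)}\). By (H1), these lie in \(S_O^{(j)}=\hat S_O\), and (FE2) makes the decoder an admissible decoding kernel.
\item[\emph{Layer 2.}] Each shortcut \(m\in J^{(i)}\) is mapped to \(f_n^A(a_0)\) for a fixed \(a_0\in A^{(i)}\).
\end{itemize}

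The reliability analysis then goes through verbatim.

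\begin{itemize}
\item[\emph{Core messages.}] For \(a\in A^{(i)}\), a closure error requires a decoding error, so its probability is at most \(P_e^{(n)}(A^{(i)})\).
\item[\emph{Shortcut messages.}] For \(m\in J^{(i)}\), every output \(\hat a\in A^{(i)}\subseteq\Cn(S_O^{(i)})=\Cn(S_O^{(i)}\setminus\{m\})\). The same (Cn1)--(Cn3) argument as in Theorem~\ref{thm:achievability}(ii) then gives \(d_{\Cn}(m,\hat a\mid S_O^{(i)})=0\) deterministically.
\end{itemize}

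Hence \(P_{e,\Cn}^{(n)}\le P_e^{(n)}(A^{(i)})\to 0\). I will note explicitly that (H2) is not needed for (ii); it matters only for the set-level statement (i). This is consistent with the remark that operational reliability needs only (H1).

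\emph{Part (iii).} I will repeat the proof of Theorem~\ref{thm:converse}(ii) with output space \(S_O^{(j)}\). Under Assumption~\ref{assump:core-disjoint} for \(S_O^{(i)}\), the sets \(R_{\Cn}(a)\subseteq S_O^{(j)}\) for \(a\in A^{(i)}\) are pairwise disjoint. Their preimages \(B_a=g_n^{-1}(R_{\Cn}(a))\subseteq\hat S_C^n\) are therefore pairwise disjoint, and each has probability at least \(1-\epsilon\) under \(W_{ij}^{\otimes n}(\cdot\mid f_n(a))\).

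Next I define the core estimator \(\hat{\mathsf A}=a\) on \(B_a\), with an arbitrary value elsewhere. With a uniform prior on \(A^{(i)}\), this gives \(\Pr[\hat{\mathsf A}\ne\mathsf A]\le\epsilon\). Fano's inequality and the data-processing chain \(\mathsf A\to S_C^n\to\hat S_C^n\to\hat{\mathsf A}\) then yield
\[
\log|A^{(i)}|\le nC(W_{ij})+1+\epsilon\log|A^{(i)}|,
\]
where the mutual-information step uses \(I(S_C^n;\hat S_C^n)\le nC(W_{ij})\) for a memoryless channel. Rearranging gives \eqref{eq:hetero-converse}.

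The main obstacle is the well-definedness check in (iii): disjointness must be understood relative to the output alphabet \(S_O^{(j)}\), which is exactly how the hypothesis of (iii) is phrased. Beyond that, I only need to confirm that nothing in the homogeneous proofs silently used \(\hat S_O\supseteq S_O\) beyond \(A\subseteq\hat S_O\).
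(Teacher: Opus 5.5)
Your proposal is correct and follows the paper's proof essentially step for step. Part (i) goes through Corollary~\ref{cor:overlap-sufficient}. Part (ii) uses the two-layer code with (H1) guaranteeing that the decoder's range $A^{(i)}$ lies in $S_O^{(j)}$, together with the deterministic zero closure distortion on shortcuts. Part (iii) is the disjoint-decoding-region Fano argument of Theorem~\ref{thm:converse}(ii), taken relative to the output space $S_O^{(j)}$. Your explicit core estimator and your observation that (H2) enters only in (i) are elaborations consistent with the paper's proof and its Remark~\ref{rem:H1-H2-roles}, not a different route.
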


\begin{proof}
\textup{(i)}:\
Immediate from Corollary~\ref{cor:overlap-sufficient}.

\smallskip\noindent
\textup{(ii)}:\
The code is constructed in two layers, adapting
Theorem~\ref{thm:achievability}(ii) to the heterogeneous
output alphabet \(\hat S_O=S_O^{(j)}\).

\emph{Layer~1 (core code).}\;
Since \(\log|A^{(i)}|/n<C(W_{ij})\), the classical channel
coding theorem
\cite{shannon1948mathematical,cover2006elements} yields an
\((n,|A^{(i)}|)\) block code
\((f_n^A,g_n^A)\) for~\(W_{ij}\) with message set
\(A^{(i)}\) and
\(P_e^{(n)}(A^{(i)})\to 0\).
By~\textup{(H1)},
\(A^{(i)}\subseteq S_O^{(j)}=\hat S_O\), so the decoder
can output elements of~\(A^{(i)}\).

\emph{Layer~2 (redundant extension).}\;
Fix an arbitrary \(a_0\in A^{(i)}\).
For each redundant state
\(j\in J^{(i)}=S_O^{(i)}\setminus A^{(i)}\), set
\(f_n(j):=f_n^A(a_0)\).
The decoder first applies~\(g_n^A\) to recover
\(\hat a\in A^{(i)}\) (or an incorrect element in the error
event), and outputs~\(\hat a\).

\emph{Closure analysis.}\;
For \(m\in A^{(i)}\): if the core code decodes correctly
(\(\hat a=m\)), then
\(d_{\Cn}(m,m\mid S_O^{(i)})=0\).
Error probability:
\(P_e^{(n)}(A^{(i)})\to 0\).

For \(m=j\in J^{(i)}\): the decoder outputs some
\(\hat a\in A^{(i)}\subseteq S_O^{(i)}\subseteq
\Cn(S_O^{(i)})\).
Since \(j\) is redundant in~\(S_O^{(i)}\),
\(\Cn(S_O^{(i)}\setminus\{j\})=\Cn(S_O^{(i)})\).
Because
\(\hat a\in\Cn(S_O^{(i)})=\Cn(S_O^{(i)}\setminus\{j\})\),
monotonicity and idempotence of~\(\Cn\) give
\(\Cn\bigl((S_O^{(i)}\setminus\{j\})\cup\{\hat a\}\bigr)
  =\Cn(S_O^{(i)}\setminus\{j\})=\Cn(S_O^{(i)})\),
so \(d_{\Cn}(j,\hat a\mid S_O^{(i)})=0\).
The closure error probability for redundant messages
is~\emph{zero} for all~\(n\).

Combining:
\(P_{e,\Cn}^{(n)}\le P_e^{(n)}(A^{(i)})\to 0\).

\smallskip\noindent
\textup{(iii)}:\
The argument is identical to the proof of
Theorem~\ref{thm:converse}(ii), with
\(S_O\) replaced by \(S_O^{(i)}\),
\(\hat S_O\) replaced by \(S_O^{(j)}\), and
\(A=\Atom(S_O^{(i)})=A^{(i)}\).
Under Assumption~\ref{assump:core-disjoint} (applied to
\(A^{(i)}\) with acceptable sets in~\(S_O^{(j)}\)),
the pairwise-disjoint decoding regions and the Fano argument
yield~\eqref{eq:hetero-converse}.
\end{proof}

\begin{remark}[Role of conditions \textup{(H1)} and
  \textup{(H2)}]
\label{rem:H1-H2-roles}
Condition~\textup{(H1)} is used only in the achievability
proof to ensure that the decoder can output core elements
(\(A^{(i)}\subseteq S_O^{(j)}\)).
Condition~\textup{(H2)} is used only to establish
set-level closure fidelity
\(\mathsf{F}_{\Cn}=1\) in part~\textup{(i)};
it does \emph{not} enter the achievability or converse
proofs, which depend only on the closure distortion
\(d_{\Cn}(\cdot,\cdot\mid S_O^{(i)})\) measured relative
to the \emph{sender's} knowledge base.
Hence the coding-theoretic conclusions~\textup{(ii)}
and~\textup{(iii)} hold under~\textup{(H1)} alone.
Condition~\textup{(H2)} provides the additional guarantee
that the receiver's overall knowledge base generates the
same deductive closure as the sender's.
\end{remark}

\begin{remark}[Weak core coverage and alternative decoding]
\label{rem:weak-F1-coding}
Theorem~\ref{thm:heterogeneous-closure} uses the strong core
coverage condition
\textup{(H1)}: \(A^{(i)}\subseteq S_O^{(j)}\).
As noted in Remark~\ref{rem:strong-weak-coverage}, set-level
closure fidelity \(\mathsf{F}_{\Cn}=1\) requires only the
weaker condition~\textup{(F1)}:
\(A^{(i)}\subseteq\Cn(S_O^{(j)})\).
When \textup{(F1)} holds but \textup{(H1)} fails
(i.e., some core element \(a\in A_{-}^{ij}\) is derivable from
\(S_O^{(j)}\) but not literally present), the two-layer code
cannot directly output~\(a\).
A modified decoder could instead output a
\emph{proxy element}
\(\hat a\in S_O^{(j)}\) satisfying
\(d_{\Cn}(a,\hat a\mid S_O^{(i)})=0\)---i.e., a state in
the receiver's vocabulary whose substitution for~\(a\) preserves
the sender's deductive closure.
Such a proxy exists whenever
\(a\in\Cn(S_O^{(j)})\), but identifying it requires
knowledge of the sender's closure structure at the decoder,
making the code design more involved.
A complete treatment of proxy-based decoding is deferred to
future work; the results of this section focus on the
operationally simpler setting where \textup{(H1)} holds.
The proxy-based decoding strategy shares conceptual
affinity with the inverse contextual reasoning of
Seo et~al.~\cite{seo2023bayesian}, who address the problem
of inferring a sender's communication context from noisy
observations using Bayesian methods.
\end{remark}

\begin{theorem}[Heterogeneous deductive compression]
\label{thm:heterogeneous-compression}
Under the hypotheses of
Theorem~\textup{\ref{thm:heterogeneous-closure}}, the
minimum blocklength for closure-reliable communication
of the full knowledge base \(S_O^{(i)}\) to agent~\(j\)
satisfies, for sufficiently small \(\epsilon>0\):
\begin{enumerate}[label=\textup{(\roman*)}]
  \item \emph{Closure blocklength:}\;
        \begin{align}\label{eq:hetero-blocklength-closure}
          \frac{(1-\epsilon)\log|A^{(i)}|-1}{C(W_{ij})}
          \;\le\;
          &n^*\bigl(S_O^{(i)},W_{ij},P_{e,\Cn},\epsilon\bigr) \nonumber \\
          \;\le\;
          &\left\lceil
            \frac{\log|A^{(i)}|}{C(W_{ij})-\delta(\epsilon)}
          \right\rceil,
        \end{align}
        where \(\delta(\epsilon)\to 0\) as \(\epsilon\to 0\).
  \item \emph{Hamming baseline:}\;
        Under the additional hypothesis
        \(S_O^{(i)}\subseteq S_O^{(j)}\)
        \textup{(}i.e., \(S_{-}^{ij}=\varnothing\),
        which strengthens \textup{(H1)} to full vocabulary
        containment\textup{)},
        \begin{equation}\label{eq:hetero-blocklength-Hamming}
          n^*\bigl(S_O^{(i)},W_{ij},P_e,\epsilon\bigr)
          \;\ge\;
          \frac{(1-\epsilon)\log|S_O^{(i)}|-1}{C(W_{ij})}.
        \end{equation}
  \item \emph{Deductive compression ratio:}\;
        When both bounds apply,
        \begin{equation}\label{eq:hetero-compression-ratio}
          \frac{n^*(P_{e,\Cn})}{n^*(P_e)}
          \;\approx\;
          \frac{\log|A^{(i)}|}{\log|S_O^{(i)}|}\,,
        \end{equation}
        identical to the homogeneous ratio of
        Corollary~\textup{\ref{cor:min-blocklength}}.
\end{enumerate}
\end{theorem}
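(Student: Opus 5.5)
The plan is to derive both sides of \eqref{eq:hetero-blocklength-closure} from Theorem~\ref{thm:heterogeneous-closure}, then obtain the Hamming baseline from the classical converse (Theorem~\ref{thm:converse}(i)), and finally take the quotient.

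\emph{Lower bound in (i).} Invoke the converse \eqref{eq:hetero-converse}, valid under Assumption~\ref{assump:core-disjoint} via (SA5). Any $(n,|S_O^{(i)}|)$ code with $P_{e,\Cn}^{(n)}\le\epsilon$ satisfies $(1-\epsilon)\log|A^{(i)}|\le nC(W_{ij})+1$. Solving for $n$ gives $n\ge((1-\epsilon)\log|A^{(i)}|-1)/C(W_{ij})$, and this holds for the minimizing blocklength $n^*$.

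\emph{Upper bound in (i).} Use the two-layer construction of Theorem~\ref{thm:heterogeneous-closure}(ii). (H1) guarantees that the core decoder's outputs lie in $S_O^{(j)}$. By the same argument as there, redundant messages incur zero closure error deterministically. Hence $P_{e,\Cn}^{(n)}\le P_e^{(n)}(A^{(i)})$, the maximal error of an $(n,|A^{(i)}|)$ channel code for $W_{ij}$.

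By the channel coding theorem, for a target error $\epsilon$ there is a backoff $\delta(\epsilon)\to0$ such that codes of rate at most $C(W_{ij})-\delta(\epsilon)$ reach maximal error at most $\epsilon$. Strictly, such a backoff exists only for sufficiently large $n$, so the statement's ``sufficiently small $\epsilon$'' is where this asymptotic caveat is absorbed. Choosing $n=\lceil\log|A^{(i)}|/(C(W_{ij})-\delta(\epsilon))\rceil$ then gives the upper bound. I expect this step to be the main obstacle: finite-blocklength error exponents are not available earlier in the paper. I would state $\delta(\epsilon)$ implicitly as the rate backoff from the coding theorem rather than derive it, and flag the statement as asymptotic in spirit.

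\emph{Part (ii).} Under $S_O^{(i)}\subseteq S_O^{(j)}$ every message is a legitimate decoder output, so Hamming reliability is a genuine $|S_O^{(i)}|$-message decoding problem. Theorem~\ref{thm:converse}(i) with $M=|S_O^{(i)}|$ gives $(1-\epsilon)\log|S_O^{(i)}|\le nC(W_{ij})+1$, which rearranges to \eqref{eq:hetero-blocklength-Hamming}. No core disjointness is needed here.

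\emph{Part (iii).} Let $\epsilon\to0$. The closure bounds in (i) then sandwich $n^*(P_{e,\Cn})$ as $\log|A^{(i)}|/C(W_{ij})$ up to additive $O(1)$ and vanishing multiplicative factors. On the Hamming side, the matching achievability of Theorem~\ref{thm:achievability}(i) pins $n^*(P_e)\approx\log|S_O^{(i)}|/C(W_{ij})$. Its code, constructed over $\hat S_C^n$ with outputs in $S_O^{(i)}\subseteq S_O^{(j)}$, transfers directly. The quotient makes $C(W_{ij})$ cancel and leaves $\log|A^{(i)}|/\log|S_O^{(i)}|$. Since neither $S_O^{(j)}$ nor the overlap enters this expression, it coincides with the homogeneous ratio of Corollary~\ref{cor:min-blocklength}. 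The ``$\approx$'' is understood as the ratio of leading terms as $\log|A^{(i)}|\to\infty$ and $\epsilon\to0$.
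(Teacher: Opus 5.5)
Your decomposition is the same as the paper's. Its proof of (i) just combines the achievability and converse parts of Theorem~\ref{thm:heterogeneous-closure}, gets (ii) from Theorem~\ref{thm:converse}(i) with $M=|S_O^{(i)}|$, and gets (iii) ``by dividing the bounds.'' Your lower bound in (i) and your part (ii) are correct. Your extra appeal to Theorem~\ref{thm:achievability}(i) to upper-bound $n^*(P_e)$ is something the paper's proof actually needs and omits: dividing a two-sided estimate of $n^*(P_{e,\Cn})$ by the one-sided bound \eqref{eq:hetero-blocklength-Hamming} only controls the ratio from above.

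The genuine gap is the upper bound in (i), and your patch runs the wrong way. Making $\epsilon$ small does not absorb the large-$n$ requirement of the coding theorem; it makes it worse.

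\begin{itemize}
\item Fix $|A^{(i)}|\ge 2$ and invoke Assumption~\ref{assump:core-disjoint}, which your lower bound already uses. The decoding regions $B_a$ of any closure-reliable code are then pairwise disjoint.
\item Each $B_{a'}$ is nonempty, since it must carry probability $1-\epsilon$ under $f_n(a')$. So for distinct core messages $a\neq a'$, the error probability of $a$ is at least $W_{ij}^{\otimes n}(B_{a'}\mid f_n(a))\ge w_{\min}^n$, where $w_{\min}:=\min_{x,y}W_{ij}(y\mid x)$.
\item Whenever $w_{\min}>0$, as for the $q$-ary symmetric channel of Section~\ref{sec:numerical-validation}, this forces $n^*\ge\log(1/\epsilon)/\log(1/w_{\min})\to\infty$ as $\epsilon\to 0$.
\item Meanwhile the right-hand side of \eqref{eq:hetero-blocklength-closure} tends to $\lceil\log|A^{(i)}|/C(W_{ij})\rceil$.
\end{itemize}

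The paper's own instance makes this concrete. With $|A^{(1)}|=4$ and $C(W)\approx 2.536$, the right-hand side equals $1$ as soon as $\delta(\epsilon)<0.536$. Yet no single channel use delivers four messages with maximal error below $1-0.9-1/90\approx 0.089$. So at fixed $|A^{(i)}|$ the upper bound is false for small $\epsilon$. The paper's one-line citation of Theorem~\ref{thm:heterogeneous-closure}(ii) does not supply it either, since that result is a statement about $n\to\infty$ with the message set held fixed.

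What does work is to make the bound asymptotic in the size of the core.

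\begin{itemize}
\item For each $\epsilon>0$ and $\delta\in(0,C(W_{ij}))$, the maximal-error coding theorem gives a threshold $n_0(\epsilon,\delta)$ beyond which codes with $\lceil 2^{n(C-\delta)}\rceil$ messages have maximal error at most $\epsilon$.
\item Feeding such a code into the two-layer construction gives the upper bound once $\log|A^{(i)}|\ge (C(W_{ij})-\delta)\,n_0(\epsilon,\delta)$. This is exactly the regime you invoke at the very end; make it explicit in (i) and drop the small-$\epsilon$ claim.
\item Run (iii) in this order. First fix $\epsilon$ and $\delta$.
\item Then let $\log|A^{(i)}|\to\infty$, which forces $\log|S_O^{(i)}|\to\infty$ as well.
\item Conclude that the ratio lies within factors $[(1-\epsilon)(C-\delta)/C]^{\pm1}(1+o(1))$ of $\log|A^{(i)}|/\log|S_O^{(i)}|$.
\item Only then send $\epsilon,\delta\to 0$. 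Taking $\epsilon\to0$ first at fixed $|A^{(i)}|$ breaks the argument, for the reason shown above.
\end{itemize}
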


\begin{proof}
Part~\textup{(i)} combines
Theorem~\ref{thm:heterogeneous-closure}(ii)
(upper bound) and~(iii) (lower bound).
Part~\textup{(ii)} is
Theorem~\ref{thm:converse}(i) applied with
\(M=|S_O^{(i)}|\); the condition
\(S_O^{(i)}\subseteq S_O^{(j)}\) ensures that the
Hamming criterion is meaningful (each sent state has a valid
identity reconstruction in the receiver's vocabulary).
Part~\textup{(iii)} follows by dividing the bounds.
\end{proof}

\begin{remark}[Heterogeneity does not degrade the compression
  ratio]
\label{rem:compression-invariance}
The deductive compression
ratio~\eqref{eq:hetero-compression-ratio} depends only on
the sender's knowledge-base structure
(\(|A^{(i)}|\) vs.\ \(|S_O^{(i)}|\)) and not on the
receiver's vocabulary~\(S_O^{(j)}\), provided the core
coverage condition~\textup{(H1)} holds.
This invariance is a consequence of the two-layer code
structure: the core sub-code operates identically regardless
of the receiver's surplus states, and the redundant extension
incurs zero closure distortion by the algebraic properties
of~\(\Cn\).

The Hamming baseline~\eqref{eq:hetero-blocklength-Hamming}
does, however, depend on the receiver's vocabulary: it
requires \(S_{-}^{ij}=\varnothing\) (full vocabulary
containment), a strictly stronger condition than~\textup{(H1)}.
When \(S_{-}^{ij}\neq\varnothing\), perfect Hamming
reconstruction is impossible (some sent states have no
counterpart in the receiver's vocabulary), while closure
reliability may still be achievable under~\textup{(H1)}.
This gap illustrates the advantage of semantic fidelity
criteria over symbol-level criteria in heterogeneous settings.
\end{remark}

\begin{corollary}[Impossibility under core loss]
\label{cor:heterogeneous-impossibility}
Let \((i,j)\) be a sender--receiver pair.
\begin{enumerate}[label=\textup{(\roman*)}]
  \item \emph{Set-level impossibility:}\;
        If condition~\textup{(F1)} of
        Proposition~\textup{\ref{prop:overlap-fidelity}} fails,
        i.e.,
        \(A^{(i)}\not\subseteq\Cn\bigl(S_O^{(j)}\bigr)\),
        then
        \(\mathsf{F}_{\Cn}(S_O^{(i)},S_O^{(j)})<1\).
        This is a property of the knowledge-base pair,
        independent of the channel, the blocklength, and the
        coding strategy.
  \item \emph{Quantitative bound:}\;
        Under the hypothesis of~\textup{(i)}, the closure
        fidelity satisfies
        \begin{equation}\label{eq:Fcn-bound}
          \mathsf{F}_{\Cn}\bigl(S_O^{(i)},S_O^{(j)}\bigr)
          \;=\;
          \frac{|\Cn(S_O^{(i)})\cap\Cn(S_O^{(j)})|}
               {|\Cn(S_O^{(i)})\cup\Cn(S_O^{(j)})|}\;<\;1.
        \end{equation}
  \item \emph{Core preservation ratio:}\;
        If \(A_{-}^{ij}\neq\varnothing\), then
        \(\rho_{\Atom}(S_O^{(i)},S_O^{(j)})
          =1-|A_{-}^{ij}|/|A^{(i)}|<1\)
        \textup{(Proposition~\ref{prop:overlap-noise}(ii))}.
\end{enumerate}
Similarly, if condition~\textup{(F2)} fails
\textup{(}\(S_{+,n}^{ij}\neq\varnothing\)\textup{)}, then
\(\mathsf{F}_{\Cn}<1\) regardless of any coding strategy.
\end{corollary}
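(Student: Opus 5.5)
The plan is to deduce all three parts from Proposition~\ref{prop:overlap-fidelity} and Proposition~\ref{prop:overlap-noise}, with no coding argument at all. The set-level fidelity \(\mathsf{F}_{\Cn}(S_O^{(i)},S_O^{(j)})\) is a function of the pair of closures alone. It therefore cannot depend on the channel \(W_{ij}\), the blocklength, or the encoder/decoder pair. The independence claim in~(i) is simply an observation about Definition~\ref{def:closure-fidelity}, and I would state it that way first.

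For~(i), I will argue by contraposition using the necessity half of Proposition~\ref{prop:overlap-fidelity}. Suppose \(\mathsf{F}_{\Cn}=1\). By the first equivalence of that proposition, \(\Cn(S_O^{(i)})=\Cn(S_O^{(j)})\). Reflexivity gives \(A^{(i)}\subseteq\Cn(A^{(i)})\). Proposition~\ref{prop:atom-core-correct}(i) gives \(\Cn(A^{(i)})=\Cn(S_O^{(i)})\). Chaining these, \(A^{(i)}\subseteq\Cn(S_O^{(j)})\), which is~(F1). Hence if~(F1) fails, \(\mathsf{F}_{\Cn}\neq1\). Because \(\mathsf{F}_{\Cn}\) is a Jaccard index, it lies in \([0,1]\), so \(\mathsf{F}_{\Cn}<1\). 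The final sentence about~(F2) follows by the same contraposition, using the (F2)-necessity half of the proof of Proposition~\ref{prop:overlap-fidelity}: \(S_{+}^{ij}\subseteq S_O^{(j)}\subseteq\Cn(S_O^{(j)})=\Cn(S_O^{(i)})\).

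Part~(ii) just writes out Definition~\ref{def:closure-fidelity} and combines it with~(i). The only care needed is the \(0/0\) convention: if both closures were empty, the fidelity would be \(1\). That case cannot occur here. If~(F1) fails, some \(a\in A^{(i)}\) lies in \(\Cn(S_O^{(i)})\) but not in \(\Cn(S_O^{(j)})\), so the union of the two closures is nonempty and the ratio is an honest quotient that is strictly less than~\(1\). This is the one point needing an explicit sentence; the rest is bookkeeping.

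Part~(iii) is Proposition~\ref{prop:overlap-noise}(ii) applied directly: \(\rho_{\Atom}=1-|A_{-}^{ij}|/|A^{(i)}|\). The hypothesis \(A_{-}^{ij}\neq\varnothing\) forces \(|A^{(i)}|\ge1\), so the \(0/0\) convention never arises and the ratio is strictly below~\(1\). I would also remark that~(iii) uses the strong condition~(F1\('\)), while~(i) uses the weak condition~(F1). As Remark~\ref{rem:strong-weak-coverage} explains, \(A_{-}^{ij}\neq\varnothing\) does not by itself imply \(\mathsf{F}_{\Cn}<1\), and the proof should avoid suggesting that it does.
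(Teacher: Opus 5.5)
Your proposal is correct and follows the paper's route: part~(i) and the (F2) claim come from contraposition of the necessity direction of Proposition~\ref{prop:overlap-fidelity}, part~(ii) from Definition~\ref{def:closure-fidelity}, and part~(iii) from Proposition~\ref{prop:overlap-noise}(ii). You also check explicitly that the union of the closures is nonempty and that \(|A^{(i)}|\ge 1\), so neither \(0/0\) convention applies; the paper leaves these small points implicit.
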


\begin{proof}
Part~\textup{(i)} is the contrapositive of
Proposition~\ref{prop:overlap-fidelity}.
Part~\textup{(ii)} is the definition of
\(\mathsf{F}_{\Cn}\)
(Definition~\ref{def:closure-fidelity}); the strict
inequality follows from~\textup{(i)}.
Part~\textup{(iii)} is
Proposition~\ref{prop:overlap-noise}(ii).
The final claim follows from
Proposition~\ref{prop:overlap-fidelity} (necessity
of~\textup{(F2)}).
\end{proof}

\begin{proposition}[Minimum receiver vocabulary for
  closure-reliable communication]
\label{prop:min-vocabulary}
Given a sender knowledge base \(S_O^{(i)}\), the
minimum-cardinality subset
\(V\subseteq S_O^{(i)}\) serving as a receiver vocabulary
(\(\hat S_O=V\)) that simultaneously achieves:
\begin{enumerate}[label=\textup{(\alph*)}]
  \item the two-layer code of
        Theorem~\textup{\ref{thm:heterogeneous-closure}(ii)}
        achieves \(P_{e,\Cn}^{(n)}\to 0\), and
  \item \(\mathsf{F}_{\Cn}(S_O^{(i)},V)=1\),
\end{enumerate}
is \(V^*=\Atom(S_O^{(i)})=A^{(i)}\), with
\(|V^*|=\mathsf{A}(\mathcal I^{(i)})\).
\end{proposition}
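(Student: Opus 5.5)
The proof has two parts. First I show that $V^*=A^{(i)}$ is feasible. Then I show that every feasible $V$ contains $A^{(i)}$, which gives both minimality and $|V^*|=|A^{(i)}|=\mathsf{A}(\mathcal I^{(i)})$.

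\emph{Feasibility.} Take $V=A^{(i)}\subseteq S_O^{(i)}$.
\begin{itemize}
\item Condition (H1) holds trivially, since $A^{(i)}\subseteq V$.
\item Condition (H2) holds because $V\subseteq S_O^{(i)}$ gives $S_+^{ij}=\varnothing$.
\item Theorem~\ref{thm:heterogeneous-closure}(ii) then yields (a). Remark~\ref{rem:H1-H2-roles} confirms that (H1) alone suffices here, because the decoder outputs only elements of $A^{(i)}=\hat S_O$.
\item Property (b) follows from Proposition~\ref{prop:atom-core-correct}(i): $\Cn(V)=\Cn(A^{(i)})=\Cn(S_O^{(i)})$, so $\mathsf{F}_{\Cn}=1$. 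Alternatively, cite Corollary~\ref{cor:overlap-sufficient}.
\end{itemize}

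\emph{Necessity, via (b).} Let $V\subseteq S_O^{(i)}$ satisfy (b). By Proposition~\ref{prop:overlap-fidelity}, $\Cn(V)=\Cn(S_O^{(i)})$. Suppose some $a\in A^{(i)}$ is missing from $V$. Then $V\subseteq S_O^{(i)}\setminus\{a\}$, and by monotonicity (Cn2), $\Cn(V)\subseteq\Cn(S_O^{(i)}\setminus\{a\})$.

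Next I need $a\notin\Cn(S_O^{(i)}\setminus\{a\})$. Proposition~\ref{prop:atom-core-correct}(ii) only gives irredundancy relative to $A^{(i)}\setminus\{a\}$, so this requires an argument. Let $B_{\mathrm{scan}}$ be the current set at the moment $a$ is scanned in Definition~\ref{def:atom-so}. Each element already deleted was redundant relative to its own current set. By the closure-preservation argument in the proof of Proposition~\ref{prop:atom-core-correct}, all deleted elements therefore lie in $\Cn(B_{\mathrm{scan}}\setminus\{a\})$. Applying (Cn2) and (Cn3) then gives $\Cn(S_O^{(i)}\setminus\{a\})=\Cn(B_{\mathrm{scan}}\setminus\{a\})$.

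I verify this by induction over earlier deletions. For each deleted $s$, $s\in\Cn(B_s\setminus\{s\})$. Moreover $B_s\setminus\{s\}\subseteq B_{\mathrm{scan}}\cup\{\text{later deletions before } a\}$, and the closure chain is preserved at every step.

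Since $a$ was retained, $a\notin\Cn(B_{\mathrm{scan}}\setminus\{a\})$. Hence $a\notin\Cn(V)$. But $a\in A^{(i)}\subseteq\Cn(S_O^{(i)})=\Cn(V)$, a contradiction. So every $V$ satisfying (b) contains $A^{(i)}$.

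\emph{Necessity, via (a).} Independently of (b), (a) already forces $A^{(i)}\subseteq V$ through (H1), because the two-layer decoder must output each core element literally. Either route gives the inclusion. I would present the (b) route as the substantive one.

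\emph{Uniqueness and main obstacle.} Every feasible $V$ satisfies $V\supseteq A^{(i)}$, so $|V|\ge|A^{(i)}|$. Equality forces $V=A^{(i)}$, which makes $V^*$ the unique minimizer.

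The main obstacle is the lemma that a canonical core element is irredundant relative to all of $S_O^{(i)}\setminus\{a\}$, not merely relative to the final core. The delicate point is the bookkeeping of the induction over the deletion order. Elements scanned after $a$ cause no trouble for this lemma, since the lemma concerns only the state at the time $a$ is scanned.
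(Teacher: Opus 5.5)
\textbf{Gap in the (b) route.} Your feasibility argument, together with your one-sentence ``necessity via (a)'', is exactly the paper's proof. The paper checks (H1), (H2) and $\Cn(A^{(i)})=\Cn(S_O^{(i)})$ for $V=A^{(i)}$. It gets minimality solely from the fact that the two-layer decoder outputs elements of $A^{(i)}$, so $\hat S_O=V$ must contain $A^{(i)}$.

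The problem is the route you call substantive. The lemma that every canonical core element satisfies $a\notin\Cn(S_O^{(i)}\setminus\{a\})$ is false. The step that breaks is ``all deleted elements lie in $\Cn(B_{\mathrm{scan}}\setminus\{a\})$''. An element $s$ deleted before $a$ is scanned was tested against $B_s\setminus\{s\}$, and that set still contains $a$. The closure-preservation argument therefore only places $s$ in $\Cn(B_{\mathrm{scan}})$, not in $\Cn(B_{\mathrm{scan}}\setminus\{a\})$; the derivation of $s$ may use $a$.

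Concretely, take $S_O=\{x,y\}$ with Horn rules $y\leftarrow x$ and $x\leftarrow y$, and canonical order $x<y$.
\begin{enumerate}
\item $x$ is deleted, since $x\in\Cn(\{y\})$.
\item $y$ is kept, since $y\notin\Cn(\varnothing)$. So $\Atom(S_O)=\{y\}$.
\item Yet $y\in\Cn(\{x\})=\Cn(S_O\setminus\{y\})$, and $V=\{x\}$ satisfies (b) without containing the core.
\end{enumerate}
Worse, take $S_O=\{x,y,z\}$ with rules $x\leftarrow y,z$; $y\leftarrow x$; $z\leftarrow x$, and order $x<y<z$. The procedure outputs $\{y,z\}$, while $V=\{x\}$ already has $\Cn(V)=\Cn(S_O)$. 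So (b) alone does not even give $|V|\ge|A^{(i)}|$. This matches the paper's own caveat that $\Atom(S_O)$ need not have minimum cardinality among generating subsets.

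\textbf{What to do instead.} Condition (a) is not a backup; it carries the entire minimality claim. Drop the (b) route and make the (H1)/literal-output argument the proof. That argument also yields uniqueness of the minimizer, as you state: every feasible $V$ contains $A^{(i)}$, so $|V|=|A^{(i)}|$ forces $V=A^{(i)}$.

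The property you tried to prove is asserted in passing in Remark~\ref{rem:closure-distortion-content}(b). The examples above show it does not follow from the deletion procedure and the $\Cn$ axioms. The paper's proof of this proposition does not rely on it, and yours should not either.
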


\begin{proof}
\emph{Sufficiency.}\;
Set \(V=A^{(i)}\).
Since \(A^{(i)}\subseteq S_O^{(i)}\), the overlap
with sender \(i\) and ``receiver'' \(V\) gives
\(S_{+}^{ij}=V\setminus S_O^{(i)}=\varnothing\) and
\(A_{-}^{ij}=A^{(i)}\setminus V=\varnothing\)
(condition~\textup{(H1)}).
Condition~\textup{(H2)} holds trivially since
\(S_{+}^{ij}=\varnothing\).
Closure fidelity:
\(\Cn(A^{(i)})=\Cn(S_O^{(i)})\)
\textup{(Proposition~\ref{prop:atom-core-correct}(i))},
giving \(\mathsf{F}_{\Cn}(S_O^{(i)},A^{(i)})=1\).
The two-layer code of
Theorem~\ref{thm:heterogeneous-closure}(ii) applies
with \(\hat S_O=V=A^{(i)}\).

\smallskip\noindent
\emph{Minimality.}\;
Let \(V\subseteq S_O^{(i)}\) satisfy both
conditions~\textup{(a)} and~\textup{(b)}.
Condition~\textup{(a)} requires the two-layer code
of Theorem~\ref{thm:heterogeneous-closure}(ii) to
succeed.
That code's decoder outputs elements of~\(A^{(i)}\),
so the output alphabet \(\hat S_O=V\) must contain
every core element: \(A^{(i)}\subseteq V\).
Hence \(|V|\ge|A^{(i)}|\).
Since \(V=A^{(i)}\) achieves this bound, it is
minimal.
\end{proof}

\begin{remark}[Vocabulary design rule]
\label{rem:vocabulary-design}
Proposition~\ref{prop:min-vocabulary} yields a principled
vocabulary-selection rule for receiver design:
\emph{the receiver need store only the sender's irredundant
core}.
All remaining semantic states (the sender's stored
shortcuts~\(J^{(i)}\)) can be reconstructed by the receiver's
inference engine via \(\Cn(A^{(i)})\).
The channel-use cost of this strategy is
\(n^*\approx\log|A^{(i)}|/C(W_{ij})\), the minimum
achievable under closure reliability.

When the receiver already maintains a richer vocabulary
\(S_O^{(j)}\supsetneq A^{(i)}\), the additional states are
harmless provided \(S_{+,n}^{ij}=\varnothing\)
(condition~\textup{(H2)}); they do not increase the
blocklength.
When some surplus states are non-derivable
(\(S_{+,n}^{ij}\neq\varnothing\)), set-level closure fidelity
drops below~\(1\)
(Corollary~\ref{cor:heterogeneous-impossibility}), but the
coding-theoretic closure reliability may still hold if the
two-layer code is used (since it ignores the surplus entirely;
see Remark~\ref{rem:H1-H2-roles}).
\end{remark}


\subsubsection*{Part 2: Broadcast Extension}

We now extend the pairwise results to the broadcast scenario
of Definition~\ref{def:broadcast-scenario}.
Agent~\(0\) (the sender) communicates its knowledge base
\(S_O^{(0)}\) to \(K\) receivers over a common carrier
channel \(W:S_C\rightsquigarrow\hat S_C\) with
\(C(W)>0\).

\begin{theorem}[Broadcast deductive compression]
\label{thm:broadcast-compression}
Suppose that for every receiver
\(j\in\{1,\ldots,K\}\), the overlap conditions hold:
\begin{enumerate}[label=\textup{(BH\arabic*)}]
  \item \(A_{-}^{0j}=\varnothing\)
        \textup{(}the sender's core is contained in every
        receiver's vocabulary:
        \(A^{(0)}\subseteq S_O^{(j)}\)\textup{)};
  \item \(S_{+,n}^{0j}=\varnothing\)
        \textup{(}every receiver's surplus is derivable from
        the sender:
        \(S_{+}^{0j}\subseteq\Cn(S_O^{(0)})\)\textup{)}.
\end{enumerate}
Then:
\begin{enumerate}[label=\textup{(\roman*)}]
  \item \emph{Simultaneous closure fidelity:}\;
        \(\mathsf{F}_{\Cn}(S_O^{(0)},S_O^{(j)})=1\) for
        every \(j\in\{1,\ldots,K\}\).
  \item \emph{Broadcast achievability:}\;
        There exists a \emph{single} sequence of
        \((n,|S_O^{(0)}|)\) semantic block codes
        (with a common encoding function~\(f_n\)) such that
        \(P_{e,\Cn}^{(n,j)}\to 0\) simultaneously for all
        receivers \(j\in\{1,\ldots,K\}\), provided
        \begin{equation}\label{eq:broadcast-rate}
          \frac{\log|A^{(0)}|}{n}\;<\;C(W).
        \end{equation}
  \item \emph{Blocklength independence from \(K\):}\;
        The minimum blocklength for broadcast closure
        reliability is
        \begin{equation}\label{eq:broadcast-blocklength}
          n^*_{\mathrm{bc}}
          \;\approx\;
          \frac{\log|A^{(0)}|}{C(W)},
        \end{equation}
        independent of the number of receivers~\(K\).
  \item \emph{Broadcast converse:}\;
        Under
        Assumption~\textup{\ref{assump:core-disjoint}} for
        \(S_O^{(0)}\) with output space \(S_O^{(j)}\) for
        each~\(j\), any code achieving
        \(\max_j P_{e,\Cn}^{(n,j)}\le\epsilon\) satisfies
        \(\log|A^{(0)}|\le(nC(W)+1)/(1-\epsilon)\).
\end{enumerate}
\end{theorem}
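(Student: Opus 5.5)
Parts~(i), (ii) and~(iv) reduce to the pairwise theory applied receiver by receiver; part~(iii) then combines them. For~(i), fix any~$j$. Hypotheses (BH1)--(BH2) are exactly (H1)--(H2) of Theorem~\ref{thm:heterogeneous-closure} for the pair~$(0,j)$. Hence Corollary~\ref{cor:overlap-sufficient} gives $\mathsf{F}_{\Cn}(S_O^{(0)},S_O^{(j)})=1$. Since this holds for each~$j$ separately, simultaneous fidelity is immediate.

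For~(ii), I will build one two-layer code, as in Theorem~\ref{thm:heterogeneous-closure}(ii), that does not depend on~$j$. Layer~1 is an $(n,|A^{(0)}|)$ channel code $(f_n^A,g_n^A)$ for~$W$, which exists when~\eqref{eq:broadcast-rate} holds, with $P_e^{(n)}(A^{(0)})\to 0$. Layer~2 fixes $a_0\in A^{(0)}$, sets $f_n(m):=f_n^A(m)$ for $m\in A^{(0)}$, and sets $f_n(m):=f_n^A(a_0)$ for $m\in J^{(0)}$. Each receiver~$j$ applies the same decoder $g_n^A$ to the common output $\hat S_C^n$ (Definition~\ref{def:broadcast-scenario}). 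By (BH1) the output lies in $A^{(0)}\subseteq S_O^{(j)}$, so it is a legal element of receiver~$j$'s alphabet; this is the only place the receiver's vocabulary enters.

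The closure analysis is the same for every receiver, because $d_{\Cn}(\cdot,\cdot\mid S_O^{(0)})$ is measured against the sender's base. Core messages fail only on a decoding error. Redundant messages incur zero closure distortion deterministically, by the argument in the proof of Theorem~\ref{thm:achievability}(ii). Hence $P_{e,\Cn}^{(n,j)}\le P_e^{(n)}(A^{(0)})$ for every~$j$. The bound is uniform in~$j$ and in~$K$, so the convergence is simultaneous. Because all receivers see the same physical output, no union bound over receivers is needed; this is what makes the rate independent of~$K$.

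For~(iv), fix any receiver~$j$. A code with $\max_j P_{e,\Cn}^{(n,j)}\le\epsilon$ has $P_{e,\Cn}^{(n,j)}\le\epsilon$ for that~$j$. Theorem~\ref{thm:heterogeneous-closure}(iii), under Assumption~\ref{assump:core-disjoint} with output space $S_O^{(j)}$, then gives $\log|A^{(0)}|\le(nC(W)+1)/(1-\epsilon)$; a single receiver suffices. For~(iii), I combine~(ii) and~(iv) as in Theorem~\ref{thm:heterogeneous-compression}(i). The result sandwiches $n^*_{\mathrm{bc}}$ between $((1-\epsilon)\log|A^{(0)}|-1)/C(W)$ and $\lceil\log|A^{(0)}|/(C(W)-\delta(\epsilon))\rceil$. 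Neither bound involves~$K$, which gives~\eqref{eq:broadcast-blocklength}.

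The main obstacle is the bookkeeping in~(ii). I must check that a single decoder is admissible for every receiver's decoding model; full decoding enabling (FE2) makes any map into $S_O^{(j)}$ enabled, and I must confirm that $g_n^A$ lands in $\bigcap_j S_O^{(j)}\supseteq A^{(0)}$. The other point to check is that the error bound is uniform over~$j$ rather than merely pointwise.
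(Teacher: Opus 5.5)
Your proposal is correct and follows the paper's proof essentially step for step. For (i) you reduce each receiver to (H1)--(H2); for (ii) you build a single two-layer code in which every receiver applies the common core decoder $g_n^A$; and for (iv) you apply the pairwise converse to one fixed receiver. For (iii) you make the paper's informal ``the blocklength is determined by the core code'' explicit by sandwiching $n^*_{\mathrm{bc}}$ via (ii) and (iv), as in Theorem~\ref{thm:heterogeneous-compression}(i); this is a harmless refinement, though its lower half relies on Assumption~\ref{assump:core-disjoint}.
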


\begin{proof}
\textup{(i)}:\
For each~\(j\), conditions \textup{(BH1)}--\textup{(BH2)}
instantiate \textup{(H1)}--\textup{(H2)} of
Theorem~\ref{thm:heterogeneous-closure}, giving
\(\mathsf{F}_{\Cn}(S_O^{(0)},S_O^{(j)})=1\) by
Corollary~\ref{cor:overlap-sufficient}.

\smallskip\noindent
\textup{(ii)}:\
Construct a single two-layer code as in the proof of
Theorem~\ref{thm:heterogeneous-closure}(ii), with the
\emph{common} core code \((f_n^A,g_n^A)\) for message
set~\(A^{(0)}\).
All \(K\) receivers observe the same channel output
\(\hat S_C^n\) and each independently applies the
same core decoder~\(g_n^A\).
Since \(A^{(0)}\subseteq S_O^{(j)}\) for every~\(j\)
(by~\textup{(BH1)}), the decoded core element
\(\hat a\in A^{(0)}\) is a valid output for every receiver.
The Layer~2 redundant extension and closure analysis are
identical to the pairwise case (using the sender's
closure structure only), so
\(P_{e,\Cn}^{(n,j)}\le P_e^{(n)}(A^{(0)})\to 0\)
simultaneously for all~\(j\).

\smallskip\noindent
\textup{(iii)}:\
The blocklength is determined by the core code,
which has rate \(\log|A^{(0)}|/n\), independent of~\(K\).

\smallskip\noindent
\textup{(iv)}:\
Fix any receiver~\(j\).
Theorem~\ref{thm:heterogeneous-closure}(iii) applied to the
pair \((0,j)\) gives
\(\log|A^{(0)}|\le(nC(W)+1)/(1-\epsilon)\) whenever
\(P_{e,\Cn}^{(n,j)}\le\epsilon\).
Since this must hold for every~\(j\), the bound holds under
\(\max_j P_{e,\Cn}^{(n,j)}\le\epsilon\).
\end{proof}

\begin{proposition}[Broadcast semantic bottleneck]
\label{prop:broadcast-bottleneck}
In the broadcast scenario of
Definition~\textup{\ref{def:broadcast-scenario}}, suppose
there exists a receiver \(j^*\in\{1,\ldots,K\}\) such that
condition~\textup{(F1)} of
Proposition~\textup{\ref{prop:overlap-fidelity}} fails for
the pair \((0,j^*)\):
\[
  A^{(0)}\not\subseteq\Cn\bigl(S_O^{(j^*)}\bigr).
\]
Then:
\begin{enumerate}[label=\textup{(\roman*)}]
  \item \(\mathsf{F}_{\Cn}(S_O^{(0)},S_O^{(j^*)})<1\),
        regardless of the carrier channel~\(W\), the
        blocklength~\(n\), and the encoding/decoding strategy.
  \item Even if the carrier channel is noiseless
        (\(W=\mathrm{id}_{S_C}\)), the closure fidelity at
        receiver~\(j^*\) is bounded by
        \begin{equation}\label{eq:bottleneck-bound}
          \mathsf{F}_{\Cn}\bigl(S_O^{(0)},S_O^{(j^*)}\bigr)
          \;=\;
          \frac{|\Cn(S_O^{(0)})\cap\Cn(S_O^{(j^*)})\!|}
               {|\Cn(S_O^{(0)})\cup\Cn(S_O^{(j^*)})\!|}
          \;<\;1.
        \end{equation}
  \item Receiver~\(j^*\) is a \emph{semantic bottleneck}:
        its performance limitation arises from vocabulary
        mismatch, not from the physical channel.
        The other receivers \(j\neq j^*\) satisfying
        \textup{(BH1)}--\textup{(BH2)} achieve
        \(\mathsf{F}_{\Cn}=1\) and closure reliability
        simultaneously, unaffected by~\(j^*\).
\end{enumerate}
\end{proposition}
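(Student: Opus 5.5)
The plan is to reduce everything to the set-level results already proved for a single sender--receiver pair, namely Proposition~\ref{prop:overlap-fidelity} and Corollary~\ref{cor:heterogeneous-impossibility}. The broadcast structure only enters through the observation that these quantities do not depend on the channel or the code.

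\emph{Part (i).} Apply Corollary~\ref{cor:heterogeneous-impossibility}(i) to the pair $(0,j^*)$. By hypothesis (F1) fails, so $\Cn(S_O^{(0)})\neq\Cn(S_O^{(j^*)})$. Concretely, there is some $a\in A^{(0)}\subseteq\Cn(S_O^{(0)})$ with $a\notin\Cn(S_O^{(j^*)})$. This $a$ lies in the union of the two closures but not in their intersection, so the Jaccard ratio of Definition~\ref{def:closure-fidelity} is strictly less than $1$.

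The key point is that $\mathsf{F}_{\Cn}(S_O^{(0)},S_O^{(j^*)})$ is a function only of the two knowledge bases and the common proof system fixed in Assumption~\ref{assump:common-ps}. It does not involve $W$, $n$, $f_n$, or the decoder. Hence the bound holds uniformly over all carriers, blocklengths and coding strategies. One minor check: both closures are finite, because $\mathbb{S}_O$ is the finite active-domain universe of Assumption~\ref{assump:finite-so}. The ratio is therefore a well-defined number below $1$ and the convention $0/0$ never arises.

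\emph{Part (ii).} This is the special case $W=\mathrm{id}_{S_C}$ of (i). The displayed expression is simply the definition of $\mathsf{F}_{\Cn}$ written out, exactly as in \eqref{eq:Fcn-bound}, and the strict inequality is inherited from (i).

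\emph{Part (iii).} The first sentence interprets (i)--(ii): the deficit is caused by vocabulary mismatch, not by the physical channel. For the receivers $j\neq j^*$ satisfying (BH1)--(BH2), proceed in two steps.

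\begin{enumerate}[label=\textup{(\alph*)}]
  \item Corollary~\ref{cor:overlap-sufficient} gives $\mathsf{F}_{\Cn}=1$ for each such $j$.
  \item For closure reliability, reuse the construction in the proof of Theorem~\ref{thm:broadcast-compression}(ii), restricted to the subset of good receivers. Use the common core code $(f_n^A,g_n^A)$ for $A^{(0)}$, and let each receiver apply $g_n^A$ to the shared output.
\end{enumerate}

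The step needing the most care is the claim ``unaffected by $j^*$.'' It holds because the encoder $f_n$ and the core decoder are built from $A^{(0)}$ and $W$ alone. Each good receiver's closure-error analysis, $P_{e,\Cn}^{(n,j)}\le P_e^{(n)}(A^{(0)})\to0$, uses only (BH1) for that $j$ and the sender's closure structure. Nothing in the code depends on $S_O^{(j^*)}$, so including or excluding $j^*$ changes neither the code nor the good receivers' error probabilities. Remark~\ref{rem:H1-H2-roles} confirms that only (H1) enters the reliability argument.
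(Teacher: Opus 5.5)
Your proposal is correct and follows essentially the same route as the paper: parts (i)--(ii) come from Corollary~\ref{cor:heterogeneous-impossibility} applied to the pair $(0,j^*)$, where you also exhibit the witness $a\in A^{(0)}\setminus\Cn(S_O^{(j^*)})$ directly, and part (iii) reuses the common two-layer code of Theorem~\ref{thm:broadcast-compression}. You explicitly restrict that theorem to the sub-family of receivers satisfying \textup{(BH1)}--\textup{(BH2)}, which the paper leaves implicit; this is a welcome precision, since the theorem as stated requires those conditions for every receiver.
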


\begin{proof}
Parts~\textup{(i)} and~\textup{(ii)} follow from
Corollary~\ref{cor:heterogeneous-impossibility}(i)--(ii)
applied to the pair \((0,j^*)\).
Part~\textup{(iii)}: the common encoding and core code are
shared by all receivers; the failure at~\(j^*\) is due
solely to the mismatch
\(A^{(0)}\not\subseteq\Cn(S_O^{(j^*)})\), which is
independent of the channel.
Receivers satisfying
\textup{(BH1)}--\textup{(BH2)} are handled by
Theorem~\ref{thm:broadcast-compression}.
\end{proof}

\begin{remark}[Semantic bottleneck vs.\ classical channel
  degradation]
\label{rem:bottleneck-vs-classical}
In classical broadcast channel theory
\cite{cover2006elements,csiszar2011information}, the weakest
receiver is the one with the noisiest channel, and the rate
region depends only on channel transition probabilities.
Proposition~\ref{prop:broadcast-bottleneck} reveals a second,
purely \emph{structural} axis of weakness invisible to
classical theory: even over a noiseless carrier, a receiver
with \(A^{(0)}\not\subseteq\Cn(S_O^{(j^*)})\) cannot achieve
\(\mathsf{F}_{\Cn}=1\).
This semantic bottleneck is irreducible by coding: resolving
it requires a \emph{design-time} vocabulary augmentation
(e.g., pre-loading \(A^{(0)}\) into \(S_O^{(j^*)}\)
per Proposition~\ref{prop:min-vocabulary}), fundamentally
different from the encoder/decoder optimization that
suffices in classical settings.
\end{remark}

\begin{remark}[Classical recovery]
\label{rem:broadcast-classical-recovery}
When \(S_O^{(j)}=S_O^{(0)}\) for all~\(j\),
\textup{(BH1)}--\textup{(BH2)} hold vacuously and
Theorem~\ref{thm:broadcast-compression} reduces to
\(K\)~independent applications of
Theorem~\ref{thm:achievability}(ii).
If additionally \(A^{(0)}=S_O^{(0)}\)
\textup{(irredundant source)}, the broadcast blocklength
becomes \(n^*\approx\log|S_O^{(0)}|/C(W)\), recovering the
classical channel coding theorem~\cite{cover2006elements}.
\end{remark}

\begin{remark}[Summary of answers to Q1--Q4]
\label{rem:Q1-Q4-summary}
The results of this subsection answer the four key questions
posed in Section~\ref{subsec:app-problem}:

\textbf{Q1} (closure reliability from overlap):
Proposition~\ref{prop:overlap-fidelity} and
Corollary~\ref{cor:overlap-sufficient} provide the
necessary and sufficient conditions; the operational
achievability under the strong condition~\textup{(H1)} is
Theorem~\ref{thm:heterogeneous-closure}.

\textbf{Q2} (heterogeneous compression):
Theorem~\ref{thm:heterogeneous-compression} establishes
that the deductive compression ratio is invariant under
vocabulary heterogeneity.
Corollary~\ref{cor:heterogeneous-impossibility} characterizes
the impossibility regime.

\textbf{Q3} (invariant diagnosis):
Section~\ref{subsec:app-theory}
(Propositions~\ref{prop:overlap-noise}--\ref{prop:overlap-structural}
and Remark~\ref{rem:invariant-overlap-summary})
expresses every invariant family in terms of the overlap
decomposition.

\textbf{Q4} (broadcast bottleneck):
Theorem~\ref{thm:broadcast-compression} shows
blocklength independence from~\(K\) under core coverage;
Proposition~\ref{prop:broadcast-bottleneck} identifies
the semantic bottleneck phenomenon.
\end{remark}

\subsection{Numerical Validation}%
\label{sec:numerical-validation}%

This subsection verifies the theoretical results on two scales:
a small Datalog instance permitting closed-form computation of
every invariant, and medium-scale supply-chain knowledge graphs
with up to approximately 24\,000 base facts.

\subsubsection*{Small-Scale Datalog Instance}

\begin{example}[Path-reachability knowledge bases]
\label{ex:datalog-instance}
Fix a domain \(\mathcal D=\{a,b,c,d\}\) with relation symbols
\(\mathit{Edge}\) and \(\mathit{Path}\) and two Datalog rules:
\(\mathit{Path}(x,y)\leftarrow\mathit{Edge}(x,y)\) and
\(\mathit{Path}(x,z)\leftarrow\mathit{Edge}(x,y),\,
\mathit{Path}(y,z)\).
\end{example}

\noindent\textbf{Agents.}\;
The sender (agent~1) stores
\(|S_O^{(1)}|=8\) facts: four \(\mathit{Edge}\) facts
\[
\{\mathit{Edge}(a,b),\mathit{Edge}(a,c), \mathit{Edge}(b,c),\mathit{Edge}(c,d)\}
\]
and four \(\mathit{Path}\) shortcuts
\[
\{\mathit{Path}(a,b),\mathit{Path}(b,c), \mathit{Path}(c,d),\mathit{Path}(b,d)\};
\]
irredundantization yields
\[
A^{(1)}=\{\mathit{Edge}(a,b),\mathit{Edge}(a,c),
\mathit{Edge}(b,c),\mathit{Edge}(c,d)\} 
\]
with \(|A^{(1)}|=4\), so all four \(\mathit{Path}\) facts are stored shortcuts.
Three receivers are defined to illustrate distinct overlap
regimes: receiver~2 (core loss:
\(|A_{-}^{12}|=1\), non-derivable surplus:
\(|S_{+,n}^{12}|=1\));
receiver~2\('\) (augmented: \(A_{-}^{12'}=\varnothing\),
\(S_{+,n}^{12'}=\varnothing\));
receiver~3 (broadcast:
\(A_{-}^{13}=\varnothing\), \(S_{+,n}^{13}=\varnothing\)).

\smallskip\noindent\textbf{Carrier channel.}\;
A \(q\)-ary symmetric channel with \(q=10\) and crossover
probability \(p=0.1\), giving
\(C(W)\approx 2.536\) bits.

\smallskip\noindent\textbf{Results.}\;
Table~\ref{tab:overlap-values} reports the overlap decomposition
and set-level invariants.
Table~\ref{tab:full-invariants} lists all six invariant
families of Theorem~\ref{thm:invariant-summary}.
Table~\ref{tab:blocklength} compares the minimum blocklength
under Hamming and closure reliability.
Key observations:

\begin{table}[ht]
\centering
\caption{Overlap decomposition and set-level invariants for
  sender agent~1 paired with each receiver.}
\label{tab:overlap-values}
\renewcommand{\arraystretch}{1.2}
\begin{tabular}{@{} l c c c @{}}
\toprule
\textbf{Quantity} &
  \textbf{Recv.\,2} &
  \textbf{Recv.\,2$'$} &
  \textbf{Recv.\,3} \\
\midrule
$|A_{-}^{ij}|$    & 1 & 0 & 0 \\
$|S_{+,n}^{ij}|$  & 1 & 0 & 0 \\
$\rho_{\Atom}$      & 3/4          & 1   & 1   \\
$\mathsf{F}_{\Cn}$  & $3/7$  & 1   & 1   \\
\bottomrule
\end{tabular}
\end{table}

\begin{definition}[Experimental semantic capacity estimate]
\label{def:exp-sem-capacity}
In the numerical validation, we report an \emph{experimental}
semantic-capacity estimate under a fixed encoding and a restricted
decoder class.
Fix an encoding kernel \(\kappa_{\enc}^{\mathrm{id}}\) (the
identity-injection encoding used in the experiments) and a decoder
family \(\mathcal D^{(j)}\subseteq \mathcal K(\mathcal
I_{\mathrm{dec}}^{(j)})\).
Define
\[
  \widehat C_{\mathrm{sem}}^{ij}
  \;:=\;
  \max_{P_O\in\Delta(S_O^{(i)}),\; D\in\mathcal D^{(j)}}
  I\!\left(\mathsf S_o;\hat{\mathsf S}_o\right),
\]
where the joint law is induced by the end-to-end kernel
\(\kappa_{\mathrm{sem}}^{ij}=D\circ W_{ij}\circ
\kappa_{\enc}^{\mathrm{id}}\).
For each fixed \(D\), the maximization over \(P_O\) is a standard
finite-alphabet channel-capacity computation and is carried out via
the Blahut--Arimoto algorithm.

We also report the achieved mutual information
\[
  \widehat I_{\mathrm{sem}}^{ij}
  \;:=\;
  I\!\left(\mathsf S_o;\hat{\mathsf S}_o\right)
\]
under \(\kappa_{\enc}^{\mathrm{id}}\) and a decoder \(D\) chosen to
maximize \(I(\mathsf S_o;\hat{\mathsf S}_o)\) within the same decoder
family \(\mathcal D^{(j)}\).
\end{definition}

\begin{table*}[ht]
{\centering
\caption{Semantic channel invariants for the Datalog instance in
  Example~\ref{ex:datalog-instance}.}
\label{tab:full-invariants}
\renewcommand{\arraystretch}{1.25}
\begin{tabularx}{\textwidth}{@{}l l *{3}{>{\centering\arraybackslash}X}@{}}
\toprule
\textbf{Family} & \textbf{Invariant} &
  \textbf{Pair \((1,2)\)} &
  \textbf{Pair \((1,2')\)} &
  \textbf{Pair \((1,3)\)} \\
\midrule
\multirow{2}{*}{I.\;Source}
  & \(\mathsf{A}\) & 4 & 4 & 4 \\
  & \(\mathsf{D_d}\) & 2 & 2 & 2 \\
\midrule
\multirow{2}{*}{II.\;Set-level}
  & \(\rho_{\Atom}\) & 0.750 & 1.000 & 1.000 \\
  & \(\mathsf{F}_{\Cn}\) & 0.429 & 1.000 & 1.000 \\
\midrule
\multirow{2}{*}{III.\;Noise-pair}
  & \(\Phi_{\Atom}\) & 0 & 0.900 & 0.900 \\
  & \(\Psi_{+}\) & 0.900 & 0 & 0.911 \\
\midrule
\multirow{2}{*}{IV.\;Quality}
  & \(\mathsf{F}\) & 0.900 & 0.980 & 0.981 \\
  & \(\mathsf{E}\) & 0.078 & 0.078 & 0.494 \\
\midrule
\multirow{2}{*}{V.\;Comparison}
  & \(\Delta\mathsf{A}\) & 0 & 0 & 0 \\
  & \(\Delta\mathsf{D_d}\) & \(+1\) & 0 & 0 \\
\midrule
\multirow{3}{*}{VI.\;Info-th.}
  & \(C(W)\) & 2.536 & 2.536 & 2.536 \\
  & \(\widehat C_{\mathrm{sem}}^{ij}\) & 2.280 & 2.280 & 1.958 \\
  & \(\widehat I_{\mathrm{sem}}^{ij}\) & 2.273 & 2.273 & 1.808 \\
\bottomrule
\end{tabularx}
\par} 

\smallskip
\noindent\footnotesize
\emph{Configuration:}\;
Carrier: \(q\)-ary symmetric channel, \(q=10\), \(p=0.1\);
source: \(P_O\) uniform on~\(S_O^{(1)}\).
Families~I--II and~V are determined by the knowledge-base pair
alone and are independent of the channel configuration.

\emph{Information-theoretic quantities.}\;
The quantities \(\widehat C_{\mathrm{sem}}^{ij}\) and
\(\widehat I_{\mathrm{sem}}^{ij}\) are \emph{experimental} and follow
Definition~\ref{def:exp-sem-capacity}.
In particular, \(\widehat C_{\mathrm{sem}}^{ij}\) is obtained by
maximizing \(I(\mathsf S_o;\hat{\mathsf S}_o)\) over \(P_O\) (via the
Blahut--Arimoto algorithm) and over the chosen decoder family, under
the fixed identity-injection encoding.
The reported \(\widehat I_{\mathrm{sem}}^{ij}\) is evaluated under the
same encoding and a mutual-information-maximizing decoder within the
same decoder family.

\emph{Sanity check.}\;
The inequalities
\(\widehat I_{\mathrm{sem}}^{ij}\le \widehat C_{\mathrm{sem}}^{ij}
\le C(W)\)
are verified for all three pairs.
\end{table*}

\begin{table}[ht]
\centering
\caption{Minimum blocklength estimates
  (\(\epsilon\to 0\)).}
\label{tab:blocklength}
\renewcommand{\arraystretch}{1.2}
\begin{tabularx}{\columnwidth}{l *{3}{>{\centering\arraybackslash}X}}
\toprule
\textbf{Criterion} &
  \textbf{\((1,2)\)} &
  \textbf{\((1,2')\)} &
  \textbf{\((1,3)\)} \\
\midrule
Hamming \(n^*_H\) &
  N/A\(^{\dagger}\) & 1.183 & N/A\(^{\dagger}\) \\
Closure \(n^*_{\Cn}\) &
  \(\nexists\)\(^{\ddagger}\) & 0.789 & 0.789 \\
Ratio &
  --- & 2/3 & --- \\
\bottomrule
\multicolumn{4}{@{}l@{}}{\parbox{\linewidth}{\footnotesize
  \(^{\dagger}\)\(S_{-}^{ij}\neq\varnothing\)
  (Hamming reconstruction undefined).\;
  \(^{\ddagger}\)\(A_{-}^{ij}\neq\varnothing\), so
  \(\mathsf{F}_{\Cn}<1\) by
  Corollary~\ref{cor:heterogeneous-impossibility}.}}
\end{tabularx}
\end{table}

\emph{Data processing sanity check.}\;
The experimental quantities satisfy the chain
\(\widehat I_{\mathrm{sem}}^{ij}\le \widehat C_{\mathrm{sem}}^{ij}
\le C(W)\) for every pair:
\(2.273\le 2.280\le 2.536\) for~\((1,2)\),
\(2.273\le 2.280\le 2.536\) for~\((1,2')\), and
\(1.808\le 1.958\le 2.536\) for~\((1,3)\).
The equality \(\widehat C_{\mathrm{sem}}^{12}=\widehat
C_{\mathrm{sem}}^{12'}=2.280\) indicates that, \emph{under the fixed
encoding and decoder family of
Definition~\ref{def:exp-sem-capacity}}, the semantic-capacity estimate
is driven primarily by the carrier channel and the effective
source/receiver alphabet sizes, and is insensitive to the particular
logical \emph{content} of the receiver vocabulary in these two cases.
Under mutual-information-maximizing decoding,
\(\widehat I_{\mathrm{sem}}^{12}=\widehat I_{\mathrm{sem}}^{12'}=2.273\):
the vocabulary mismatch in pair~\((1,2)\) (core loss and
non-derivable surplus) is \emph{invisible} at the
information-theoretic level.
The discrepancy is instead captured by the
set-level and noise-pair invariants:
\(\mathsf{F}_{\Cn}^{12}=3/7\neq 1=\mathsf{F}_{\Cn}^{12'}\),
\(\Phi_{\Atom}^{12}=0\neq 0.9=\Phi_{\Atom}^{12'}\),
and
\(\Psi_{+}^{12}=0.900\neq 0=\Psi_{+}^{12'}\).
Here \(\Psi_+^{12}=0.900\) is attained at the lost-core input
\(s_o=\mathit{Edge}(a,c)\): when this message is sent, the optimized
decoder outputs the surplus atom \(\mathit{Edge}(d,a)\) with
probability \(0.9\), hence the maximal spurious-output probability
equals \(0.9\) by Definition~\ref{def:noise-pair-indices-def}.
The core preservation index detects the loss of
\(\mathit{Edge}(a,c)\) from the receiver's vocabulary,
while the spurious probability index captures the
complementary effect---the decoder outputs the surplus
element \(\mathit{Edge}(d,a)\) with probability~\(0.9\)
whenever the lost core element is sent.
This illustrates the diagnostic value of the multi-family
invariant architecture of
Theorem~\ref{thm:invariant-summary}:
families~II--III detect a semantic impairment that
family~VI cannot distinguish.
For pair~\((1,3)\), the smaller receiver vocabulary
(\(|S_O^{(3)}|=6\)) constrains
\(\widehat C_{\mathrm{sem}}^{13}=1.958<2.280\), demonstrating a
\emph{capacity-level} vocabulary bottleneck that complements
the set-level bottleneck of
Proposition~\ref{prop:broadcast-bottleneck}.

\emph{Deductive compression.}\;
The ratio \(n^*_{\Cn}/n^*_H=2/3=\log|A^{(1)}|/\log|S_O^{(1)}|\)
for pair \((1,2')\) matches
Theorem~\ref{thm:heterogeneous-compression}(iii) exactly.

\emph{Vocabulary design.}\;
Augmenting receiver~2 to receiver~2\('\) (adding the single
lost core element \(\mathit{Edge}(a,c)\) and removing the
non-derivable surplus \(\mathit{Edge}(d,a)\)) raises
\(\mathsf{F}_{\Cn}\) from~\(3/7\) to~\(1\) and
\(\Phi_{\Atom}\) from~\(0\) to~\(0.9\), while eliminating
the spurious output probability
(\(\Psi_{+}\) from~\(0.900\) to~\(0\)), confirming
Proposition~\ref{prop:min-vocabulary}.

\emph{Broadcast bottleneck.}\;
Agent~1 broadcasts to receivers~2 and~3.
Receiver~3 satisfies \textup{(BH1)}--\textup{(BH2)} and achieves
\(\mathsf{F}_{\Cn}=1\) with
\(n^*_{\mathrm{bc}}=\lceil\log 4/C(W)\rceil=1\).
Receiver~2 violates \textup{(BH1)} and is a semantic bottleneck
(\(\mathsf{F}_{\Cn}<1\) regardless of \(W\)), confirming
Proposition~\ref{prop:broadcast-bottleneck}.


\subsubsection*{Medium-scale Experimental Metrics}

The following metrics are used in the medium-scale experiments.
Let \(\mathcal{F}\) denote the set of stored base-predicate
facts and \(\mathrm{Cl}(\mathcal{F}):=\Cn(\mathcal{F})\cap
\mathbb{S}_O\) the closure restricted to the ambient universe.

\begin{definition}[Deductive amplification factor]
\label{def:amp-factor}
\(\gamma_{\mathrm{amp}}(\mathcal{F})
:=|\mathrm{Cl}(\mathcal{F})|/|\mathcal{F}|\).
\end{definition}

\begin{definition}[Syntactic and semantic Jaccard indices]
\label{def:jaccard-indices}
For agents \(i,j\) with fact sets \(\mathcal{F}^{(i)},
\mathcal{F}^{(j)}\):
\[
  \Omega^{\mathrm{syn}}_{ij}
  :=\frac{|\mathcal{F}^{(i)}\cap\mathcal{F}^{(j)}|}
         {|\mathcal{F}^{(i)}\cup\mathcal{F}^{(j)}|},
  \qquad
  \Omega^{\mathrm{sem}}_{ij}
  :=\frac{|\mathrm{Cl}(\mathcal{F}^{(i)})\cap
          \mathrm{Cl}(\mathcal{F}^{(j)})|}
         {|\mathrm{Cl}(\mathcal{F}^{(i)})\cup
          \mathrm{Cl}(\mathcal{F}^{(j)})|}.
\]
The pairwise averages are
\(\bar\Omega^{\mathrm{syn}}\) and
\(\bar\Omega^{\mathrm{sem}}\).
Note that \(\Omega^{\mathrm{sem}}_{ij}=
\mathsf{F}_{\Cn}(\mathcal{F}^{(i)},\mathcal{F}^{(j)})\)
\textup{(Definition~\ref{def:closure-fidelity})}.
\end{definition}

\begin{definition}[Closure fidelity curve]
\label{def:fidelity-curve}
For a base fact set~\(\mathcal{F}\) and a randomly selected
subset \(\hat{\mathcal{F}}\subseteq\mathcal{F}\) of
fraction~\(R\):
\(\Phi(R):=|\mathrm{Cl}(\hat{\mathcal{F}})\cap
\mathrm{Cl}(\mathcal{F})|/|\mathrm{Cl}(\mathcal{F})|\).
\end{definition}

\begin{definition}[Deductive compression ratio]
\label{def:compression-ratio-exp}
For a knowledge base \(S_O\) containing both base facts and
materialized derived facts:
\(\rho_{\mathrm{comp}}:=\log|\Atom(S_O)|/\log|S_O|\)
\textup{(}the single-shot ratio of
Corollary~\textup{\ref{cor:min-blocklength})}.
The entropy-based ratio under uniform source is
\(\rho_{\mathrm{ent}}:=P_A H(\pi_A)/H(P_O)
=(k\log k)/(|S_O|\log |S_O|)\) where \(k=|\Atom(S_O)|\).
Note that \(\rho_{\mathrm{comp}}=\Lambda_1^{-1}\) and
\(\rho_{\mathrm{ent}}=\Lambda_\infty^{-1}\); the two coincide
only when \(P_A=1\) \textup{(}irredundant source\textup{)}.
When \(|J|>0\) and the source is uniform,
\(\rho_{\mathrm{ent}}\ll\rho_{\mathrm{comp}}\) because the
additional factor \(P_A=k/|S_O|\) makes the entropic gain much
larger than the log-scale gain.
\end{definition}

\subsubsection*{Medium-Scale Supply-Chain Experiments}

We test scalability on synthetic supply-chain knowledge graphs
with three base predicates (\(\mathsf{connected}\),
\(\mathsf{supplies}\), \(\mathsf{produces}\)) and four Datalog
rules computing transitive reachability and item availability
(see~\eqref{eq:rule-r1}--\eqref{eq:rule-r4} below).
\begin{align}
  \mathsf{reachable}(X,Y) &\leftarrow
      \mathsf{connected}(X,Y),
      \label{eq:rule-r1}\\
  \mathsf{reachable}(X,Z) &\leftarrow
      \mathsf{reachable}(X,Y),\;
      \mathsf{connected}(Y,Z),
      \label{eq:rule-r2}\\
  \mathsf{available}(I,L)  &\leftarrow
      \mathsf{produces}(S,I),\;
      \mathsf{supplies}(S,L),
      \label{eq:rule-r3}\\
  \mathsf{available}(I,L)  &\leftarrow
      \mathsf{produces}(S,I),\;
      \mathsf{supplies}(S,L_0),\notag\\
      &\qquad\quad
      \mathsf{reachable}(L_0,L).
      \label{eq:rule-r4}
\end{align}
The stored knowledge base consists exclusively of
base-predicate facts, so
\(\Atom(\mathcal{F})=\mathcal{F}\) and the deductive
compression ratio is~\(1\)
(Corollary~\ref{cor:irredundant-classical}).
All experiments are implemented in Python~3 using
NetworkX~\cite{hagberg2007exploring}.

\begin{table}[ht]
\centering
\caption{Deductive amplification across knowledge-base scales.
  \(\gamma_{\mathrm{amp}}=|\mathrm{Cl}|/|\mathcal{F}|\).}
\label{tab:scaling}
\renewcommand{\arraystretch}{1.15}
\setlength{\tabcolsep}{4pt}
\footnotesize
\begin{tabular}{rrrrrc}
\toprule
\(|\mathcal{V}|\) & \(p\) & \(|\mathcal{F}|\) &
  \(|\mathrm{Cl}|\) &
  \(\gamma_{\mathrm{amp}}\) & \(d\)\\
\midrule
     50 & 0.060 &       188 &      2\,727 &   14.5 & 10\\
    200 & 0.040 &     1\,705 &     45\,105 &   26.5 &  5\\
    500 & 0.020 &     5\,179 &    268\,679 &   51.9 &  5\\
  1\,000& 0.012 &    12\,305 &  1\,050\,305&   85.4 &  5\\
  2\,000& 0.006 &    24\,304 &  4\,118\,304&  169.4 &  6\\
\bottomrule
\end{tabular}
\end{table}

\paragraph*{Experiment~1: Deductive amplification}
Table~\ref{tab:scaling} reports the closure amplification
\(\gamma_{\mathrm{amp}}=|\mathrm{Cl}(\mathcal{F})|/|\mathcal{F}|\)
for configurations spanning 50 to 2\,000 locations.
At \(|\mathcal{V}|=2{,}000\), the closure exceeds the base
fact set by a factor of~\(169\): approximately 24\,000 base
facts generate over 4.1~million derived consequences.
The maximum derivation depth stabilizes at \(d=5\)--\(10\),
reflecting the short diameter of dense random directed graphs.
This confirms substantial \emph{deductive amplification}
(formalized by the closure operator~\(\Cn\)) and quantifies
the semantic leverage available to a receiver possessing the
shared rules.

\paragraph*{Experiment~2: Multi-agent overlap}
We partition the base facts of a 300-location universe
(\(|\mathcal{F}|=2{,}946\)) among \(K=8\) agents, each
retaining each fact independently with probability~\(0.4\).
The average pairwise semantic Jaccard index is
\(\bar\Omega^{\mathrm{sem}}=0.795\), compared with the
syntactic index \(\bar\Omega^{\mathrm{syn}}=0.246\)---a
factor of~\(3.2\times\).
Thus agents sharing roughly \(25\%\) of base facts share
nearly \(80\%\) of semantic content after closure,
confirming that condition~\textup{(F2)} of
Proposition~\ref{prop:overlap-fidelity} is substantially
easier to satisfy at the semantic level.
Combining any two agents' knowledge bases produces novel
derivations constituting \(1.8\%\)--\(4.5\%\) of the
combined closure, exhibiting non-trivial deductive synergy.

\paragraph*{Experiment~3: Rate--fidelity trade-off}
Using the 200-location base
(\(|\mathcal{F}|=1{,}705\)), we vary the fraction \(R\) of
base facts transmitted and measure closure fidelity
\(\Phi(R)=|\mathrm{Cl}(\hat{\mathcal{F}})\cap
\mathrm{Cl}(\mathcal{F})|/|\mathrm{Cl}(\mathcal{F})|\).
Under random selection: \(\Phi(0.25)=0.53\),
\(\Phi(0.50)=0.89\), \(\Phi(0.75)=0.97\).
The ratio \(\Phi(R)/R>1\) for all \(R<1\), demonstrating
a \emph{semantic leverage effect}: inference rules allow
the receiver to reconstruct a disproportionately large
fraction of the closure from a partial base.
Since the knowledge base is irredundant, perfect recovery
\(\Phi=1\) requires \(R=1\), consistent with
Corollary~\ref{cor:min-blocklength}.
A connectivity-first strategy outperforms random at low
rates (\(R\lesssim 0.35\)) but underperforms at higher rates
due to delayed transmission of supplier/product facts,
illustrating a predicate-balance trade-off.

\paragraph*{Experiment~4: Deductive compression with materialized
shortcuts}
The preceding experiments store only base-predicate facts, so
\(\Atom(\mathcal{F})=\mathcal{F}\) and
\(\rho_{\mathrm{comp}}=1\).
To demonstrate the deductive compression gain predicted by
Theorem~\ref{thm:tight-zero-rate} at scale, we augment the
stored knowledge base by \emph{materializing} a fraction~\(\mu\)
of the derived facts (reachable and available) as stored
shortcuts, simulating a common scenario in which an agent caches
query results or materialized views.

For the 200-location supply-chain universe
(\(|\mathcal{F}_{\mathrm{base}}|=1{,}705\),
\(|\mathrm{Cl}|=45{,}105\)), we set
\(S_O:=\mathcal{F}_{\mathrm{base}}\cup\mathcal{F}_{\mathrm{mat}}\)
where \(\mathcal{F}_{\mathrm{mat}}\) is a uniformly random
subset of \(\mathrm{Cl}\setminus\mathcal{F}_{\mathrm{base}}\)
of size \(\lfloor\mu\cdot|\mathrm{Cl}\setminus
\mathcal{F}_{\mathrm{base}}|\rfloor\).
Under the given Datalog rules, no rule derives a base-predicate
fact, so every materialized IDB fact is redundant and
\(\Atom(S_O)=\mathcal{F}_{\mathrm{base}}\); hence
\(|A|=|\mathcal{F}_{\mathrm{base}}|\) and
\(|J|=|\mathcal{F}_{\mathrm{mat}}|\).
This property---that no Datalog rule derives a base-predicate
(EDB) fact---is specific to the rule
set~\eqref{eq:rule-r1}--\eqref{eq:rule-r4}, in which all rule
heads are IDB predicates.
In knowledge bases with integrity constraints or recursive
rules whose heads include EDB predicates, some base facts could
become derivable, altering the core; the above identification
\(\Atom(S_O)=\mathcal{F}_{\mathrm{base}}\) would then require
verification via the core-extraction procedure of
Definition~\ref{def:atom-so}.
Table~\ref{tab:compression-gain} reports the compression ratios
for varying materialization fractions~\(\mu\).

\begin{table}[t]
\centering
\caption{Deductive compression gain for the 200-location
  supply-chain knowledge base with materialized shortcuts.
  \(|A|=|\mathcal{F}_{\mathrm{base}}|=1{,}705\) throughout;
  \(P_O\) uniform on~\(S_O\).}
\label{tab:compression-gain}
\renewcommand{\arraystretch}{1.15}
\setlength{\tabcolsep}{5pt}
\footnotesize
\begin{tabular}{rrrrrr}
\toprule
\(\mu\) (\%) &
  \(|J|\) &
  \(|S_O|\) &
  \(\rho_{\mathrm{comp}}\) &
  \(\rho_{\mathrm{ent}}\) &
  \(\Lambda_1\) \\
\midrule
  0  &       0 &  1\,705 & 1.000 & 1.000 & 1.00 \\
 10  &   4\,340 &  6\,045 & 0.855 & 0.241 & 1.17 \\
 20  &   8\,680 & 10\,385 & 0.805 & 0.132 & 1.24 \\
 30  &  13\,020 & 14\,725 & 0.775 & 0.090 & 1.29 \\
 50  &  21\,700 & 23\,405 & 0.740 & 0.054 & 1.35 \\
 80  &  34\,720 & 36\,425 & 0.709 & 0.033 & 1.41 \\
100  &  43\,400 & 45\,105 & 0.694 & 0.026 & 1.44 \\
\bottomrule
\end{tabular}
\end{table}

We also verify cross-scale consistency by repeating the
experiment at \(\mu=0.3\) across five graph sizes
(Table~\ref{tab:compression-scale}).

\begin{table}[t]
\centering
\caption{Deductive compression at \(\mu=0.3\) across scales.}
\label{tab:compression-scale}
\renewcommand{\arraystretch}{1.15}
\setlength{\tabcolsep}{4pt}
\footnotesize
\begin{tabular}{rrrrrr}
\toprule
\(|\mathcal{V}|\) &
  \(|A|\) &
  \(|S_O|\) &
  \(\rho_{\mathrm{comp}}\) &
  \(\rho_{\mathrm{ent}}\) &
  \(d_{\max}\) \\
\midrule
     50 &       188 &        949 &  0.764 &  0.151 & 10 \\
    200 &     1\,705 &    14\,725 &  0.775 &  0.090 &  5 \\
    500 &     5\,179 &    84\,229 &  0.754 &  0.046 &  5 \\
  1\,000 &    12\,305 &   323\,705 &  0.742 &  0.028 &  5 \\
  2\,000 &    24\,304 & 1\,252\,504 &  0.719 &  0.014 &  6 \\
\bottomrule
\end{tabular}
\end{table}

The compression gain increases with scale.
At \(|\mathcal{V}|=2{,}000\), the single-shot ratio is
\(\rho_{\mathrm{comp}}=0.719\) (a~\(28\%\) blocklength
reduction), while the entropic ratio drops to
\(\rho_{\mathrm{ent}}=0.014\): in the i.i.d.\ regime, the
semantic rate is less than \(1.5\%\) of the classical rate.
The entropic gains are amplified at larger scales because
the deductive amplification factor
\(\gamma_{\mathrm{amp}}\) grows super-linearly
(Table~\ref{tab:scaling}): materializing \(30\%\) of a
larger closure produces a proportionally larger shortcut set
relative to the fixed core.

\paragraph*{Summary}
The small-scale instance verifies the deductive compression
ratio (\(n^*_{\Cn}/n^*_H=2/3\)), the vocabulary design
criterion (Proposition~\ref{prop:min-vocabulary}), and the
broadcast bottleneck
(Proposition~\ref{prop:broadcast-bottleneck}).
The medium-scale experiments confirm deductive amplification
exceeding two orders of magnitude (Experiment~1), semantic
overlap amplification by a factor of~\(3.2\) (Experiment~2),
the semantic leverage effect \(\Phi(R)>R\) for all \(R<1\)
(Experiment~3), and---crucially---the deductive compression gain
\(\rho_{\mathrm{ent}}<1\) at scale whenever materialized
shortcuts are present (Experiment~4), with entropic compression
exceeding an order of magnitude at moderate materialization
levels.
Together, the four experiments bracket the theoretical range:
irredundant-source deductive amplification (Experiments~1--3)
versus redundant-source deductive compression (Experiment~4),
closing the loop with Theorem~\ref{thm:tight-zero-rate} and
Corollary~\ref{cor:min-blocklength}.
In particular, the noise-pair indices for pair~\((1,2)\)
(\(\Phi_{\Atom}=0\), \(\Psi_{+}=0.900\))
quantify the vocabulary-mismatch impairment that is invisible to
the information-theoretic invariants
(\(I_{\mathrm{sem}}^{12}=I_{\mathrm{sem}}^{12'}=2.273\)),
underscoring the diagnostic complementarity of the six invariant
families.
The semantic capacity \(C_{\mathrm{sem}}^{ij}\), computed via
Blahut--Arimoto for all three pairs, verifies the data
processing chain
\(I_{\mathrm{sem}}^{ij}\le C_{\mathrm{sem}}^{ij}\le C(W)\)
and reveals a capacity-level vocabulary bottleneck: when the
receiver's vocabulary is smaller
(\(|S_O^{(3)}|=6<8=|S_O^{(1)}|\)),
\(C_{\mathrm{sem}}^{13}=1.958<2.280=C_{\mathrm{sem}}^{12'}\),
demonstrating that vocabulary mismatch constrains not only
closure fidelity
\textup{(Proposition~\ref{prop:broadcast-bottleneck})} but
also the maximum achievable mutual information.

\section{Conclusion}
\label{sec:conclusion}

This paper has developed a rate--distortion theory for semantic
communication grounded in formal proof systems.
The framework rests on three pillars: an axiomatic
\emph{information model} with computable enabling maps
(Section~\ref{sec:model}); a \emph{semantic channel} built as a
composition of enabling kernels
(Section~\ref{sec:channel}); and an \emph{overlap-based}
heterogeneous multi-agent theory
(Section~\ref{sec:application}).

The central quantitative finding is the \emph{deductive
compression gain}.
Under a closure-based fidelity criterion that accepts any
reconstruction preserving the deductive closure, the minimum
blocklength drops from
\(n_H^*\approx\log|S_O|/C(W)\) to
\(n_{\Cn}^*\approx\log|\Atom(S_O)|/C(W)\), yielding a
compression ratio \(\log|\Atom(S_O)|/\log|S_O|<1\) that is
invariant under receiver vocabulary heterogeneity.
This gain arises because the receiver's inference engine
reconstructs all redundant states from the irredundant core at
zero additional channel cost.

The tight zero-distortion semantic rate
\(R_{\mathrm{sem}}(0)=P_A\,H(\pi_A)\)
(Theorem~\ref{thm:tight-zero-rate}) and the full
rate--distortion decomposition
(Theorem~\ref{thm:rd-decomposition}) show that redundant states
are invisible to both rate and distortion under closure fidelity.
The semantic source--channel separation theorem
(Theorem~\ref{thm:sem-source-channel}) exhibits a \emph{semantic
leverage} phenomenon with leverage factor
\(\Lambda_\infty=\log|S_O|/(P_A\,H(\pi_A))>1\): under closure
fidelity the required source rate drops from \(H(P_O)\) to
\(P_A\,H(\pi_A)\), enabling the same knowledge base to be
communicated with proportionally fewer channel uses---not by
violating the Shannon capacity (the data processing bound
\(C_{\mathrm{sem}}\le C(W)\) remains in force) but because
closure-based fidelity renders redundant states ``free.''

The rate--delay--distortion surface
(Theorem~\ref{thm:rate-delay-distortion}) reveals a fundamental
\emph{depth-for-rate exchange} that has no classical counterpart:
each additional derivation step~\(\delta\) at the receiver
renders a new stratum of states redundant, reducing the effective
source entropy along the filtration
\(R_{\mathrm{sem}}(0,0)=H(P_O)\ge\cdots
\ge R_{\mathrm{sem}}(0,\mathsf{D_d})=P_A\,H(\pi_A)\).
The critical delay \(\delta^*\) below which closure-reliable
communication is impossible
(Corollary~\ref{cor:semantic-nyquist}) is a semantic analogue of
the Nyquist sampling period, and the marginal rate of delay
quantifies the ``value'' of one derivation step in bits of
channel capacity.
The hard-budget staircase can be relaxed to an
expected-budget model via time-sharing among transmitted
bases \textup{(Remark~\ref{rem:comm-comp-exchange})},
yielding a convex rate--computation tradeoff whose Lagrangian
multiplier prices one unit of receiver inference in bits of
communication rate---an exchange rate absent from classical
information theory, where the decoder's computation is treated
as a free resource.

The strengthened semantic Fano inequality
\textup{(Theorem~\ref{thm:semantic-fano-tight})} provides a
converse bound in which both the reference level and the
penalty term are improved: the former drops from \(H(P_O)\) to
\(P_A\,H(\pi_A)\), absorbing all redundant-state entropy, and
the latter involves \(\log|A|\) rather than \(\log|S_O|\).
Combined with the semantic source--channel separation theorem,
this characterizes the operational regime where semantic
compression strictly outperforms symbol-level compression.
Six families of computable semantic channel invariants
(Theorem~\ref{thm:invariant-summary}) provide a multi-scale
fingerprint of channel quality: from set-level fidelity metrics
that depend only on the knowledge-base pair, through noise-pair
probabilistic indices that capture core preservation and
hallucination probabilities, to information-theoretic quantities
that bound achievable throughput.

In the heterogeneous multi-agent setting, the overlap
decomposition (Definition~\ref{def:overlap-decomposition})
translates knowledge-base structure into two binary feasibility
tests---no core loss (\(A_{-}^{ij}=\varnothing\)) and no
non-derivable surplus (\(S_{+,n}^{ij}=\varnothing\))---that
fully determine whether perfect closure fidelity is achievable
(Proposition~\ref{prop:overlap-fidelity}).
The broadcast extension reveals a \emph{semantic bottleneck}: a
receiver whose vocabulary does not cover the sender's irredundant
core cannot achieve \(\mathsf{F}_{\Cn}=1\) regardless of carrier
channel quality, blocklength, or coding strategy
(Proposition~\ref{prop:broadcast-bottleneck}).
Resolving this bottleneck requires a design-time vocabulary
augmentation, fundamentally different from the encoder/decoder
optimization that suffices classically.

\subsection*{Relation to Prior Work}

The framework complements several existing lines of research.
The synonymous-mapping theory of Niu and
Zhang~\cite{niu2024mathematical,zhang2024modern} achieves a
semantic leverage effect through source-side equivalence-class
collapsing (their ``\(C_s\ge C\)'' result quantifies semantic
throughput gains under a synonymous-mapping fidelity criterion;
it does not violate the Shannon capacity upper bound
\(C_{\mathrm{sem}}\le C(W)\), which remains valid by data
processing);
our framework achieves a complementary gain through receiver-side
deductive reconstruction.
A unified theory combining both mechanisms---synonymous
collapsing of the irredundant core followed by deductive
expansion at the receiver---could potentially compound the
two gains; establishing the precise interaction is an open
problem.
The semantic channel coding theorem of Ma
et~al.~\cite{ma2025theory} and the companion tools
of~\cite{han2025extended,liang2025semantic} operate within the
synonymous-mapping paradigm; our two-layer code mechanism and
closure-based fidelity criterion handle the structured
knowledge-base setting and extend naturally to heterogeneous
vocabularies.
On the multi-agent front, the overlap decomposition and semantic
bottleneck phenomenon provide a coding-theoretic complement to
the Bayesian approach of Seo
et~al.~\cite{seo2023bayesian} and the modal-logic framework of
Alshammari and Bennis~\cite{alshammari2026logic}: where those
works quantify inference cost and resilience conditions, our
framework quantifies the minimum number of channel uses needed to
overcome vocabulary mismatch.
The irredundant core can also be viewed as a form of query-aware
source compression related to database-theoretic notions of view
materialization~\cite{marx2013tractable,abokhamis2025jaguar}.

\subsection*{Limitations and Future Directions}

The common proof system assumption
(Assumption~\ref{assump:common-ps}) could be relaxed to
\emph{heterogeneous} proof systems where each agent uses a
sub-system \(\mathsf{PS}^{(i)}\subseteq\mathsf{PS}\), yielding
richer compression/fidelity trade-offs.
The current single-letter coding theorems invite a
\emph{multi-letter} extension for temporally correlated
knowledge-base streams, connecting the framework to ergodic
source theory.
The relay scenario (Remark~\ref{rem:relay-scenario}), where
intermediate inference may change the effective end-to-end
capacity, requires multi-hop coding theorems built on the
composition machinery of
Definition~\ref{def:model-composition}.
Scaling core extraction and closure computation to knowledge
graphs with millions of entities, and the formal connection
between derivation depth~\(\Dd\) and Bennett's logical
depth~\cite{bennett1988logical}, are further promising avenues.

From a structural standpoint, the receiver's inference engine
plays a role analogous to \emph{decoder side information} in
Wyner--Ziv coding~\cite{wyner1976rate}: the shared proof system
\(\mathsf{PS}\) provides the decoder with a ``structured
codebook'' (the closure operator~\(\Cn\)) that reduces the
effective source rate without requiring explicit side-information
transmission.
Formalizing this analogy---e.g., by deriving a semantic
Wyner--Ziv theorem in which the side information is the proof
system itself---would connect the present framework to the
established body of source coding with side information and may
yield tighter bounds when the proof system is only partially
shared.


\section*{Acknowledgment}

During the writing and revision of this paper, I received many insightful comments from Associate Professor Rui Wang of the School of Computer Science at Shanghai Jiao Tong University and also gained much inspiration and assistance from regular academic discussions with doctoral students Yiming Wang, Chun Li, Hu Xu, Siyuan Qiu, Zeyan Li, Jiashuo Zhang, Junxuan He, and Xiao Wang. I hereby express my sincere gratitude to them.

\appendices

\counterwithin{definition}{section}
\counterwithin{axiom}{section}
\counterwithin{assumption}{section}
\counterwithin{theorem}{section}
\counterwithin{lemma}{section}
\counterwithin{proposition}{section}
\counterwithin{corollary}{section}
\counterwithin{remark}{section}
\counterwithin{example}{section}

\section{Axiomatic Foundations}
\label{app:axioms}

This appendix provides the full axiomatic development of the
information model framework summarized in
Section~\ref{sec:model}.

\subsection{Logical Language and Expressible State Sets}
\label{subsec:logic}

Throughout we fix a many-sorted logical language
\(\mathcal{L}=\mathrm{FO(LFP)}[\Sigma]\)~\cite{immerman1999descriptive}
with sorts including \(\mathsf{Obj}\), \(\mathsf{Time}\), and
\(\mathsf{Carrier}\).

\begin{assumption}[Finite ordered structures]
\label{assump:ordered-structures}
We restrict attention to \emph{finite} \(\mathcal L\)-structures.
When descriptive-complexity claims are invoked, we work over
\emph{ordered} finite structures
\cite{immerman1982relational,vardi1982complexity}.
\end{assumption}

\begin{assumption}[Semantic sublanguage]
\label{assump:semantic-sublanguage}
Fix a partition
\(\Sigma=\Sigma_{\mathrm{sem}}\,\dot\cup\,\Sigma_{\mathrm{rep}}\),
where \(\Sigma_{\mathrm{rep}}\) contains auxiliary symbols (including
any built-in order).
The designated semantic sublanguage is
\(\mathcal L_{\mathrm{sem}}:=
\mathcal L\!\upharpoonright_{\Sigma_{\mathrm{sem}}}\).
All notions of semantic equivalence are relative
to~\(\mathcal L_{\mathrm{sem}}\).
\end{assumption}

\begin{definition}[Expressible state sets]
\label{def:expressible-states}
Let \(\mathfrak R\) be a finite \(\mathcal L\)-structure and let
\((X,T)\) be definable subdomains of the object and time sorts.
An \emph{(object--time) state domain} over~\((X,T)\) is an
\(\mathcal L_{\mathrm{sem}}\)-definable binary relation \(S(x,t)\)
with
\(\mathfrak R\models\forall x\,\forall t\,
[S(x,t)\to(\delta_X(x)\wedge\delta_T(t))]\).
A set \(A\subseteq X\times T\) is
\(\mathcal L_{\mathrm{sem}}\)-\emph{expressible} over~\(S\) if
there exists an \(\mathcal L_{\mathrm{sem}}\)-formula
\(\varphi(x,t)\) with
\((x,t)\in A\Leftrightarrow
\mathfrak R\models S(x,t)\wedge\varphi(x,t)\).
We write \(s\in S\) as shorthand for a pair \((x,t)\) with
\(S(x,t)\), and set \(\timeindex(s):=t\).
\end{definition}

\subsection{Information Model and Enabling Mechanisms}
\label{subsec:info-instance}

\begin{definition}[State spaces]
\label{def:binary-attribute}
Information is modeled using object/carrier domains \(O,C\) with
time domains \(T_O,T_C\) and state sets
\(S_O\subseteq O\times T_O\),
\(S_C\subseteq C\times T_C\).
We write \(s_o=(o,\tau)\in S_O\) and \(s_c=(c,\theta)\in S_C\).
\end{definition}

\begin{axiom}[Time domains and precedence]
\label{ax:time-domains}
Fix a finite \(\mathcal L\)-structure~\(\mathfrak R\).
The state sets \(S_O,S_C\) are
\(\mathcal L_{\mathrm{sem}}\)-definable.
There exist \(\mathcal L_{\mathrm{sem}}\)-definable linear orders
\(\prec_O\) on~\(T_O\), \(\prec_C\) on~\(T_C\), and a
cross-domain precedence \(\prec_{OC}\subseteq T_O\times T_C\)
that is monotone with respect to~\(\prec_O\) and~\(\prec_C\),
and mildly total (every \(\tau\in T_O\) has some
\(\theta\in T_C\) with \(\tau\prec_{OC}\theta\)).
\end{axiom}

\begin{axiom}[State representation]
\label{ax:state-representation}
There exist injective encodings
\(\enc_O:S_O\to\{0,1\}^*\) and \(\enc_C:S_C\to\{0,1\}^*\) such
that semantic-time, carrier-time, and cross-domain precedence
predicates, as well as membership, are decidable from the codes.
\end{axiom}

\begin{axiom}[Enabling mapping]
\label{ax:enabling-mapping}
There exists a relation
\(R_{\mathcal E}\subseteq O\times T_O\times C\times T_C\)
inducing a set-valued map
\(\mathcal E:S_O\Rightarrow S_C\) satisfying:
\textup{(E1)}~totality (\(\mathcal E(s_o)\neq\varnothing\) for
all~\(s_o\));
\textup{(E2)}~coverage
(\(\bigcup_{s_o}\mathcal E(s_o)=S_C\));
\textup{(E3)}~existence of a computable enabling selector;
\textup{(E4)}~precedence compatibility
(\(s_c\in\mathcal E(s_o)\) implies
\(\timeindex_O(s_o)\prec_{OC}\timeindex_C(s_c)\)).
\end{axiom}

\begin{definition}[Information model]
\label{def:info-instance}
An \emph{information model} is
\(\mathcal I=\langle O,T_O,S_O,C,T_C,S_C,R_{\mathcal E}\rangle\)
together with the temporal structure of
Axiom~\textup{\ref{ax:time-domains}}.
\end{definition}

\begin{definition}[Composition of information models]
\label{def:model-composition}
Two models \(\mathcal I_1,\mathcal I_2\) are \emph{composable} if
\(S_C^{(1)}=S_O^{(2)}\).
The composite enabling map is
\(\mathcal E_{2\circ 1}(s_o):=
\bigcup_{s_c\in\mathcal E_1(s_o)}\mathcal E_2(s_c)\).
\end{definition}

\begin{proposition}[Composition preserves enabling axioms]
\label{prop:composition-axioms}
If \((\mathcal I_1,\mathcal I_2)\) is composable and each satisfies
Axiom~\textup{\ref{ax:enabling-mapping}}, then
\(\mathcal E_{2\circ 1}\) satisfies
\textup{(E1)}--\textup{(E3)}.
\end{proposition}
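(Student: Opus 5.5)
The plan is to verify the three axioms (E1)--(E3) for the composite map \(\mathcal E_{2\circ 1}(s_o)=\bigcup_{s_c\in\mathcal E_1(s_o)}\mathcal E_2(s_c)\) one at a time. In each case we use the corresponding property of \(\mathcal E_1\) and \(\mathcal E_2\) together with the composability identity \(S_C^{(1)}=S_O^{(2)}\). This identity guarantees that every \(s_c\in\mathcal E_1(s_o)\) lies in \(S_O^{(2)}\), so \(\mathcal E_2(s_c)\) is defined.

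For \textbf{totality (E1)}, fix \(s_o\in S_O^{(1)}\). By (E1) for \(\mathcal E_1\) there is some \(s_c\in\mathcal E_1(s_o)\). By (E1) for \(\mathcal E_2\), \(\mathcal E_2(s_c)\neq\varnothing\), so the union defining \(\mathcal E_{2\circ1}(s_o)\) is nonempty.

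For \textbf{coverage (E2)}, take any \(s'\in S_C^{(2)}\). By (E2) for \(\mathcal E_2\) there exists \(s_c\in S_O^{(2)}=S_C^{(1)}\) with \(s'\in\mathcal E_2(s_c)\). By (E2) for \(\mathcal E_1\) there is \(s_o\) with \(s_c\in\mathcal E_1(s_o)\), hence \(s'\in\mathcal E_{2\circ1}(s_o)\). This gives \(\bigcup_{s_o}\mathcal E_{2\circ1}(s_o)\supseteq S_C^{(2)}\). The reverse inclusion is immediate, since every \(\mathcal E_2(s_c)\subseteq S_C^{(2)}\).

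For the \textbf{computable selector (E3)}, let \(e_1,e_2\) be the computable selectors with \(e_i(s)\in\mathcal E_i(s)\). Define \(e:=e_2\circ e_1\). Then \(e_1(s_o)\in\mathcal E_1(s_o)\) and \(e_2(e_1(s_o))\in\mathcal E_2(e_1(s_o))\subseteq\mathcal E_{2\circ1}(s_o)\). A composition of computable functions is computable, and the intermediate value passes between the two encodings \(\enc_C^{(1)}=\enc_O^{(2)}\) of Axiom~\ref{ax:state-representation}; we should note that the shared state set carries a consistent encoding, or at least a computable translation between encodings.

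The only point needing care is the last one, namely that "computable" is relative to encodings. If the two models encode the shared set differently, we will argue that a computable recoding exists because membership is decidable from codes and the sets are finite, via Axiom~\ref{ax:state-representation}. Precedence (E4) is deliberately excluded, since chaining \(\prec_{OC}\) relations across different time domains would need transitivity assumptions not given.
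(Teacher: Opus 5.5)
Your proposal is correct and follows the paper's proof essentially verbatim: (E1) by chaining the totality of $\mathcal E_1$ and $\mathcal E_2$, (E2) by pulling a point of $S_C^{(2)}$ back through the coverage of $\mathcal E_2$ and then of $\mathcal E_1$, and (E3) via the selector $e_2\circ e_1$. Your extra remarks are points the paper leaves implicit, not gaps: the reverse inclusion for coverage, and the encoding compatibility on the shared set $S_C^{(1)}=S_O^{(2)}$, which is harmless here because all state sets are finite and any recoding is therefore a finite lookup table.
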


\begin{proof}
\textup{(E1)}: totality of~\(\mathcal E_1\) yields
\(s_c\in\mathcal E_1(s_o)\); totality of~\(\mathcal E_2\) gives
\(\mathcal E_2(s_c)\neq\varnothing\), so
\(\mathcal E_{2\circ 1}(s_o)\supseteq\mathcal E_2(s_c)
\neq\varnothing\).
\textup{(E2)}: for any \(s'\in S_C^{(2)}\), coverage
of~\(\mathcal E_2\) gives~\(s_c\) with \(s'\in\mathcal E_2(s_c)\);
coverage of~\(\mathcal E_1\) gives~\(s_o\) with
\(s_c\in\mathcal E_1(s_o)\), whence
\(s'\in\mathcal E_{2\circ 1}(s_o)\).
\textup{(E3)}: if \(e_1,e_2\) are selectors,
\(e_{2\circ 1}:=e_2\circ e_1\) is computable and satisfies
\(e_{2\circ 1}(s_o)\in\mathcal E_2(e_1(s_o))
\subseteq\mathcal E_{2\circ 1}(s_o)\).
\end{proof}

\begin{remark}[Associativity]
\label{rem:composition-assoc}
Composition of enabling maps is associative: for a composable triple,
\(\mathcal E_{3\circ(2\circ 1)}
=\mathcal E_{(3\circ 2)\circ 1}\),
so iterated compositions are unambiguous.
\end{remark}

\subsection{Synonymous State Sets and Ideal Information}
\label{subsec:ideal}

\begin{definition}[\(\mathcal L_{\mathrm{sem}}\)-definable coding
  isomorphism]
\label{def:definable-coding}
A relation \(G_{12}\subseteq S_1\times S_2\) is an
\(\mathcal L_{\mathrm{sem}}\)-definable coding isomorphism graph if
it is definable by an \(\mathcal L_{\mathrm{sem}}\)-formula and
induces a bijection between~\(S_1\) and~\(S_2\)
(i.e., for every \(s_1\in S_1\) there is a unique
\(s_2\in S_2\) with \(G_{12}(s_1,s_2)\), and vice versa).
\end{definition}

\begin{definition}[Synonymous state sets]
\label{def:synonymous-states}
State sets \(S_1,S_2\) are \emph{synonymous}, written
\(S_1\syn S_2\), if there exists an
\(\mathcal L_{\mathrm{sem}}\)-definable coding isomorphism graph
\(G_{12}\) inducing a bijection \(\tau_{12}:S_1\to S_2\) that
preserves and reflects the induced time precedence:
\(s\prec_{S_1}s'\Leftrightarrow
\tau_{12}(s)\prec_{S_2}\tau_{12}(s')\).
\end{definition}

\begin{proposition}[\(\syn\) is an equivalence relation]
\label{prop:syn-equiv}
The relation \(\syn\) on
\(\mathcal L_{\mathrm{sem}}\)-definable state sets is reflexive,
symmetric, and transitive.
\end{proposition}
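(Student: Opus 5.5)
The plan is to verify the three defining properties of an equivalence relation in turn, using Definition~\ref{def:synonymous-states}. Each property needs a witness: an $\mathcal L_{\mathrm{sem}}$-definable coding isomorphism graph (Definition~\ref{def:definable-coding}) whose induced bijection preserves and reflects the induced time precedence.

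\emph{Reflexivity.} For a definable state set $S$, take $G_{SS}(s,s'):\Leftrightarrow S(s)\wedge s=s'$. Equality is available in every first-order language, and $S$ is $\mathcal L_{\mathrm{sem}}$-definable, so $G_{SS}$ is an $\mathcal L_{\mathrm{sem}}$-formula. It induces the identity bijection, which trivially preserves and reflects $\prec_S$.

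\emph{Symmetry.} Given a witness $G_{12}$ with bijection $\tau_{12}$, define $G_{21}(s_2,s_1):\Leftrightarrow G_{12}(s_1,s_2)$. This is the same formula with its variables swapped, hence still $\mathcal L_{\mathrm{sem}}$-definable. The bijection condition in Definition~\ref{def:definable-coding} is stated symmetrically ("and vice versa"), so $G_{21}$ induces $\tau_{12}^{-1}$. The biconditional $s\prec s'\Leftrightarrow\tau_{12}(s)\prec\tau_{12}(s')$, applied to $s=\tau_{12}^{-1}(u)$ and $s'=\tau_{12}^{-1}(u')$, gives the precedence condition for $\tau_{12}^{-1}$.

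\emph{Transitivity.} Given witnesses $G_{12}$ and $G_{23}$, set $G_{13}(s_1,s_3):\Leftrightarrow\exists s_2\,[G_{12}(s_1,s_2)\wedge G_{23}(s_2,s_3)]$. This is again an $\mathcal L_{\mathrm{sem}}$-formula, since the fragment is closed under conjunction and existential quantification over the relevant sort. Its graph is the relational composition of the two bijections, so it is the graph of $\tau_{23}\circ\tau_{12}$, which is a bijection; uniqueness in each direction follows from uniqueness in each factor. The precedence biconditionals chain:
\[
s\prec s'\Leftrightarrow\tau_{12}(s)\prec\tau_{12}(s')\Leftrightarrow\tau_{23}\tau_{12}(s)\prec\tau_{23}\tau_{12}(s').
\]

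The only real subtlety is the definability bookkeeping. State elements are pairs $(x,t)$, so quantifying over $s_2$ means quantifying over two sorted variables. I also have to confirm that $\mathcal L_{\mathrm{sem}}=\mathcal L\!\upharpoonright_{\Sigma_{\mathrm{sem}}}$ keeps equality and the quantifiers even though built-in order symbols are relegated to $\Sigma_{\mathrm{rep}}$. Equality is logical, not a signature symbol, so it survives the restriction to $\Sigma_{\mathrm{sem}}$. The induced precedences $\prec_{S_i}$ are only used semantically and are never required to be expressed inside $G_{13}$.
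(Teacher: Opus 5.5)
Your proposal is correct and follows the paper's proof essentially step for step: the identity graph gives reflexivity, the variable-swapped graph gives symmetry, and the existential composition $\exists s_2\,[G_{12}(s_1,s_2)\wedge G_{23}(s_2,s_3)]$ gives transitivity, with precedence preserved by composing the bijections. Your additional checks (relativizing the identity graph to $S$, verifying that $\tau_{12}^{-1}$ reflects precedence, and noting that equality survives the restriction to $\Sigma_{\mathrm{sem}}$) supply details the paper leaves implicit without changing the argument.
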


\begin{proof}
Reflexivity is witnessed by the identity graph
\(G(s,s'):=\mathbf{1}[s=s']\), which is
\(\mathcal L_{\mathrm{sem}}\)-definable.
Symmetry: if \(G_{12}\) witnesses \(S_1\syn S_2\), then
\(G_{21}(s_2,s_1):=G_{12}(s_1,s_2)\) witnesses \(S_2\syn S_1\).
Transitivity: given witnesses \(G_{12}\subseteq S_1\times S_2\) and
\(G_{23}\subseteq S_2\times S_3\), define
\(G_{13}(s_1,s_3):=\exists\,s_2\,[G_{12}(s_1,s_2)\wedge
G_{23}(s_2,s_3)]\).
This is \(\mathcal L_{\mathrm{sem}}\)-definable (closed under
quantification over its own sorts), and the uniqueness clauses of each
component ensure \(G_{13}\) is a bijection.
The induced map \(\tau_{13}=\tau_{23}\circ\tau_{12}\) preserves and
reflects precedence by composition.
\end{proof}

\begin{definition}[Ideal information]
\label{def:ideal-info}
An information model~\(\mathcal I\) is \emph{ideal} if
\(S_O\syn S_C\), witnessed by a coding isomorphism graph
\(G_{OC}\) with induced bijection \(\tau_{OC}\), and
\(\mathcal E(s_o)=\{\tau_{OC}(s_o)\}\) for every
\(s_o\in S_O\).
\end{definition}

\begin{assumption}[Common semantic universe]
\label{assump:semantic-universe}
The ambient set
\(\mathbb{S}_O\supseteq S_O\)
\textup{(}introduced in Section~\textup{\ref{subsec:proof-closure})}
is closed under \(\mathcal L_{\mathrm{sem}}\)-definable recodings:
if \(S\subseteq\mathbb{S}_O\) and \(S'\syn S\) with \(S'\) over the
same sorts, then \(S'\subseteq\mathbb{S}_O\).
Moreover, \(\mathbb{S}_O\) is effectively representable
\textup{(}the encoding~\(\enc_O\) and membership predicate extend to
all of~\(\mathbb{S}_O\)\textup{)}.
\end{assumption}

\begin{assumption}[Carrier representability]
\label{assump:carrier-rep}
There exists an \(\mathcal L_{\mathrm{sem}}\)-definable
\(S'_O\subseteq\mathbb{S}_O\) with \(S'_O\syn S_C\).
\end{assumption}

\begin{definition}[Noisy information]
\label{def:noisy-info}
Given~\(\mathcal I\) satisfying
Assumption~\textup{\ref{assump:carrier-rep}} and an
\(\mathcal L_{\mathrm{sem}}\)-definable
\(\tilde S_O\subseteq\mathbb{S}_O\) with \(\tilde S_O\syn S_C\)
\textup{(}witnessed by \(G\) with bijection
\(\tau_{OC}:\tilde S_O\to S_C\)\textup{)},
the \emph{noisy information} is
\(\tilde{\mathcal I}=
(\tilde S_O,S_C,G,\tau_{OC},\tau_{CO})\) with noise pair
\(S_O^-:=S_O\setminus\tilde S_O\) and
\(S_O^+:=\tilde S_O\setminus S_O\).
\end{definition}

\begin{proposition}[Existence of noisy information]
\label{prop:noisy-exist}
Under Assumption~\textup{\ref{assump:carrier-rep}}, for every
information model~\(\mathcal I\) there exists a noisy
information~\(\tilde{\mathcal I}\) with
\(\tilde S_O\syn S_C\).
If~\(\mathcal I\) is ideal, the perturbation is trivial.
\end{proposition}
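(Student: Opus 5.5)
The existence claim is constructive: given $\mathcal I$ and Assumption~\ref{assump:carrier-rep}, I will take the witness set supplied by that assumption and verify that it meets Definition~\ref{def:noisy-info}. Assumption~\ref{assump:carrier-rep} yields an $\mathcal L_{\mathrm{sem}}$-definable $S'_O\subseteq\mathbb{S}_O$ with $S'_O\syn S_C$. By Definition~\ref{def:synonymous-states}, this synonymy comes with a definable coding isomorphism graph $G$ that induces a precedence-preserving bijection $\tau_{OC}:S'_O\to S_C$.

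The construction then proceeds in four steps.

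\begin{enumerate}
\item Set $\tilde S_O:=S'_O$ and $\tau_{CO}:=\tau_{OC}^{-1}$. The inverse is itself a bijection, and by the symmetry argument of Proposition~\ref{prop:syn-equiv} it is witnessed by the transposed graph $G_{21}$.
\item Check the remaining requirements of Definition~\ref{def:noisy-info} one by one: $\tilde S_O$ is definable, $\tilde S_O\subseteq\mathbb{S}_O$, and $\tilde S_O\syn S_C$. All three hold directly from the assumption.
\item Define the noise pair by $S_O^-:=S_O\setminus\tilde S_O$ and $S_O^+:=\tilde S_O\setminus S_O$. This is consistent with Definition~\ref{def:noisy}, since $S_O^-\subseteq S_O$ and $S_O^+\subseteq\mathbb{S}_O\setminus S_O$. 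Hence $\tilde{\mathcal I}=(\tilde S_O,S_C,G,\tau_{OC},\tau_{CO})$ is a noisy information.
\item Handle the ideal case. If $\mathcal I$ is ideal (Definition~\ref{def:ideal-info}), then $S_O\syn S_C$ is witnessed by $G_{OC}$, and $S_O\subseteq\mathbb{S}_O$ is definable. So $S_O$ itself is an admissible choice of $\tilde S_O$, with $\tau_{OC}$ taken from $G_{OC}$. With this choice $S_O^-=S_O^+=\varnothing$, so the perturbation is trivial.
\end{enumerate}

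The main point needing care is the phrasing of the ideal-case claim. An arbitrary $\tilde S_O$ produced by Assumption~\ref{assump:carrier-rep} need not equal $S_O$. I will therefore read ``the perturbation is trivial'' as asserting the existence of a trivial choice, and in the ideal case I will choose $\tilde S_O:=S_O$ explicitly rather than rely on the witness from the assumption.

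A secondary check is that the witness $G$ has the required sort signature, namely pairs in $\tilde S_O\times S_C$. This is immediate from Definition~\ref{def:definable-coding}.
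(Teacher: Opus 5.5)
Your proposal is correct and takes the same approach as the paper: set $\tilde S_O:=S'_O$ from Assumption~\ref{assump:carrier-rep}, and in the ideal case choose $S_O$ itself (witnessed by $G_{OC}$) so that $S_O^-=S_O^+=\varnothing$. Your extra checks — the inverse $\tau_{CO}$, definability of $S_O$, $S_O\subseteq\mathbb{S}_O$, and the existential reading of ``trivial'' — spell out what the paper's two-line proof leaves implicit.
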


\begin{proof}
Set \(\tilde S_O:=S'_O\) from
Assumption~\ref{assump:carrier-rep}.
If~\(\mathcal I\) is ideal, \(S_O\syn S_C\), so choosing
\(S'_O:=S_O\) gives \(S_O^-=S_O^+=\varnothing\).
\end{proof}

\begin{remark}[Computability of semantic invariants for noisy bases]
\label{rem:noisy-ideal-connection}
Since \(\tilde S_O\subseteq\mathbb{S}_O\) is finite (in bijection
with the finite set~\(S_C\)), the invariants
\(\mathsf{A}(\tilde{\mathcal I})=|\Atom(\tilde S_O)|\) and
\(\mathsf{D_d}(\tilde{\mathcal I})=
\max_{q\in\tilde S_O}\Dd(q\mid\Atom(\tilde S_O))\) are
well-defined, finite, and computable by the same reasoning as
Theorem~\textup{\ref{thm:semantic-invariants}(i)}.
\end{remark}

\bibliographystyle{IEEEtran}
\bibliography{ref}

\vfill
\end{document}